\DeclareMathOperator*{\E}{\mathbb{E}}
\DeclareMathOperator*{\Ind}{\mathbb{1}}
\DeclarePairedDelimiter\ip{\langle}{\rangle}
\newcommand\skipi{{\vskip 10pt}}
\def \F {{\mathbb F}}
\def \Q {{\mathbb Q}}
\def \R {\mathbb{R}}
\def \Z {\mathbb{Z}}
\def \N {\mathbb{N}}
\def \cA {\mathcal{A}}
\def \cB {\mathcal{B}}
\def \cC {\mathcal{C}}
\def \cD {\mathcal{D}}
\def \cE {\mathcal{E}}
\def \cH {\mathcal{H}}
\def \cP {\mathcal{P}}
\def \cR {\mathcal{R}}
\def \cS {\mathcal{S}}
\def \cV {\mathcal{V}}
\def \sym {\text{Sym}}
\def \poly {\mathrm{poly}}
\def \eps {{\varepsilon}}
\def \val {{\sf val}}
\def \viol {\mathrm{viol}}
\def \maj {\operatorname{Maj}}
\def \agr {\operatorname{agr}}
\def\ggg{\gtrsim}
\def\lll{\lesssim}
\def \rt {\text{rt}}
\def \spn {\text{span}}
\def \good {\text{Good}}
\def \csat {\text{CircuitSAT}}
\def \deg {\text{deg}}
\def \qe {\text{QE}}
\def \out {\textsc{out}}
\def \In {\textsc{in}}
\def \pl {\sf{polylog}}
\def \sp {\text{SP}}
\def \size {{\sf{size}}}
\renewcommand{\leq}{\leqslant}
\renewcommand{\le}{\leqslant}
\renewcommand{\geq}{\geqslant}
\renewcommand{\ge}{\geqslant}
\newcommand{\wt}[1]{\widetilde{#1}}
\renewcommand{\bar}[1]{\overline{#1}}
\newcommand\rcirc{\tikz[baseline=(char.base)]{
            \node[shape=circle,draw,inner sep=0.8pt] (char) {\small r};}}
\newcommand\zz{\tikz[baseline=(char.base)]{
            \node[shape=circle,draw,inner sep=0.8pt] (char) {\small z};}}
\newtheorem{theorem}{Theorem}[section]
\newtheorem{fact}[theorem]{Fact}
\newtheorem{prop}[theorem]{Proposition}
\newtheorem{corollary}[theorem]{Corollary}
\newtheorem{cor}[theorem]{Corollary}
\newtheorem{lemma}[theorem]{Lemma}
\newtheorem*{lemma*}{Lemma}
\newtheorem*{theorem*}{Theorem}
\newtheorem{claim}[theorem]{Claim}
\newtheorem*{claim*}{Claim}
\newtheorem{remark}[theorem]{Remark}
\newtheorem*{remark*}{Remark}
\newtheorem{definition}[theorem]{Definition}
\theoremstyle{definition}
\newtheorem{agr-test}{Agreement-Test}
\newtheorem{list-agr-test}{List-Agreement-Test}
\let\c@fconjecture\c@conjecture
\let\c@fconj\c@conj
\title{Quasi-Linear Size PCPs with Small Soundness from HDX}
\author{Mitali Bafna \thanks{Department of Mathematics, Massachusetts Institute of Technology.}
 \and
 	Dor Minzer\footnotemark[1] \\ \\ With an Appendix by Zhiwei Yun\footnotemark[1]
\and Nikhil Vyas\thanks{SEAS, Harvard University.}}
\date{\vspace{-5ex}}
\begin{document}
\maketitle

\begin{abstract}
We construct 2-query, quasi-linear size  probabilistically checkable proofs (PCPs) with arbitrarily small constant soundness, improving upon Dinur's  2-query quasi-linear size PCPs with soundness $1-\Omega(1)$. As an immediate corollary, we get that under the exponential time hypothesis, for all $\eps>0$ no approximation algorithm for $3$-SAT can obtain an approximation ratio of $7/8+\eps$ in time $2^{n/\log^C n}$, where $C$ is a constant depending on $\eps$. Our result builds on a recent line of works showing the existence of linear size direct product testers with small soundness~\cite{BafnaMinzer,DiksteinD-agreement,BLM24,dikstein2023swap,DDL24}. 

The main new ingredient in our proof is a technique that embeds a given 2-CSP into a 2-CSP on a prescribed graph, provided that the latter is a graph underlying a sufficiently good high-dimensional expander (HDX). We achieve this by establishing a novel connection between PCPs and fault-tolerant distributed computing, more precisely, to the almost-everywhere reliable transmission problem from~\cite{Dwork,Upfal,Chandran,chandran2012edge,JRV}. We instantiate this connection by showing that graphs underlying HDXs admit routing protocols that are tolerant to adversarial edge corruptions, also improving upon the state of the art constructions of sparse edge-fault-tolerant networks in the process.

Our PCP construction requires variants of the aforementioned direct product testers with poly-logarithmic degree. The existence and constructability of these variants is shown in~\Cref{app:cl} by Zhiwei Yun.
\end{abstract}

\thispagestyle{empty}
\setcounter{page}{0}

\newpage
\pagenumbering{gobble}
\setcounter{tocdepth}{2}
\tableofcontents
\thispagestyle{empty}
\setcounter{page}{0}

\newpage

\pagenumbering{arabic}
\section{Introduction}
The PCP Theorem~\cite{FGLSS,AS,ALMSS} is a cornerstone of
theoretical computer science, with many applications in
hardness of approximation, cryptography and interactive 
protocols. It will be convenient for us to take the following
combinatorial view of PCPs, using the language of the 
Label Cover problem.
\begin{definition}\label{def:label-cover}
An instance of Label Cover $\Psi = (G=(L\cup R,E), \Sigma_L, \Sigma_R, \Phi = \{\Phi_{e}\}_{e\in E})$ consists of a bipartite 
graph $G$, alphabets $\Sigma_L, \Sigma_R$ and constraints 
$\Phi_e\subseteq \Sigma_L\times \Sigma_R$, one for each edge.
Each one of the constraints is a projection constraint, meaning
that for every $e\in E$ there is a map $\phi_e\colon \Sigma_L\to\Sigma_R$ such that
\[
\Phi_e = \{(\sigma,\phi_e(\sigma))~|~\sigma\in \Sigma_L\}.
\]
\end{definition} 
Given a label cover instance $\Psi$, the goal is to find assignments $A_L\colon L\to\Sigma_L$ and $A_R\colon R\to\Sigma_R$ that satisfy as many of the constraints as possible, namely that maximize the quantity
\[
{\sf val}_{\Psi}(A_L,A_R) = \frac{1}{|E|}\left|\{e=(u,v)\in E~|~(A_L(u),A_R(v))\in \Phi_e\}\right|,
\qquad
{\sf val}(\Psi) = \max_{A_L,A_R}{\sf val}_{\Psi}(A_L,A_R).
\]
We denote by gap-LabelCover$[c,s]$ the promise problem wherein the input is an instance $\Psi$ of label cover promised to either have ${\sf val}(\Psi)\geq c$ or else ${\sf val}(\Psi)\leq s$, and the goal is to distinguish between these two cases. 
In this language, versions of 
the PCP theorem assert that the problem gap-LabelCover$[c,s]$ 
is NP-hard, and there are a few parameters of
interest:
\begin{enumerate}
    \item Completeness - the completeness of the PCP is the parameter $c$, and it
    should be as large as possible. In 
    this paper we will always have perfect completeness, that is, $c=1$.
    \item Soundness - the soundness of a PCP is the parameter 
    $s$, and one wants it to be as small as possible.
    \item Alphabet size - the alphabet size is  
    $\max(|\Sigma_L|, |\Sigma_R|)$, and one often wants it 
    to be of constant size.
    \item Instance size - the instance size is the blow-up of the reduction showing that 
    gap-LabelCover$[c,s]$ is NP-hard. The assertion
    of NP-hardness means that there is a polynomial time reduction mapping 
    $3$-CNF formulas $\phi$ to instances of label cover $\Psi$
    such that: if $\phi$ is satisfiable, then ${\sf val}(\Psi)\geq c$, and if $\phi$ is unsatisfiable, then 
    ${\sf val}(\Psi)\leq s$. Letting the size of the $3$-CNF
    formula $\phi$ be denoted by $n$, the size of the PCP is 
    measured by the size of $\Psi$ as a function of $n$.
\end{enumerate}
The original proof of the PCP Theorem~\cite{FGLSS,AS,ALMSS} was able to achieve perfect completeness, soundness $s\leq 1-\eps$ for some 
absolute (but tiny) constant $\eps>0$, constant size alphabet and polynomial instance size. Using the parallel repetition theorem of Raz~\cite{Raz,Holenstein,Rao}, one is able to get a stronger version of the PCP theorem wherein the soundness parameter $s$ can be taken to be an arbitrarily small constant, while incurring a polynomial size blow up. The question of whether
there are other soundness amplification techniques with milder instance size increase has received a significant amount of attention over the years, and is often referred to as the “derandomized parallel repetition problem” in the literature.

\subsection{Near-linear Size PCPs}
Following the proof of the PCP theorem, 
the question of consructing size efficient
PCPs has naturally emerged.
Polishchuk and Spielman~\cite{polishchuk1994nearly}
were the first to construct nearly linear size 2-query PCPs, and their construction has soundness $1-\eps$ and size $n^{1+c(\eps)}$, where $c(\eps)$
approaches $0$ as $\eps$ tends to $0$. 

In her combinatorial proof of the PCP theorem, Dinur~\cite{Dinur07} established a quasi-linear version of the PCP theorem, namely a 2-query PCP with size $n\cdot{\sf poly}(\log n)$, soundness $s= 1-\Omega(1)$ and constant alphabet size. Her proof used a novel gap-amplification procedure for PCPs via graph powering. This procedure can be interpreted as a form of ``derandomized'' parallel repetition, that decreases the soundness (like parallel repetition) but only increases the instance size mildly. Bogdanov~\cite{bogdanov2005gap} proved that the soundness achievable by Dinur's approach plateaus at $1/2$, and in particular the graph powering approach cannot lead to arbitrarily small soundness. Other barriers towards derandomizing parallel repetition were also shown by~\cite{FeigeKillian,MoshkovitzNoGo}. These barriers suggest that new techniques are necessary for constructing nearly-linear size 2-query PCPs with arbitrarily small soundness.

Moshkovitz and Raz~\cite{MoshkovitzR08} were the first to construct such PCPs, and they used a different approach based on low-degree testing and a novel form of composition. Specifically, they proved the hardness of Label Cover with soundness $s = \frac{1}{k^{\Omega(1)}}$, size $n\cdot 2^{\Theta(\sqrt{\log n})}$ and alphabet $2^{\Theta(k)}$, for any $k\leq \log n$. The most notable feature of this work is that it allows one to even get sub-constant soundness, namely one that vanishes with the instance size (at the price of having quite  a large alphabet). The
blow-up in size of the PCP is $2^{\Theta(\sqrt{\log n})}$ though, which is larger than the poly-logarithmic blowup in Dinur's PCP.

The above discussion brings us to the main result of this paper:
\begin{theorem}\label{thm:main}
    For all $\delta>0$, there is $C = C(\delta)>0$ and a 
    polynomial time procedure  that given an instance 
    $\phi$ of $3$-SAT of size $n$, produces a label cover 
    instance $\Psi$ with the following properties:
    \begin{enumerate}
    \item The size of $\Psi$ is at most $n (\log n)^{C}$ and 
    the alphabet size of $\Psi$ is at most $O_{\delta}(1)$.
    \item If $\phi$ is satisfiable, then ${\sf val}(\Psi) = 1$.
    \item If $\phi$ is unsatisfiable, then ${\sf val}(\Psi)\leq \delta$.
    \end{enumerate}
\end{theorem}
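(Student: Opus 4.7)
The plan is to build a soundness-amplification pipeline starting from Dinur's quasi-linear PCP~\cite{Dinur07}, which produces a 2-CSP $\Psi_0$ of size $n \cdot \polylog(n)$ over a constant alphabet with perfect completeness and soundness $1 - \Omega(1)$. The goal is then to drive the soundness down to an arbitrary $\delta > 0$ while blowing up the instance by only a constant multiplicative factor. The natural tool is a form of derandomized parallel repetition by $k$-fold direct product on a high-dimensional expander, where each new variable holds the assignments of a $k$-face of the HDX. Recent linear-sized HDX agreement-tester theorems~\cite{BafnaMinzer,DiksteinD-agreement,BLM24,dikstein2023swap,DDL24} supply exactly the required amplification: if most $k$-faces carry ``globally consistent'' assignments, then the induced 2-CSP has high value, and the total size blow-up is only $|\Sigma|^k = O_\delta(1)$.

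These agreement-tester theorems require that the 2-CSP be laid out on the $1$-skeleton of the HDX, whereas $\Psi_0$ sits on an arbitrary graph. The central technical step is thus an \emph{embedding} reduction: given an arbitrary 2-CSP $\Psi$ of size $N$, produce a gap-preserving 2-CSP $\Psi'$ of size $\tO(N)$ on the $1$-skeleton of a prescribed HDX $H$. I would send each variable of $\Psi$ to a vertex of $H$, and simulate each original constraint $(u,v)$ by routing the value at $u$ to $v$ along many short edge-disjoint paths in $H$. The variables of $\Psi'$ include both the original assignment and ``transcripts'' labeling each HDX edge with the messages passing through it; constraints enforce local path consistency and, at the endpoints, the original constraint of $\Psi$. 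The soundness analysis reduces to the almost-everywhere reliable transmission problem~\cite{Dwork,Upfal,Chandran,chandran2012edge,JRV}: treating edges on which the prover lies as adversarial corruptions, a low-value assignment to $\Psi'$ must still correctly route messages between a $(1-o(1))$-fraction of pairs, which extracts a comparably good assignment to $\Psi$.

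For this embedding to yield a quasi-linear PCP, I would need two supporting ingredients. First, that poly-logarithmic-degree HDXs admit edge-fault-tolerant routing protocols succeeding for almost all source-target pairs against a constant fraction of adversarial edge corruptions, improving on prior sparse-network constructions~\cite{Chandran,chandran2012edge,JRV}. I would prove this by using higher-dimensional structure (local-to-global arguments and swap walks) to produce $\Omega(\log N)$ short internally-disjoint paths between almost all pairs, then argue that majority-decoded routing succeeds. Second, I would need the HDX direct-product theorem to hold on poly-logarithmic-degree complexes rather than the bounded-degree ones treated in~\cite{BafnaMinzer}; this is the content of \Cref{app:cl}. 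Composing the embedding with the amplified direct-product tester, and closing with a standard inner-PCP composition to restore the projection structure of Label Cover, finishes the theorem.

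The hardest part is expected to be the edge-fault-tolerant routing construction on HDXs. Spectral expansion alone does not suffice, since Ramanujan graphs are known to fail against corruptions concentrated on a few shortest paths, and prior edge-fault-tolerant networks either have super-polylogarithmic degree or tolerate only a sub-constant fraction of corruptions. Beating both limitations simultaneously requires genuinely using the high-dimensional structure of the HDX, while keeping the total routing length short enough for the embedding to remain quasi-linear in size. A secondary quantitative point to control is that the routing failure probability be small enough not to dominate the direct-product soundness in the final composition, so that a single setting of parameters delivers the required $\delta$-soundness.
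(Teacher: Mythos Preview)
Your proposal follows essentially the same pipeline as the paper: Dinur's PCP $\to$ embed onto an HDX via fault-tolerant routing $\to$ amplify via the HDX direct-product tester $\to$ compose to finish. The two novel ingredients you identify (edge-fault-tolerant routing on HDX, and direct-product testers on poly-log-degree complexes) are exactly the ones the paper supplies.

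One point where your accounting is off: the alphabet after the embedding step is not constant. Each vertex of the HDX must hold a full routing transcript over $\Theta(\log n)$ rounds, so the alphabet is already $2^{\polylog n}$ before the direct-product step, and $(2^{\polylog n})^{k}$ after it. So the final composition is not merely ``restoring the projection structure'' --- it is a genuine alphabet-reduction step, and a single inner-PCP composition does not suffice. The paper handles this with two rounds of composition with a Reed--Muller decodable PCP (bringing the decision complexity down to $\poly(\log\log\log n)$) followed by one round with a Hadamard decodable PCP, using the Moshkovitz--Raz/Dinur--Harsha framework. Your plan would go through once you plug this in, but you should be aware that this step is where the $(\log n)^{C}$ budget is actually spent and that it requires decodable (not merely robust) inner PCPs.

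A second, smaller remark: your routing sketch (``$\Omega(\log N)$ short internally-disjoint paths between almost all pairs, majority-decode'') is not how the paper does it. The paper's protocol holds each message redundantly across an entire vertex-link of the HDX and transfers it link-to-link through $2$-links, using short internal paths within each link to simulate clique behavior. This is what lets a constant fraction of edge corruptions translate into only a $1/\polylog n$ fraction of failed transfers (rather than $O(\epsilon\log n)$, which a naive path-majority argument would give), and that quantitative strength is needed so the soundness loss from embedding is negligible compared to the $\Omega(1)$ gap you start with.
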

\noindent In words, Theorem~\ref{thm:main} gives a version of the PCP theorem of Dinur~\cite{Dinur07} in the low soundness regime.
\vspace{-2ex}
\paragraph{The structure of the proof of 
Theorem~\ref{thm:main}:}
the proof of Theorem~\ref{thm:main} has three  components. The first component involves reducing an arbitrary 2-query PCP, viewed as a 2-CSP, to a 2-CSP whose constraint graph underlies an HDX. To achieve this, we build on the techniques in~\cite{polishchuk1994nearly,DinurMeir} and show a novel connection between PCPs and fault-tolerant distributed computing: one can embed an arbitrary 2-CSP on a prescribed graph $G$ while maintaining the soundness, provided that $G$ has fault-tolerant routing protocols. We then show such routing protocols for HDX, using their spectral expansion properties and the presence of numerous well-distributed dense subgraphs. This is the primary contribution of our work, and we draw inspiration from fault-tolerant distributed computing~\cite{Dwork, Upfal, Chandran, chandran2012edge, JRV}.\footnote{As a by-product, we improve upon the best-known construction of sparse networks with edge-fault-tolerant protocols of~\cite{chandran2012edge}.} The second component uses a size-efficient direct product tester from HDX, which was conjectured to exist in~\cite{DinurK17} and recently established in a series of works~\cite{BafnaMinzer, BLM24, DiksteinD-agreement, DDL24}. This allows us to perform derandomized parallel repetition for 2-CSPs on an HDX, and reduce their soundness to a small constant close to $0$ while incurring a poly-logarithmic blow-up in size. Combining the first and second component we 
get a quasi-linear PCP with small soundness, albeit 
with a large alphabet. Thus, the third and final component is the classical alphabet reduction technique, and we use the results of~\cite{MoshkovitzR08, DinurH13} to reduce the alphabet size to a constant, thus concluding the proof of \Cref{thm:main}.

\subsection{Implications of Our Results }\label{sec:implications}
Many hardness of approximation results in the literature start off with the hardness of Label Cover with small soundness, which is typically achieved by parallel repetition or by appealing to the result of~\cite{MoshkovitzR08}. We are able to 
reproduce any such result that does not use any other features of parallel repetition (such as the so-called covering property), and obtain a better run-time lower bound on approximation algorithms under the exponential time hypothesis (ETH) of~\cite{IP}. First, using the work of H\r{a}stad~\cite{Hastad01}, we have the following two corollaries.
\begin{corollary}\label{cor:3SAT}
    Assuming ETH, for any $\eps>0$ there exists $C>0$ such that solving gap-3SAT$[1,7/8+\eps]$ requires time at 
    least $2^{n/\log^C n}$.\footnote{We remark that in his proof of the hardness of $3$-SAT~\cite[Lemma 6.9]{Hastad01}, H\r{a}stad uses a feature of the outer PCP construction that our PCP lacks. However, as observed by Khot~\cite{KhotSmooth}, 
a weaker property called ``smoothness'' suffices
for the analysis of the reduction, which our PCP construction in Theorem~\ref{thm:main} has. This is because in the end we compose with the Hadamard code and the associated manifold versus point test, which is smooth.}
\end{corollary}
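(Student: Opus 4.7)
The plan is to derive \Cref{cor:3SAT} by pipelining \Cref{thm:main} into H\r{a}stad's classical $3$-SAT reduction and then reading off the running-time lower bound from ETH. Given a $3$-SAT instance $\phi$ of size $n$, I would first apply \Cref{thm:main} with a soundness parameter $\delta = \delta(\eps)$ to be chosen small enough to absorb the loss in H\r{a}stad's reduction. This produces a Label Cover instance $\Psi$ of size $N_1 = n (\log n)^{C_1}$ and alphabet of size $O_{\delta}(1)$, with perfect completeness and soundness $\delta$.

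Next I would feed $\Psi$ into H\r{a}stad's $3$-bit PCP composition (as in \cite{Hastad01}) to obtain a $3$-SAT instance $\phi'$ which is satisfiable in the yes-case and has value at most $7/8+\eps$ in the no-case. The subtle point here, flagged in the footnote, is that H\r{a}stad's analysis is written assuming a feature of the outer two-prover system (essentially a covering/projection property) that the PCP of \Cref{thm:main} need not enjoy directly. I would invoke Khot's observation \cite{KhotSmooth} that \emph{smoothness} of the outer PCP is already enough, and then note that the alphabet-reduction stage in our construction (which is a Hadamard-based manifold-versus-point composition inherited from \cite{MoshkovitzR08, DinurH13}) automatically produces a smooth outer PCP. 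This is the step I expect to need the most care, since one has to track exactly what form of smoothness is delivered by the manifold-vs-point composition and verify it is the one used by Khot.

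After H\r{a}stad's reduction, the total blow-up is at most $n (\log n)^{C_2}$ for some $C_2 = C_2(\eps)$, because H\r{a}stad's long-code composition is polynomial in the (constant) alphabet size and only linear in the number of Label Cover constraints. Call the final size $N := n (\log n)^{C_2}$.

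Finally, I would close the argument by a standard ETH transfer. Suppose, for contradiction, that gap-$3$SAT$[1,7/8+\eps]$ can be solved in time $2^{N/\log^{C} N}$ on instances of size $N$. Using $\log N = \Theta(\log n)$, this running time is
\[
2^{\,N/\log^{C} N} \;=\; 2^{\,n(\log n)^{C_2 - C}(1+o(1))},
\]
which is $2^{o(n)}$ whenever $C > C_2$. Choosing $C := C_2 + 1$ therefore gives a sub-exponential algorithm for the original $3$-SAT instance $\phi$, contradicting ETH. Thus such a constant $C = C(\eps)$ as in \Cref{cor:3SAT} exists, completing the deduction.
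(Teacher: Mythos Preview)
Your proposal is correct and follows exactly the approach the paper intends: the paper states this corollary without proof, simply attributing it to Theorem~\ref{thm:main} composed with H\r{a}stad's reduction, and the footnote handles the smoothness subtlety in the same way you describe. One small inaccuracy: H\r{a}stad's long-code table has size $2^{|\Sigma|}$, not polynomial in $|\Sigma|$, but since $|\Sigma|=O_\delta(1)$ this is still a constant-factor blow-up per constraint, so your size bound $N=n(\log n)^{C_2}$ and the ETH transfer go through unchanged.
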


In the 3-LIN problem one is given a system of
linear equations over $\mathbb{F}_2$ where each
equation contains 3 variables, and the goal is
to find an assignment satisfying as many of the constraints as possible. We have:
\begin{corollary}
    Assuming ETH, for any $\eps>0$ there exists $C>0$ such that solving gap-3LIN$[1-\eps,1/2+\eps]$ requires time at 
    least $2^{n/\log^C n}$.
\end{corollary}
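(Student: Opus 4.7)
My plan is to compose \Cref{thm:main} with H\r{a}stad's three-query Long-Code reduction from projection Label Cover to $3$-LIN~\cite{Hastad01}, and then convert the resulting quasi-linear reduction into an ETH-based time lower bound. Fix $\eps>0$. First, I let $\delta = \delta(\eps)>0$ be the soundness threshold on Label Cover for which H\r{a}stad's Fourier analysis of the folded linearity test yields the gap $(1-\eps, 1/2+\eps)$ for 3-LIN. I then apply \Cref{thm:main} with this $\delta$ to a $3$-SAT instance $\phi$ of size $n$, obtaining a Label Cover $\Psi$ with $|\Psi|\leq n\cdot(\log n)^{C_1}$, constant alphabet $k_0=O_\delta(1)$, perfect completeness, and soundness $\delta$ on unsatisfiable inputs.

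Second, I apply H\r{a}stad's 3-LIN reduction to $\Psi$: each vertex of $\Psi$ is replaced by a folded Long-Code block over $\mathbb{F}_2$ of $O_\eps(1)$ bits (since $\Psi$ has constant alphabet), and each edge of $\Psi$ contributes the $\mathbb{F}_2$-linear equations arising from the three-bit linearity test that uses the projection as a table lookup. The resulting 3-LIN instance $\Psi'$ has size $|\Psi'|\leq n\cdot(\log n)^{C_1}\cdot O_\eps(1)$, and the standard analysis of the linearity test on a projection Label Cover yields $\val(\Psi')\geq 1-\eps$ if $\phi$ is satisfiable, and $\val(\Psi')\leq 1/2+\eps$ if $\phi$ is unsatisfiable.

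Third, to deduce the ETH lower bound, I suppose for contradiction that for some $C > C_1+1$ there is an algorithm $\cA$ solving gap-3LIN$[1-\eps,1/2+\eps]$ in time $2^{n/\log^C n}$. Running $\cA$ on $\Psi'$ decides satisfiability of $\phi$ in time $2^{|\Psi'|/\log^{C}|\Psi'|}\leq 2^{n/\log^{C-C_1-1} n}=2^{o(n)}$, contradicting ETH. The only subtlety I anticipate is parallel to that of the footnote after \Cref{cor:3SAT}: H\r{a}stad's Long-Code analysis is usually stated for outer PCPs with extra structural properties beyond projection and perfect completeness. For 3-SAT this is Khot's \emph{smoothness} property, and the paper observes that the alphabet-reduction step inside \Cref{thm:main} composes with the (smooth) manifold-versus-point test, so smoothness holds automatically. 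Since the 3-LIN reduction has strictly milder requirements than the 3-SAT reduction, the same composition delivers what is needed, and this becomes the only nontrivial bookkeeping point beyond a black-box invocation of \Cref{thm:main}.
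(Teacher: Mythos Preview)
Your proposal is correct and follows exactly the approach the paper intends: the paper simply cites H\r{a}stad~\cite{Hastad01} and states the corollary without further proof, so composing \Cref{thm:main} with H\r{a}stad's Long-Code reduction and then reading off the ETH lower bound is precisely what is meant. One small remark: the smoothness caveat you raise in the last paragraph is not needed for 3-LIN---H\r{a}stad's $3$-LIN analysis only uses that the outer PCP is a projection Label Cover, so the footnote after \Cref{cor:3SAT} (which addresses the extra structure needed for perfect-completeness $3$-SAT) is irrelevant here and you can drop that discussion entirely.
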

Next, the set cover problem admits dynamic programming based algorithms that solve it exactly in $O^*(2^n)$ time. The works of~\cite{BourgeoisEP09,CyganKW09} showed that it can be approximated to a $(1+\ln r)$-factor in $O^*(2^{n/r})$ time for every rational $r\ge 1$. By applying the label cover transformation in \cite{Moshkovitz15} to~\Cref{thm:main} and then applying Feige’s reduction to the set cover problem \cite{Feige98}, we get the 
following hardness result:
 \begin{corollary}
Assuming ETH, for every rational $r>1$, there is some constant $C$ such that no algorithm running in time $2^{n/(\log n)^C}$ can approximate the set cover problem to a factor better than $r$.
\end{corollary}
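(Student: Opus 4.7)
The recipe stated right after the corollary makes the structure clear: chain three reductions (\Cref{thm:main}, then the Moshkovitz 2015 Label Cover transformation, then Feige's 1998 Set Cover reduction) and carefully account for sizes in order to propagate the ETH lower bound. Fix a rational $r > 1$, and choose a small constant $\delta = \delta(r) > 0$ together with an integer partition-system parameter $k = k(r)$, such that Feige's reduction, applied with a $k$-partition system to a Label Cover instance of soundness $\delta$, produces a Set Cover instance whose approximation gap exceeds $r$.

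By \Cref{thm:main}, a 3-SAT formula $\phi$ of size $n$ maps in polynomial time to a Label Cover instance $\Psi_0$ of size $N_0 \leq n (\log n)^{C_0}$ and constant alphabet $O_{\delta}(1)$, satisfying $\val(\Psi_0) = 1$ in the YES case and $\val(\Psi_0) \leq \delta$ in the NO case. Moshkovitz's transformation puts $\Psi_0$ in the form required by Feige's reduction at the cost of an $O_r(1)$ multiplicative size blow-up, since both $\delta$ and the alphabet of $\Psi_0$ are constants depending only on $r$. Feige's reduction, with a partition system whose size depends only on $k$, then yields a Set Cover instance of size $N \leq n (\log n)^{C_2(r)}$ admitting a cover of size $M$ in the YES case and requiring a cover of size at least $r M$ in the NO case.

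Suppose towards contradiction that some algorithm $\cA$ computes $r$-approximate Set Cover in time $2^{N/(\log N)^{C}}$ for some $C = C(r)$ to be fixed. Composing the whole pipeline with $\cA$ decides satisfiability of $\phi$ in time at most $\poly(n) + 2^{n (\log n)^{C_2(r)}/(\log N)^{C}}$, which, using $\log N = O(\log n)$, is bounded by $2^{n/(\log n)^{C - C_2(r) - O(1)}} = 2^{o(n)}$ once $C$ is chosen sufficiently larger than $C_2(r)$. This contradicts ETH, so no such $\cA$ exists.

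The main technical point to verify is that the two intermediate reductions preserve the quasi-linear size from \Cref{thm:main}; this is essentially automatic because their blow-ups are bounded by functions of $\delta$, $k$, and the alphabet of $\Psi_0$, all of which are constants depending only on $r$. A secondary concern is that Moshkovitz's transformation be applicable to the outer PCP coming from \Cref{thm:main}, but the transformation requires only projection structure and a soundness guarantee — both of which \Cref{thm:main} supplies — and no auxiliary features (such as covering properties) that would be specific to parallel repetition.
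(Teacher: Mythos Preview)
Your proposal is correct and follows exactly the approach the paper sketches in the sentence preceding the corollary: apply \Cref{thm:main}, then Moshkovitz's Label Cover transformation, then Feige's Set Cover reduction, and track that all blow-ups are $O_r(1)$ so that the quasi-linear size survives and ETH yields the claimed runtime lower bound. The paper itself gives no further detail beyond that one-sentence recipe, so your elaboration is a faithful (and more explicit) rendering of the intended argument.
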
 
The above corollary also implies similar conditional run-time lower bound for approximation algorithms for the max coverage problem with a $(1-1/e)$-approximation factor and various other problems which can be reduced from the set cover problem (such as clustering; see \cite{GK99}).
\skipi
Our proof techniques can also be 
used to obtain improvements for the almost-everywhere reliable transmission problem~\cite{Dwork}. This problem is very similar to the routing problem (given in Definition~\ref{def:adv-comm-protocol}) which is central to this paper. The almost everywhere reliable transmission problem involves designing a sparse graph $G$ along with communication protocols between all pairs of vertices of $G$, that are resilient against vertex/edge corruptions. Our results nearly match the performance of the best-known protocols for vertex corruptions~\cite{Chandran, JRV}, while addressing the more challenging scenario of edge corruptions, thus improving upon the results of~\cite{chandran2012edge}. More specifically, in Section~\ref{sec:cor-proof} we show:
\begin{theorem}\label{thm:edge-routing}
For all \( n \in \mathbb{N} \), there exists a regular graph \( G = (V, E) \) on \( \Theta(n) \) vertices with degree $\pl n$ and \( O(\log n) \)-round protocols \(\{ \mathcal{R}_{u, v}\}_{u,v\in V} \) on it such that, for all adversaries corrupting \( \eps |E| \) edges, all but $O(\eps)$ of the message transfers between pairs of vertices $(u, v)$ will be successful. Further, running a single $\mathcal{R}_{u, v}$ protocol can be done in $\pl n$ computation across all nodes.
\end{theorem}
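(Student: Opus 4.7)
The plan is to construct $G$ as the $1$-skeleton of a sufficiently good spectral HDX $X$, with parameters tuned so that $G$ is regular with $\text{polylog}(n)$ degree on $\Theta(n)$ vertices and has diameter $O(\log n)$. The higher-dimensional faces of $X$ will play the role of the ``well-distributed dense subgraphs'' through which messages are routed: each face induces a small, highly connected subgraph of $G$, and the HDX structure guarantees that the faces tile the vertex pairs of $G$ in a highly symmetric way. This symmetric tiling is the mechanism that converts an adversary who corrupts an $\varepsilon$-fraction of edges into one that can only spoil an $O(\varepsilon)$-fraction of pairs.

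For each pair $(u,v)$, the protocol $\mathcal{R}_{u,v}$ first fixes a family of chains $\{S_1(u,v), \ldots, S_K(u,v)\}$ of faces of $X$ connecting $u$ to $v$, where $K = \text{polylog}(n)$ and each chain has $O(\log n)$ diameter and $\text{polylog}(n)$ size. Inside each chain $S_j$, the message is forwarded using the high edge-connectivity of the local links to provide multiple internally edge-disjoint routes from $u$ to $v$; the receiver then decides by majority across the $K$ chains. The $O(\log n)$-round bound comes from the diameter of a single chain, and the $\text{polylog}(n)$ total work bound comes from the size bound on $\bigcup_j S_j(u,v)$.

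The analysis proceeds by double counting. Call a chain $S_j(u,v)$ \emph{spoiled} if more than a small constant fraction of its edges lie in the adversary's set $F$ with $|F| \le \varepsilon |E|$. I plan to establish, using spectral expansion of $X$ at all levels (link expansion) together with a swap-walk-style inequality, that the chain family is well-distributed in the sense that each edge of $G$ is incident to roughly the same number of (pair, chain) incidences across the $\binom{n}{2}$ pairs. This distributedness implies that the total count of spoiled (pair, chain) incidences is $O(\varepsilon) \cdot \binom{n}{2} \cdot K$, and Markov's inequality over pairs then forces all but an $O(\varepsilon)$-fraction of pairs to have a majority of unspoiled chains, which succeed via majority decoding inside each chain. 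The main obstacle will be proving this well-distributedness with tight enough constants: a naive union bound loses a factor of $K$ and violates the edge budget, so the argument must leverage spectral gaps of the links at all levels of $X$, and must simultaneously give a constructive $\text{polylog}(n)$-time procedure for locating the chains so that the overall computational complexity bound is met.
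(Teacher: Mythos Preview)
Your proposal takes a substantially different route from the paper. The paper obtains Theorem~\ref{thm:edge-routing} as a corollary of the link-to-link routing protocol (Lemma~\ref{lem:link-routing}), instantiated on the LSV complexes of~\cite{LSV2} rather than the Chapman--Lubotzky complexes. The choice of LSV matters: in those complexes all vertex links are isomorphic spherical buildings, which makes the underlying $1$-skeleton \emph{regular}---a point your plan does not address. The paper then decomposes the all-pairs problem into $n$ permutations (matchings of $K_n$). For a fixed permutation, each vertex $u$ first broadcasts its message to its link $L_u$; then the link-to-link transfer runs along a \emph{single} $O(\log n)$ path in the zig-zag product, with a majority vote at every hop; finally the destination takes a majority over its link. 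The key feature is that error correction is local: a hop $L_w\to L_{w'}$ succeeds whenever the links involved are ``good'' (at most $\sqrt{\varepsilon}$ of their edges corrupted), and spectral sampling shows only a $\mathrm{poly}(d)/q$ fraction of links are bad, so essentially no path fails in the middle. The only $O(\varepsilon)$ loss comes from the first and last vertex$\leftrightarrow$link steps, and that is just Markov.

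Your approach instead routes each pair through $K=\mathrm{polylog}\,n$ parallel ``chains of faces'' with an end-of-protocol majority across chains, in the style of~\cite{Chandran,JRV}. This is closer to the clique-based scheme the paper sketches in Section~\ref{sec:clique_to_clique} and explicitly sets aside (its analysis forces super-constant dimension). More importantly, you have correctly located the crux---well-distributedness of the chain family over edges---but you have not supplied it. Invoking ``swap-walk-style inequalities'' and ``spectral gaps of links at all levels'' does not do the job: those tools control correlations for \emph{random} faces or down--up walks, not for a specific deterministic family of $\Theta(n^2K)$ chains that must simultaneously (a) connect every pair, (b) have $O(\log n)$ diameter, (c) be locatable in $\mathrm{polylog}\,n$ time per pair, and (d) load every edge uniformly. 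This combinatorial core is exactly what the paper's step-by-step link majority sidesteps; without a concrete construction and a proof of (d), your Markov step does not go through.
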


The rest of this introductory section is 
organized as follows. In Section~\ref{sec:history} we discuss prior work on gap amplification for PCPs. In Section~\ref{sec:ae-intro}
we discuss the almost-everywhere reliable transmission problem from distributed computing and in Section~\ref{sec:techniques} we 
discuss our techniques.

\subsection{History of Gap Amplification}\label{sec:history}
\subsubsection{Parallel Repetition}
The parallel repetition theorem of Raz~\cite{Raz,Holenstein,Rao} is a powerful technique
in hardness of approximation and interactive protocols. Most relevant to us is its application to PCPs, wherein it is used to boost the soundness of a given PCP construction. Indeed, given 
a label cover instance $\Psi$ and an integer $t\in\mathbb{N}$,
we consider the $t$-fold repeated game $\Psi^{\otimes t}$ which consists of
the graph $G_t = (L^t\cup R^t, E_t)$ where 
\[
E_t = \{((u_1,\ldots,u_t),(v_1,\ldots,v_t))~|~(u_i,v_i)\in E,\forall i=1,\ldots,t\},
\]
as well as alphabets $\Sigma_L^t, \Sigma_R^t$ and constraints 
$\Phi' = \{{\Phi'}_e\}_{e\in E_t}$ where
\[
\Phi_{(\vec{u},\vec{v})}
=\{((\sigma_1,\ldots,\sigma_t),(\tau_1,\ldots,\tau_t))~|~(\sigma_i,\tau_i)\in \Phi_{(u_i,v_i)}~\forall i=1,\ldots,t\}.
\]
It is clear that if ${\sf val}(\Psi) = 1$ then 
${\sf val}(\Psi^{\otimes t})=1$, and the content of the
parallel repetition theorem asserts that if ${\sf val}(\Psi)\leq 1-\eps$, then ${\sf val}(\Psi^{\otimes t})\leq (1-\eps')^t$, where $\eps' = {\sf poly}(\eps)$. Thus, parallel repetition can be used to decrease the soundness to be 
as close to $0$ as we wish. However, as ${\sf size}(\Psi^{\otimes t}) = {\sf size}(\Psi)^t$, parallel repetition cannot be used on a given PCP construction to get small
soundness while maintaining nearly-linear size.

In light of this issue, it makes sense to 
look for sparse analogs
of parallel repetition that still amplify soundness. This task is known as derandomizing parallel repetition in the literature -- given a label cover instance
$\Psi$ and an integer $t$, one would like to come up with
sparse subsets of $L^{t}$ and $R^t$ 
such that the induced label over instance $\Psi^{\otimes t}$ on them would still have significantly smaller
soundness than the original instance $\Psi$. Ideally, one would like the subsets to be 
as small as $O_t(|L|)$, $O_t(|R|)$ while 
retaining arbitrarily small constant soundness
$s$. Towards this end, it makes sense to consider the simpler combinatorial analog of this question known as \emph{direct product testing}, which we define next.

\subsubsection{Direct Product Testing}
In an effort to simplify the proof of
the PCP theorem, Goldreich and Safra~\cite{GoldreichSafra} introduced the notion of direct product testing. 
In direct product testing, one
wishes to encode a function $f\colon [n]\to \Sigma$ (which, in the context of PCPs is thought of as an assignment) via local views in a way that admits local testing. The most natural direct product encoding,
which we refer to as the Johnson direct
product scheme, has a parameter $k\in\mathbb{N}$ which is thought of as a large constant. The function 
$f$ is encoded via the assignment $F\colon \binom{[n]}{k}\to\Sigma^k$ defined as 
$F(\{a_1,\ldots,a_k\}) = (f(a_1),\ldots,f(a_k))$.\footnote{We fix an arbitrary ordering on $[n]$.} The natural 2-query direct product
test associated with this encoding is the 
following consistency check:
\begin{enumerate}
    \item Sample $B\subseteq [n]$ of size $\sqrt{k}$.
    \item Sample $A,A'\supseteq B$ independently
    of size $k$.
    \item Read $F[A]$ and $F[A']$ and check
    that $F[A]|_{B} = F[A']|_{B}$.
\end{enumerate}
It is clear that if $F$ is a 
valid encoding of a function $f$, then the
tester passes with probability $1$. The work of~\cite{DinurG08} shows that this test is also sound. Namely, for all $\delta>0$, taking $k\in\mathbb{N}$ large enough, if an assignment $F\colon \binom{[n]}{k}\rightarrow\Sigma^k$ passes
the Johnson direct product test with probability at least $\delta$, 
then there is a function $f'\colon [n]\to\Sigma$ such that
\begin{equation}\label{eq1_intro}
\Pr_{A}
[\Delta(F[A], f'|_{A})\leq 0.01]
\geq \poly(\delta),
\end{equation}
where $\Delta(F[A], f'|_{A}) = \frac{1}{|A|}\#\{i\in A~|~F[A](i)\neq f'(i)\}$ is the relative Hamming distance between $F[A]$ and $f'|_{A}$.

The main drawback of the Johnson direct product encoding is its size blow up: the size
of the encoding of $F$ is roughly the size
of $f$ to the power $k$. This is the same 
behaviour that occurs in parallel repetition, which raises a combinatorial analog of the derandomized parallel repetition problem: 
is there a sparse collection of $k$-sets $\mathcal{S}_k\subseteq \binom{[n]}{k}$, ideally $|\mathcal{S}_k| = O_k(n)$, such that the encoding of $f\colon [n]\to \Sigma$ given by $F\colon \mathcal{S}_k\to\Sigma^k$ defined 
as $F(\{a_1,\ldots,a_k\}) = (f(a_1),\ldots,f(a_k))$, admits a sound $2$-query consistency test?

\subsubsection{Graph Powering}
Underlying the proof of Dinur~\cite{Dinur07} is a derandomized direct product tester in the 99\% soundness regime. Dinur proved that the set system formed by constant size neighbourhoods of vertices in a spectral expander supports a natural $2$-query test with soundness bounded away from $1$.\footnote{Although her original result is not stated in terms of direct product testing, a later work of~\cite{DiksteinD19} observes that her proof indeed implies such a tester.} She used these ideas to show a gap amplification result for PCPs achieved via graph powering. To present it, we expand the definition of label cover to graphs that are not necessarily bipartite.
\begin{definition}
A constraint satisfaction problem 
(2-CSP in short) $\Psi = (G=(V,E), \Sigma, \Phi)$ is composed of a graph $G$, an alphabet $\Sigma$, and a collection of constraints $\Phi = \{\Phi_e\}_{e\in E}$, one for each edge. Each constraint is $\Phi_e\subseteq \Sigma\times \Sigma$, describing the tuples of labels to the endpoints of $e$ that are considered satisfying.
\end{definition}

Dinur started with an instance of a 2-CSP $\Psi$ over a $d$-regular expander graph $G$ with soundness $1-1/\pl n$ and constant alphabet. Given a parameter $t\in\mathbb{N}$, she considered the 2-CSP 
$\Psi'$ over the graph $G'$ whose vertex set is $V$ and $u,v$ are adjacent if they have a path of length $t$ between them in 
$G$. The alphabet of a vertex $u$ is $\Sigma^{d^{t/2}}$, 
which is thought of as an assignment to the neighbourhood
of $u$ of radius $t/2$. The allowed symbols on $u$ are assignments to the neighbourhood of $u$ satisfying all of the constraints of $\Psi$ inside it. Finally, the constraint between 
$u$ and $v$ is that the assignment they give to their neighbourhoods agree on any common vertex to them.

She showed that if most constraints of $\Psi'$ are satisfied (equivalently, the direct product test passes), the assignment to the neighborhoods must be consistent with a global assignment to the vertices, which in turn must satisfy a large fraction of the edges of $\Psi$. This implies that if ${\sf val}(\Psi)\leq 1-\delta$, then 
${\sf val}(\Psi')\leq 1-\min(2\delta, c)$ where $c>0$
is an absolute constant. Thus, each invocation of graph powering improves the soundness of the 2-CSP,
and after $\Theta(\log(1/\delta))$ iterations the resulting 2-CSP will have value at most $1-\Omega(1)$ and quasi-linear size. Each iteration of graph powering blows up the alphabet size though, which Dinur resolved via an alternating step of alphabet reduction.

\subsubsection{The Soundness Limit of Graph Powering}
Following Dinur's result~\cite{Dinur07}, Bogdanov~\cite{bogdanov2005gap} observed 
that graph powering on an arbitrary expander fails to decrease the soundness below $1/2$. Towards this end,  he considers any locally-tree-like expander graph $G$ with large girth, such as the construction in~\cite{lubotzky1988ramanujan}, and defines a CSP $\Psi$ over $G$ whose alphabet is $\Sigma = \{0,1\}$, and constraints are inequality constraints.
The graph $G$ has $n$ vertices, girth $g = \Theta(\log n)$, and the largest cut in it has fractional size $1/2 + o(1)$. The latter fact implies that ${\sf val}(\Psi)=1/2+o(1)$. On the other hand, as long as $t< g/2$, for each vertex $v$ the $t/2$-neighborhood of $v$ has two possible assignments. If $u$ and $v$ are within distance $t$, then there is a $1$-to-$1$ correspondence between the assignments of $u$ and the assignments of $v$. Thus, randomly choosing one of these possible assignments for each $u$ leads to an assignment that satisfies $1/2$ of the constraints in $G^{t}$ in expectation, and in particular ${\sf val}(G^t)\geq 1/2$. This means that graph powering fails to decrease the soundness below $1/2$.

\subsubsection{Subspace-based Direct Product Testers}
The work of Impagliazzo, Kabanets and Wigderson~\cite{ImpagliazzoKW09} made progress on derandomized direct product testing in the 1\% regime by analyzing a more efficient, albeit still polynomial size version of the Johnson direct product tester based on subspaces, which we refer to as the Grassmann direct product tester. 
In this context, we identify the universe $[n]$ with $\mathbb{F}_q^d$ 
(where $\mathbb{F}_q$ is a field), so that an assignment $f\colon [n]\to \Sigma$ is interpreted as a function 
$f\colon \mathbb{F}_q^d\to \Sigma$. The Grassmann direct product encoding of $f$ is a table $F$ that specifies the restriction of $f$ to each 
$d'$-dimensional subspace of $\mathbb{F}_q^d$, 
where $d'<d$ is a parameter chosen appropriately.
The Grassmann direct product test is the natural consistency check, wherein one chooses 
$B$ of prescribed dimension, then two
$d'$-dimensional subspaces $A$ and $A'$ containing $B$, and checks that $F[A]$
and $F[A']$ agree on $B$.
Their work proves that this test has small soundness in a sense similar to~\eqref{eq1_intro}.

The work~\cite{ImpagliazzoKW09} also strengthened the connection between gap amplification and direct product testing. Namely, 
they showed an amplification procedure for PCPs using the Johnson direct product tester. Just like parallel repetition though, this procedure
incurs a polynomial blow-up in the instance
size. One could hope to use the
more efficient Grassmann direct product encoding to improve upon this size blow-up. This turns out to be harder though, since using the Grassmann direct product tester effectively  requires a $2$-CSP whose constraint graph is  compatible with the structure of subspaces. At the very least, a $d'$-dimensional subspace must contain multiple edges of the initial 2-CSP.

\subsubsection{Derandomized Parallel Repetition using Subspace-based Direct Product Testers}\label{sec:subspace_derandomize}
Dinur and Meir~\cite{DinurMeir} obtained this
compatibility by reducing an arbitrary 2-CSP with soundness $1-\Omega(1)$ to a 2-CSP on the De-Bruijn graph with soundness $1-\Omega(1/\log n)$. The benefit of switching to a De-Bruijn graph $G$ is that, if $V(G)$ is denoted by $\F_q^d$, then the set of edges $E(G)$ forms a linear subspace of $\F_q^{2d}$. This made $G$ compatible with the Grassmann direct product tester, which then allowed the authors to amplify the soundness of any 2-CSP on $G$ to soundness close to $0$.

The idea of embedding CSPs on De-Bruijn graphs first appeared in a similar context in~\cite{BFLS,polishchuk1994nearly}. Towards this end, these works used the property that De-Bruijn graphs have efficient routing protocols. The work of~\cite{DinurMeir} demonstrates that, in some cases, derandomized direct product testing theorems can be used to construct PCPs. Their construction however falls short of getting size efficient PCPs since the Grassmann direct product encoding itself has a polynomial size blow-up.

\subsubsection{Linear Sized Direct Product Testers from HDX}

The work~\cite{DinurK17} identified 
high dimensional expanders as a potential
avenue towards constructing size efficient
direct product testers. Roughly speaking, high-dimensional expansion is a generalization of the usual notion of spectral expansion in graphs. An HDX is an expander graph $G$ which in addition has a lot of well-connected cliques of constant size. In this work we restrict our attention to clique complexes, and  denote by $X(k)$ the collection of cliques in $G$ of size $k$ and by $X=\{X(k)\}_{k=1}^d$ the $d$-dimensional clique complex of $G$. The work~\cite{DinurK17} showed that for a sufficiently good HDX, the natural direct product encoding using the set system $X(k)$ admits a natural 2-query tester with soundness 
$1-\Omega(1)$. Thus, as there are known constructions of constant degree high-dimensional expanders, their result gave a derandomized direct-product testing result in the $99\%$ regime based on HDX.
They conjectured that there are 
sparse high-dimensional expanders that support a direct product tester in the more challenging, 1\% regime of soundness.

Motivated by this problem and potential applications to PCPs, the works~\cite{BafnaMinzer,BLM24,DiksteinD-agreement,dikstein2023swap,DDL24} studied
the soundness of the natural direct product tester on $X(k)$. First, the works~\cite{BafnaMinzer,DiksteinD-agreement} identified that for the test to have
a small soundness, it is necessary that the HDX is a ``coboundary expander'', which is a topological notion of expansion. In particular any graph which is locally tree like, such as the one used by Bogdanov~\cite{bogdanov2005gap}, is automatically not a coboundary expander, giving a more comprehensive explanation to his counterexample. Second, the works~\cite{BafnaMinzer,DiksteinD-agreement} 
established that sufficiently strong high-dimensional
expansion and coboundary expansion imply 
that $X(k)$ admits a direct product tester with 
small soundness. 
Following this, the
works~\cite{BLM24,dikstein2023swap,DDL24} constructed sparse HDX, that are also  coboundary expanders. Their construction is a variant of the Chapman-Lubotzky complex~\cite{ChapmanL} with an appropriate choice of parameters. In particular, 
they established that:
\begin{theorem}\label{thm:dp_intro}
For all $\delta>0$, there is $\Delta\in \N$ such that for all large enough $k,d\in \N$ the following holds. There is an infinite sequence of $d$-dimensional clique complexes $\{X_n\}_{n\in N}$ with degree $\Delta$ such that the set system $X_n(k)$ admits a 2-query direct product test with soundness $\delta$.\footnote{We remark that in the works~\cite{BafnaMinzer,BLM24}, 
only the case of Boolean alphabets $\Sigma = \{0,1\}$ was considered. Their proof however is easy to 
adapt to any alphabet, and in Section~\ref{sec:dp-large-alph} we explain these adaptations for the sake of completeness.}
\end{theorem}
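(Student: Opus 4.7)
The plan is to combine two prior results: a structural theorem that reduces soundness of the 2-query direct product tester on $X(k)$ to two ``clean'' properties of the complex $X$, plus an explicit construction of bounded-degree complexes satisfying those properties. The structural theorem, proved in \cite{BafnaMinzer, DiksteinD-agreement}, asserts the following: if $X$ is a $d$-dimensional simplicial complex all of whose links (up to appropriate codimension) have second-largest singular value at most some small $\lambda = \lambda(\delta)$, and if all links up to some codimension are coboundary expanders (with appropriate parameters), then for $k$ large enough the natural 2-query direct product tester on $X(k)$ has soundness at most $\delta$. So the task reduces to exhibiting, for every sufficiently small $\lambda$ and large enough $k,d$, an infinite family of $d$-dimensional clique complexes of constant degree $\Delta = \Delta(\lambda)$ whose links enjoy both of these properties.

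For the construction I would invoke the Chapman-Lubotzky complex \cite{ChapmanL} with parameters tuned as in \cite{BLM24, dikstein2023swap, DDL24}. Concretely, one takes a quotient of a Bruhat-Tits building of type $\widetilde{A}_d$ over a local field by a suitable cocompact arithmetic lattice, which yields bounded-degree finite $d$-dimensional clique complexes. Their spectral expansion (i.e., small link spectral gap) follows from the Ramanujan-type bounds on the building, passed to the quotient via standard representation-theoretic arguments (Kazhdan's property (T) or local spectral analyses of the links, which are spherical buildings of type $A_{d-1}$ over finite fields and have small second eigenvalue by the Feit-Higman style bounds). By taking the lattice deep enough in a suitable congruence filtration one can also ensure that the global degree stays bounded as $n \to \infty$.

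The heart of the argument, and the main obstacle, is verifying the coboundary expansion of the links up to sufficiently high codimension. For the lowest-dimensional links, one can directly invoke the coboundary expansion of spherical buildings of type $A_m$ over finite fields, a statement proved in this line of work by reducing it to the acyclicity of certain chain complexes associated to buildings. To lift from links to the whole complex one would use the local-to-global cone/garland framework: if every link of a face of codimension at least $2$ is a coboundary expander with good constant, and the global complex is a sufficiently strong spectral HDX, then the global complex is a coboundary expander as well. Iterating this inside-out gives coboundary expansion at every scale up to the required codimension.

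Finally, one must handle the footnote issue: \cite{BafnaMinzer, BLM24} stated everything over $\Sigma = \{0,1\}$, but the construction of the full PCP needs an arbitrary constant alphabet $\Sigma$. Here the needed adaptation is to replace $\mathbb{F}_2$-valued (co)chains with $\Sigma$-valued cochains, equivalently to work with coboundary expansion with coefficients in the symmetric group $\text{Sym}(\Sigma)$ (a non-abelian generalization). I expect that the reduction from direct product testing to coboundary expansion goes through with purely notational changes, while the coboundary expansion bounds for spherical-building links extend to non-abelian coefficients by essentially the same chain-complex acyclicity argument; this is the content promised in Section~\ref{sec:dp-large-alph}. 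Taking $\Delta$ large enough (and $\lambda$ small enough) as a function of $\delta$ in the Chapman-Lubotzky construction then yields the desired family $\{X_n\}$.
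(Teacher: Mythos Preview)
Your high-level plan is correct and matches how the cited works establish the theorem: the reduction of \cite{BafnaMinzer,DiksteinD-agreement} from direct-product soundness to spectral plus (UG) coboundary expansion, instantiated on the Chapman--Lubotzky complexes as analyzed in \cite{BLM24,dikstein2023swap,DDL24}. Two points need correction.

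First, a factual slip: the Chapman--Lubotzky complexes are quotients of the affine building of type $\tilde{C}_d$ (the building of $\Sp_{2d}$ over a local field), not type $\tilde{A}_d$. The coboundary-expansion arguments in \cite{BLM24,DDL24} exploit this specific structure; the type-$\tilde{A}$ quotients (the LSV complexes) are not known to have the required coboundary expansion.

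Second, and more substantively, you have the alphabet adaptation backwards. The UG coboundary expansion used in \cite{BafnaMinzer} is over $S_m$ where $m$ is the \emph{list size} arising from list-decoding inside each $d$-face; this $m$ depends only on $\delta$, not on $|\Sigma|$, so no extension of coboundary expansion to $\text{Sym}(\Sigma)$ is needed or used. The alphabet actually enters in the reduction from agreement testing to list-agreement testing, specifically the sub-sampling step (Lemma~4.12 of \cite{BafnaMinzer}), which in the Boolean case relies on the dense-CSP sampling theorem of \cite{AlonVKK02}. For moderate alphabets the paper substitutes the larger-alphabet analogue of \cite{BarakHHS11}, which forces $d$ to depend on $|\Sigma|$; to remove that dependence and handle all $\Sigma$ on a fixed complex, the paper applies a random hash $h:\Sigma\to[R]$ for a constant $R=R(\delta)$, invokes the moderate-alphabet result on $h\circ F$ to obtain a global $g:X(1)\to[R]$, and then recovers a $\Sigma$-valued global function via a 99\%-regime argument restricted to the $k$-faces where $h\circ F$ already agrees with $g$. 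This is not purely notational and is precisely what Section~\ref{sec:dp-large-alph} carries out.
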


With Theorem~\ref{thm:dp_intro} in hand, one may hope to obtain a derandomized parallel repetition result, just like Dinur and Meir~\cite{DinurMeir} turned the tester of~\cite{ImpagliazzoKW09} into a small soundness PCP. Thus, one again wishes to be able to convert a given CSP $\Psi$ into a CSP $\Psi'$ whose underlying graph is compatible with the HDXs from Theorem~\ref{thm:dp_intro}. 
As we elaborate below, this task entails several challenges, and the main contribution
of the current work is to perform such a conversion via a novel connection to the almost-everywhere reliable transmission problem, and the construction of \emph{fault-tolerant} routing protocols on HDX.

\subsection{Almost-Everywhere Reliable Transmission}\label{sec:ae-intro}
In the almost everywhere reliable transmission problem from~\cite{Dwork},\footnote{For ease of application, our formalization slightly deviates from the standard one used in the literature.} the goal is to design a sparse graph $G = (V,E)$, that allows transference of messages in a fault tolerant way.  More precisely, for any permutation $\pi\colon V\to V$ and an alphabet $\Sigma$, the goal is to design a $L$-round protocol; at every round of the protocol each node can send symbols from $\Sigma$ to its neighbours in $G$, and at the end of the protocol the message of $v$ should be transmitted to $\pi(v)$ for most $v$. This guarantee should hold even if a small fraction of vertices or edges in the graph behave maliciously. The parameters of interest in this problem are as follows:
\begin{enumerate}
\item \textbf{Work Complexity:} the work complexity of a protocol is defined as the maximum computational complexity any node in the graph $G$ incurs throughout the protocol.
\item \textbf{Degree:} this is the degree of $G$, which we aim to minimize.
\item \textbf{Tolerance:} a protocol is considered $(\eps(n), \nu(n))$-vertex (edge) tolerant if, when an adversary corrupts up to $\eps(n)$-fraction of vertices (edges) of the graph (allowing them to deviate arbitrarily from the protocol), at most $\nu(n)$-fraction of the transmissions from $u\rightarrow \pi(u)$ are disrupted. 
\end{enumerate}

With this in mind, the simplest protocol one can design on is a ``pebble-routing protocol'' by showing that for any permutation $\pi:V\rightarrow V$, there exists a set of $L$-length paths in $G$ from $u \rightarrow \pi(u)$ such that at any time step every vertex is used exactly once across all the paths. This protocol has work complexity $O(L)$ and is $(\eps,\eps L)$-vertex-tolerant for all $\eps >0$. The work of Dwork et al.~\cite{Dwork} shows that the constant-degree ``butterfly network'' has pebble-routing protocols with $L=O(\log n)$, thus giving a network that can tolerate $\eps \lll \frac{1}{\log n}$-fraction of vertex corruptions. All protocols which improve upon this result do so by considering more complicated protocols that use error correction. The first such protocol which could tolerate a linear number of vertex corruptions was given by~\cite{Upfal}, but it has $\exp(n)$ work complexity. In~\cite{Chandran} a routing network with ${\sf poly}(\log n)$ degree and $O(n)$ work complexity is constructed, and in~\cite{JRV} a routing network with $O(\log n)$ degree and ${\sf poly}(\log n)$ work complexity. As per tolerance, the last two works show that for all $\eps$ less than some universal constant $c>0$, their network is $(\eps, \eps + O(\frac{\eps}{\log n}))$-vertex tolerant.

\subsection{Our Techniques}\label{sec:techniques}
We begin by describing the connection between routing protocols and PCPs. We then discuss our construction of routing protocols over HDX and elaborate on the rest of the proof of \Cref{thm:main}.
\subsubsection{The Connection between Fault-Tolerant Routing Protocols and PCPs}\label{sec:overview:conn}
The works of~\cite{polishchuk1994nearly,DinurMeir} both used the fact that the De-Bruijn graph $G$ on $n$ vertices has a pebble-routing protocol of length $O(\log n)$ to transform the constraint graph of any 2-CSP to the De-Bruijn graph.  Their argument proceeds as follows. Suppose that we have a CSP instance $\Psi$ over a $d$-regular graph $H$ on $n$ vertices. First, break the graph $H$ into a union of $d$ disjoint perfect matchings $\pi_1,\ldots,\pi_d\colon [n]\to [n]$,\footnote{Strictly speaking, we can only guarantee that this is possible to do in the case that $H$ is bipartite, and we show an easy reduction to this case.} thought of as permutations on $V(H)$. For each permutation $\pi_i$, we have a pebble-routing protocol $\cP_i$ on the De-Bruijn graph, where a vertex $v$ transmits its supposed label to $\pi_i(v)$ along the path from $v\rightarrow\pi_i(v)$. In the new CSP $\Psi'$, the alphabet of 
each vertex $v$ consists of the messages 
that it sent and received throughout each one of the $d$ routing protocols. The constraints of 
$\Psi'$ over $(u,v)\in E(G)$ are that for all $t$, the label sent by $u$ at step $t$ matches the label received by $v$ from $u$ at step $t+1$. Thus, for each $i$ the vertex $\pi_i(v)$ will also
have as part of their assignment the symbol 
they supposedly got from $v$ at the 
end of the routing protocol, and we only
allow labels that satisfy the constraint 
$(v,M_i(v))$ in $\Psi$. It is easy to see that if $O(\eps/d'\log n)$-fraction of the edges of $\Psi'$ are violated, where $d'$ is the degree of $G$, then at most $\eps$-fraction of the paths are unsuccessful. As $d' = O(1)$, this implies that
\begin{enumerate}
    \item If ${\sf val}(\Psi) = 1$, then ${\sf val}(\Psi') = 1$.
    \item If ${\sf val}(\Psi)\leq 1-\eps$, then
    ${\sf val}(\Psi')\leq 1-\Omega\left(\frac{\eps}{\log n}\right)$.
    \item If the alphabet of $\Psi$ is of constant size, then the
    alphabet of $\Psi'$ is of polynomial size.
\end{enumerate}
Note that any constant degree graph (including the graph underlying $X_n$) can at best have a pebble-routing protocol of length $\Theta(\log n)$, which is the source of the $\log n$-factor loss in soundness above. This loss is unaffordable to us since the subsequent gap amplification via the direct-product tester in \Cref{thm:dp_intro} would lead to a huge size blow-up. 

Our main observation is that one can use protocols outside the scope of pebble-routing protocols and still get a connection with PCPs. By that, we mean that we can do a similar transformation given any routing protocol on $G$ as long as it is efficient. More precisely, we establish the following general connection:
\begin{lemma}\label{lem:routing-to-pcp-general}
Suppose $G$ is a regular graph on $2n$ vertices such that for all permutations $\pi:V(G)\rightarrow V(G)$, $G$ has a $(\eps,\nu)$-edge-tolerant protocol with work complexity $W$ that can be constructed in time $\poly(n)$. Then there is a polynomial time reduction that given any 2-CSP $\Psi'$ on a $k$-regular graph $H$ with $|V(H)|\leq n$ produces a 2-CSP $\Psi$ on $G$ such that,
\begin{itemize}
\item If $\val(\Psi')=1$ then $\val(\Psi)=1$.
\item If $\val(\Psi')\leq 1-O(\nu)$ then $\val(\Psi)\leq 1-\eps$.
\item If the alphabet of $\Psi'$ is $\Sigma$ then the alphabet of $\Psi$ is $\Sigma^{kW}$. 
\end{itemize}
\end{lemma}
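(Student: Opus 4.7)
The plan is to extend the pebble-routing embedding of~\cite{polishchuk1994nearly,DinurMeir} recalled in Section~\ref{sec:overview:conn} to the general fault-tolerant setting. First I would replace $H$ by its bipartite double cover (padding with dummy isolated vertices to hit exactly $|V(G)|=2n$), which preserves both value and $k$-regularity. K\"onig's theorem then decomposes the edges of $H$ into $k$ perfect matchings, equivalently $k$ involutions $\pi_1,\dots,\pi_k:V(H)\to V(H)$. Identifying $V(H)$ with $V(G)$, the hypothesis supplies for each $i$ an $(\eps,\nu)$-edge-tolerant routing protocol $\mathcal{R}_i$ on $G$ for $\pi_i$ of work complexity $W$, all built in $\poly(n)$ time.

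I would then define $\Psi$ on $G$ by giving each vertex $v$ an alphabet symbol encoding a claimed label $A(v)\in\Sigma$ together with a claimed transcript $\tau_i(v)\in\Sigma^{\le W}$ for each $\mathcal{R}_i$; the total alphabet has size $|\Sigma|^{O(kW)}$. For every $\{u,v\}\in E(G)$ the binary constraint is the conjunction of: (i) for every $i$ and round $t$, the symbols $\tau_i(u)$ records as being sent to $v$ match those $\tau_i(v)$ records as received from $u$ (and symmetrically); (ii) each $\tau_i(u)$ is the honest execution of $\mathcal{R}_i$ on $A(u)$ and the receipts it records, and likewise at $v$; and (iii) whenever $\pi_i(w)=u$, the received symbol $R_i(u)$ encoded in $\tau_i(u)$ together with $A(u)$ satisfies $\Phi_{w,u}$ from $\Psi'$, and likewise at $v$. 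Completeness is immediate: given a satisfying $A$ for $\Psi'$, fill in transcripts by honestly simulating each $\mathcal{R}_i$ on $A$; with no adversary, (i)--(iii) all hold.

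For soundness, take $L$ satisfying more than a $1-\eps$ fraction of $\Psi$ and set $A(v):=L(v)|_\Sigma$. Call $v$ \emph{bad} if any local check (ii) or (iii) fails at $v$; since this violates every edge of $G$ incident to $v$, the bad set has density $O(\eps)$ by regularity. Fix a protocol $i$ and let $C_i\subseteq E(G)$ be the edges that either violate (i) for $\mathcal{R}_i$ or touch a bad vertex, so $|C_i|\le O(\eps)|E(G)|$. Replacing every bad $v$ by an honest player on input $A(v)$ and regarding $C_i$ as adversarial edge corruption, the $(\eps,\nu)$-edge-tolerance of $\mathcal{R}_i$ guarantees at least a $1-\nu$ fraction of the $\pi_i$-transmissions $w\to\pi_i(w)$ succeed; each such success with both endpoints good yields, via check (iii), a satisfied $\Psi'$-edge on $\{w,\pi_i(w)\}$. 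Summing over $i$ gives $\val(\Psi')\ge 1-O(\nu)$.

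The main obstacle is the per-protocol corruption accounting: the single $\eps$-fraction budget on violated $\Psi$-edges must simultaneously bound the adversarial corruption of every $\mathcal{R}_i$, which works because each edge constraint bundles consistency checks for all $k$ protocols in parallel, so any per-protocol inconsistency already kills the edge. A secondary subtlety is that local failures at bad vertices are not literally edge corruptions, but can be absorbed into $C_i$ at only $O(\eps)|E(G)|$ extra edges per protocol because each bad vertex contributes only its incident edges, preserving the corruption budget up to constants.
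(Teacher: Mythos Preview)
Your overall architecture matches the paper's proof, but there are two genuine gaps.

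\textbf{The bipartite double cover does not preserve soundness.} Your first step replaces $H$ by its bipartite double cover and asserts this ``preserves both value and $k$-regularity.'' Regularity is fine, but soundness is not: take $H$ to be a triangle with inequality constraints over $\{0,1\}$, which has value $2/3$; its double cover is the $6$-cycle, which has value $1$. So $\val(\Psi')\le 1-O(\nu)$ need not survive the passage to the double cover. The paper handles this differently: after padding with copies of $H$ (and a dummy $k$-regular piece with equality constraints) to reach $n$ vertices, it forms a bipartite graph on two copies $L,R$ of the vertex set, places the original constraints across, and crucially adds $k$ parallel equality edges between $(u,1)$ and $(u,2)$ for every $u$. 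The resulting graph is $2k$-regular and decomposes into $2k$ matchings; the equality edges force the two copies of each vertex to mostly agree, which is what recovers an assignment to the original $\Psi'$.

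\textbf{Placing the local checks (ii) and (iii) on edges inflates the corruption budget.} In your construction the honesty check (ii) and the satisfaction check (iii) are part of every edge constraint, so a vertex that fails either becomes ``bad'' and contributes all its incident edges to each $C_i$. You correctly bound the bad fraction by $O(\eps)$, but this means $|C_i|$ is only bounded by $c\eps|E(G)|$ for some constant $c>1$, whereas the hypothesis only gives you an $(\eps,\nu)$-tolerant protocol, not a $(c\eps,\nu)$-tolerant one. Moreover, even granting the tolerance, your final bound becomes $\val(\Psi')\ge 1-\nu-O(\eps)$ rather than $1-O(\nu)$, which is strictly weaker when $\nu\ll\eps$ (a regime the paper explicitly cares about, e.g.\ $(\eps,1/\operatorname{polylog} n)$-tolerance). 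The paper avoids both problems by \emph{folding}: checks (ii) and (iii) depend only on the label of a single vertex, so they are enforced by restricting the alphabet $\Sigma(u)$ rather than by edge constraints. Then every vertex is automatically ``good,'' the corrupted set is exactly the $\le\eps$ fraction of edges violating message consistency, and the tolerance guarantee yields $\operatorname{viol}(B)\le\nu$ with no additive $O(\eps)$ loss.

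Both fixes are local and your proof goes through once you adopt them.
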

We can now hope to instantiate the above lemma with routing protocols for HDX that are tolerant to a constant fraction of edge-corruptions to embed a 2-CSP onto an HDX while maintaing the soundness.

\subsubsection{Challenges in Using Existing Routing Protocols for PCPs}
With Lemma~\ref{lem:routing-to-pcp-general} in mind, the result of~\cite{JRV} achieves the type of 
parameters we are after. There are two significant differences between their setting and ours, though:
\begin{enumerate}
\item \textbf{Picking the graph:} in the setting of the almost everywhere reliable transmission problem, one is free to design the routing graph $G$ as long as it has the desired parameters. In our case, we need to use the graphs $G$ underlying complexes that supports a direct product tester with low soundness, for which we currently know of only one example -- the complexes from Theorem~\ref{thm:dp_intro}.
\item \textbf{The corruption model:} The prior works discussed in \Cref{sec:ae-intro} have mostly considered the vertex corruption model, whereas to apply \Cref{lem:routing-to-pcp-general} one must design protocols resilient in the more challenging \emph{edge corruption} model. This model was studied by~\cite{chandran2012edge} who gave a protocol with work complexity $n^{\Theta(1)}$, which is again unaffordable to us. 
\end{enumerate}

The main new tool in the current paper is  efficient routing
protocols on high dimensional
expanders in the edge corruption model. Below, we elaborate on the two protocols that we 
give in this paper.

\subsubsection{Clique Based Routing Network on HDX} 
A key idea introduced by~\cite{Chandran} (also used by~\cite{JRV}) is that to construct tolerant routing protocols on a graph $G$, it is beneficial for it to contain many well inter-connected large cliques. In these cliques, messages
can be transmitted very quickly, and they can be used for error correction. As the complexes
$X_n$ from Theorem~\ref{thm:dp_intro} naturally have many well-connected cliques, it stands
to reason that they too admit routing protocols with similar performance to the protocols in~\cite{JRV}.

Indeed, this turns out to be true; see Section~\ref{sec:clique_to_clique} for a formal statement of our clique-based routing scheme. Our analysis of this scheme though requires the complex $X_n$ to have dimension $d = \Omega((\log\log n)^2)$ and to achieve that, the complexes from Theorem~\ref{thm:dp_intro} have
size $n d^{C\cdot d^2} = n2^{{\sf poly}(\log \log n)}$, which falls short of being quasi-linear.
We therefore resort to an alternative routing scheme, based on yet another highly connected structure found in HDX -- links.

\vspace{-1pt}

\subsubsection{Link-based Routing Network on HDX}\label{sec:overview_ltol}
Our link-based routing scheme requires milder expansion properties from $X$, and these properties can be achieved by the complexes from Theorem~\ref{thm:dp_intro} with quasi-linear size. More concretely, take a complex $X = X_n$ as in Theorem~\ref{thm:dp_intro} with spectral expansion $\gamma = (\log n)^{-C}$ and dimension $d = C$ for a large constant $C>0$.\footnote{In Theorem~\ref{thm:dp_intro} the degree is thought of as a constant, giving an HDX with arbitrarily small (but constant) expansion. In Section~\ref{app:cl}, it is shown how to modify this construction to achieve better spectral expansion at the expense of having poly-logarithmic degree.} Let $G=(X(1),X(2))$ be the underlying graph with $n$ vertices and $n\pl n$ edges. We show how to use the 
graph $G$ for routing, and towards this end we use the following properties of $X$:
\begin{enumerate}
    \item 
    {\bf Expansion:} the complex $X$ is a $d$-partite one-sided $\gamma$-local spectral expander. We defer the formal definition of high-dimensional expansion to Section~\ref{sec:prelim}, but for the
    purposes of this overview it is sufficient
    to think of this as saying that many graphs
    associated with $X$, such as the underlying 
    graph $G$ as well as the underlying graph of each link $L_u$, are partite graphs with second largest eigenvalue at most $\gamma$. Here and throughout, the link of $u$ refers to the complex consisting of all faces of $X$ containing $u$, where we omit the vertex $u$ from them.
    \item 
    {\bf Very well connected links:} the links of $X$ are highly connected. By that, we mean that inside every vertex-link $L$ we can set up a collection of constant length paths inside it satisfying: (a) the collection includes a path between almost all pairs of vertices 
    in $L$, and (b) no edge is overly used.
\end{enumerate} 
At a high level, the first property implies that links sample the vertices of $G$ very well and the second property allows us to pretend that links behave very closely to cliques. With this in mind, we now present an overview of our
link-based routing network. 
We first consider the simplified case that 
the graph $G$ is regular; even though our graph $G$ is not regular, it is a helpful case to consider and we later explain how to remove this assumption. 

{\bf Formalizing the setup:} 
Let $V=X(1)$ and fix a permutation $\pi\colon V\to V$. In our setup, for each vertex $v$ in $X$, the vertices in its link $L_v$ hold a common message $m_v$. Denoting by $\Delta$ the degree of $G$, we note that every vertex holds $\Delta$ symbols, and for each one of these it knows which link the message is associated with. By the end of the protocol on $G$, we want that for at least $1-1/\pl n$-fraction of $v$, the majority of vertices in $L_{\pi(v)}$ hold $m_v$. 

{\bf Setting up external paths:}
Recall the 
graph $G = (X(1),X(2))$. 
By the spectral expansion of 
$X$ it follows that the graph $G$ is an expander. At this point we can use the 
results of~\cite{AlonCG94,Nenadov23}, which show $O(\log n)$-length pebble-routing protocols for regular, sufficiently good spectral expanders\footnote{In reality we use a simple deterministic polynomial time algorithm to find paths with much weaker guarantees than a pebble-routing protocol (see Theorem~\ref{thm:expander-routing}). This is because while it is likely that the constructions of~\cite{AlonCG94,Nenadov23} can be made algorithmic and deterministic, the conclusion of \Cref{thm:expander-routing} suffices for our purposes.}. 
At each step in the routing protocol each vertex holds multiple symbols (being part of several links), and we stress that it always knows which symbol it is supposed to
transmit at each point of the protocol.

With this set-up we describe the protocol. At a high level, the goal in the $t$-th step is to transmit the message from a link $L_w$ to the link $L_{w'}$ that occur consecutively in a path above. Note that it must be the case that the vertices $w$ and $w'$ are neighbours in $G$, and hence we may consider the link $L_{w,w'}$. 

{\bf The analysis in an idealized ``clique'' setting:} to 
get a sense for the protocol, we first 
pretend that each one of the links $L_w$, 
$L_{w'}$ and $L_{w,w'}$ in fact forms a 
clique. In this case, the $t^{th}$ step of our protocol begins by the vertices in 
$L_{w}$ sending their message to the 
vertices in $L_{w,w'}$. Each vertex in $L_{w,w'}$ computes the majority value among the symbols they received in this step, and then forwards this symbol to all of the vertices 
in $L_{w'}$. Each vertex in $L_{w'}$ also 
computes the majority value among the symbols it receives, at which point the $t^{th}$ step of the protocol is over. 

The key idea behind the analysis of this protocol is that vertices compute the correct value they were supposed to so long as they are not over-saturated with corrupted edges. More precisely, let $\mathcal{E}\subseteq X(2)$ be the set of corrupted edges, which therefore is at most $\eps$ fraction of the edges. We say a $1$-link $L_w$ is good if at most $\sqrt{\eps}$ fraction of the edges in it are corrupted. Let $V_w\subseteq L_w$ be the set of vertices for which at most $\eps^{1/4}$-fraction of the edges adjacent to them in $L_w$ are corrupted; note that if $L_w$ is good then by an averaging argument, $|V_w|\geq (1-\eps^{1/4})|L_w|$. We refer to the vertices outside $V_w$ as the doomed vertices of $L_w$.
For a $2$-link $L_{w,w'}$, we say it is good if both $L_w,L_{w'}$ are good and at most $\eps^{1/8}$-fraction of the
vertices in it are doomed with respect to $L_w$ or $L_{w'}$. Using spectral arguments it is easy to show that a $\left(1-\frac{1}{\pl n}\right)$-fraction of $1$-links and $2$-links are good.

One can now show that if $L_w$, $L_{w'}$ and $L_{w,w'}$ are all good then the majority value is transmitted from $L_w$ to $L_{w'}$. We can conclude the argument by a simple union bound -- since at most $1/\pl n$-fraction of 1-links and 2-links are bad and the external paths $L_v \rightarrow L_{\pi(v)}$ are of length $O(\log n)$, at most $1/\pl n$-fraction paths contain at least one bad link, therefore all but $1/\pl n$-fraction of the transmissions are successful.\footnote{Technically, the number of 2-links is asymptotically greater than the number of paths, therefore this union bound does not work. We handle this issue by moving to the zig-zag product of $G$ where the number of paths is equal to the number of 2-links.}

{\bf Back to the real ``link'' setting:} 
Links are not as well connected as 
cliques, but they turn out to be connected enough
to almost make the above protocol go through. Indeed, the ``clique'' assumption in 
the above paragraph was mainly used to argue
that the transmission from $L_w$ to $L_{w,w'}$
and from $L_{w,w'}$ to $L_{w}$ can each be done cleanly in a single round. 
While this property is no longer true for 
links, we use the ``very well connected links'' property above to circumvent it. At a high level, we simply replace the immediate transmission between a pair of vertices $u,v$ in the above idealized setting, with a short path between $u \rightarrow v$ inside $L_w$. Using the fact that edges in $L_w$ are used uniformly across the internal paths in $L_w$, we can show that very few of these short paths can be corrupted and a similar error analysis for the link to link transfer goes through.

{\bf Lifting the regularity assumption:} Finally we remark that in our actual protocol we use the zig-zag product of~\cite{ReingoldVW00}, a powerful tool from derandomization. It allows us to define a closely related graph $Z$ which is regular and has constant degree. This move is compatible with routing: paths in $Z$ have natural correspondence to paths in $G$, and the uniform distribution over $Z$ corresponds to the stationary distribution over $G$. Although the routing still takes place on $G$, moving to $Z$ allows us to deal with the irregularity of $G$ and with other technical issues that come up in implementing the union bound above.

\begin{remark}\label{remark:intro2}
We note here that the link-to-link routing protocol works for any simplicial complex with the properties stated above -- sufficiently good local spectral expansion of $X$ and the well-connectedness of links of $X$. In particular, one can pick any other complex $X$ with these properties such as the $d$-partite LSV complexes~\cite{LSV1,LSV2} (in fact, we do so in the proof of~\Cref{thm:edge-routing}) and the complexes of Kaufman and Oppenheim~\cite{kaufman2020local}, and still get the same tolerance guarantees as above.
\end{remark}

\subsubsection{Composing the Links Based Routing Protocol with Known PCP Constructions}
Using our link based routing protocol along with \Cref{lem:routing-to-pcp-general} on the existing PCP results gives a quasi-linear size PCP on the graphs underlying the complexes from Theorem~\ref{thm:dp_intro} with soundness bounded away from $1$. 

\begin{theorem}\label{thm:CSP_on_CL_large_soundness2}
There exists $\eps>0$ such that the following holds. Let $\{X_{n'}\}_{n'\in N}$ be the infinite family of clique complexes from Theorem~\ref{thm:dp_intro}. Then for sufficiently large $d\in\mathbb{N}$, there is $C>0$ and a polynomial time reduction mapping a $3$SAT instance $\phi$ of size $n$ to a CSP $\Psi$ over $(X_N(1),X_N(2))$, for some $d$-dimensional $X_N$, such that:
\begin{enumerate}
    \item If $\phi$ is satisfiable, then $\Psi$ is satisfiable.
    \item If $\phi$ is unsatisfiable, then ${\sf val}(\Psi)\leq 1-\eps$.
    \item The size of the graph underlying $\Psi$ is at most
    $n (\log n)^{C}$, the alphabet $\Sigma$ of 
    it satisfies that $\log(|\Sigma|)\leq (\log n)^C$, and the decision complexity of the
    constraints (i.e., the circuit complexity of a circuit
    checking if a given pair of labels satisfies a constraint)
    is at most $(\log n)^C$.
\end{enumerate}
\end{theorem}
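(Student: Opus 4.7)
The plan is to compose Dinur's quasi-linear PCP with the link-based routing protocol on HDX via \Cref{lem:routing-to-pcp-general}. First, apply Dinur's PCP theorem~\cite{Dinur07} to the input 3SAT formula $\phi$ of size $n$. This produces a 2-CSP $\Psi'$ of size $n\cdot \polylog n$ on a constant-degree graph $H$ over a constant alphabet $\Sigma_0$, satisfying $\val(\Psi')=1$ in the completeness case and $\val(\Psi')\leq 1-\delta_0$ in the soundness case, for some absolute constant $\delta_0>0$.

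Next, take the graph $G=(X_N(1),X_N(2))$ underlying a complex from \Cref{thm:dp_intro}, modified using \Cref{app:cl} to have dimension $d=O(1)$, spectral expansion $(\log n)^{-C}$, and poly-logarithmic degree, and with $|X_N(1)|\geq 2|V(H)|$. Pad $\Psi'$ with dummy vertices and trivially satisfied constraints so that its vertex count matches the hypothesis of \Cref{lem:routing-to-pcp-general}. Invoke the link-based routing scheme described in \Cref{sec:overview_ltol}, applied on $X_N$: for each permutation $\pi\colon X_N(1)\to X_N(1)$, this yields an $(\eps_0, O(\eps_0))$-edge-tolerant routing protocol on $G$ with work complexity $W=\polylog n$, constructible in $\poly(n)$ time, where $\eps_0>0$ is a small absolute constant.

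Plugging this routing protocol into \Cref{lem:routing-to-pcp-general} produces the desired CSP $\Psi$ on $G$. Choosing $\eps_0$ small enough so that the edge-tolerance parameter $O(\eps_0)$ is strictly smaller than $\delta_0$ (with the hidden constant from the lemma) guarantees that unsatisfiable $\phi$ maps to $\val(\Psi)\leq 1-\eps_0$, while satisfiable $\phi$ yields $\val(\Psi)=1$. The size of $G$ is $n\cdot \polylog n$, the alphabet of $\Psi$ is $\Sigma_0^{kW}=2^{\polylog n}$ so $\log|\Sigma|\leq (\log n)^C$, and each constraint of $\Psi$ merely checks symbol equalities and verifies the local computation of a single node of the routing protocol across its $O(W)$ rounds, giving decision complexity at most $(\log n)^C$.

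The bulk of the work is encapsulated inside the link-based routing theorem and \Cref{lem:routing-to-pcp-general}, so beyond this pipeline the only thing to verify is parameter bookkeeping. The main potential obstacle is ensuring that the variant of the complex produced by \Cref{app:cl} (with poly-logarithmic degree and $(\log n)^{-C}$ spectral expansion) still satisfies the two structural properties the link-based router relies on: sufficiently strong $d$-partite local spectral expansion, and the well-connectedness of every vertex-link by short internal paths with nearly uniform edge use. Once these are confirmed for the modified complex, the rest of the proof is an assembly of the ingredients above.
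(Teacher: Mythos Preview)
Your high-level pipeline (Dinur's PCP, then embed onto the HDX via the link-based routing) matches the paper's approach. However, the way you glue the pieces has a real gap.

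You invoke \Cref{lem:routing-to-pcp-general}, whose hypothesis requires $G$ to be a \emph{regular} graph equipped with a standard vertex-to-vertex $(\eps,\nu)$-edge-tolerant protocol for every permutation $\pi:V(G)\to V(G)$. The graph $G=(X_N(1),X_N(2))$ underlying the Chapman--Lubotzky complexes is \emph{not} regular (the paper says so explicitly in \Cref{sec:regularity-issues}), so \Cref{lem:routing-to-pcp-general} does not apply as stated. Moreover, the link-based protocol actually established in the paper (\Cref{lem:link-routing}) is not a vertex-to-vertex protocol on $V(G)$: it routes permutations on the vertex set of the zig-zag product $Z=G\zz\cH$, and the messages live on \emph{links} $X_{v_1}(1)$ rather than on single vertices. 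So your sentence ``for each permutation $\pi\colon X_N(1)\to X_N(1)$, this yields an $(\eps_0, O(\eps_0))$-edge-tolerant routing protocol on $G$'' is not something the paper proves, and it is not obvious how to get it for the irregular $G$.

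The paper circumvents both issues simultaneously by proving a bespoke embedding lemma, \Cref{lem:improved-routing-to-PCP}, rather than going through \Cref{lem:routing-to-pcp-general}. There the $k$-regular CSP $\Psi'$ is first matched with vertices of the regular zig-zag graph $Z$ (not with $X(1)$), decomposed into matchings on $V(Z)$, and then the link-to-link protocol of \Cref{lem:link-routing} is used for each matching; the resulting CSP lives on $G$ because the routing itself happens on edges of $G$. The folding and constraint design in that proof are tailored to the link structure and are not an instance of the generic \Cref{lem:routing-to-pcp-general}. Your ``main potential obstacle'' (local spectral expansion and well-connected links) is in fact handled by \Cref{thm:cl}; the obstacle you missed is the irregularity of $G$ and the link-based (rather than vertex-based) nature of the protocol.
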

We stress that the key point of Theorem~\ref{thm:CSP_on_CL_large_soundness2} is that
the CSP is over the graph 
underlying the complex from Theorem~\ref{thm:dp_intro},
making it potentially compatible with derandomized parallel repetition. 

\begin{remark}\label{remark:intro1}
One can take the size efficient PCP construction 
of Dinur~\cite{Dinur07} as the starting point towards the proof of \Cref{thm:CSP_on_CL_large_soundness2}. Interestingly though, we can strengthen the connection in~\Cref{lem:routing-to-pcp-general} to get an improved version of \Cref{lem:routing-to-pcp-general} as an unexpected, positive side-effect. In \Cref{lem:improved-routing-to-PCP}, we show that one can embed an arbitrary 2-CSP with soundness $1-\frac{1}{\pl n}$ to a 2-CSP on an HDX (supporting a link-to-link protocol) with soundness $1-\Omega(1)$. This provides a 1-shot amplification procedure giving an alternative to the step by step  approach of Dinur~\cite{Dinur07}. The CSP $\Psi$ however has a large alphabet,
so (just like in Dinur's proof) we require
an alphabet reduction procedure to bring
the alphabet back to constant, while not affecting the soundness or size by much. Thus to prove \Cref{thm:CSP_on_CL_large_soundness2} one can also start with a size efficient PCP construction with weaker soundness guarantees, such as that of Ben-Sasson and Sudan~\cite{BensasonSudan}, and apply \Cref{lem:improved-routing-to-PCP}.
\end{remark}

\subsubsection{Derandomized Parallel Repetition for 2-CSPs on HDX}
Given the 2-CSP in \Cref{thm:CSP_on_CL_large_soundness2} on the graphs $G=(X_n(1),X_n(2))$ from Theorem~\ref{thm:dp_intro}, we can now use the direct-product testing theorem to amplify the soundness of the 2-CSP to any constant close to $0$. Specifically, given a 2-CSP $\Psi$ on $G$, one can naturally define a label cover instance, $\Psi'$ on the bipartite inclusion graph between $X_n(k)$ and $X_n(\sqrt{k})$ as follows. The PCP verifier reads symbols from the tables $F\colon X(k)\to \Sigma^k$ and $G\colon X(\sqrt{k})\to \Sigma^{\sqrt{k}}$, and performs the
following test:
\begin{enumerate}
    \item Sample $A\sim \pi_k$ and $B\subseteq A$ of size $\sqrt{k}$ uniformly.
    \item Read $F[A]$ and $G[B]$. Check that all of the constraints of $\Psi$ inside $A$ are satisfied by the 
    local labeling $F[A]$, and that $F[A]|_{B} = G[B]$.
\end{enumerate}
The instance $\Psi'$ has asymptotically the same size and alphabet. We show that if the direct product test on $X(k)$ has soundness $\delta$, then given that ${\sf val}(\Psi')\geq \delta$, one can get an assignment to $X(1)$ that is consistent with $X(k)$ and in particular satisfies a large fraction of the edges of $\Psi$, implying that $\val(\Psi)\geq 1-O(\delta)$. Using \Cref{thm:CSP_on_CL_large_soundness2} along with \Cref{thm:dp_intro} thus gives us the following conclusion:
\begin{theorem}\label{thm:CSP_on_small_soundness_large_alphabet}
For all $\delta>0$ there exists $C>0$ such that the following 
holds. There is a polynomial time reduction mapping a $3$SAT instance $\phi$ of size $n$ to a label cover instance 
$\Psi$ with the following properties:
\begin{enumerate}
    \item If $\phi$ is satisfiable, then $\Psi$ is satisfiable.
    \item If $\phi$ is unsatisfiable, then ${\sf val}(\Psi)\leq \delta$.
    \item The size of the graph underlying $\Psi$ is at most
    $n (\log n)^{C}$, the alphabet $\Sigma$ of 
    it satisfies that $\log(|\Sigma|)\leq (\log n)^C$,
    and the decision complexity of the
    constraints (i.e., the circuit complexity of a circuit
    checking if a given pair of labels satisfies a constraint)
    is at most $(\log n)^C$.
\end{enumerate}
\end{theorem}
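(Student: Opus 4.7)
The plan is to perform derandomized parallel repetition on the 2-CSP $\Psi_0$ produced by \Cref{thm:CSP_on_CL_large_soundness2}, using the direct product tester on $X_N(k)$ from \Cref{thm:dp_intro}. Let $\eps_0>0$ be the absolute soundness gap supplied by \Cref{thm:CSP_on_CL_large_soundness2}, and choose constants $k$ and the dimension $d\geq k$ of $X_N$ large enough that the DP tester on $X_N(k)$ has soundness $\delta_0=\delta_0(\delta,\eps_0)$ sufficiently small to drive the analysis below.

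Construct $\Psi$ exactly as in the paragraph preceding the theorem: the left side is $X_N(k)$ with alphabet $\Sigma^k$, the right side is $X_N(\sqrt k)$ with alphabet $\Sigma^{\sqrt k}$, and for each inclusion pair $B\subseteq A$ a projection constraint accepts $(F[A],G[B])$ iff (i) $F[A]$ satisfies every $\Psi_0$-edge inside $A$, and (ii) $F[A]|_B=G[B]$. This is a projection constraint with map $\tau\mapsto \tau|_B$, restricted to $\tau$ locally satisfying $\Psi_0|_A$. Since $k,d=O(1)$, both vertex sets have size $O(|X_N(1)|)=O(n(\log n)^{C})$; the alphabet satisfies $\log|\Sigma^k|\leq k(\log n)^{C}\leq (\log n)^{C'}$; and each constraint invokes $O(k^2)$ $\Psi_0$-checks of decision complexity $(\log n)^{C}$, so its own decision complexity is $(\log n)^{C'}$. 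Completeness is immediate: a satisfying $f$ for $\Psi_0$ lifts to $F[A]=f|_A$, $G[B]=f|_B$.

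For soundness, suppose $\val(\Psi)\geq \delta$ is witnessed by $F,G$. The consistency clause $F[A]|_B=G[B]$ passes with probability $\geq \delta$ over random $(A,B)$, and by Cauchy--Schwarz the two-sided DP test, that picks $B$, two independent $A,A'\supseteq B$ and checks $F[A]|_B=F[A']|_B$, passes with probability $\geq \delta^2$. Choosing $k$ so that $\delta_0\leq \delta^2/2$, \Cref{thm:dp_intro} yields $f'\colon X_N(1)\to\Sigma$ with $\Pr_{A\sim\pi_k}[\Delta(F[A],f'|_A)\leq 0.01]\geq \poly(\delta)$. On each such $A$ that additionally satisfies clause (i), $f'|_A$ agrees with $F[A]$ on $\geq 99\%$ of vertices while $F[A]$ satisfies every $\Psi_0$-edge in $A$, so $f'|_A$ satisfies $\geq 98\%$ of those edges. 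Since the edge distribution of $\Psi_0$ equals the marginal of ``pick $A\sim\pi_k$, then pick a uniform edge in $A$'' (by purity of $X_N$), averaging yields $\val_{\Psi_0}(f')>1-\eps_0$, contradicting unsatisfiability of $\phi$ which forces $\val(\Psi_0)\leq 1-\eps_0$.

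The main obstacle is this final averaging step: the closeness-to-$f'$ event holds only on a $\poly(\delta)$-fraction of $A$'s, while the local-satisfaction event holds on a $\geq \delta$-fraction, so a naive union bound is not enough. One must exploit that \emph{every} $(A,B)$ contributing to $\val(\Psi)\geq \delta$ witnesses both events jointly, together with a list-decoding strengthening of \Cref{thm:dp_intro}, to recover $f'$ with $\val_{\Psi_0}(f')>1-\eps_0$. A careful choice of $k$ as a large enough constant relative to $\eps_0$ is what enables the transition from a constant-gap soundness $\eps_0$ for $\Psi_0$ to arbitrarily small soundness $\delta$ for $\Psi$.
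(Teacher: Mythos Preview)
Your construction, the Cauchy--Schwarz step, and the invocation of the direct-product tester are exactly what the paper does (Lemma~\ref{lem:soundness-amp}, instantiated in Lemma~\ref{lem:low-soundness-final}). You also correctly locate the one nontrivial point: from ``a $\poly(\delta)$-fraction of $A\in X(k)$ have $f'|_A$ satisfying $\geq 1-O(\delta)$ of the $\Psi_0$-edges inside $A$'' one cannot conclude anything about $\val_{\Psi_0}(f')$ by naive averaging.

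However, the fix is not list-decoding; the paper's argument is much simpler. Let $\cB\subseteq X(2)$ be the set of $\Psi_0$-edges violated by $f'$, and suppose for contradiction that $\mu(\cB)>4\delta$. The bipartite inclusion graph $(X(2),X(k))$ has second singular value $O(1/\sqrt{k})$ by Lemma~\ref{lem:spectral_gap_of_graphs_from_HDX}, so the sampling Lemma~\ref{lem:sampling} gives
\[
\Pr_{A\sim \mu_k}\bigl[\mu(\cB|_A)\leq 2\delta\bigr]\;\lll\;\frac{1}{k\delta}.
\]
Choosing $k\geq 1/\poly(\delta)$ makes the right-hand side smaller than the $\poly(\delta)$ lower bound you already derived, yielding a contradiction. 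This is the entire ``averaging'' step: the HDX structure makes $X(k)$ a good sampler for $X(2)$, so a small set of good $A$'s forces the global violation set to be small. No strengthening of the direct-product theorem is needed.

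One minor clarification: since you folded local satisfaction into the left alphabet, clause~(i) holds for \emph{every} $A$, not merely a $\delta$-fraction. There is only one event to track (closeness to $f'$), so the two events you worry about intersecting are in fact the same event.
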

\subsubsection{Applying Alphabet Reduction}
Using Theorem~\ref{thm:CSP_on_small_soundness_large_alphabet} 
and alphabet reduction we conclude the main result of this paper, Theorem~\ref{thm:main}. More specifically, we use the $2$-query PCP composition technique of Moshkovitz and Raz~\cite{MoshkovitzR08} and its abstraction by Dinur and Harsha~\cite{DinurH13}. To apply the latter result, we use two folklore constructions of decodable PCPs, one based on the Reed-Muller code~\cite[Section 6]{DinurH13} and a similar construction based on the Hadamard code. For the sake of completeness we give the details of these constructions in Section~\ref{sec:app-dpcp}.

\section{Preliminaries}\label{sec:prelim}
In this section we give a few basic preliminaries
that will be used throughout the paper.
\vspace{-3ex}
\paragraph{Notations:} 
We use standard big-$O$ notations: we denote $A = O(B)$ 
or $A\lll B$ if $A\leq C\cdot B$ for some absolute constant
$C>0$. Similarly, we denote $A = \Omega(B)$ or $A\ggg B$ 
if $A\geq c B$ for some absolute constant $c>0$. We also 
denote $k\ll d$ to denote the fact that $d$ is taken to 
be sufficiently large compared to any function of $k$. 
If $A$ is a finite set and 
$i \leq |A|$, the notation
$B\subseteq_{i} A$ means that 
we sample a subset of size $i$
of $A$ uniformly.
Given a domain $X$ and a measure $\mu$ over 
it, we denote by $L_2(X;\mu)$ the space
of real-valued functions over $X$ endowed with
the expectation inner product.

\subsection{Properties of Expanders}
We need the following well known version of the expander
mixing lemma for bipartite graphs.
\begin{lemma}\label{lem:bip-eml}
Let $G = (U,V,E)$ be a bipartite graph in 
which the second singular value of the normalized adjacency matrix is at most $\lambda$, and let $\mu$ be the stationary distribution over $G$. Then for all $A \subseteq U$ and $B \subseteq V$ we have that
\[
\left|\Pr_{(u, v) \in E}[u \in A, v \in B] -\mu(A)\mu(B)\right| \leq \lambda\sqrt{\mu(A)(1-\mu(A))\mu(B)(1-\mu(B))}.\]    
\end{lemma}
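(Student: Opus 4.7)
The plan is to realize the left-hand side as an inner product involving the bipartite averaging operator, then decompose the two indicator functions into their constant and mean-zero components and apply the singular-value hypothesis only to the latter. Let $\mu_U$ and $\mu_V$ denote the marginals of $\mu$ on $U$ and $V$ (so that sampling an edge of $G$ is equivalent to sampling $u\sim\mu_U$ and then a uniformly random neighbor $v$), and define the averaging operator $M\colon L_2(V;\mu_V)\to L_2(U;\mu_U)$ by $(Mg)(u)=\E_{v\sim u}[g(v)]$. Then
\[
\Pr_{(u,v)\in E}[u\in A,\, v\in B] \;=\; \ip*{\one_A,\, M\one_B}_{\mu_U}.
\]
The operator $M$ fixes the constants with singular value $1$, and the hypothesis that the second singular value of the normalized bipartite adjacency operator is at most $\lambda$ is exactly the statement that $\|Mg\|_{L_2(\mu_U)}\leq \lambda\|g\|_{L_2(\mu_V)}$ for every $g\in L_2(V;\mu_V)$ with $\E_{\mu_V}[g]=0$.

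Next I would write $\one_A = \mu(A)\cdot\one + f$ and $\one_B = \mu(B)\cdot\one + g$, where $f$ and $g$ are the mean-zero parts under $\mu_U$ and $\mu_V$ respectively, and compute
\[
\|f\|_{L_2(\mu_U)}^2 = \mu(A)\bigl(1-\mu(A)\bigr), \qquad \|g\|_{L_2(\mu_V)}^2 = \mu(B)\bigl(1-\mu(B)\bigr).
\]
Expanding $\ip*{\one_A, M\one_B}_{\mu_U}$ and using that $M\one=\one$ together with the identity $\E_{\mu_U}[Mg]=\E_{\mu_V}[g]=0$ (which expresses that $M$ transports $\mu_V$ to $\mu_U$), the two cross terms vanish and only
\[
\ip*{\one_A,\, M\one_B}_{\mu_U} \;=\; \mu(A)\mu(B) \;+\; \ip*{f,\, Mg}_{\mu_U}
\]
remains.

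Finally I would bound the residual inner product by Cauchy--Schwarz followed by the singular-value hypothesis applied to the mean-zero function $g$:
\[
\abs*{\ip*{f,\, Mg}_{\mu_U}} \;\leq\; \|f\|_2\cdot\|Mg\|_2 \;\leq\; \lambda\,\|f\|_2\,\|g\|_2 \;=\; \lambda\sqrt{\mu(A)(1-\mu(A))\mu(B)(1-\mu(B))},
\]
which is the claimed inequality. There is no substantive obstacle here; the only points needing care are the identification of the edge sampling with the operator $M$ through the stationary distribution (so that the hypothesis on the normalized adjacency operator is the same as the singular-value bound on $M$ restricted to the orthogonal complement of constants) and the verification that, after the decomposition, the piece to which we apply the $\lambda$-bound is indeed mean zero under $\mu_V$.
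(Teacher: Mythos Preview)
Your proof is correct and is the standard argument for the bipartite expander mixing lemma. The paper does not actually prove this lemma; it is stated there as a well-known fact without proof, so there is nothing to compare against beyond noting that your approach is the canonical one.
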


We also use the following standard sampling 
property of bipartite expanders.
\begin{lemma}\label{lem:sampling}
Let $G = (U,V,E)$ be a weighted bipartite graph with second singular value at most $\lambda$. Let $B \subseteq U$ be a subset with $\mu(B)=\delta$ and set 
\[
T = \left\{v \in V \mid 
\left|\Pr_{u\text{ neighbour of }v}[u \in B]-\delta\right| > \eps \right\}.
\]
Then $\Pr[T]\leq \lambda^2\delta/\eps^2$.
\end{lemma}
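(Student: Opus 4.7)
The plan is to reduce the tail bound to a second-moment computation and invoke the singular value hypothesis. Define $f\colon U \to \R$ by $f(u) = \mathbb{1}_B(u) - \delta$, so that $\E_{u \sim \mu_U}[f(u)] = 0$ and $\|f\|_2^2 = \delta(1-\delta) \leq \delta$. Define $g\colon V \to \R$ by $g(v) = \E_{u \text{ nbr of } v}[f(u)] = \Pr_{u \sim v}[u \in B] - \delta$, which is exactly the quantity whose absolute value defines membership in $T$.

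The key step is to bound $\|g\|_2$. View the averaging operator $A\colon L_2(U;\mu_U) \to L_2(V;\mu_V)$ given by $(Af)(v) = \E_{u \sim v}[f(u)]$; this is precisely the operator whose singular values coincide with those of the normalized bipartite adjacency matrix. Since $f$ is orthogonal to the constants on $U$ (the top right singular vector) and $g = Af$, the hypothesis that the second singular value is at most $\lambda$ gives
\[
\|g\|_2^2 \;=\; \E_{v \sim \mu_V}[g(v)^2] \;\leq\; \lambda^2 \|f\|_2^2 \;\leq\; \lambda^2 \delta.
\]

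Finally, apply Markov's inequality to $g^2$: since $v \in T$ iff $g(v)^2 > \eps^2$,
\[
\Pr_{v \sim \mu_V}[v \in T] \;\leq\; \frac{\E_v[g(v)^2]}{\eps^2} \;\leq\; \frac{\lambda^2 \delta}{\eps^2},
\]
which is the desired bound. There is no serious obstacle here; the only point that requires a moment of care is identifying the averaging operator with the normalized bipartite adjacency matrix so that the singular value hypothesis applies cleanly to $f \perp \mathbb{1}$, and noting that this is exactly the same computation that underlies the expander mixing lemma stated in Lemma~\ref{lem:bip-eml}.
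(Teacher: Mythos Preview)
Your proof is correct and is the standard second-moment/Chebyshev argument for the sampling property of bipartite expanders. The paper does not supply its own proof of this lemma---it is stated in the preliminaries as a well-known fact---so there is nothing to compare against; your write-up is exactly what one would expect a full proof to look like.
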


\subsection{High-Dimensional Expanders}
    A $d$-dimensional 
    simplicial complex 
    $X = (X(0),\ldots,X(d))$ 
    with vertex set $X(1) = [n]$
    is a downwards closed 
    collection of subsets of $[n]$.
    We follow the convention that 
    $X(0) = \{\emptyset\}$, 
    and for each $i>1$ the set of 
    $i$-faces $X(i)$ is a collection
    of subsets of $X(1)$ of size $i$. The size of $X$ is the total
    number of faces in $X$. The 
    degree of a vertex $v\in X(1)$
    is the number of faces containing
    it, and the degree of $X$ is 
    the maximum degree over all $v\in X(1)$.
    \begin{definition}
For a $d$-dimensional simplicial complex $X = (X(0),X(1),\ldots,X(d))$, $0\leq i\leq d-2$
and $I\in X(i)$, the link of $I$ is the $(d-i)$-dimensional complex $X_I$ whose faces are given as
\[
X_I(j-i) = \{J\setminus I~|~J\in X(j), J\supseteq I\}.
\]
\end{definition}
For a $d$-dimensional complex $X=(X(0),X(1),\ldots,X(d))$ 
and $I\in X$ of size at most $d-2$, the 
graph underlying the link of $I$ is 
the graph whose vertices are $X_I(1)$ 
and whose edges are $X_I(2)$. 
We associate with $X$ a collection 
of distributions over faces. The distribution $\mu_d$ is the uniform 
distribution over $X(d)$, and 
for each $i<d$ the distribution 
$\mu_i$ is a distribution over 
$X(i)$ which results by picking 
$D\sim \mu_d$, and then taking 
$I\subseteq D$ of size $i$ 
uniformly. The distribution $\mu_{I,j-i}$ associated to the link of $I \in X(i)$ is the conditional distribution, $\mu_{j}~|~J \supseteq I$.

\paragraph{Density, average and average on links:}
for a set $S \subseteq X(i)$, let $\mu_i(S)$
denote its density with respect to $\mu_i$.
For $j>i$, $I\in X(i)$ and $S \subseteq X_I(j-i)$ let $\mu_{I,j-i}(S)$ denote the density of $S$ with respect to $\mu_{I,j-i}$. We often omit the subscript and simply write $\mu(S)$ and $\mu_{I}(S)$ in these cases when there is no risk of confusion. We extend the definition for functions, and for $F: X(j) \rightarrow \R$ we define $\mu(F) = \E_{J \sim \mu_j}[F(J)]$ as well as $\mu_{I}(F) = \E_{J \sim \mu_{I,j-i}}[F(J)]$. For a set $K \in X(k)$ for $k \geq j$, 
we define the restricted function $F|_{K}\colon \{J\in X(j)~|~J\subseteq K\}\to\mathbb{R}$ by
$F|_{K}(J) = F(J)$; we think of $\{J\in X(j)~|~J\subseteq K\}$ as being endowed with 
the natural condition measure of $\mu_j$ on this set, and hence define 
\[
\mu(F|_K)
=\E_{\substack{J\sim\mu_j\\ J \subseteq_j K}}[F(J)].
\]

\begin{definition}
We say a $d$-dimensional simplicial complex $X$ is 
a $\gamma$ one-sided local spectral expander if for every $I\in X$ of size at most $d-2$, the second eigenvalue of the normalized adjacency matrix of the graph $(X_I(1), X_I(2))$ is at most $\gamma$.
\end{definition}

\subsection{Properties of Local Spectral Expanders}
Recall that we associated with each $d$-dimensional simplicial 
complex $X$ a sequence of measures $\{\mu_k\}_{1\leq k\leq d}$, where 
$\mu_k$ is a probability measure over $X(k)$. Note that for all $0 \leq t \leq r \leq d$, a sample according to $\mu_t$ can be drawn by first sampling $R \sim \mu_r$, and then sampling $T\subseteq_{t} R$ uniformly. The converse is also true: a sample from $\mu_r$ can be drawn by first sampling $T \sim \mu_t$, and then sampling $R$ from $\mu_r$ conditioned on containing $T$. These observations 
give rise to the standard ``up'' and ``down'' operators, which we present next. %

\begin{definition}%
The operator $U_i^{i+1}$ is a map from $L_2(X(i); \mu_i)$ to $L_2(X(i+1); \mu_{i+1})$ defined as
\[
U_i^{i+1}f(u) 
= 
\E_{v \subseteq_i u}\big[f(v)\big]
\]
for all $u\in X(i+1)$. For $j\geq k+1$, we define $U_k^j$ via composition of up operators: $U_k^j = U_{j-1}^j \circ \ldots \circ U_k^{k+1}$.
\end{definition}

\begin{definition}%
The operator $D_i^{i+1}$ is a map from $L_2(X(i+1); \mu_{i+1})$ to $L_2(X(i); \mu_i)$ defined as
\[
D_i^{i+1}f(u) = \E_{v \supseteq_{i+1} u}\big[f(v)\big]
\]
for all $u \in X(i)$.
For $j\geq k+1$, we define $D_k^j$ via composition of down operators: $D_k^j = D_{k}^{k+1} \circ \ldots \circ D^j_{j-1}$.
\end{definition}

Abusing notations, we use the notations $U^j_k, D^j_k$ 
to denote the operators, as well as the real valued 
matrices associated with them. A key property of
the down and up operators is that they are adjoint:

\begin{claim}%
For all $k \leq j \leq d$, $U_k^{j}$ and $D^{j}_k$ are adjoint operators: for all functions $f\colon X(k)\to\mathbb{R}$ and $g\colon X(j)\to\mathbb{R}$ it holds that $\ip{U_k^{j}f,g} = \ip{f,D^{j}_kg}$. \end{claim}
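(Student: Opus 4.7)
The plan is to prove the adjointness first for the one-step operators $U_k^{k+1}$ and $D_k^{k+1}$, and then obtain the general statement by repeatedly applying this base case along the chain of compositions that defines $U_k^j$ and $D_k^j$.

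For the base case, I would unfold the inner product on $X(k+1)$ directly. Writing
\[
\langle U_k^{k+1} f, g\rangle = \E_{u \sim \mu_{k+1}}\!\Big[g(u)\,\E_{v\subseteq_k u}[f(v)]\Big],
\]
the key observation is that the two-stage sampling procedure ``pick $u\sim \mu_{k+1}$ and then $v\subseteq_k u$ uniformly'' is exactly the joint distribution on containment pairs $(v,u)\in X(k)\times X(k+1)$ with $v\subseteq u$ that was used in the preliminaries to define the down operator. By the remark immediately preceding the definition of $U$, this same joint distribution is reproduced by ``pick $v\sim \mu_k$, then pick $u\supseteq v$ from $\mu_{k+1}\mid u\supseteq v$.'' So I can swap the order of sampling and get
\[
\langle U_k^{k+1} f, g\rangle = \E_{v \sim \mu_k}\!\Big[f(v)\,\E_{u\supseteq v}[g(u)]\Big] = \langle f, D_k^{k+1} g\rangle.
\]
This is the only non-formal content of the claim; everything else is formal manipulation.

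For the inductive step, I would proceed by induction on $j-k$. Assuming adjointness has been established for every gap strictly smaller than $j-k$, I write $U_k^j = U_{j-1}^j\circ U_k^{j-1}$ and apply the base case together with the inductive hypothesis:
\[
\langle U_k^j f, g\rangle = \langle U_{j-1}^j (U_k^{j-1} f), g\rangle = \langle U_k^{j-1} f, D_{j-1}^j g\rangle = \langle f, D_k^{j-1}(D_{j-1}^j g)\rangle = \langle f, D_k^j g\rangle,
\]
using in the last step the definition $D_k^j = D_k^{k+1}\circ\cdots\circ D_{j-1}^j$, equivalently $D_k^j = D_k^{j-1}\circ D_{j-1}^j$.

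The only real step with content is checking that the two sampling descriptions of the containment distribution agree, and this has essentially been done in the preliminaries when defining $\mu_k$ via down-sampling from $\mu_j$. I do not anticipate any obstacle; the proof is routine and amounts to applying Fubini to the joint distribution on flags $v\subseteq u$.
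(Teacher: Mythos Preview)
Your proposal is correct; the paper states this claim without proof, treating it as a standard fact about the up and down operators. Your argument---verifying adjointness for the single-step operators via the two equivalent descriptions of the joint containment distribution, then inducting along the composition---is exactly the routine justification one would give, and there is nothing more to compare.
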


We need the following lemma regarding the second eigenvalue of the down-up walks $U^j_{k}D^j_{k}$ on $X(j)$ ($j \geq k$), that can be found in~\cite{AlevL20}. Roughly speaking, the lemma asserts that for a
one-sided spectral expander $X$, the singular
values of the operators $U_{\alpha i}^{i}$ 
and $D_{\alpha i}^{i}$ for $\alpha\in (0,1)$
are upper bounded by the eigenvalues
of the corresponding operators in the 
complete complex, up to an additive factor if 
${\sf poly}(i)\gamma$:
\begin{lemma}\label{lem:spectral_gap_of_graphs_from_HDX}
Let $(X, \mu)$ be a $d$-dimensional $\gamma$ one-sided local spectral expander. For all $i \leq d$ and $\alpha \in (1/i, 1)$, the largest singular value of $U^i_{\alpha i}$ and $D^i_{\alpha i}$ is at most $\sqrt{\alpha}+\poly(i)\gamma$. Thus the down-up random walk $U^i_{\alpha i}D^i_{\alpha i}$ on $X(i)$ has second largest singular value at most $\alpha + \poly(i)\gamma$. 
\end{lemma}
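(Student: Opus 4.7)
The plan is to invoke the Alev--Lau bound on the spectrum of down-up walks over local spectral expanders, and then derive the singular value bounds on $U^i_{\alpha i}$ and $D^i_{\alpha i}$ via the adjointness established in the preceding claim. The first step is the reduction: since $U^i_{\alpha i}$ and $D^i_{\alpha i}$ are adjoints with respect to the inner products weighted by $\mu_{\alpha i}$ and $\mu_i$, and both operators fix the constant function with singular value $1$, the largest singular value of $U^i_{\alpha i}$ on the orthogonal complement of constants equals that of $D^i_{\alpha i}$, and both satisfy
\[
\bigl\|U^i_{\alpha i}\bigr\|^2 = \bigl\|D^i_{\alpha i}\bigr\|^2 = \lambda_2\bigl(U^i_{\alpha i} D^i_{\alpha i}\bigr).
\]
Thus it suffices to prove the ``Thus'' part of the lemma, namely that the second largest eigenvalue of the self-adjoint down-up walk satisfies $\lambda_2\bigl(U^i_{\alpha i} D^i_{\alpha i}\bigr) \leq \alpha + \poly(i)\gamma$. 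Taking square roots and absorbing the lower-order cross term into the polynomial factor then yields the claimed singular value bound $\sqrt{\alpha} + \poly(i)\gamma$ for both $U^i_{\alpha i}$ and $D^i_{\alpha i}$.

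To bound $\lambda_2\bigl(U^i_{\alpha i} D^i_{\alpha i}\bigr)$, I would apply the main inductive theorem of Alev--Lau. The idea is to factor the composite walk into a telescope of single-step down-up walks and analyze each one via an Oppenheim-style trickle-down argument: the single-step walk $U^j_{j-1} D^j_{j-1}$ on $X(j)$ decomposes as an average, over $(j-2)$-faces $I \in X(j-2)$, of the simple walks on the link graphs $(X_I(1), X_I(2))$. By the $\gamma$-local spectral expansion hypothesis, each such link graph has second eigenvalue at most $\gamma$, yielding an inductive bound of the form $\tfrac{1}{j} + O(j\gamma)$ on the single-step walk at level $j$. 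Telescoping (equivalently, composing) these single-step bounds from level $\alpha i + 1$ up to level $i$ gives a bound matching the complete-complex eigenvalue $\alpha = (\alpha i)/i$ plus an accumulated error of $\poly(i)\gamma$, where the polynomial factor absorbs both the recursion depth and the cross-level interaction terms.

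The main technical obstacle is this inductive spectral analysis of the single-step walk in terms of link walks, which requires a careful Oppenheim-style decomposition and recursion on dimension to keep track of how the $\gamma$ errors propagate across levels. However, this is precisely the content of the Alev--Lau analysis, which we may cite as a black box for the purposes of this lemma; once that bound is in hand, the remainder of the argument consists only of the routine adjointness and square-root manipulations outlined in the first paragraph.
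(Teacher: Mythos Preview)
Your proposal is correct and matches the paper's approach: the paper does not give a proof of this lemma at all but simply attributes it to Alev--Lau, and your write-up does the same while additionally spelling out the elementary adjointness reduction from the singular-value statement to the eigenvalue bound on $U^i_{\alpha i}D^i_{\alpha i}$. That reduction is accurate (the nonzero singular values of $U^i_{\alpha i}$ and of its adjoint $D^i_{\alpha i}$ coincide, and their squares are exactly the eigenvalues of $U^i_{\alpha i}D^i_{\alpha i}$), so there is nothing to add.
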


\subsection{Partite Complexes}
In this section we define partite complexes 
and state some of their properties.
\begin{definition}
A $d$-dimensional complex $X$ is said to be $d$-partite if $X(1)$ can be partitioned into $d$ disjoint parts called ``colors'', $X(1)=\cup_{i\in [d]} X_i(1)$, such that every $d$-face in $X(d)$ contains one vertex from each color. For a face $I \in X$ we 
let $\text{col}(I)$ denote the set of colors of the vertices it contains. For an $i$-sized set $S \subseteq [d]$, we will use $X_{S}(i) \subseteq X(i)$ to denote $i$-faces $I$ for which $\text{col}(I)=S$.
\end{definition}

The following result is a trickle-down theorem  for partite complexes from~\cite[Lemma 7.5]{DiksteinD19}. 
Let $X$ be a $d$-partite complex, and let $L,R\subseteq[d]$ be two disjoint color classes. We define the bipartite weighted graph $(X_L(|L|), X_R(|R|))$ to be the graph where the weight of an edge $(u,v)$ is equal to
\[
w(u,v)=\Pr_{x\sim \mu_d}\left[x_L = u, x_R = v\right].
\]
\begin{lemma}\label{lem:trickling-partite}
Let $X$ be a 
$d$-partite simplicial complex, and suppose that for all 
$v \in X(1)$ the graph underlying $X_v$ is a 
$\lambda$-one sided 
$(d-1)$-partite expander, for 
$\lambda < \frac{1}{2}$. Suppose that the underlying graph of \(X\) is connected. Then for every $i\neq j$, the bipartite graph between 
$X_{\{i\}}(1)$ 
and 
$X_{\{j\}}(1)$ is a 
$\frac{\lambda}{1-\lambda}$-bipartite expander. 
\end{lemma}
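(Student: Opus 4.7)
The plan is to adapt the Oppenheim trickle-down argument to the partite setting. Since the conclusion must hold for every pair of colors, it is convenient to prove the uniform bound $\mu := \max_{a\neq b \in [d]} \sigma_2(G_{ab}) \leq \lambda/(1-\lambda)$, where $G_{ab}$ denotes the weighted bipartite graph between $X_{\{a\}}(1)$ and $X_{\{b\}}(1)$ with edge weights inherited from the marginal of $\mu_d$. The hypothesis is vacuous for $d\leq 2$, so I assume $d\geq 3$ and fix a pair $(i,j)$ attaining the maximum together with an arbitrary third color $k \in [d]\setminus\{i,j\}$.

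Let $f\colon X_{\{i\}}(1)\to\mathbb{R}$ and $g\colon X_{\{j\}}(1)\to\mathbb{R}$ be unit mean-zero functions realizing $\langle f, A_{ij} g\rangle = \mu$, where $A_{ij}$ is the normalized bipartite adjacency of $G_{ij}$. For each $w\in X_{\{k\}}(1)$ define $F(w) := (A_{ik}f)(w) = \mathbb{E}_{u\sim\mu_{i,w}}[f(u)]$ and decompose $f|_{X_{i,w}(1)} = F(w) + \tilde f_w$, where $\tilde f_w$ is mean zero in the link; define $G(w)$ and $\tilde g_w$ analogously. Conditioning a $\mu_d$-sample on its color-$k$ coordinate and using that $A_{ij,w}$ pairs constants with constants yields the decomposition
\[
\mu \;=\; \mathbb{E}_w\!\bigl[F(w) G(w)\bigr] \;+\; \mathbb{E}_w\!\bigl[\langle \tilde f_w, A_{ij,w}\tilde g_w\rangle_{X_w}\bigr].
\]
Setting $a = \|F\|$ and $b = \|G\|$: because $f,g$ are mean zero and the operator norms $\|A_{ik}\|_{\text{op}}, \|A_{jk}\|_{\text{op}}$ on the mean-zero subspace are at most $\mu$ by definition of $\mu$, we have $a,b\leq \mu$, and Cauchy--Schwarz bounds the first term by $ab$. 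The link hypothesis gives $\|A_{ij,w}\|_{\text{op}}\leq \lambda$ on mean-zero functions, so $|\langle \tilde f_w,A_{ij,w}\tilde g_w\rangle|\leq \lambda\|\tilde f_w\|\|\tilde g_w\|$; averaging together with the Pythagorean identities $\mathbb{E}_w\|\tilde f_w\|^2 = 1-a^2$ and $\mathbb{E}_w\|\tilde g_w\|^2 = 1-b^2$ and one more Cauchy--Schwarz bounds the second term by $\lambda\sqrt{(1-a^2)(1-b^2)}$.

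Combining yields $\mu \leq ab + \lambda\sqrt{(1-a^2)(1-b^2)}$ with $a,b \in [0,\mu]$. Parametrizing $a=\cos\theta$, $b=\cos\phi$ with $\theta,\phi \in [\arccos\mu,\pi/2]$ writes the RHS as $\tfrac{1+\lambda}{2}\cos(\theta-\phi) + \tfrac{1-\lambda}{2}\cos(\theta+\phi)$, which is jointly maximized at $\theta=\phi=\arccos\mu$, i.e. at $a=b=\mu$. Hence $\mu \leq \mu^2 + \lambda(1-\mu^2)$, equivalently $\mu(1-\mu) \leq \lambda(1-\mu)(1+\mu)$; the connectedness hypothesis combined with $\lambda<1/2$ forces $\mu<1$, so dividing by $1-\mu$ yields $\mu\leq \lambda(1+\mu)$, i.e., $\mu\leq \lambda/(1-\lambda)$, as required. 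The one step that requires care, and thus the main obstacle, is justifying the transfer of the hypothesis on $X_w$ to the bound $\|A_{ij,w}\|_{\text{op}}\leq \lambda$ on mean-zero functions for the specific color pair $(i,j)$: under the natural reading of "$(d-1)$-partite $\lambda$-expander" as a bound on each color-pair bipartite singular value this is immediate, while under a full-1-skeleton reading it follows from a short block-spectral bookkeeping argument for partite graphs; the remainder of the proof is self-consistent linear algebra.
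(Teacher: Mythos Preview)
The paper does not prove this lemma; it quotes it as \cite[Lemma 7.5]{DiksteinD19}. Your argument is precisely the partite adaptation of Oppenheim's trickle-down that Dikstein--Dinur carry out: condition a random $(i,j)$-edge on its color-$k$ endpoint, split $f$ and $g$ into link-constant plus link-mean-zero parts, bound the two resulting terms respectively by the global quantity $\mu$ and the link bound $\lambda$, and solve the resulting inequality $\mu\le \mu^2+\lambda(1-\mu^2)$. The spectral bookkeeping and the optimization are correct.

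Two small points are worth tightening. First, the claim that connectedness of the full $1$-skeleton forces $\mu<1$ is not automatic: what you need is that each individual bipartite graph $G_{ij}$ is connected, which for $d\ge 3$ follows from connectedness of the $1$-skeleton together with connectedness of every vertex link (the latter implied by $\lambda<1$); this requires a short separate argument. Second, regarding the ambiguity you flag at the end: in the Dikstein--Dinur formulation the link hypothesis is exactly your ``natural reading'' (a per-color-pair bipartite bound), so $\|A_{ij,w}\|_{\mathrm{op}}\le\lambda$ on mean-zero functions holds by definition. Under the alternative ``full $1$-skeleton'' reading, the block argument you allude to only yields $\sigma_2(G_{ij,w})\le (d-2)\lambda$, which would weaken the conclusion by a factor depending on $d$; so the intended reading matters for the exact constant.
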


For color sets $L$ and $R$ that are larger than $1$, the following lemma bounds the eigenvalues the 
graph $(X_L(|L|), X_R(|R|))$ provided bounds on the 
second eigenvalue of associated graphs 
on links of $X$.
\begin{lemma}\label{claim:gll-sing-val}
Let $X$ be a $d$-partite complex. 
Suppose that for each link $L$ of $X$ 
and for any two colors $i,j\not\in \text{col}(L)$, we have that the bipartite graph $(L_i(1),L_j(1))$ has second largest eigenvalue at most $\lambda$. Then, for any pair of disjoint sets $L,R \subseteq [d]$, the bipartite graph $(X_L(|L|), X_R(|R|))$ has second largest eigenvalue at most $\poly(d)\lambda$.
\end{lemma}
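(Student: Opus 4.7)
The plan is to prove the stronger inductive statement: for any partite complex $Y$ satisfying the lemma's hypothesis and any disjoint $L, R \subseteq [\dim Y]$, the bipartite graph $(Y_L(|L|), Y_R(|R|))$ has second singular value at most $|L||R|\lambda$. This formulation is well suited to induction on $k := |L|+|R|$ because the hypothesis is inherited by every link of $Y$ (a link of a link of $Y$ is itself a link of $Y$). The base case $k = 2$ is exactly the hypothesis applied at the empty link.

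For the inductive step, assume without loss of generality $|L| \geq 2$, fix $i \in L$, and set $L' = L \setminus \{i\}$. Let $P\colon L_2(Y_R) \to L_2(Y_L)$ denote the bipartite transition operator, $(Pg)(u) = \E[g(v)\mid u]$. It suffices to show $|\langle f, Pg\rangle| \leq |L||R|\lambda$ for any unit-norm, mean-zero $f \in L_2(Y_L)$ and $g \in L_2(Y_R)$. Decompose the joint distribution by first sampling $u_{L'}$ and then sampling $(u_i, v)$ from the conditional, which is exactly the bipartite graph on colors $\{i\}$ versus $R$ inside the link $Y_{u_{L'}}$. By the inductive hypothesis applied inside $Y_{u_{L'}}$ (valid since $1 + |R| < k$), this link-level bipartite graph has second singular value at most $|R|\lambda$.

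Applying the standard real-valued extension of Lemma~\ref{lem:bip-eml} inside each link and then averaging over $u_{L'}$ gives
\[
\left|\langle f, Pg\rangle - \E_{u_{L'}}\!\left[\tilde f(u_{L'})\, \tilde g(u_{L'})\right]\right|
\leq |R|\lambda\, \E_{u_{L'}}\!\left[\sigma_f(u_{L'}) \sigma_g(u_{L'})\right] \leq |R|\lambda,
\]
where $\tilde f(u_{L'}) := \E[f \mid u_{L'}]$, $\tilde g(u_{L'}) := \E[g \mid u_{L'}]$, and $\sigma_f,\sigma_g$ are conditional standard deviations; the last estimate uses Cauchy--Schwarz and $\E_{u_{L'}}[\sigma_f^2] \leq \|f\|_2^2 = 1$ (and analogously for $g$). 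For the main term, observe that $\tilde g$ is the image of $g$ under the transition of the bipartite walk $(Y_{L'}, Y_R)$, which by induction (on $|L'| + |R| = k-1$) has second singular value at most $(|L|-1)|R|\lambda$; since $\E g = 0$ implies $\E \tilde g = 0$, we get $\|\tilde g\|_2 \leq (|L|-1)|R|\lambda$. Combined with the contraction $\|\tilde f\|_2 \leq \|f\|_2 = 1$ (marginalization is adjoint to an isometric pull-back), Cauchy--Schwarz yields $|\E_{u_{L'}}[\tilde f \tilde g]| \leq (|L|-1)|R|\lambda$. Summing the two contributions gives $|\langle f, Pg\rangle| \leq |L||R|\lambda$, closing the induction. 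Since $|L|, |R| \leq d$, this gives $\sigma_2 \leq d^2 \lambda = \poly(d)\lambda$.

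The main obstacle is orchestrating the two uses of the inductive hypothesis cleanly: one inside the link $Y_{u_{L'}}$ to bound the \emph{fluctuation} term via the extension of Lemma~\ref{lem:bip-eml}, and one in $Y$ itself to bound the \emph{mean} term $\|\tilde g\|_2$ via the bipartite walk $(Y_{L'}, Y_R)$, while ensuring all conditional measures match the marginals of $\mu_d$ so that the norms and inner products align. Once these normalizations are set up, the rest is a short Cauchy--Schwarz computation.
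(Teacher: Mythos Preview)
The paper states this lemma without proof, so there is no argument to compare against. Your proof is correct: the induction on $k=|L|+|R|$ with the explicit bound $|L||R|\lambda$ goes through as written. Peeling off one color $i\in L$, conditioning on $u_{L'}$, and invoking the inductive hypothesis twice---once inside the link $Y_{u_{L'}}$ for the bipartite walk $(\{i\},R)$ to control the fluctuation term, and once in $Y$ for the walk $(L',R)$ to control $\|\tilde g\|_2$---is exactly the standard localization strategy for such statements. The measure-theoretic details you flag (links of links are links, marginals of $\mu_d$ restrict correctly, conditional averaging is a contraction, bipartite transition preserves mean zero) all check out, and the real-valued form of the expander mixing lemma you invoke is the standard decomposition $\langle h_1,Th_2\rangle=\E[h_1]\E[h_2]+\langle h_1^\perp,Th_2^\perp\rangle$.
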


\subsection{Variants of the Chapman-Lubotzky Complex}
In this section we discuss variants of the 
Chapman and Lubotzky complex~\cite{ChapmanL} 
and some of their properties. In their paper, Chapman and Lubotzky~\cite{ChapmanL} construct an infinite family of complexes that are $2$-dimensional coboundary expanders over $\F_2$. In the works~\cite{BLM24,DDL24} the authors 
extend this construction, and show that
for all $m\in\mathbb{N}$, one can construct variants of the Chapman and Lubotzky complexes that are  $2$-dimensional coboundary expanders over $S_m$. They used this fact to prove that the natural $2$-query
direct product tester has small soundness $\delta$ (that can be taken to be arbitrarily close to $0$ so long as one takes $m$ large enough). Here, the natural $2$-query direct product tester is defined in the following way:
\begin{definition}\label{def:dp-test}
Let $X$ be a $d$-dimensional high-dimensional expander, and let $k<d$. Given a supposed encoding $F\colon X(k)\to \Sigma^k$, the $(k,\sqrt{k})$-direct product tester associated with $X$ proceeds as follows:
\begin{enumerate}
    \item Sample $D\sim \pi_d$.
    \item Sample $B\subseteq D$ of size $\sqrt{k}$ 
    uniformly.
    \item Independently and uniformly sample $k$-faces $A, A'$ satisfying $B\subseteq A,A'\subseteq D$.
    \item Check that $F[A]|_B = F[A']|_B$.
\end{enumerate}
We say that the $(k,\sqrt{k})$-direct product test on $X$ has soundness $\delta$ if the following holds. Let $F:X(k)\rightarrow \Sigma^k$ be any function that passes the $(k,\sqrt{k})$-direct-product test above with probability at least $\delta$. Then there exists a function $f:X(1)\rightarrow\Sigma$ such that,
\[\Pr_{A\sim X(k)}[\Delta(F[A],f|_A)\leq \delta] \geq \poly(\delta).\]
\end{definition}

In the lemma below, we state the properties that we need from the complexes of~\cite{BLM24,DDL24}. These properties include the aforementioned fact about direct product testing, as well as the fact that these complexes are polynomial-time constructible. The polynomial time constructability of these complexes was established by Dikstein, Dinur and Lubotzky~\cite{DDL24}, however as our argument requires a different setting of parameters, some modifications to the argument are necessary. We also need some other features from these complexes, but these are much easier to establish.

\begin{theorem}\label{thm:cl}
For all $\delta\in (0,1)$ there 
exists $\ell \in\mathbb{N}$ such that the following holds for all $C>0$. For large enough $k,d\in \N$, for large enough $n \in \N$ and any prime $q =\Theta(\log^C n)$, one can construct in time ${\sf poly}(n)$ a $d$-dimensional complex $X$ for which $n \leq |X(1)|\leq O_{\ell,d}(n)$ such that for  the following holds:
\begin{enumerate}
\item The complex $X$ is $d$-partite.
\item Every vertex participates in at most $q^{O(d^2)}$ $d$-cliques.
\item For each vertex link $L$ of $X$ there is a group $\sym(L)$ that acts transitively on the $d$-faces $L(d)$.
\item For each vertex link $L$ of $X$ 
and any colors $i\neq j$, the bipartite graph $(L_i(1),L_j(1))$ is uniformly weighted (over edges) and has diameter $O\left(\frac{d}{|i-j|}\right)$.
\item For all links $L$ of $X$, every bipartite graph $(L_i(1),L_j(1))$ for $i\neq j \in [d] \setminus \text{col}(L)$ has second largest eigenvalue at most 
$\frac{2}{\sqrt{q}}$. 
\item The second largest singular value of $G=(X(1),X(2))$ is at most $\frac{1}{d-1}+\frac{2}{\sqrt{q}}\leq \frac{2}{d}$.
\item The $(k,\sqrt{k})$-direct product test on $X$ has soundness $\delta$.
\end{enumerate}
\end{theorem}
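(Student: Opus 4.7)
The plan is to take $X$ to be a suitable quotient of the Bruhat--Tits affine building of type $\tilde A_{d-1}$ associated with $\GL_d$ over a non-archimedean local field with residue field of size $q$, by a congruence subgroup of a cocompact arithmetic lattice acting simplicially on the building. This is the approach underlying the constructions of \cite{ChapmanL,BLM24,DDL24}; the novelty required here is that $q = \Theta(\log^C n)$ grows with $n$, whereas previous constructions took $q$ constant. The polynomial-time constructibility at this parameter regime is carried out in \Cref{app:cl}, and will be the main technical obstacle of the proof.

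With the construction in place, items 1--4 are essentially structural. The $d$-partiteness (item 1) comes from the natural $d$-coloring of the vertices of the affine building by their type. Item 2 follows because each vertex link is a spherical building of type $A_{d-1}$ over $\F_q$, whose total number of top faces is $q^{O(d^2)}$. For item 3 we let $\sym(L)$ be the image of the vertex stabilizer in the lattice, which acts chamber-transitively on $L(d)$ by construction. For item 4, the link bipartite graph $(L_i(1), L_j(1))$ is isomorphic to the incidence graph between $i$-dimensional and $j$-dimensional subspaces of $\F_q^{d-1}$; this graph is edge-transitive under $\GL_{d-1}(\F_q)$ (so uniformly weighted), and its diameter is $O(d/|i-j|)$ because any two subspaces can be connected by a chain of incidences of length $\lceil d/|i-j|\rceil$.

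Items 5 and 6 are spectral. The bound $2/\sqrt{q}$ in item 5 is the classical estimate on the second singular value of the bipartite incidence graph of $i$- and $j$-dimensional subspaces of $\F_q^{d-1}$, a standard $q$-analogue of the Johnson-scheme bound that follows by direct computation using the Hecke algebra of $\GL_{d-1}(\F_q)$. Item 6 follows by invoking \Cref{lem:trickling-partite}: if every vertex link is a one-sided $\frac{2}{\sqrt{q}}$-local spectral expander (which item 5 combined with \Cref{claim:gll-sing-val} guarantees), then trickle-down bounds the second singular value of $(X(1),X(2))$ by $\frac{1}{d-1}+\frac{2}{\sqrt{q}}$, which is at most $\frac{2}{d}$ for our range of $q$ and $d$.

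Item 7, the direct-product soundness, is the main theorem of \cite{BLM24,DDL24}: for any $d$-dimensional HDX with sufficiently strong local spectral expansion whose two-dimensional links are coboundary expanders over the symmetric group $S_k$, the natural $(k,\sqrt{k})$-direct product tester has soundness $\delta$, provided $k,d$ and the coboundary expansion constant are sufficiently large. The variants of the Chapman--Lubotzky complex are designed precisely so that their two-dimensional links have exactly this coboundary expansion property (this is the role of the auxiliary integer $\ell$). Thus, once the complex is constructed, item 7 is essentially immediate; the hard part, and where the main new work is concentrated, is carrying out the construction in $\poly(n)$ time at $q = \polylog(n)$, which is the content of \Cref{app:cl}.
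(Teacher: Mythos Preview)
Your proposal has the right overall architecture but misidentifies the building. The Chapman--Lubotzky complexes and their variants in \cite{BLM24,DDL24} are quotients of the type $\tilde C_d$ affine building of the symplectic group (realized in \Cref{app:cl} via a unitary group over a quaternion algebra ramified at $\ell$ and $\infty$), not the type $\tilde A_{d-1}$ building of $\GL_d$. This is not cosmetic: item 7 depends on UG coboundary expansion, which is established specifically for these type $\tilde C$ quotients and is \emph{not} known for type $\tilde A$ (LSV-style) quotients. Indeed, the paper's \Cref{thm:lsv_for_routing}, which does use LSV complexes, lists only the analogues of items 1--6 and omits item 7 precisely because the direct-product soundness is unavailable there. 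So your final paragraph, which invokes ``the variants of the Chapman--Lubotzky complex,'' is inconsistent with your opening choice of building.

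Several downstream details are accordingly off. Vertex links in the type $\tilde C$ construction are \emph{products of two} spherical buildings (\cite[Fact 3.14]{DDL24}, \cite[Lemma 6.8]{BLM24}), not a single type $A_{d-1}$ building, so your descriptions of items 3 and 4 need adjustment. The group $\sym(L)$ in item 3 is the automorphism group of that spherical-building product, not the image of the lattice's vertex stabilizer, which is discrete and far too small to act chamber-transitively on the link. For item 6 the paper does not apply trickle-down to the global graph: item 5 for $L=L_\emptyset$ is what comes from \Cref{lem:trickling-partite}, and item 6 is then a direct computation decomposing $\langle f, Mf\rangle$ over color pairs, with item 5 supplying the $2/\sqrt q$ term and the $d$-partite structure (via $\E_{i\ne j}[a_ia_j]=-\tfrac{1}{d-1}\E_i[a_i^2]$) supplying the $1/(d-1)$.
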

\begin{proof}
First fix the parameters $\delta,C, d, n$ and a prime $q=\Theta(\log^C n)$. To construct a complex with $\Theta(n)$ vertices and parameter $q$, we follow the presentation in~\cite[Section 5]{DDL24}. 

Therein the authors pick a pro-$\ell$ subgroup $H_0 \subseteq SU_g(H^{\ell}(\mathbb{Q}_\ell))$ and define the lattice $\Gamma_0$ in Section 5.1.3. Then using Fact 5.12 and Claim 2.31 therein, they show that for some finite $i$, one can pick a normal subgroup $H_i$ of $H_0$ with $[H_0:H_i]=\ell^i$, for which the action of the corresponding lattice $\Gamma_i$ does not merge vertices at distance at most $4$. That is, for all $1\neq \gamma \in \Gamma_i, \gamma$ and $v$ in $\widetilde{C}_d$, $\text{dist}(\gamma,\gamma v) \geq 4$. Then their infinite family of complexes is defined by $\{\Gamma_j \setminus \widetilde{C}_d\}_{j\geq i}$.

In our case, $q$ depends on $n$, so apriori the size of $\Gamma_i \setminus \widetilde{C}_d$ could already be much larger than $n$. To circumvent this issue we use an explicit choice of subgroups as constructed in \Cref{app:cl}, with $H(1)$ replacing $H_0$, and $H(i)$ in \Cref{cor:Hi} replacing $H_i$ from above. Here we used the fact that $H(i)$'s form a filtration of $H(1)$, such that for all $k\geq 1$, $H(k)$ is a pro-$\ell$ subgroup and for all $k\geq 2$, $H(k)$ is an open compact normal subgroup of $H(k-1)$. Let $X_i$ denote the complex $\Gamma_i \setminus \widetilde{C}_d$ where $\Gamma_i$ is the lattice corresponding to $H(i)$.
\Cref{cor:Hi} gives us that 
there is an $i^{\star}$ for which the compact open subgroup $H_{i^{\star}}\subset G(\mathbb{Q}_\ell)$ satisfies that for any element $1\neq \gamma \in \Gamma_{i^{\star}}$, and any vertex $v\in \widetilde{C}_d(0)$, $\text{dist}(v,\gamma v)\ge 4$. And moreover, $|X_{i^{\star}}(d)|\le C_{\ell, d}q^{O(d^{2})}$ for some constant $C_{\ell, d}$ depending only on $\ell$ and $d$, implying that $|X_i(0)|\le \pl n$. 

To construct a complex $X$ with $\Theta(n)$ vertices we start with $i=i^{\star}$, compute $X=X_i$, and as long as it has less than $n$ vertices, increase $i$ by $1$. Note that as $[H(i) : H(i+1)] = \ell^{\Theta(d^2)}$ for all $i\geq 1$, the number of vertices in $X_{i+1}$ is larger by at most factor
$\ell^{O(d^2)}$ compared to the number of vertices in $X_i$. Repeating the argument in Corollary~\ref{cor:Hi}, item 1 one we get that if $\ell^{\lceil i/2\rceil}\geq 4d q^n$, then $\text{dist}(v,\gamma v)\geq n$ for any $v\in \tilde{C}_d(0)$ and $1\neq \gamma\in \Gamma_i$, 
in which case the number of vertices in $X_i$ is certainly at least $n$. Thus, the process will terminate at $i=i_0$, for $i_0\leq O_{d,\ell}(n \log q)$, and at that point $X_{i_0}$ will have between $n$ and $n\cdot \ell^{O(d^2)}$ vertices (in fact, as we shortly see, the process ends at some $i_0 = O_{d,\ell}(\log n)$ due to diameter considerations). As per the runtime of the construction, note that by Lemma~\ref{lem:time_bd_construct}, it takes at most $q^{O_{\ell,d}({\sf diam}(X_{i}))}$ time to construct  $X_{i}$, for every $i \geq i^{\star}$, where ${\sf diam}(X_{i})$ is the diameter of the graph $(X_{i}(1),X_{i}(2))$. After establishing several helpful eigenvalue bounds we will show that ${\sf diam}(X_{i})\leq O(\log_q(|X_{i}(1)|))$ for all $i$, from which it follows that the construction of any $X_{i}$ with at most $n$ vertices can be done in $n^{O_{d,\ell}(1)}$ time. Thus, the overall algorithm above takes $O_d(n\log q)\cdot n^{O_{d,\ell}(1)}=n^{O_{d,\ell}(1)}$ time.

\vspace{-2ex}
\paragraph{$X$ is $d$-partite:} For the first item, we note that the complex
$X$ is a quotient of the affine spherical building $\tilde{C}_{d}$ with some group $\Gamma$. 
The affine building $\tilde{C}_{d}$ 
is a $d$-partite complex and the symplectic group over $\Q_q$, ${\sf SP}(2d, \Q_q)$, acts
transitively on the top faces of $\tilde{C}_{d}$. The complexes constructed in~\cite{BLM24,DDL24} and \Cref{app:cl}
are quotients of $\tilde{C}_{d}$ by subgroups of ${\sf SP}(2d, \Q_q)$, and in particular it follows that all of these quotients are also $d$-partite. 

\vspace{-2ex}
\paragraph{Properties of vertex-links:} 
The link of each vertex in $v\in X(1)$ is a product of two spherical buildings of dimension at most $d$ over $\mathbb{F}_q$ (see~\cite[Fact 3.14]{DDL24} and~\cite[Lemma 6.8]{BLM24}). It is well-known that they satisfy the third and fourth item of the lemma, and this fact can be found in both the papers.

\vspace{-2ex}
\paragraph{Local spectral expansion of $X$:}
For the fifth item, we first note that the statement holds for every link $L\neq L_{\emptyset}$ of $X$ since these are tensor products of spherical buildings and the  eigenvalue computations for them are well-known (see~\cite[Claims 3.2, Lemma 3.10]{DDL24} or ~\cite[Lemmas 2.15, 2.20]{BLM24}). Using this we get that the underlying graph of each vertex link $L$ of $X$ is a $1/\sqrt{q}$-one-sided $d$-partite expander. The conclusion for $L=L_{\emptyset}$ now follows by an application of the trickling down result for partite complexes, \Cref{lem:trickling-partite}.

\vspace{-2ex}
\paragraph{The diameter of $X_{i}$:} fix parts $1\leq j<j'\leq d$, and consider the bipartite graph between them, denoted by $(L_j,L_{j'}, E_{j,j'})$. By the fifth item, this graph has second eigenvalue at most $O(1/\sqrt{q})$. Thus, letting $A_{j,j'}\colon L_2(L_j)\to L_2(L_{j'})$ be the averaging operator defined as 
$A_{j,j'}f(v_{j'}) = \E_{v_j\sim N(v_{j'})}[f(v_j)]$, we have that 
$\langle A_{j,j'} f, g\rangle\lll \frac{1}{\sqrt{q}}\|f\|_2\|g\|_2$
for $f\colon L_j\to \mathbb{R}$, $g\colon L_{j'}\to\mathbb{R}$ such that $\E[f]=0$. Let $A^{\dagger}_{j,j'}$ be the adjoint operator of $A_{j,j'}$. Plugging in $g=A_{j,j'}A_{j,j'}^{\dagger}A_{j,j'}f$ we conclude that
$\|A_{j,j'}^{\dagger}A_{j,j'} f\|_2\lll\frac{1}{\sqrt{q}}\|f\|_2$ for all $f\colon L_j\to \mathbb{R}$ 
such that $\E[f] = 0$. It follows that the weighted graph $G_j$ on $L_j$ corresponding to the operator $A_{j,j'}^{\dagger}A_{j,j'}$ has second eigenvalue at most $O(1/\sqrt{q})$ in absolute value. Thus, it follows (e.g. from~\cite[Theorem 2]{chung1989diameters}) that the diameter of $G_j$ is $O(\log_q |L_j|)\leq O(\log_q |X_i(1)|)$, and therefore the diameter of $(L_j,L_{j'}, E_{j,j'})$ is also $O(\log_q |X_i(1)|)$. Since this is true for all $j,j'$, the diameter of $X_{i}$ is $O(\log_q |X_i(1)|)$.

\vspace{-2ex}
\paragraph{Second largest singular value of $G$:}
For the sixth item, write $X(0) = V_1\cup\ldots\cup V_d$, let $M$ be the normalized adjacency operator of $X$, and 
let $f\colon X(0)\to\mathbb{R}$ be a function
with $\E[f] = 0$ and $\E[f^2] = 1$. As $X$ is
$d$-partite, we have that $\mu_1(V_i) = \frac{1}{d}$ for all $i$, and we denote $a_i = \E_{v\sim \mu_1}[f(v)~|~v\in V_i]$.
\begin{equation}\label{eq:constructability1}
\langle f, Mf \rangle
=\E_{\substack{i,j\in [d]\\ i\neq j}}
\left[\E_{(u,v)\sim M_{i,j}}\left[f(u)f(v)\right]\right]
=
\E_{\substack{i,j\in [d]\\ i\neq j}}
\left[\E_{(u,v)\sim M_{i,j}}\left[(f(u)-a_i)(f(v)-a_j)\right]\right]
+\E_{\substack{i,j\in [d]\\ i\neq j}}
\left[a_ia_j\right],
\end{equation}
where $M_{i,j}$ is the normalized adjacency 
operator of the bipartite graph $G_{i,j}$. 
Using the fifth item we have
\begin{align}
\left|
\E_{\substack{i,j\in [d]\\ i\neq j}}
\left[\E_{(u,v)\sim M_{i,j}}\left[(f(u)-a_i)(f(v)-a_j)\right]\right]
\right|
&\leq 
\frac{2}{\sqrt{q}}\E_{\substack{i,j\in [d]\\ i\neq j}}
 \sqrt{\E_{u\in V_i}(f(u)-a_i)^2}\sqrt{\E_{v\in V_j}(f(v)-a_j)^2} \notag\\
&\leq \frac{2}{\sqrt{q}}\E_{i}\E_{u\in V_i}(f(u)-a_i)^2\notag\\
&\leq \frac{2}{\sqrt{q}}\E_{i}\E_{u\in V_i}\left[f(u)^2\right],
\label{eq:constructability2}
\end{align}
which is at most $2/\sqrt{q}$. In the second transition we used Cauchy Schwarz. Next, we have
\[
0 
= \E_{i,j}[a_ia_j]
= \frac{1}{d}\E_i[a_i^2]
+\left(1-\frac{1}{d}\right)\E_{i,j\in [d], i\neq j}
\left[a_ia_j\right],
\]
so $\E_{i,j\in [d], i\neq j}
\left[a_ia_j\right] = -\frac{1}{d-1}\E_{i}[a_i^2]$. Finally, 
\[
\E_{i}[a_i^2]
=\sum\limits_{i}\mu(V_i)\E_{v}[f(v)~|~v\in V_i]^2
\leq 
\sum\limits_{i}\mu(V_i)\E_{v}[f(v)^2~|~v\in V_i]
=\E[f^2]=1,
\]
so overall $\left|\E_{i,j\in [d], i\neq j}
\left[a_ia_j\right]\right|\leq \frac{1}{d-1}$. 
Plugging this and~\eqref{eq:constructability2} 
into~\eqref{eq:constructability1} finishes
the proof of the sixth item.

\vspace{-2ex}
\paragraph{Soundness of the Direct-Product Test:}
The seventh item follows from~\cite[Theorem 1.3]{BLM24} (or more precisely, from the combination of~\cite[Theorem 6.12]{BLM24} and~\cite[Theorem 1.8]{BLM24}) as well as from \cite[Theorem 1.1]{DDL24}. The works~\cite{DiksteinD-agreement,dikstein2023swap,DDL24} 
established the statement above as is, 
whereas the proof of~\cite{BafnaMinzer,BLM24}
established the statement for 
the alphabet $\Sigma = \{0,1\}$. For 
the sake of completeness, in
Section~\ref{sec:dp-large-alph} we explain how to adapt the argument from the latter papers to the 
case of general alphabets $\Sigma$.
\end{proof}

\begin{remark}
    We remark that in~\cite{DDL24} the authors suggest a different way of efficiently constructing a complex with around $n$ vertices when starting with a complex with less than $n$ vertices. Their method proceeds by computing an $\ell$-lift of the complex, which can be shown to exist as the $H(i)$'s are pro-$\ell$ groups. Also, an $\ell$-lift can be efficiently computed, as this problem could be phrased as a linear algebra problem over $\mathbb{F}_{\ell}$. In the above argument, we too could have used the $\ell$-lift method when starting with $X_{i}$ for $i=i^{\star}$ and increased $i$ until $X_i$ has the correct number of vertices. We chose the more direct approach though, since we have an explicit description of all the subgroups $H(i)$.
\end{remark}

\section{Background on Routing Protocols}
In this section we discuss routing protocols, 
the pebble routing problem and a relaxation of
it that is sufficient for us.
\subsection{Pebble Routing Protocols}
We start the discussion by formally defining routing/communication protocols on a graph $G$, 
as well as defining other related notions.
\begin{definition}\label{def:comm-protocol}
Given a graph $G$, an $r$-round routing protocol $\cR$ on $G$ is a set of rules where at each round vertices can send and receive messages to and from their neighbours in $G$. To decide what messages to send forward, each vertex is allowed to perform an arbitrary computation on all of its received messages. 

The work complexity\footnote{We note that the notion of work complexity used in prior work of~\cite{JRV} is slightly different from ours.} of $\cR$ is the total computation that any vertex performs throughout the protocol, as measured by the circuit size for the equivalent Boolean functions that the vertex computes.

At round $0$, the protocol starts out with an arbitrary function $f:V(G)\rightarrow \Sigma$ on the vertices of $G$, and after $\cR$ is implemented, at the final round $r$, we end up with a function $g:V(G)\rightarrow \Sigma$ on the vertices, also referred to as the function computed by the protocol $\cR$.
\end{definition}

We now formally define the pebble routing problem, first studied in~\cite{AlonCG94}. 
\begin{definition}\label{def:routing}
We say that a graph $G$ has an $r$-round pebble-routing protocol, denoted by $\rt(G) = r$, if the following holds. For all permutations $\pi: V(G) \rightarrow V(G)$ there is an $r$-round communication protocol on $G$ such that in each round every vertex receives exactly one message symbol in $\Sigma$, and then sends this message forward to exactly one of its neighbors. If the protocol starts with $f:V(G)\rightarrow\Sigma$, then at the $r^{th}$-round we have the function $g:V(G)\rightarrow\Sigma$ on the vertices satisfying $g(\pi(u))=f(u)$.
\end{definition}
Note that this protocol can also be thought of as a set of $|V(G)|$ paths, each one transmitting a message from $u \rightarrow \pi(u)$, where at each round any vertex is involved in exactly one path. 
We encourage
the reader to think of pebble routing in this way. We remark that any pebble-routing protocol has work complexity at most $r\log|\Sigma|$.\footnote{We are accounting for the input length at a vertex in its computation cost.} In a sense, pebble routing is the simplest possible protocol, 
where no computation is performed by any vertex as it only forwards its received message. The works~\cite{AlonCG94,Nenadov23} study the pebble routing problem and prove that a sufficiently good expander graph admits an $O(\log n)$-length protocol.

For our purposes, it suffices to consider a relaxation of the pebble routing problem, 
in which the protocol is required to satisfy that $g(\pi(u))=f(u)$ for all but $o(1)$ 
fraction of the vertices. Also, we relax 
the condition that at each round, each
vertex is used once. Instead, we only require
that throughout the protocol, each vertex 
is a part of at most ${\sf poly}(\log n)$ paths. Formally, we need the following result:
\begin{theorem}\label{thm:expander-routing}
There exists a universal constant 
$\alpha>0$ such that the following holds.
Let $G = (V, E)$ be a regular, expander graph with second largest singular value $\sigma_2(G) \leq \alpha$. Let $c \geq 0$ be a fixed constant and $\pi: V \to V$ be a permutation. Then there is a $\poly(|E|)$-time algorithm to construct a protocol with $O(\log n)$-length paths $\cP$ from $u$ to $\pi(u)$ for all but $O\left(\frac{1}{\log^c(n)}\right)$-fraction of $u$. Furthermore, every vertex in $V$ is used in at most $t = O(\log^{c+1} n)$ paths in $\cP$.
\end{theorem}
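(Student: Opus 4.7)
My plan is a simple deterministic greedy algorithm. Set $L = C_1 \log n$ and $T = C_2 \log^{c+1} n$ for constants $C_1, C_2$ to be chosen depending on $\alpha$ and $c$. Process the vertices $u \in V$ in an arbitrary order, maintaining a running load $\ell(v)$ for every $v \in V$. When processing $u$, let $B = \{v : \ell(v) \geq T\}$, run BFS from $u$ in $G \setminus B$ searching for $\pi(u)$, and if it is found at distance at most $L$ add the corresponding path to $\cP$ and increment $\ell$ along it. This runs in polynomial time and, by construction, yields paths of length at most $L$ with no vertex used in more than $T = O(\log^{c+1} n)$ paths.

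The main content is bounding the failure rate. The key tool I would establish is a ``punctured expander'' lemma: for a sufficiently good spectral expander $G$, removing any set $B$ of size at most $\alpha' n$ leaves a subgraph in which all but $O(|B|)$ vertices have their $L$-ball occupying more than $n/2$ vertices of $V \setminus B$. This follows from the standard argument that in $G$ every set $S$ with $|S|\leq n/2$ satisfies $|N(S)|\geq (1+\gamma)|S|$ for some $\gamma = \gamma(\alpha)>0$, so deleting at most $|B|$ vertices per layer still yields growth at rate $(1+\gamma/2)$ once the ball exceeds $|B|/\gamma$; iterating $O(\log n)$ times gives the claim.

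Combining the two ingredients: at any step of the algorithm, since $|\cP|\leq n$ and each path contributes at most $L$ to the total load, the current overloaded set has size $|B|\leq Ln/T = O(n/\log^c n)$. The punctured expander lemma, together with the fact that $\pi$ is a bijection, then shows that all but $O(n/\log^c n)$ vertices $u$ have both $u$ and $\pi(u)$ in the large ball of $G\setminus B$, so BFS from $u$ reaches $\pi(u)$ within $L$ steps. Since $B$ grows monotonically during the run of the algorithm, the punctured-expander bad set can only grow, and any $u$ that the algorithm fails on at some intermediate step is also bad with respect to the \emph{final} $B$; thus the total number of $u$'s skipped is at most the size of the final bad set, which is $O(n/\log^c n)$.

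The main obstacle I anticipate is the quantitative form of the punctured expander lemma: one must track the additive loss of $|B|$ per expansion step against the multiplicative growth factor $(1+\gamma)$, and choose $C_1,C_2$ so that $L$ iterations suffice to take any initial seed of size $\Omega(|B|/\gamma)$ to more than $n/2$ vertices of $V\setminus B$. This is a standard layered induction on ball sizes but requires some care with the constants, particularly to ensure the bound holds uniformly across all $B$ of size up to $O(n/\log^c n)$ that may arise during the algorithm. With the lemma in hand, all remaining steps are entirely routine.
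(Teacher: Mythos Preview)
Your algorithm and high-level analysis coincide with the paper's: greedily route shortest paths, cap vertex loads at $t=\Theta(\log^{c+1}n)$, observe that the overloaded set $B$ satisfies $|B|\le n\ell/t=O(n/\log^c n)$, and then argue via a punctured-expander fact that all but $O(|B|)$ pairs $(u,\pi(u))$ are still connected by $O(\log n)$-paths in $G\setminus B$. Your monotonicity step (a failure at an intermediate stage is also a failure with respect to the final $B$) is exactly how the paper argues it.

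The one place your sketch is thinner than you seem to think is the punctured-expander lemma itself, and the issue is not just constants. Your growth inequality $|S_{k+1}|\ge(1+\gamma)|S_k|-|B|$ is vacuous whenever $|S_k|<|B|/\gamma$, so it says nothing about how many starting vertices have balls that \emph{never} cross that threshold; a vertex sitting in a small component of $G\setminus B$, or behind a thin bottleneck, can have $|S_k|$ stall indefinitely, and your layered induction does not bound how many such vertices there are. The paper handles this by invoking Upfal's peeling construction: starting from $Q=V\setminus B$, iteratively remove any vertex with at least $1/5$ of its edges going into the already-removed set. An edge-counting argument against the expansion of $G$ shows only $O(|B|)$ vertices are ever removed, and by construction every vertex that survives keeps at least $4d/5$ of its neighbors in the surviving set $Q(B)$, so ball growth inside $Q(B)$ is multiplicative from the first step and the diameter of $Q(B)$ is $O(\log n)$. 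That peeling step is the missing ingredient in your sketch; with it, your argument is complete and identical to the paper's.
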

\begin{proof}
We note that~\cite[Theorem 2]{Upfal} is
the case of $c=0$ in the above theorem. The proof for the case of $c>0$ is an easy generalization of the argument therein, and we give the formal argument in Section~\ref{app:path-expander} for the sake of completeness.
\end{proof}

\subsection{Routing Protocols under Adversarial Corruption}
We now formally introduce the notion of almost-everywhere (a.e.) reliable transmission in the presence of adversarial corruptions. %
The goal in this problem is to design sparse networks and routing protocols for them, which allow a large fraction of honest nodes to communicate reliably even under adversarial corruptions of the communication network. 
Our setting is slightly different from  most prior works in two ways. First, we need to consider a more general model of adversarial corruptions, where an arbitrary constant fraction of the edges may behave maliciously (as opposed to vertices being corrupted). 
Second,  we are given a permutation $\pi$ on $V$, and we need to design a routing protocol that succeeds in transmitting all but a small fraction of messages from $u \rightarrow \pi(u)$ correctly. %
Formally:
\begin{definition}
\label{def:adv-comm-protocol}
We say an edge $(u,v) \in G$ is uncorrupted if whenever $u$ transfers a message $\sigma$ across $(u,v)$, then $v$ receives $\sigma$; otherwise, we say the edge $(u,v)$ is corrupted. We say that a graph $G$ has a $(\eps(n),\nu(n))$-edge-tolerant routing protocol with work complexity $w(n)$ and round complexity $r(n)$ if the following holds. Let $\pi: [n] \rightarrow [n]$ be any permutation of $V(G)$. Then there is an $r$-round communication protocol on $G$ such that for all functions  $f:V(G)\rightarrow \Sigma$, after running the protocol for $r$-rounds each vertex
$v$ computes a value $g(v)\in \Sigma$ with the following guarantee: any adversary that corrupts at most $\eps(n)$-fraction of the edges,
\[\Pr_{u \in V(G)}[f(u) \neq g(\pi(u))] \leq \nu(n).\]
As in \Cref{def:comm-protocol}, the work complexity $w(n)$ of the protocol is the maximum computation that any node performs throughout the protocol, as measured by circuit size.
\end{definition}

Recall that if $G$ is a constant degree graph which admits a pebble routing protocol of length at most $r$, then $G$ admits a simple routing protocol that is $(\eps,r\eps)$-tolerant under edge corruptions. 
This simple connection though is insufficient for us: on a constant degree graph, the best we can hope for is $\rt(G) = \Theta(\log n)$, which necessitates taking $\eps \leq \Theta\left(\frac{1}{\log n}\right)$. Using such a protocol in conjunction with Lemma~\ref{lem:routing-to-pcp-general} only leads to a 2-CSP on $G$ with soundness $1-\Theta\left(\frac{1}{\log n}\right)$, which is too weak for our purposes. 

In the next section we will construct more involved protocols on the graphs underlying sufficiently good HDX, that are tolerant to a constant fraction of edge corruptions and have poly-logarithmic work complexity. Our main idea is to perform a message transfer on highly connected dense subgraphs in $G$, which naturally makes the protocol error-resilient, similar to the protocols of~\cite{Chandran,JRV}.

\section{Link to Link Routing on CL-complexes}\label{sec:link-to-link}
The goal of this section is to present a routing 
protocol over graphs underlying high-dimensional
expanders. We begin by giving a high-level
overview of the idea in Section~\ref{sec:regularity-issues}, followed 
by a detailed description of the protocol and
its analysis.

\subsection{High-level Overview}\label{sec:regularity-issues}
Throughout this section, 
we denote by $\maj_{\nu}(\sigma_1,\ldots,\sigma_k)$ 
the value $\sigma$ such that $\sigma_i = \sigma$ or at least $\nu$
fraction of $i$'s if such value exists, and $\perp$ otherwise.
Typically $\nu$ will be close to $1$ (say, $\nu = 0.99$).

\par
Recall that ultimately, we want to
use our routing protocol to embed a regular $2$-CSP instance $\Psi'$ on the graph $G$. Since the instance $\Psi'$ is over a regular graph and the graph $G$ is not, there is some incompatibility between the two. In particular, it doesn't make much sense to identify vertices of $\Psi'$ with vertices of $G$ in an arbitrary way.

To circumvent this issue we use the zig-zag product: we choose an appropriate family of expander graphs $\mathcal{H}$ and take the zig-zag product graph $Z = G\zz \mathcal{H}$. Thus, we get a graph $Z$ which is a regular expander graph, and additionally there is a natural correspondence between vertices in $Z$ and in $G$. Indeed, a vertex $v$ 
in $Z$ is naturally composed of a 
cloud-name, which we denote by 
$v_1$ and is actually a name of a vertex from $G$, and additionally an inner index that we denote by 
$v_2$ (corresponding to an edge of $v_1\in G$). We associate a vertex of 
the $2$-CSP $\Psi'$ with a vertex of $v\in V(Z)$ (this makes sense now as
these two graphs are regular), and this in turn is associated to the vertex $v_1$ in $G$. As discussed in \Cref{sec:overview:conn}, we break $\Psi$ into $k$ permutations $\pi_1,\ldots, \pi_k$ each one thought of as a permutation on $V(Z)$, and our goal is to transfer the value of $v$ to $\pi(v)$ for some fixed permutation $\pi$ from $\{\pi_1,\ldots,\pi_k\}$. We will instead think of the value of $v$, as held by all the vertices in the link of $v_1$, denoted by $L_{v_1}$, and this value needs to be transferred to the link $L_{\pi(v)_1}$. Once we have this kind of transfer, in \Cref{sec:routing-to-pcp} we show how to reduce $\Psi$ to a 2-CSP on $G$.

Formally, the initial function $A_0$ that has to be routed will be on the links associated to each $v\in V(Z)$. First define the set $\cS=\{(v,u):v\in V(Z),u\in X_{v_1}(1)\}$. Then $A_0:\cS \rightarrow \Sigma$ should be thought of as a function that for every $v\in V(Z)$, assigns the link $X_{v_1}(1)$ values that are mostly the same across vertices of the link, in the sense that there exists $\sigma_v \in \Sigma$ such that $\maj_{0.99}(A_0(v,u)|u\in X_{v_1}(1))=\sigma_v$. For the overview, it is helpful to think of all the vertices in $L_{v_1}$ holding the same value, $\sigma_{v}$\footnote{We need to start with the weaker condition of almost all vertices holding $\sigma_v$ to make the embedding result in \Cref{sec:link-routing-pcp} go through. Our proof easily ports over to this more general setting, so it is useful to think of all vertices in $L_{v_1}$ holding $\sigma_v$ for the overview.}. Note that every value $A_0(v,u)$ is held at the vertex $u\in G$ ($v$ is a label for this value) and every vertex in $G$ holds multiple such values. Even though the actual routing takes place on $G$, it is convenient to keep the graph $Z$ in mind for the analysis, think of $A_0$ as a function on the links $X_{v_1}$ (for all $v\in Z$) and the protocol as a link to link transfer.

After the routing protocol ends, we will have a function $A_T:\cS\rightarrow \Sigma$. Our main result in this section, \Cref{lem:link-routing}, shows that the majority value on the link $X_{v_1}$ gets transferred to the link $X_{\pi(v)_1}$ for most $v\in Z$.

With this setup in mind, we first find a collection of paths $\cP$ on $Z$, each path being from $v \rightarrow \pi(v)$, using the relaxed pebble routing protocol in \Cref{thm:expander-routing}. This is possible since $Z$ is a regular expander graph. We now use these paths to implement the link to link transfer.

{\bf Transmitting on Links:} 
Each path $P= u_1 \rightarrow \ldots \rightarrow u_T$ in $\cP$, for $T = O(\log n)$ and $u_T=\pi(u_1)$, can equivalently be thought of as a path over vertex-links: $L_{u_1}\rightarrow \ldots\rightarrow L_{(u_T)_1}$. 
Pretending for a moment that each link is in fact a clique, the protocol proceeds as follows: 
vertices of $L_{u_1}$ send their
message to $L_{u_1,u_2}$; the vertices in $L_{u_1,u_2}$ each
compute a majority value and pass 
it on to the vertices in $L_{u_2}$, 
and so on. We show that for any adversarial strategy, this protocol 
succeeds in transmitting the correct message on almost all of the paths in $\cP$. The key here is that vertices are allowed to take majority values, so as long as they are not over-saturated with corrupted edges, they will compute the correct value.

Returning to our actual scenario, 
the vertex links in $X$ do not
actually form cliques and so we cannot use the protocol as described above. To remedy this situation, 
we show that for each vertex link $L$ in $X$ we can set up a collection of short paths $\cP_L$ such that for almost all vertex pairs $u,v\in L$, the collection $\cP_L$ contains a short path between $u$ and $v$. Furthermore, no edge is used in the paths $\cP_L$ too often. The collection of paths $\cP_L$ allows us to pretend that the link $L$ almost forms a clique, in the sense that we transmit a message between pairs $u,v\in L$ using the path from $\cP_L$ between them (as opposed to directly as in the case of cliques).

{\bf Gap amplification:} Finally we remark that one additional  benefit of having $A_0$ on links is that only $o(1)$-fraction of the message transfers from $L_{v_1} \rightarrow L_{\pi(v)_1}$ are unsuccessful. We show that our protocol is $(\eps,1/\pl n)$-tolerant, a guarantee which is impossible if the initial function was on $V(G)$. Therefore associating the vertices of $\Psi'$ to links, as we show in Section~\ref{sec:routing-to-pcp}, translates to a gap amplification result for PCPs. More precisely, if we start with a $2$-CSP $\Psi'$ such that $\val(\Psi')\leq 1-1/\pl n$, then we get a 2-CSP $\Psi$ on $G$ with $\val(\Psi)\leq 1-\Omega(1)$, where $\Psi$ is obtained using our link to link routing protocol. This gives the same amplification as achieved in the gap amplification procedure of 
Dinur~\cite{Dinur07}, but the alphabet size is $\exp(\pl n))$. This can be brought down to constant-sized alphabet by incurring a size blow-up by a factor $\pl n$ using the alphabet reduction technique (which we anyway have to do, see Section~\ref{sec:final}).

\skipi
Throughout this section we fix $X$ a complex
as in Theorem~\ref{thm:cl} with the parameters $\delta\in (0,1)$ chosen arbitrarily,\footnote{The parameter $\delta$ dictates the soundness of our final PCP. The results in this section and the next hold for all $\delta$.}  $q=(\log n)^C$
for sufficiently large constant $C>0$, $d$ a large constant, and $|X(1)|=n$. We also
fix the graph $G = (X(1),X(2))$.

\subsection{Routing Inside a Vertex Link}\label{sec:inside-link}
In this section we describe a routing procedure
inside individual links. This routing procedure
will help us to facilitate the intuition that
links in $X$ are almost as well connected as 
cliques, supporting the approach presented in 
Section~\ref{sec:regularity-issues}.

We now show that inside each vertex link $L$, 
we can construct (in polynomial time) a set of short paths $\cP$ between almost all pairs of vertices in $L$. More precisely, we construct a collection of paths $\cP$ between all pairs of vertices $U \in L_i(1), V \in L_j(1)$ for two indices $i,j$ that are far apart. Our algorithm will give up on a small fraction of such pairs; we refer to the path between them as ``invalid'' and denote $P(u,v)=\perp$.
\begin{lemma}\label{lem:path-ij}
Let $\eps>0$ and fix a pair of indices $i,j \in [d-1]$ with $|i-j| \geq (d-1)\eps$. Then there is a $\poly(|L_{ij}(2)|)$-time algorithm that constructs a set of paths $\cP_{ij}=\{P(u,v)\}_{u\in L_i(1),v\in L_j(1)}$, where each valid path is of length $O(1/\eps)$ and at most $O(\eps)$ fraction of paths are invalid. Furthermore, each edge in $L_{ij}(2)$ is used in at most $O\left(\frac{|L_i(1)||L_j(1)|}{\eps^3|L_{ij}(2)|}\right)$ paths in $\cP_{ij}$.
\end{lemma}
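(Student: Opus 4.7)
The plan is to exploit two properties of the bipartite graph $H := (L_i(1), L_j(1), L_{ij}(2))$ guaranteed by Theorem~\ref{thm:cl}: by item 4, $H$ is uniformly edge-weighted with diameter at most $D = O(d/|i-j|)$, which under $|i-j|\ge (d-1)\eps$ gives $D = O(1/\eps)$; by item 5, the second largest singular value of $H$ is at most $2/\sqrt{q}$, much smaller than any constant we will need. With these properties in hand, the strategy is to route each pair $(u,v) \in L_i(1)\times L_j(1)$ along a uniformly random walk $P(u,v)$ in $H$ of length $t = \Theta(1/\eps) \ge D+1$ (choosing the parity of $t$ so that walks between $L_i(1)$ and $L_j(1)$ of this length exist) from $u$ to $v$, and then cull paths through overused edges. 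Such a random walk can be sampled in $\poly(|L_{ij}(2)|)$ time by precomputing the matrix powers $A, A^2, \ldots, A^t$ of the adjacency operator of $H$ and greedily picking $w_{k+1}$ conditioned on reaching $v$, using $\Pr[w_{k+1} = x \mid w_k, v] \propto A(w_k,x)\cdot A^{t-k-1}(x,v)$.

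The key step is bounding the expected per-edge congestion. Using the spectral expansion of $H$, for any fixed edge $e=(a,b)$ and any step index $k$, the probability that a random walk from $u$ conditioned on ending at $v$ traverses $e$ between steps $k$ and $k+1$ is at most $O(1/|L_{ij}(2)|)$: both $\Pr[X_k = a]$ and $\Pr[X_t = v \mid X_{k+1} = b]$ are close to uniform by the $2/\sqrt{q}$-spectral bound, and dividing the joint probability by the marginal $\Pr[X_t = v] \gtrsim 1/|L_j(1)|$ yields the claim. Summing over the $t$ step positions and all $|L_i(1)|\cdot|L_j(1)|$ pairs gives expected per-edge usage $\mu = O(|L_i(1)||L_j(1)|/(\eps |L_{ij}(2)|))$. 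Since path choices are independent across pairs, a standard Chernoff bound implies that with high probability every edge is used at most $O(\max(\mu, \log|L_{ij}(2)|))$ times, well below the target $M := O(|L_i(1)||L_j(1)|/(\eps^3 |L_{ij}(2)|))$ thanks to the factor-$1/\eps^2$ slack built into $M$. For any edge whose usage still exceeds $M$, I would mark the excess paths passing through it as invalid; summing the excess over all edges contributes at most an $O(\eps)$ fraction of invalidated paths.

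The main obstacle will be the low-expectation regime where $\mu \lesssim \log|L_{ij}(2)|$ and Chernoff no longer gives tight multiplicative concentration; here I would rely on the $1/\eps^2$ slack between $\mu$ and $M$ to absorb the polylogarithmic fluctuations. Derandomization, if required to produce a deterministic algorithm, can be achieved via the method of conditional expectations using per-edge congestion as a pessimistic estimator, which preserves the polynomial-time guarantee.
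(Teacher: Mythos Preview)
Your approach—sampling endpoint-conditioned random walks and controlling congestion via the spectral gap—is genuinely different from the paper's, and as written it has a real gap. You claim that $\Pr[X_t = v \mid X_0 = u] \gtrsim 1/|L_j(1)|$ and that $\Pr[X_k = a]$ is ``close to uniform'' using only the $2/\sqrt{q}$ bound from item~5. But the $L^\infty$ mixing time of the bipartite graph is $\Theta(\log_q|L_j(1)|)$, and $|L_j(1)|$ can be as large as $q^{O(d^2)}$ (item~2 of Theorem~\ref{thm:cl}), so mixing can take $\Theta(d^2)$ steps. Nothing in the lemma forces $1/\eps \gtrsim d^2$: when $|i-j|$ is close to $d-1$ you have $\eps = \Theta(1)$, $t = O(1)$, diameter $O(1)$, yet the color classes are still of size $q^{\Theta(d)}$ or more, so a constant-length walk cannot have near-uniform hitting probabilities. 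Your per-step, per-pair bound of $O(1/|L_{ij}(2)|)$ is also plainly false at the boundary steps $k \in \{0,t-1\}$, where the conditioned walk is pinned to a single vertex.

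The paper instead uses the transitive group action on the link (item~3 of Theorem~\ref{thm:cl}), which you never invoke. A random $\sym(L)$-translate of any fixed shortest path has a uniformly distributed edge at each position; this gives—without any mixing—that deleting any $O(\eps^2)$ fraction of edges destroys at most $O(\eps)$ fraction of short paths. The paper then runs a simple greedy routine (find shortest paths one pair at a time, delete an edge once it hits the usage cap) and bounds the failures via this symmetry. In fact your expected-congestion conclusion $\E[\text{usage}(e)]=\mu$ \emph{is} correct, but for exactly this reason: $\sym(L)$ acts transitively on $L_{ij}(2)$, so all edges have the same expected usage. If you replace your spectral step by this symmetry observation, the Chernoff-plus-invalidation tail can be pushed through with some care; but the paper's route avoids randomization, concentration bounds, and derandomization entirely.
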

\begin{proof}
Consider any pair of indices $i,j \in [d]$ with $|j-i| \geq (d-1)\eps$. Fix $\ell=\Theta(1/\eps)$ and $t=\Theta\left(\frac{|L_i(1)||L_j(1)|}{\eps^3|L_{ij}(2)|}\right)$. 
By the fourth item in Theorem~\ref{thm:cl} the diameter of $(L_i(1),L_j(1))$ is at most $O(1/\eps)$. The algorithm to construct the paths simply picks the shortest paths between a pair of vertices iteratively, and deletes any edges that has been used at least $t$-times. Note that since paths are of length $\leq \ell$, in an ideal scenario 
where every edge occurrs equally often, each edge would belong to  $O\left(\frac{|L_i(1)||L_j(1)|}{\eps|L_{ij}(2)|}\right)$ paths. 
We take $t$ to be larger so as to allow some slack, which still gives us the uniformity over edges as in the statement of the lemma. 
We now proceed to the formal argument.

For a set of edges $\cE \subseteq L_{ij}(2)$, let $L_{ij}(\cE)$ denote the bipartite graph with vertices $(L_i(1),L_j(1))$ and the set of edges $\cE$. Formally our algorithm is as follows: 
\begin{mdframed}
\begin{itemize}
\item Instantiate $\cP_{ij}=\emptyset$, $\cE=L_{ij}(2)$.
\item For every $u\in L_i(1), v\in L_j(1)$ do the following:
\begin{enumerate}
\item Find the shortest path $P(u,v)$ between $u$ and $v$ in the graph $L_{ij}(\cE)$. If the length of $P(u,v)$ is at most $\ell$ then add it to the set $\cP_{ij}$, else set $P(u,v)=\perp$.
\item If any edge $e$ in $L_{ij}(2)$ has been used in at least $t$ paths in $\cP_{ij}$ then remove it from $\cE$.
\end{enumerate}
\end{itemize}
\end{mdframed}

It is easy to see that the above algorithm runs in polynomial time in $|L_{ij}(2)|$ and that every edge in $L_{ij}(2)$ is used in at most $t$ paths in $\cP_{ij}$. It remains to show that the algorithm finds a path of length at most $\ell$ between almost all pairs of vertices. 

Let $\cE_f$ denote the set $\cE$ when the algorithm terminates and $\cE_0=L_{ij}(2)$ denote the set at the start. 
It is easy to check that the number of edges that the algorithm removes from $\cE_0$ is at most $|L_i||L_j|\ell\cdot 1/t = \Theta(\eps^2|L_{ij}(2)|)$, implying that $\Pr_{e\sim L_{ij}(2)}[e\notin \cE_f]\leq \eps^2$.
Since the diameter of $(L_i(1),L_j(1))$ is at most $\ell$, for every $u \in L_i, v\in L_j$ we may fix an arbitrary shortest path $P_{u,v}$ (length $\leq \ell$) between them. By the third item in Theorem~\ref{thm:cl} there is a group of symmetries $\sym(L)$ that acts transitively on the top faces of $L$. Thus, we may consider the path $g \circ P_{u,v}$ for any $g \in \sym(L)$, with goes between $g\circ u$ and $g\circ v$. 
We consider the distribution $\cD$ over paths, 
obtained as sample $u \sim L_i, v \sim L_j$, $g \sim \sym(L)$, and output the path $P= g \circ P_{u,v}$. Note that a random edge drawn from a random path $P \sim \cD$, is a uniformly random edge in $L_{ij}(2)$ due to the transitive symmetry of the group $\sym(L)$. Therefore,
\[
\E_{\substack{P \sim \cD\\ e\sim P}}
[\mathbb{1}_{e\not\in \cE_f}] \lll \eps^2.\]
Since the marginal of the starting and ending point of $P \sim \cD$ is the uniform distribution over $u \sim L_i,v \sim L_j$ and 
the length of the path is $O(1/\eps)$, we can rearrange the left hand side above and  apply a union bound to get
\[\E_{u\sim L_i,v\sim L_j}[\E_{\substack{P \sim \cD|u,v}}[\mathbb{1}_{\exists e \in P, e \notin \cE_f}]]\lll\ell \eps^2 \lll \eps.\]
Thus, by an averaging argument for at least a $(1-O(\eps))$-fraction of the pairs $u,v$, \[\E_{\substack{P \sim \cD|u,v}}[\mathbb{1}_{\exists e \in P, e \notin \cE_f}] \leq 1/2.\]
This means that when the algorithm terminates, 
for at least $(1-O(\eps))$ fraction of vertex pairs $u,v$ there is at least one path of length at most $\ell$ between them. We argue
that for each such pair $u,v$, the 
collection $\cP_{i,j}$ already contains a path between $u$ and $v$ upon the termination of the algorithm, since we have only deleted edges.
\end{proof}

We now use the short paths $\cP$ between all pairs of vertices in $L$ to argue that an 
adversary that corrupts a small fraction of the edges can only corrupt a small fraction of the paths in $\cP$. Recall that some of the paths in $\cP$ might be invalid, and to simplify notation
we account these as paths that are corrupted by default. Thus, we say that a path in $\cP$ is corrupted if it equals $\perp$ or if at least one of the edges in it is corrupted. 

\begin{lemma}\label{lem:inside-link}
Fix $\eps >0$ and a vertex link $L$ of $X$. Then there is a $\poly(|L|)$-time algorithm to construct a set of paths $\cP=\{P_{U,V}\}_{U\neq V \in L}$ in which each valid path has length at most $O(1/\eps^{1/8})$. 
Furthermore, any adversary that corrupts at most $\eps$-fraction of the edges in $L$, corrupts at most $O(\eps^{1/8})$-fraction of the paths in 
$\cP$.
\end{lemma}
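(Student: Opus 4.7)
The plan is to reduce the construction to \Cref{lem:path-ij}, splitting into two cases depending on whether the colors of the endpoints $U,V$ are far apart or close in the partite structure of $L$. Set the intermediate parameter $\eps' = \eps^{1/8}$. First, I would apply \Cref{lem:path-ij} with parameter $\eps'$ to each pair of color classes $(i',j')$ with $|i'-j'| \geq (d-1)\eps'$, yielding collections $\cP_{i'j'}$ of valid paths of length $O(1/\eps')$, at most $O(\eps')$-fraction of pairs invalid, and every edge of $L_{i'j'}(2)$ used in at most $O(|L_{i'}(1)||L_{j'}(1)|/(\eps'^3|L_{i'j'}(2)|))$ paths. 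Given a pair $(U,V)\in L(1)\times L(1)$ with $U\in L_i(1)$, $V\in L_j(1)$, I would define $P_{U,V}$ as follows. In the \emph{far case} when $|i-j|\geq (d-1)\eps'$, set $P_{U,V} = \cP_{ij}[U,V]$. In the \emph{close case} when $|i-j|<(d-1)\eps'$ (including $i=j$), pick an intermediate color $k$ with $|i-k|,|j-k|\geq (d-1)\eps'$ (which exists as $d$ is a large constant relative to $1/\eps'$) and an intermediate vertex $W=W(U,V)\in L_k(1)$ chosen to balance the load, and set $P_{U,V} = \cP_{ik}[U,W]\circ \cP_{kj}[W,V]$.

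For the corruption analysis, in the far case, an adversary corrupting $e_{i'j'}$ edges of $L_{i'j'}(2)$ can corrupt at most $O(e_{i'j'}|L_{i'}(1)||L_{j'}(1)|/(\eps'^3|L_{i'j'}(2)|))$ paths in $\cP_{i'j'}$, which is an $O(\eps_{i'j'}/\eps'^3)$-fraction where $\eps_{i'j'} = e_{i'j'}/|L_{i'j'}(2)|$. Averaging over all color pairs using the uniform weight structure of $L$ from \Cref{thm:cl} (items~3,4), the total fraction of corrupted paths in the far case is $O(\eps/\eps'^3)$. For the close case, choosing $W$ uniformly at random for each pair, an identical max-usage argument applied to each half of a composite path shows that the expected fraction of corrupted composite paths is again $O(\eps/\eps'^3)$ (the composite is corrupted iff one of the halves is). By the probabilistic method, there is a deterministic choice of $W$'s matching this bound, which can be found algorithmically by a greedy procedure analogous to that of \Cref{lem:path-ij}. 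Combined with the $O(\eps')$ invalid fraction in each case, the total fraction of bad paths in $\cP$ is $O(\eps' + \eps/\eps'^3)$, which with $\eps'=\eps^{1/8}$ evaluates to $O(\eps^{1/8})$, and each valid path has length $O(1/\eps')=O(1/\eps^{1/8})$.

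The main obstacle is the analysis in the close case, where we must argue that a random choice of the intermediate vertex $W$ balances the edge load well enough for the max-usage bound from \Cref{lem:path-ij} to carry over to composite paths. This uses the transitive symmetry $\sym(L)$ guaranteed by \Cref{thm:cl} (item~3), which ensures that the marginal distributions of the first and second halves of composite paths are essentially uniform over $\cP_{ik}$ and $\cP_{kj}$. The derandomization of the $W$'s into a polynomial-time procedure can be carried out either by randomized sampling with concentration (each $W(U,V)$ chosen independently and uniformly, then the resulting bad fraction is close to its expectation with high probability), or deterministically via a greedy selection that maintains a running bound on the fraction of bad composite paths — mirroring the edge-deletion approach used in the proof of \Cref{lem:path-ij}.
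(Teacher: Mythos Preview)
Your far-case analysis is correct and matches the paper's. The key difference is in the close case ($|i-j| < (d-1)\eps^{1/8}$): the paper simply sets $P_{U,V} = \perp$ for all such pairs. Since only an $O(\eps^{1/8})$-fraction of color pairs are close, and invalid paths count as corrupted by convention, this already fits within the target bound with no further work. Your composite-path construction through an intermediate color $k$ is therefore unnecessary for the lemma as stated.

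Your close-case argument also has a quantifier-order gap. You compute that for any fixed adversary the expected corrupted fraction over random $W$'s is $O(\eps/\eps'^3)$, and then invoke the probabilistic method to extract a single deterministic choice of $W$'s. But that choice is good only for that one adversary; the lemma demands a single path system $\cP$ that works against every adversary simultaneously. What you would actually need is an adversary-independent \emph{max-usage} bound on the composite paths (each edge of $L$ lies on at most so many of them), which is what your concentration or greedy sketch should target rather than the corrupted fraction directly. This is fixable --- e.g., by choosing the $W(U,\cdot)$ to be roughly equidistributed over $L_k(1)$ for each $U$ --- but since the close case is already an $O(\eps^{1/8})$-fraction of all pairs, the paper's approach of declaring these paths invalid bypasses the issue entirely.
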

\begin{proof}
For every pair of indices $i,j \in [d-1]$ with $|j-i| \geq \eps^{1/8}(d-1)$, run the polynomial time algorithm in \Cref{lem:path-ij} with the parameter $\eps^{1/8}$, to get a set of paths $\cP_{ij}$ of length $O(1/\eps^{1/8})$ each, between all pairs $u\in L_i(1), v\in L_j(1)$ such that at most $O(\eps^{1/8})$-fraction of the paths are invalid and  every edge in $L_{ij}(2)$ is used in at most $O\left(\frac{|L_i(1)||L_j(1)|}{\eps^{3/8}|L_{ij}(2)|}\right)$ paths. For every pair $i,j\in [d-1]$ with $|j-i|<\eps^{1/8}d$, set $\cP_{ij}=\{P(u,v)\}_{u\in L_i(1),v\in L_j(1)}$ with $P(u,v)=\perp$.

Fix any adversary that corrupts at most $\eps$-fraction of the edges in $L$, and let $\cE$ denote the set of corrupted edges. Let the fraction of corrupted edges inside $L_{ij}(2)$ be denoted by $\mu_{ij}(\cE)$. Let $\good$ denote the set of indices $i,j \in [d-1]$ that satisfy, $|j-i| \geq \eps^{1/8}(d-1)$ and $\mu_{ij}(\cE) \leq \sqrt{\eps}$. By 
Markov's inequality it follows that the 
a random pair $i,j\sim [d]$ is in $\good$ 
with probability at least $1-\eps^{1/8}-\sqrt{\eps}\geq 1-O(\eps^{1/8})$.

Fix any pair $(i,j) \in \good$ and consider the set of paths $\cP_{ij}$. Recall that say that a path $P(u,v)$ is corrupted if  either $P(u,v)=\perp$ or any of its edges is corrupted. Since at most $\sqrt{\eps}|L_{ij}(2)|$ edges are corrupted and each such edge is used in at most $O\left(\frac{|L_i(1)||L_j(1)|}{\eps^{3/8}|L_{ij}(2)|}\right)$ paths, we get that the number of corrupted paths in $\cP_{ij}$ is at most 
\[
\sqrt{\eps}|L_{ij}(2)| \cdot \Theta\left(\frac{|L_i(1)||L_j(1)|}{\eps^{3/8}|L_{ij}(2)|}\right)\lll \eps^{1/8}|L_i(1)||L_j(1)|.
\]
It follows from the union bound that 
the fraction of corrupted paths is at most
\[
\Pr_{i,j\sim [d]}[(i,j)\notin \good]+\Pr_{\substack{i,j\sim [d]\\ u\sim L_i(1)\\v\sim L_j(1)}}[P(u,v) \text{ is corrupted}\mid (i,j)\in \good]
\lll 
\eps^{1/8}.\qedhere
\]
\end{proof}

\subsection{Moving to the Zig-Zag Product of \texorpdfstring{$G$}{G}}\label{sec:zz}
As discussed earlier in Section~\ref{sec:regularity-issues}, the fact that graph $G$ is not regular introduces several technical difficulties; this prohibits us 
from using Theorem~\ref{thm:expander-routing} directly, and it also poses difficulties later on when we use routing schemes to embed a regular 2-CSP instance onto $G$. In this section we circumvent these issues by considering the zig-zag product of $G$ with appropriate expanders. The resulting graph $Z$ will be
a virtual, regular graph; by virtual, we mean that the graph $Z$ will help
us in choosing appropriate paths
in $G$ (for the routing) and in the analysis. The routing itself is still done on the graph $G$.

For a set $S \in X(s)$, let $\Delta_S=|X_S(d-s)|$. Henceforth we will think of $G$ as an undirected graph without weights, and for that we replace each edge $(u,v)$ with $\Delta_{\{uv\}}$ parallel edges. Since the probability of drawing $(u,v)$ from $\mu_2$ is $\frac{\Delta_{\{u,v\}}}{\binom{d}{2}|X(d)|}$, this gives us the same random walk matrix over $X(1)$.

We use the zig-zag product to get a regular graph $Z$ from $G$. The additional benefit of this operation is that $Z$ has constant degree graph while also being an expander. The zig-zag product was first defined in~\cite{ReingoldVW00} and is typically stated for regular graphs, but below we state a similar construction for irregular graphs $G$. We follow the exposition from the lecture notes~\cite{lec-zig-zag}, except that we use the extension to irregular graphs.

\subsubsection{The Replacement Product and the Zig-Zag Product}
The following fact is well-known.
\begin{lemma}\label{lem:exp-construction}
For all $\sigma>0$, there exist $k,m_0\in \N$ and a family of $k$-regular graphs $\cH=\{H_m\}_{m \geq m_0}$ which is polynomial-time constructible, where for each $m\geq m_0$ that graph $H_m$ has $m$ vertices and $\sigma_2(H_m)\leq \sigma$.
\end{lemma}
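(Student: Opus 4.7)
The plan is to combine an explicit constant-degree expander construction with a spectral-gap amplification step, using polynomial-time standard tools.

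First I would invoke a classical explicit construction, such as the Lubotzky--Phillips--Sarnak Ramanujan graphs~\cite{lubotzky1988ramanujan} or the Margulis--Gabber--Galil graphs, to obtain a $k_0$-regular, polynomial-time constructible graph family with second eigenvalue bounded above by some absolute constant $\sigma_1 < 1$. These constructions, however, only give graphs whose vertex counts lie in a sparse subset of $\mathbb{N}$ (e.g., $q^3$ for primes $q$, or $q^2$ for integers $q$). Since the lemma requires a graph on \emph{every} size $m \ge m_0$, a ``size-filling'' step is needed.

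Second, to cover every $m$, I would use the zig-zag product construction of Reingold--Vadhan--Wigderson. Starting from a small constant-size base expander $H_0$ and iterating the zig-zag product produces polynomial-time constructible expanders on an exponentially growing sequence of sizes with bounded ratios. Combining this with an auxiliary tensor factor of variable small size allows one to hit every $m \ge m_0$ exactly, while preserving regularity and boundedness of the normalized second eigenvalue by some absolute constant $\sigma_1 < 1$ (depending on $k_0$ but not on $m$). An alternative I would keep in mind is: given $m$, take an LPS graph of size $n = \Theta(m)$ and adjust it by an $o(n)$ modification (deleting a small matching or identifying a few vertices) to reach exactly $m$ vertices, using standard eigenvalue interlacing to argue that the second eigenvalue changes by at most $o(1)$; this is less clean but avoids the full zig-zag machinery.

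Third, to achieve an arbitrarily small prescribed $\sigma$, I would raise the resulting $k_0$-regular graph to its $t$-th power (taking $t$-step random walks), producing a $k_0^t$-regular graph on the same $m$ vertices with second eigenvalue at most $\sigma_1^t$. Choosing $t = t(\sigma)$ with $\sigma_1^t \le \sigma$ gives the required parameters, with $k = k_0^t$ a constant depending only on $\sigma$.

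The main obstacle is the second step: ensuring that polynomial-time constructible expanders exist on \emph{exactly} $m$ vertices for every $m \ge m_0$ rather than on a sparse set of sizes. The zig-zag product handles this cleanly (and this is precisely the construction already referenced in~\cite{ReingoldVW00,lec-zig-zag} that the paper uses in subsequent sections), so I would simply cite it. The remaining spectral-gap amplification via graph powering is routine.
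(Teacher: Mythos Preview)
The paper does not actually prove this lemma; it simply introduces it with ``The following fact is well-known'' and moves on. Your sketch is a correct and standard way to establish the result, and your identification of the main subtlety---hitting \emph{every} vertex count $m \ge m_0$ rather than a sparse set---is exactly the point that distinguishes this statement from off-the-shelf expander constructions. Either the zig-zag-based size-filling or the modification-plus-interlacing route would work; combined with graph powering to drive $\sigma_2$ below any prescribed $\sigma$, your argument is complete.
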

For $\sigma$ to be determined later, fix the family $\cH$ of expander graphs as in Lemma~\ref{lem:exp-construction}, and fix an undirected (possibly irregular) graph $G$ with minimum degree at least $m_0$. 

Our presentation of the zig-zag product follows~\cite{lec-zig-zag} 
almost verbatim, and it is convenient to first define the replacement product of $G$ with $\cH$, denoted by $G \rcirc \cH$. Assume that for each vertex of $G$, there is some ordering on its $D$ neighbors. Then the replacement product $G \rcirc \cH$ is constructed as follows:
\begin{itemize}
  \item Replace a vertex $u$ of $G$ with a copy of $H_{\deg(i)}$, that is the graph from $\cH$ on $\deg(i)$ many vertices (henceforth called a cloud). For $u \in V(G)$, $c \in V(H_{\deg(i)})$, let $(u,c)$ denote the $c^{th}$ vertex in the cloud of $u$.
  \item Let $(u, v) \in E(G)$ be such that $v$ is the $c_1^{th}$ neighbor of $u$ and $u$ is the $c_2^{th}$ neighbor of $v$. Then $((u,c_1), (v,c_2)) \in E(G \rcirc \cH)$. Also $\forall u \in V(G)$, if $(c_1, c_2) \in E(H_{\deg(i)})$, then $((u, c_1), (u, c_2)) \in E(G\rcirc \cH)$.
\end{itemize}
Note that the replacement product constructed as above has $2|E(G)|$ vertices and is $(k+1)$-regular. The zig-zag product $G \zz H$ is  constructed as follows:
\begin{itemize}
  \item The vertex set $V(G \zz \cH)$ is the same as that of $G\rcirc \cH$.
  \item $((u, c_1), (v, c_4)) \in E(G \zz \cH)$ if there exist $c_2$ and $c_3$ such that $((u, c_1), (u, c_2))$, $((u, c_2), (v, c_3))$ and $((v, c_3), (v, c_4))$ are edges in $E(G \rcirc H)$, i.e.\ $(v, c_4)$ can be reached from $(u, c_1)$ by taking a step in the  cloud of $u$ to go to $(u,c_2)$, then a step between the clouds of $u$ and $v$ to go to $(v,c_3)$, and finally a step in the cloud of $v$ to reach $(v,c_4)$.
\end{itemize}
It is easy to see that the zig-zag product is a $k^2$-regular graph on $2|E(G)|$ vertices. Given that $G$ is an expander, the zig-zag product graph $G \zz \cH$ is also an expander. The proof when $G$ is a regular graph can be found in~\cite{lec-zig-zag,ReingoldVW00}. The proof for the irregular case above is exactly the same hence we omit it here. 

\begin{lemma}\label{lem:zz}
If $G$ is a graph on $n$ vertices with $\sigma_2(G)\leq \alpha$, and $\cH=\{H_m\}_{m \geq m_0}$ is a family of $k$-regular graphs with $\sigma_2(H_m) \leq \beta$ for all $m \geq m_0$, then $G \zz \cH$ is a $k^2$-regular graph with second largest singular value at most $\alpha + \beta + \beta^2$.
\end{lemma}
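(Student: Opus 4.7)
The plan is to follow the standard analysis of the zig-zag product spectral bound from~\cite{ReingoldVW00}, with care taken at the measure-theoretic points where irregularity of $G$ enters. First, $G \zz \cH$ is $k^2$-regular by direct inspection: from any vertex $(u,c_1)$ one has $k$ choices for an intra-cloud step in $H_{\deg(u)}$, then a unique cross-cloud matching edge, and then $k$ choices for an intra-cloud step in $H_{\deg(v)}$. In particular, the uniform measure $\mu$ on $V(G\zz\cH)$ is stationary, and since cloud $u$ contains $\deg(u)$ vertices while $|V(G\zz\cH)| = 2|E(G)|$, the push-forward of $\mu$ to $V(G)$ is exactly the stationary distribution $\pi(u) = \deg(u)/(2|E(G)|)$ of $G$.

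For the spectral bound, I would write the normalized adjacency operator of $G \zz \cH$ as $M = \tilde B\,\hat A\,\tilde B$, where $\tilde B$ is the block-diagonal operator whose $u$-block is the normalized adjacency of $H_{\deg(u)}$ and $\hat A$ is the permutation operator implementing the cross-cloud matching (so $\hat A$ is orthogonal and self-adjoint). Let $P$ denote the orthogonal projection (in $L^2(V(G\zz\cH);\mu)$) onto cloud-constant (``parallel'') vectors, and $Q = I - P$. Since each $H_{\deg(u)}$ has row-sums equal to one and spectral gap bounded by $\beta$ on the cloud-mean-zero subspace, $\tilde B$ fixes parallel vectors pointwise and contracts perpendicular ones by at most $\beta$, yielding $P\tilde B = \tilde B P = P$ and $\|Q\tilde B\|_{op} = \|\tilde B Q\|_{op} \leq \beta$. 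The canonical isometry from cloud-constant vectors in $L^2(V(G\zz\cH);\mu)$ onto $L^2(V(G);\pi)$ carries the restriction of $P\hat A P$ precisely to the random walk matrix $M_G$, because averaging the cross-cloud matching over the two clouds reproduces the simple random walk on $G$ with respect to $\pi$; consequently $\|P\hat A P\|_{op}\leq \alpha$ on mean-zero parallel vectors.

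Combining these facts, the four block components of $M$ satisfy $PMP = P\hat A P$, $PMQ = P\hat A\,\tilde B Q$, $QMP = Q\tilde B\,\hat A P$, and $QMQ = Q\tilde B\,\hat A\,\tilde B Q$, with operator norms at most $\alpha$, $\beta$, $\beta$, and $\beta^2$ respectively. For any mean-zero $f$ with $\|f\|_2 = 1$, writing $x = \|f^{\|}\|$, $y = \|f^{\perp}\|$, $x^2+y^2=1$, we get $\|(Mf)^{\|}\|\leq \alpha x + \beta y$ and $\|(Mf)^{\perp}\|\leq \beta x + \beta^2 y$, so $\|Mf\|_2$ is bounded by the operator norm of the $2\times 2$ matrix
\[
N \;=\; \begin{pmatrix}\alpha & \beta \\ \beta & \beta^2\end{pmatrix}
\;=\; \begin{pmatrix}\alpha & 0 \\ 0 & 0\end{pmatrix}
+\begin{pmatrix}0 & \beta \\ \beta & 0\end{pmatrix}
+\begin{pmatrix}0 & 0 \\ 0 & \beta^2\end{pmatrix}.
\]
The triangle inequality on operator norms applied to this decomposition gives $\|N\|_{op} \leq \alpha + \beta + \beta^2$, which is the desired bound.

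The main obstacle is the identification of $P\hat A P$ with $M_G$ in the irregular case, since there the matching between $L^2(V(G);\pi)$ and the cloud-constant subspace of $L^2(V(G\zz\cH);\mu)$ is no longer trivial. Once this bookkeeping is checked, the rest of the argument is identical to the regular case treated in~\cite{ReingoldVW00,lec-zig-zag}.
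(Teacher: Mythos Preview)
Your proposal is correct and follows exactly the standard Reingold--Vadhan--Wigderson decomposition $M=\tilde B\hat A\tilde B$ with the parallel/perpendicular block analysis, which is precisely the approach the paper points to (the paper itself omits the proof entirely, citing \cite{ReingoldVW00,lec-zig-zag} and remarking that the irregular case goes through verbatim). Your flagged ``main obstacle''---identifying $P\hat A P$ with $M_G$ via the isometry between cloud-constant vectors in $L^2(V(G\zz\cH);\mu)$ and $L^2(V(G);\pi)$---is exactly the only bookkeeping point where irregularity enters, and your observation that the push-forward of $\mu$ is $\pi$ is what makes it go through.
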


Along with the above statement, we will use the following obvious but important fact,
\begin{fact}\label{fact:distribution-zz}
For $G=(X(1),X(2))$ and $Z=G \zz \cH$, the distribution that samples a uniformly random vertex $(v,c)$ of $Z$ and outputs $v$, is equal to $v \sim X(1)$. Similarly the distribution that samples a uniformly random edge $((u,c_1),(v,c_2))$ of $Z$ and outputs $(u,v)$ is equal to the distribution over edges $(u,v)\sim X(2)$.   
\end{fact}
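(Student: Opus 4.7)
The plan is to verify both claims by direct counting, using the description of $G$ as an unweighted multigraph with $\Delta_{\{u,v\}}$ parallel edges between each pair $\{u,v\}\in X(2)$, and chasing the definitions of the replacement and zig-zag products.

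First I would compute the degree of each vertex in this unweighted $G$. Since $\Delta_{\{u,v\}}$ counts $d$-faces containing both $u$ and $v$, summing over the neighbors of a fixed $v$ gives $\deg_G(v) = \sum_{u\sim v}\Delta_{\{u,v\}} = (d-1)\Delta_{\{v\}}$, because each $d$-face $D$ containing $v$ contributes exactly $d-1$ edges incident to $v$. Hence $2|E(G)| = \sum_v(d-1)\Delta_{\{v\}} = (d-1)d|X(d)|$. Now the vertex set of $Z$ agrees with that of $G\rcirc\cH$, with a cloud of size $\deg_G(v)$ over each $v$, so the marginal of a uniformly random $(v,c)\in V(Z)$ onto $v$ equals
\[
\frac{\deg_G(v)}{2|E(G)|} \;=\; \frac{(d-1)\Delta_{\{v\}}}{(d-1)d|X(d)|} \;=\; \frac{\Delta_{\{v\}}}{d|X(d)|} \;=\; \mu_1(v),
\]
which is the first claim.

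For the edge claim, the plan is analogous: count zig-zag edges grouped by their underlying cross-cloud step. Each edge of $Z$ arises from a unique $3$-step path in $G\rcirc\cH$ consisting of an intra-cloud step, a cross-cloud step, and another intra-cloud step. Since each cloud $H_{\deg(v)}$ is $k$-regular, each cross-cloud edge of $G\rcirc\cH$ produces exactly $k^2$ zig-zag edges (via the $k$ choices at each endpoint), so $|E(Z)| = k^2|E(G)|$. For a fixed pair $\{u,v\}\in X(2)$, the number of cross-cloud edges in $G\rcirc\cH$ between the clouds of $u$ and $v$ equals the number of parallel edges between $u$ and $v$ in $G$, namely $\Delta_{\{u,v\}}$. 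Therefore the probability that a uniformly random edge of $Z$ projects to $\{u,v\}$ equals
\[
\frac{k^2\Delta_{\{u,v\}}}{k^2|E(G)|} \;=\; \frac{\Delta_{\{u,v\}}}{\binom{d}{2}|X(d)|} \;=\; \mu_2(\{u,v\}),
\]
using the formula for $\mu_2$ stated at the start of Section~\ref{sec:zz}.

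I do not expect any real obstacle here; the only care needed is to be consistent about whether edges of $Z$ and elements of $X(2)$ are counted as ordered or unordered, but both conventions match throughout (the zig-zag construction yields an undirected edge for each undirected cross-cloud edge, and $\mu_2$ is the measure on undirected $2$-faces), so no factor of two corrections appear. The entire fact is essentially a bookkeeping computation following from $\deg_G(v) = (d-1)\Delta_{\{v\}}$ and the definition of the zig-zag product.
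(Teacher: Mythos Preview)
Your proof is correct; it is precisely the direct counting that makes this fact ``obvious,'' and the paper accordingly states it without proof (calling it an ``obvious but important fact''). Your bookkeeping with $\deg_G(v)=(d-1)\Delta_{\{v\}}$ and the $k^2$-to-one correspondence between zig-zag edges and cross-cloud edges is exactly the right way to justify it.
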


\subsection{The Routing Protocol using Links}\label{sec:link-protocol}
Fix $\alpha>0$ from Theorem~\ref{thm:expander-routing}. Consider the graph $G=(X(1),X(2))$, viewed as an undirected and unweighted graph with multi-edges. Let $\cH$ be the family of expanders from \Cref{lem:exp-construction} with $\sigma_2(\cH)\leq \frac{\alpha}{2}$ and $Z=G \zz \cH$. For every vertex $v\in Z$ we will let  $v_1$ denote its component in $X(1)$ and $v_2$ denote its position in the cloud of $v_1$, that is, $v=(v_1,v_2)$. We have the following link-to-link transfer lemma:

\begin{lemma}\label{lem:link-routing}
There exist $\eps_0,\alpha,d_0>0$ such that for all $C \in \N$ there exists a constant $C'\in \N$ such that for large enough $n \in \N$ the following holds. Let $X$ be a complex from \Cref{thm:cl} with dimension $d \geq d_0$, $|X(1)|=n$ and the parameter $q=\log^{C'} n$. Let $Z=G\zz \cH$ be the zig-zag product of $G$ with the family of expander graphs with $\sigma_2(\cH)\leq \frac{\alpha}{2}$ as in \Cref{lem:exp-construction}. 
Let $\pi:V(Z)\rightarrow V(Z)$ be any permutation. Then there is a routing protocol on $G=(X(1),X(2))$ with round complexity $T=O(\log n)$ and work complexity $q^{O(d^2)}\log |\Sigma|$ such that for all initial functions $A_0:\cS \rightarrow \Sigma$ satisfying
\[\Pr_{v\sim V(Z)}[\maj_{0.99}(A_0(v,u)\mid u\sim X_{v_1}(1))\neq \perp] \geq 1-\eta,\]
and for all possible adversaries that corrupt at most $\eps$-fraction of edges with $\eps\leq \eps_0$, the protocol computes the function $A_T:\cS\rightarrow \Sigma$ satisfying
\[\Pr_{v\sim V(Z)}[\maj_{0.99}(A_T(\pi(v),w)\mid w\sim X_{\pi(v)_1}(1))=\maj_{0.99}(A_0(v,u)\mid u\in X_{v_1}(1))]\geq 1-\eta-\frac{1}{\log^C n}.\]

\end{lemma}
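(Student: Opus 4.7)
The plan is to combine \Cref{thm:expander-routing} applied to the constant-degree regular expander $Z$ with \Cref{lem:inside-link} for link-internal routing to build the full protocol on $G$. First, \Cref{lem:zz} together with the choice $\sigma_2(\cH)\le\alpha/2$ gives $\sigma_2(Z)\le\alpha$, so \Cref{thm:expander-routing} with parameter $c=C+1$ produces, in polynomial time, paths $P_v$ in $Z$ from $v$ to $\pi(v)$ for all but an $O(1/\log^{C+1}n)$-fraction of $v\in V(Z)$; each valid path has length $T=O(\log n)$ and every vertex of $Z$ participates in at most $O(\log^{C+2}n)=\pl n$ paths. Via the cloud projection $u\mapsto u_1$, each $P_v=u_1\to\cdots\to u_T$ induces a sequence $w_t:=(u_t)_1\in X(1)$ such that, for every $t$, either $w_t=w_{t+1}$ (an intra-cloud step, requiring no transmission) or $\{w_t,w_{t+1}\}\in X(2)$ (an inter-cloud step). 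The routing protocol treats each $P_v$ as carrying the single symbol $\sigma_v:=\maj_{0.99}(A_0(v,u)\mid u\sim X_{v_1}(1))$ and transfers it link by link.

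For an inter-cloud step with $w=w_t$, $w'=w_{t+1}$, the transfer from $L_w$ to $L_{w'}$ proceeds through the 2-link $L_{w,w'}\subseteq L_w\cap L_{w'}$ in two sub-rounds. In the first, each $x\in L_w$ forwards its current symbol to every $y\in L_{w,w'}$ along the short path inside $L_w$ given by \Cref{lem:inside-link} with parameter $\eps^{1/8}$, and $y$ outputs the majority of all received values. In the second, each $y\in L_{w,w'}$ forwards its new value to every $z\in L_{w'}$ along the corresponding short path inside $L_{w'}$, and $z$ outputs its majority. Each hop costs $O(1)$ rounds, so the total round complexity is $O(T)=O(\log n)$; the work per vertex is bounded by $q^{O(d^2)}\log|\Sigma|$ since item (2) of \Cref{thm:cl} guarantees that every vertex in $X(1)$ lies in at most $q^{O(d^2)}$ top faces, hence in at most that many links and routing paths.

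The analysis uses the standard good-link bookkeeping: call $L_w$ \emph{good} if at most $\sqrt\eps$-fraction of its edges are corrupted; call $x\in L_w$ \emph{doomed in $L_w$} if more than $\eps^{1/4}$-fraction of its $L_w$-neighbours are reached via corrupted edges; and call $L_{w,w'}$ \emph{good} if $L_w,L_{w'}$ are good and at most $\eps^{1/8}$-fraction of the vertices of $L_{w,w'}$ are doomed in either. \Cref{lem:sampling}, applied to the natural bipartite incidence graphs between $X(1), X(2), X(3)$ whose second singular values are $O(1/\sqrt q)=O(1/\pl n)$ by items (5)--(6) of \Cref{thm:cl} together with \Cref{claim:gll-sing-val}, yields that a $1-1/\pl n$-fraction of all $1$-links and $2$-links are good. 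When every link along $P_v$ is good, \Cref{lem:inside-link} limits the corrupted internal transfer paths inside each link to an $O(\eps^{1/8})$-fraction, so every undoomed recipient sees a $1-O(\eps^{1/8})$-fraction of correct values and recovers the correct majority; inducting along the hops of $P_v$ shows that the final $0.99$-majority of $L_{\pi(v)_1}$ equals $\sigma_v$.

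Finally, a union bound closes the argument, and this is the step where the zig-zag is essential. Because each $2$-face $\{w,w'\}$ is replicated into $\Delta_{\{w,w'\}}$ inter-cloud edges of $Z$, its frequency of appearance as a transition $(w_t,w_{t+1})$ across the family $\{P_v\}_{v\in V(Z)}$ matches its natural $\mu_2$-weight up to a $\pl n$ factor, rather than the much larger $|X(2)|/|V(Z)|$ factor that would arise without the zig-zag; an analogous statement holds for $1$-faces appearing as vertices of paths. Thus the $1/\pl n$-fraction of bad links spoils only a $1/\pl n$-fraction of paths, and choosing $C'$ large enough that $q=\log^{C'}n$ dominates all $\pl n$ factors uniformly yields, together with the $\eta$-fraction of $v$'s with ill-defined $\sigma_v$ and the $O(1/\log^{C+1}n)$-fraction where \Cref{thm:expander-routing} failed, an overall failure probability of at most $\eta+1/\log^C n$. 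The main obstacle in making the proof rigorous is precisely this union bound: counting the appearances of bad $2$-faces with the correct zig-zag weighting, juggling the three nested scales $\eps,\sqrt\eps,\eps^{1/8}$ of badness, and ensuring that the $0.99$-majority condition propagates cleanly across every hop rather than degrading.
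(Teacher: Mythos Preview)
Your proposal follows essentially the same approach as the paper's proof: set up external paths via \Cref{thm:expander-routing} on the zig-zag graph $Z$, implement each hop as a two-stage transfer $L_w\to L_{w,w'}\to L_{w'}$ using the internal-path collection from \Cref{lem:inside-link}, classify links as good/bad and vertices as doomed, and conclude with a union bound over paths.

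One concrete slip: your definition of \emph{doomed} vertex (``more than $\eps^{1/4}$-fraction of its $L_w$-neighbours are reached via corrupted edges'') is the clique-idealized notion, but the transfer actually goes over the \emph{internal paths} of \Cref{lem:inside-link}, not over single edges. As stated, ``undoomed recipient sees a $1-O(\eps^{1/8})$-fraction of correct values'' does not follow from your definition. The paper fixes this by defining $x\in L_w$ to be doomed when at least $\eps^{1/32}$-fraction of the paths $P_w(x,\cdot)$ from \Cref{lem:inside-link} are corrupted; then \Cref{lem:inside-link} (applied in a good link with $\le\sqrt\eps$ corrupted edges) plus Markov gives that at most $O(\eps^{1/32})$ of vertices are doomed, and a further sampling step (\Cref{lem:sampling} on the bipartite graph between $X_{w,j}(1)$ and the rest of $L_w$) shows most $2$-links $L_{w,w'}$ have few doomed vertices. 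Your exponents $\eps^{1/4},\eps^{1/8}$ would need to be renegotiated once the definition is corrected.

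For the union bound you correctly identify the key role of the zig-zag, but the paper's formulation is cleaner than the weighted-frequency argument you sketch: it simply declares $v\in V(Z)$ bad when $X_{v_1}$ is a bad $1$-link and $(v,w)\in E(Z)$ bad when $X_{v_1,w_1}$ is a bad $2$-link, then invokes \Cref{fact:distribution-zz} to transfer the $\poly(d)/q$ bounds on bad links directly to the uniform distributions on $V(Z)$ and $E(Z)$. Since \Cref{thm:expander-routing} guarantees each vertex and edge of $Z$ is used at most $\log^{C_1}n$ times, the number of paths touching a bad vertex or bad edge is at most $\frac{\poly(d)}{q}\cdot|V(Z)|\cdot\log^{C_1}n$, which is at most $|V(Z)|/(3\log^Cn)$ once $C'$ is large enough.
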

\begin{proof}
Fix a permutation $\pi$ on $V(Z)$ and an initial  function $A_0:\cS\rightarrow\Sigma$. 
\vspace{-2ex}
\paragraph{Setting up paths over $1$-links:} Using \Cref{lem:zz} we know that the zig-zag product $Z=G\zz \cH$ is a $k$-regular graph, for $k=\Theta(1)$, with $\sigma_2(Z)\leq \frac{2}{d}+\frac{\alpha}{2}+\frac{\alpha^2}{4}\leq \alpha$. By Theorem~\ref{thm:expander-routing}, we may fix a relaxed-pebble-routing protocol $\cR$ that has $|V(Z)|$ paths each of length $T' \leq O(\log N)$ where every vertex and edge in $Z$ is used in at most $\log^{C_1}n$ paths, for some constant $C_1$ that depends on $C$, and at most $\frac{1}{3\log^C n}$ paths are invalid.
It will be convenient for us notationally to have all of the paths in $\cR$ have the same length $T'$, and we do so by repeating the final vertex of the paths the amount of times necessary.

Each path in $\cR$ is of the form $P= u_1\rightarrow \ldots \rightarrow
u_{T'}$, with $u_i\in V(Z)$, $u_{T'}=\pi(u_1)$ and for all $i$, $(u_{i},u_{i+1})$ is an edge in the zig-zag product $Z=G\zz \cH$. This implies that $((u_{i})_1,(u_{i+1})_1)$ must be an edge in $G$. Therefore given $P$, we will think of the following  link to link message transfer, 
\[X_{(u_1)_1}(1) \rightarrow X_{(u_{1})_1,(u_{2})_1}(1) \rightarrow X_{(u_{2})_1}(1) \rightarrow \ldots \rightarrow X_{(u_{T'})_1}(1),\]
to implement the required transfer from $X_{(u_{1})_1}(1)$ to $X_{(u_{T'})_1}(1)$.

Let $T:=2T'$. Expand each path in $\cR$ as shown above into $1$-links connected by $2$-links. For even $t \in [0,T]$ let $L_{j,t}$ denote the $1$-link that occurs in the $(t/2)^{th}$-time-step of the $j^{th}$ path in $\cR$, and for odd $t\in [T]$ let $L_{j,t}$ denote the intermediate $2$-link that is the  intersection of the 1-links $L_{j,t-1}$ and $L_{j,t+1}$. 

\vspace{-2ex}
\paragraph{Setting up paths inside $1$-links:} For every $1$-link $X_u$ for $u\in X(1)$, we use the algorithm in \Cref{lem:inside-link} with the parameter $\sqrt{\eps}$, to construct a collection of short paths $\cP_{u} = \{P_u(v,w)\}_{v,w\in X_u(1)}$ between all pairs of vertices $v,w$ inside the link $X_u$. We will refer to these as the internal paths in $X_u$.

\vspace{-2ex}
\paragraph{The description of the routing protocol:}
for each path $j$ and time-step $t > 0$, each vertex $u\in L_{j,t}$ takes the majority of the values it receives from $v \in L_{j,t-1}$; this occurs through the path $P_{L_{j,t-1}}(v,u)$ if $t$ is odd or the path $P_{L_{j,t}}(v,u)$ if $t$ is even (some paths may be invalid, in which case we interpret the received value on them as ``$\perp$''). Then $u$ passes on this value to $w \in L_{j,t+1}$ through the path $P_{L_{j,t}}(u,w)$ if $t$ is even and the path $P_{L_{j,t+1}}(u,w)$ if $t$ is odd. To formalize this, let the ``outgoing message'' from $u$ at time-step $0$, path $j$ be $\out(u,j,0)=A_0(v_j,u)$ where $v_j \in V(Z)$ is the start vertex on the $j^{th}$ path. For a vertex $u \in X(1)$, for every path $j\in [N]$ and time-step $t\in [T]$ where $u\in L_{j,t}$, $u$ maintains a set of ``incoming messages'' $\In(u,v,j,t)$ that it receives from vertices $v\in L_{j,t-1}$. The vertex $u$ then sets its outgoing message as $\out(u,j,t)=\maj_{v\sim L_{j,t-1}}(\In(u,v,j,t))$ if more than $1/2$-fraction (computed according to the distribution on $L_{j,t-1}$) of the list has the same value, else sets it to $\perp$.
This outgoing message is then sent to every vertex in $L_{j,t+1}$ through the  paths $\cP_{L_{j,t}}$ if $t$ is even and $\cP_{L_{j,t+1}}$ if $t$ is odd.

\vspace{-2ex}
\paragraph{Bad links and bounding them:}
We now begin the analysis of the protocol, and for that we need to introduce a few notions. First, at most $\eps$-fraction of the edges are corrupted, and we denote the set of corrupted edges by $\cE \subseteq X(2)$. A $1$-link $X_u$ is called bad if it contains too many corrupted edges, more precisely, if $\mu_u(\cE) \geq \sqrt{\eps}$ and good otherwise. 
\begin{claim}\label{claim:bad-1-link}
$\Pr_{u \sim X(1)}[X_u \text{ is bad}] \leq \frac{\poly(d)}{q}$.
\end{claim}
\begin{proof}
Deferred to \Cref{sec:omitted}.    
\end{proof}

For any $1$-link $X_u$ and $v,w\in L(1)$, we say an internal path $P_{u}(v,w)$ is corrupted if it equals $\perp$ or any of the edges on it are corrupted. We define the set $\cD_u \subseteq X_u(1)$ of doomed vertices of $X_u$ as those vertices $v\in X_u(1)$ for which at least $\eps^{1/32}$-fraction of the paths $P_{u}(v,w)$ for $w \sim X_u(1)$ are corrupted, i.e.\ 
\[
\cD_u = 
\left\{v \in X_u(1): \Pr_{w\sim X_u(1)}[P_{u}(v,w) \text{ is corrupted}] \geq \eps^{1/32}\right\}.
\]
The following claim asserts that a good $1$-link cannot have too many doomed vertices.
\begin{claim}\label{claim:doomed}
If $X_u$ is a good link then,
$\Pr_{v \sim X_u(1)}[v \in \cD_u] \lll \eps^{1/32}$.
\end{claim}
\begin{proof}
Deferred to \Cref{sec:omitted}.    
\end{proof}

A $2$-link $X_{u,v}$ is said to be bad if either $X_u$ or $X_v$ is a bad $1$-link, or one of $\mu_{uv}(\cD_u)$ or $\mu_{uv}(\cD_v)$ is at least $\eps^{1/64}$. 
\begin{claim}\label{claim:bad-2-link}
$\Pr_{(u,v) \sim X(2)}[X_{u,v} \text{ is bad}] \leq \frac{\poly(d)}{q}$.
\end{claim}
\begin{proof}
Deferred to \Cref{sec:omitted}.    
\end{proof}

\vspace{-2ex}
\paragraph{Link to Link Transfer on Good Paths:} We now use Claims~\ref{claim:bad-1-link},~\ref{claim:doomed},~\ref{claim:bad-2-link} to finish the analysis of the protocol. Recall that every path contains a $1$-link at an even time-step and a $2$-link at an odd time-step.
Consider any path $j$ where (1) $\maj_{0.99}(A_0(v_j,u))\neq \perp$, (2) it is a valid path in $\cR$, and (3) for all times steps $t$, $L_{j,t}$ is a  good link. On such a path we will show that $\maj_{0.99}(\out(u,j,t)\mid u\in L_{j,t})=\maj_{0.99}(\out(v,j,t+1))$ for all $t$. After that, we argue that almost all paths satisfy these properties. The proof of the former fact is broken into two steps: the message transfer from $L_{j,t}$ to $L_{j,t+1}$ and then the message transfer from $L_{j,t+1}$ to $L_{j,t+2}$, and we argue about each step separately.

The argument proceeds by induction on $t$. Fix some even $t \in [T-2]$ and let $\maj_{0.99}(\out(v,j,t))=\sigma \neq \perp$. The vertices $u \in L_{j,t+1}$ that are not doomed with respect to $L_{j,t}$ ($u \notin \cD_{L_{j,t}}$) will receive the value $\sigma$ on at least $1-O(\eps^{1/32})-0.01 \geq 1/2$-fraction of the paths $P_{L_{j,t}}(v,u)$ and therefore will compute the correct majority, setting $\out(u,j,t+1)=\sigma$. Since $L_{j,t+1}$ is a good $2$-link we know that at most $O(\eps^{1/64})\leq 0.01$-fraction of its vertices are doomed, which gives that $\maj_{0.99}(\out(u,j,t+1)|u\in L_{j,t+1})=\sigma$.

The argument for odd $t$ is exactly the same. Fix some odd $t \in [T-2]$ and let $\maj_{0.99}(\out(v,j,t))=\sigma \neq \perp$. The vertices $u \in L_{j,t+1}$ that are not doomed with respect to $L_{j,t}$ ($u \notin \cD_{L_{j,t}}$) will receive the value $\sigma$ on at least $1-O(\eps^{1/32})-0.01 \geq 1/2$-fraction of the paths $P_{L_{j,t}}(v,u)$ and therefore will compute the correct majority, setting $\out(u,j,t+1)=\sigma$. Since $L_{j,t+1}$ is a good $1$-link, \Cref{claim:doomed} implies that at most $O(\eps^{1/32})
\leq 0.01$-fraction of its vertices are doomed, which immediately implies that $\maj_{0.99}(\out(u,j,t+1)|u\in L_{j,t+1})=\sigma$. 

Setting $A_T(\pi(v_j),v)=\out(v,j,T)$ 
we conclude that for the $j^{th}$-path, $\maj_{0.99}(A_T(\pi(v_j),v))=\maj_{0.99}(A_0(v_j,u))$. 

\vspace{-2ex}
\paragraph{Bounding the number of Good Paths:}
We now finish off the proof by calculating the number of paths $j$ satisfying conditions (1), (2) and (3) above. By the assumption of the lemma we know that there are at most $\eta$-fraction paths violating (1) and by construction of $\cR$ at most $\frac{1}{3\log^C n}$-fraction paths violate (2).

To account for condition (3) it will be useful to switch back to the equivalent view of $\cR$ as a set of paths over $Z$. We call a vertex $v\in Z$ bad if the corresponding $1$-link $X_{v_1}$ is bad and an edge $(v,w)\in Z$ bad if the corresponding $2$-link $X_{v_1,w_1}$ is bad. Using  Claims~\ref{claim:bad-1-link} and~\ref{claim:bad-2-link} we get
\[\Pr_{v \sim Z}[v \text{ is bad}] \leq \frac{\poly(d)}{q},
\qquad\qquad\qquad\Pr_{(v,w) \sim Z}[(v,w) \text{ is bad}] \leq \frac{\poly(d)}{q},\]
since by \Cref{fact:distribution-zz} we have equality of the distributions in question.

Now note that condition (3) is equivalent to saying that path $j$ contains only good vertices and good edges (from $Z$) in it. The protocol $\cR$ uses every vertex at most $\log^{C_1}n$ times which implies that at most $\frac{\poly(d)}{q}\cdot |V(Z)|\cdot \log^{C_1}n\leq \frac{|V(Z)|}{3\log^C n}$-paths contain bad vertices on them (by setting $q$ to be a large enough polynomial of $\log n$). Similarly since $\cR$ uses an edge in $Z$ at most $\log^{C_1}n$ times, we get that at most $\frac{\poly(d)}{q}\cdot |E(Z)|\cdot \log^{C_1}n\leq \frac{|V(Z)|}{3\log^C n}$ where we used that $|E(Z)|=\Theta(|V(Z)|)$ and $q$ is large enough. Therefore by a union bound we get that at most $\eta+\frac{1}{\log^C N}$-fraction paths fail in transmitting the majority symbol to the link $L_{\pi(v_j)}$ correctly, giving us the conclusion in the lemma.
\end{proof}

\subsubsection{Proofs of Omitted Claims}\label{sec:omitted}
In this section we give the proofs of several 
claims used throughout the proof of Lemma~\ref{lem:link-routing}.
\begin{claim}[\Cref{claim:bad-1-link} restated]
$\Pr_{u \sim X(1)}[X_u \text{ is bad}] \leq \frac{\poly(d)}{q}$.
\end{claim}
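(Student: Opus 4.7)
The plan is to combine the $d$-partite structure of $X$ with the inter-color bipartite expansion guaranteed by item~5 of Theorem~\ref{thm:cl} and Lemma~\ref{claim:gll-sing-val}. Since $\E_u[\mu_u(\cE)] = \mu(\cE) \leq \eps$, a direct Markov bound only gives $\Pr_u[\mu_u(\cE) \geq \sqrt{\eps}] \leq \sqrt{\eps}$, which is far too weak; the goal is to amplify this by squaring the bipartite expansion $\lambda = \poly(d)/\sqrt{q}$ into a $\poly(d)/q$ bound via the sampling lemma (Lemma~\ref{lem:sampling}). The catch is that the full bipartite graph between $X(1)$ and $X(2)$ is not itself a useful spectral expander in the partite setting (color-class indicators are near-eigenvectors with eigenvalue $\Omega(1)$), so the sampling lemma must be applied block-by-block after a color decomposition.

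Concretely, for a vertex $u$ of color $c$, the distribution $\mu_{u,2}$ first picks a uniform color pair $\{c', c''\} \subseteq [d] \setminus \{c\}$ and then samples a random neighbor of $u$ in the bipartite graph $G_{c, c', c''} := (X_{\{c\}}(1), X_{\{c', c''\}}(2))$. Applying Lemma~\ref{claim:gll-sing-val} with $L = \{c\}$ and $R = \{c', c''\}$, using the link eigenvalue bound $2/\sqrt{q}$ from item~5, each $G_{c, c', c''}$ has second singular value at most $\lambda = \poly(d)/\sqrt{q}$. Let $E_u^{c', c''}$ denote the density of $\cE$ among the color-$\{c', c''\}$ neighbors of $u$, and set $\beta_{c', c''} := \mu_{X_{\{c', c''\}}(2)}(\cE)$.

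I would then apply Lemma~\ref{lem:sampling} to each $G_{c, c', c''}$ with deviation threshold $\sqrt{\eps}/4$, yielding $\Pr_u\bigl[E_u^{c', c''} > \beta_{c', c''} + \sqrt{\eps}/4\bigr] \leq 16\,\lambda^2 \beta_{c', c''}/\eps$. Union-bounding over the $\binom{d-1}{2}$ color pairs and using $\sum_{\{c_1, c_2\}} \beta_{c_1, c_2} = \binom{d}{2}\mu(\cE) \leq \binom{d}{2}\eps$, the probability of any deviation is at most $16\binom{d}{2}\lambda^2 = \poly(d)/q$. Absent any deviation,
\[
\mu_u(\cE) \;=\; \binom{d-1}{2}^{-1}\!\!\sum_{\{c', c''\}} E_u^{c', c''} \;\leq\; \tfrac{d}{d-2}\,\eps + \tfrac{\sqrt{\eps}}{4} \;<\; \sqrt{\eps}
\]
for $\eps \leq \eps_0$ with $\eps_0$ a small enough absolute constant, so $u$ is good. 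Averaging over the color of $u$ yields the claim.

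The main obstacle, and the critical observation, is the partite block decomposition: naive expander mixing on the whole bipartite graph $(X(1), X(2))$ would not deliver the required $\poly(d)/q$ bound because its spectral gap is only $\Omega(1/d)$, but working color-block-by-color-block recovers the full $\poly(d)/\sqrt{q}$ expansion, which the sampling lemma then amplifies quadratically. A secondary check is the Cauchy--Schwarz-free accounting: the sampling lemma's right-hand side is linear in $\beta_{c', c''}$, so the sum telescopes cleanly against the global budget $\sum \beta_{c_1, c_2} \leq \binom{d}{2}\eps$ without any loss in $\eps$.
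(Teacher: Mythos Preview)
Your proposal is correct and follows essentially the same approach as the paper: fix the color of $u$, use the partite bipartite expansion from Lemma~\ref{claim:gll-sing-val} (with the link eigenvalue bound from item~5 of Theorem~\ref{thm:cl}) to get second singular value $\poly(d)/\sqrt{q}$, and then apply the sampling lemma to square this into $\poly(d)/q$. The only cosmetic difference is that the paper merges all color pairs $\{c',c''\}\subseteq[d]\setminus\{c\}$ into a single bipartite graph $B_c=(X_c(1),\bigcup_{c',c''}X_{\{c',c''\}}(2))$ and applies Lemma~\ref{lem:sampling} once, whereas you keep the pairs separate and union-bound; since every left vertex spreads its mass uniformly over color pairs, the merged graph inherits the per-block singular-value bound, so the two presentations are equivalent.
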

\begin{proof}
Recall that $\cE$ is a set of corrupted edges, and let $\cE_{jk}$ denote the set of edges $\cE \cap X_{jk}(2)$ with $\mu_{jk}(\cE)$ denoting its measure in $X_{jk}(2)$. Note that $\E_{j\neq k\in [d]}[\mu_{jk}(\cE)]=\mu(\cE)\leq \eps$. 

Fix some $i \in [d]$ and consider the bipartite graph, $B_i=(X_i(1), \cup_{j\neq k \in [d]\setminus i} X_{jk}(2))$. We say that a vertex in the right side of $B_i$ is corrupted if the corresponding edge belongs to $\cE$. 

First note that for all $u\in X_i(1)$, the link $X_u$ is bad if the fraction of $u$'s neighbors in $B_i$ that are corrupted is at least $\sqrt{\eps}$. Therefore let us bound the probability that this event occurs. By item 5 of \Cref{thm:cl} and \Cref{claim:gll-sing-val}, the bipartite graphs $(X_i(1),X_{jk}(2))$ have second largest singular value at most $\poly(d)/\sqrt{q}$ for all $j \neq k$. Therefore the second largest singular value of $B_i$ is also at most $\poly(d)/\sqrt{q}$. Applying \Cref{lem:sampling} we get
\[\Pr_{u \sim X_i(1)}[X_u \text{ is bad}] \leq \frac{\poly(d)}{q},\]
where we used that the expected fraction of bad neighbors is $\E_{j \neq k \in [d]\setminus i}[\mu_{jk}(\cE)]\leq 2\eps$.

Since the above bound holds for all $i$, we get the conclusion in the lemma.
\end{proof}

\begin{claim}[\Cref{claim:doomed} restated]
If $X_u$ is a good link then,
$\Pr_{v \sim X_u(1)}[v \in \cD_u] \lll \eps^{1/32}$.
\end{claim}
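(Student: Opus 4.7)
My plan is to reduce the claim to a direct application of Lemma~\ref{lem:inside-link} combined with Markov's inequality. The key observation is that the set of internal paths $\cP_u = \{P_u(v,w)\}_{v,w \in X_u(1)}$ for the link $X_u$ was constructed by invoking Lemma~\ref{lem:inside-link} with parameter $\sqrt{\eps}$ (as specified in the "setting up paths inside 1-links" paragraph of the proof of \Cref{lem:link-routing}). So the link-level edge-corruption threshold and the paths are already calibrated to match the definition of a good link.

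First, I will use the hypothesis that $X_u$ is good to conclude that the fraction of corrupted edges inside $X_u$ is at most $\sqrt{\eps}$, i.e.\ $\mu_u(\cE) \leq \sqrt{\eps}$. Next, I will invoke the tolerance guarantee of \Cref{lem:inside-link} for the parameter $\sqrt{\eps}$: any adversary corrupting at most a $\sqrt{\eps}$-fraction of edges in $X_u$ corrupts at most an $O((\sqrt{\eps})^{1/8}) = O(\eps^{1/16})$-fraction of the paths in $\cP_u$. That is,
\[
\Pr_{v,w \sim X_u(1)}\bigl[P_u(v,w) \text{ is corrupted}\bigr] \lll \eps^{1/16}.
\]

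Finally, I will apply Markov's inequality to pass from this bound on the \emph{average} corruption rate of paths to the measure of vertices $v$ whose \emph{conditional} corruption rate (over the endpoint $w$) is large. Concretely, since $\cD_u = \{v \in X_u(1) : \Pr_{w \sim X_u(1)}[P_u(v,w) \text{ corrupted}] \geq \eps^{1/32}\}$, Markov gives
\[
\Pr_{v \sim X_u(1)}[v \in \cD_u] \leq \frac{\E_{v}\bigl[\Pr_{w}[P_u(v,w) \text{ corrupted}]\bigr]}{\eps^{1/32}} = \frac{\Pr_{v,w}[P_u(v,w) \text{ corrupted}]}{\eps^{1/32}} \lll \frac{\eps^{1/16}}{\eps^{1/32}} = \eps^{1/32},
\]
which is the desired conclusion.

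There is essentially no obstacle here since the proof is a two-line application of known facts; the only thing to be careful about is matching the parameter passed into \Cref{lem:inside-link} (namely $\sqrt{\eps}$) with the exponent $1/8$ appearing in its conclusion, so that the resulting path-corruption exponent ($1/16$) is larger than the doomed-vertex threshold exponent ($1/32$) by exactly a factor of two, enabling the Markov step to yield the claimed $\eps^{1/32}$ bound.
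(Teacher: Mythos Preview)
Your proposal is correct and follows essentially the same approach as the paper: use the good-link hypothesis to bound $\mu_u(\cE)\le\sqrt{\eps}$, invoke \Cref{lem:inside-link} (with parameter $\sqrt{\eps}$) to get that at most $O(\eps^{1/16})$ of the paths are corrupted, and finish with Markov's inequality.
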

\begin{proof}
Since $X_u$ is a good link we know that $\mu_u(\cE)\leq \sqrt{\eps}$. By the construction of the internal paths $\cP_u$ from \Cref{lem:inside-link} we know that at most $O(\eps^{1/16})$-fraction of the paths $P_u(v,w)$ for $v,w\sim X_u(1)$ are corrupted. Therefore by Markov's inequality we get that for at most $O(\eps^{1/32})$-fraction of $v\sim X_u(1)$,  $\Pr_{w\sim X_u(1)}[P_u(v,w)\text{ is corrupted}]\geq \eps^{1/32}$. Therefore $v\in \cD_u$ with probability at most $O(\eps^{1/32})$.
\end{proof}

\begin{claim}[\Cref{claim:bad-2-link} restated]
$\Pr_{(u,v) \sim X(2)}[X_{u,v} \text{ is bad}] \leq \frac{\poly(d)}{q}$.
\end{claim}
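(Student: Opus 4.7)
The plan is a straightforward union bound over the three defining conditions for a $2$-link to be bad, reducing two of them to Claim~\ref{claim:bad-1-link} and the third to a standard sampling argument inside a single vertex-link.

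First, I would handle the ``endpoint'' clauses. A $2$-link $X_{u,v}$ drawn from $\mu_2$ has each endpoint marginally distributed as $\mu_1$, since $X$ is $d$-partite with the color pair chosen uniformly and edges are uniform conditional on colors. Therefore \Cref{claim:bad-1-link} immediately gives $\Pr_{(u,v)\sim X(2)}[X_u \text{ bad}] \leq \poly(d)/q$, and the same for $X_v$.

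Next, for the clause $\mu_{uv}(\cD_u)\geq\eps^{1/64}$, I would condition on $u$ with $X_u$ good; by \Cref{claim:doomed}, $\mu_u(\cD_u)\lll \eps^{1/32}$. Conditioned on $u$, the distribution of $v$ induced by $\mu_2$ is exactly the neighbor distribution of $u$ in $G$, i.e.\ $v\sim \mu_{X_u(1)}$ as a vertex of the link graph. Moreover $\mu_{uv}(\cD_u)$ equals $\Pr_{w\sim N_{X_u}(v)}[w\in\cD_u]$. By item~5 of \Cref{thm:cl} together with the partite trickling-down (\Cref{lem:trickling-partite}) and \Cref{claim:gll-sing-val}, the underlying graph of the link $X_u$ has second singular value at most $\poly(d)/\sqrt{q}$. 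Apply the sampling lemma \Cref{lem:sampling} to this graph with $B=\cD_u$, $\delta \lll \eps^{1/32}$ and deviation parameter $\eps^{1/64}/2$: the set of $v$ for which the neighborhood average deviates from $\delta$ by more than $\eps^{1/64}/2$ has measure at most
\[
\frac{\gamma^2 \delta}{(\eps^{1/64}/2)^2}
\lll \frac{\poly(d)}{q}\cdot \frac{\eps^{1/32}}{\eps^{1/32}}
= \frac{\poly(d)}{q}.
\]
Since $\mu_u(\cD_u)\lll \eps^{1/32}\ll \eps^{1/64}$, any $v$ with $\mu_{uv}(\cD_u)\geq \eps^{1/64}$ belongs to this exceptional set. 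The analogous argument for $\cD_v$ is symmetric (condition on $v$ instead of $u$, using that the $v$-marginal is also $\mu_1$).

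Finally, I union-bound the four events ($X_u$ bad, $X_v$ bad, $\mu_{uv}(\cD_u)\geq \eps^{1/64}$, $\mu_{uv}(\cD_v)\geq \eps^{1/64}$), each of probability $\poly(d)/q$, to conclude. I do not anticipate a real obstacle: the only subtle point is verifying that $\mu_{uv}$ on $X_{uv}(1)\subseteq X_u(1)$ coincides with the neighbor distribution of $v$ inside the link $X_u$, which is a definitional check, and that the relevant spectral bound on the link's underlying graph is $\poly(d)/\sqrt{q}$, which is exactly what item~5 of \Cref{thm:cl} supplies (possibly after a trickle-down step for the non-partite graph of the link).
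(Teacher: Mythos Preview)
Your overall skeleton (union bound plus a sampling argument inside a good link) matches the paper, but the key spectral input you invoke is incorrect, and with the correct bound your computation does not give $\poly(d)/q$.

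The underlying graph of the link $X_u$ is $(d-1)$-partite. Partiteness alone forces a second singular value of order $1/(d-2)$: functions that are constant on each color class with overall mean zero are (approximately) eigenfunctions with eigenvalue $-1/(d-2)$, regardless of how expanding the individual bipartite pieces are. Neither item~5 of Theorem~\ref{thm:cl}, nor Lemma~\ref{lem:trickling-partite}, nor Lemma~\ref{claim:gll-sing-val} says anything about the full non-bipartite link graph; they all concern bipartite graphs between color classes. Plugging the true bound $\gamma = \Theta(1/d)$ into your sampling calculation yields
\[
\frac{\gamma^2\,\mu_u(\cD_u)}{(\eps^{1/64}/2)^2}\;\lll\;\frac{1}{d^2}\cdot\frac{\eps^{1/32}}{\eps^{1/32}}\;=\;O\!\left(\frac{1}{d^2}\right),
\]
a constant in $n$ (since $d$ is fixed), which is far too large for the downstream union bound in the proof of Lemma~\ref{lem:link-routing}, where one needs the bad-link fraction to be $o(1/\operatorname{polylog} n)$.

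The paper circumvents exactly this obstruction by working color-by-color. Fixing a good link $X_u$ and a color $i\in[d-1]$, it applies the sampling lemma to the bipartite graph $B_i=\bigl(X_{u,i}(1),\,\bigcup_{j\ne i}X_{u,j}(1)\bigr)$. This graph \emph{does} have second singular value $O(1/\sqrt{q})$, because its adjacency operator decomposes along the colors $j\ne i$ into the operators of the bipartite graphs $(X_{u,i}(1),X_{u,j}(1))$, each of which has singular value at most $2/\sqrt{q}$ by item~5. For $v$ of color $i$, the quantity $\mu_{uv}(\cD_u)$ is precisely the fraction of $v$'s neighbors in $B_i$ lying in $\cD_u$, and the expected fraction is $\E_{j\ne i}[\mu_j(\cD_u)]\le 2\mu_u(\cD_u)\lll\eps^{1/32}$. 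The sampling lemma now gives $\Pr_{v\sim X_{u,i}(1)}[\mu_{uv}(\cD_u)\ge\eps^{1/64}]\le\poly(d)/q$, and averaging over $i$ yields the claim. Your identification $\mu_{uv}(\cD_u)=\Pr_{w\sim N_{X_u}(v)}[w\in\cD_u]$ and the rest of the union bound are fine; only this color-splitting step is missing.
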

\begin{proof}
Recall that the link $X_{u,v}$ is bad if either $X_u$ or $X_v$ is a bad $1$-link or one of $\mu_{uv}(\cD_u)$ or $\mu_{uv}(\cD_v)$ is at least $\eps^{1/64}$. Using \Cref{claim:bad-1-link} we can bound the probability that one of $X_u$ or $X_v$ is bad, so let us now bound the probability of the latter events.

Fix a good $1$-link $X_u$ henceforth and let us bound the fraction of $v\sim X_u(1)$ for which $\mu_{uv}(\cD_u)$ is large. Using \Cref{claim:doomed} we get that $\mu_u(\cD_u)\lll \eps^{1/32}$. The link $X_u$ is a $(d-1)$-partite complex with colors $[d-1]$ and the property that for all $i\neq j \in [d-1]$ the bipartite graph $(X_{u,i}(1),X_{u,j}(1))$ has second largest eigenvalue at most $O\left(\frac{1}{\sqrt{q}}\right)$. Let $\mu_j(\cD_u)$ denote the measure of $\cD_u \cap X_{u,j}(1)$ inside $X_{u,j}(1)$. One can check that $\E_{j\sim [d-1]}[\mu_j(\cD_u)]=\mu_u(\cD_u)$.

Fix a color $i \in [d-1]$ of $X_u$ and consider the bipartite graph, $B_i=(X_{u,i}(1), \cup_{j \in [d-1]\setminus i} X_{u,j}(1))$. This graph has second largest singular value at most $O\left(\frac{1}{\sqrt{q}}\right)$.
We say that a vertex in the right side of $B_i$ is doomed if it belongs to $\cD_u$. Noting that for all $v\in X_i(1)$, $\mu_{uv}(\cD_u)$ is the fraction of doomed neighbors of $v$ in $B_i$ and applying \Cref{lem:sampling} we get
\[\Pr_{v \sim X_{u,i}(1)}[\mu_{uv}(\cD_u) \geq \eps^{1/64}] \leq \frac{\poly(d)}{q},\]
where we used that the expected fraction of bad neighbors is $\E_{j \sim [d-1]\setminus\{i\}}[\mu_{j}(\cD_u)]\leq 2\mu_u(\cD_u)\lll \eps^{1/32}$. Since the above bound holds for all $i$, we get that,
\[\Pr_{v \sim X_{u}(1)}[\mu_v(\cD_u) \geq \eps^{1/64}] \leq \frac{\poly(d)}{q}.\]
We can now bound the fraction of $2$-links that are bad by a union bound:
\[\Pr_{(u,v)\sim X(2)}[X_{u,v}\text{ is bad}]\leq 2\Pr_{u\sim X(1)}[X_u \text{ is bad}]+2\Pr_{v \sim X_u(1)}[\mu_{uv}(\cD_u) \geq \eps^{1/64}\mid X_u \text{ is good}]\leq \frac{\poly(d)}{q},\]
where we used \Cref{claim:bad-1-link} to bound the first term.
\end{proof}

\subsection{A Routing Protocol based on Cliques}\label{sec:clique_to_clique}
In this section, we 
formally state the performance of 
the clique-to-clique 
routing protocol. 
This protocol is similar
in spirit to the link-to-link protocol and 
it is where most of our
intuition comes from. 
\begin{lemma}\label{lem:clique-routing}
There is $\eps_0 \in (0,1)$ such that for all $C \in \N$, for large enough $n \in \N$ the following holds. Let $X$ be a $d$-dimensional complex with $|X(1)|=n$, $|X(d)|=N$, $d=\Theta(\log\log^2 n)$ and $N \leq n^2$, that is a $\gamma$-one-sided local spectral expander with $\gamma < 1/\poly(d)$. Let $\pi:X(d)\rightarrow X(d)$ be any permutation. Then there is a routing protocol on $G=(X(1),X(2))$ with round complexity $T=O(\log N)$ and work complexity $\max_u O(Td|X_u(d-1)|)$ such that for all initial functions $A_0:X(d)\times [d]\rightarrow \Sigma$ satisfying
\[\Pr_{D\sim X(d)}[\maj_{0.99}(A_0(D,u)\mid u\in D)\neq \perp] \geq 1-\eta,\]
and for all possible adversaries that corrupt at most $\eps$-fraction of edges with $\eps\leq \eps_0$, the protocol computes the function $A_T:X(d)\times [d]\rightarrow \Sigma$ satisfying
\[\Pr_{D\sim X(d)}[\maj_{0.99}(A_T(\pi(D),v)\mid v\in \pi(D))=\maj_{0.99}(A_0(D,u)\mid u\in D)]\geq 1-\eta-\frac{1}{\log^C N}.\]
\end{lemma}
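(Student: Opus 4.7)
The plan is to mirror the link-to-link routing protocol of Lemma~\ref{lem:link-routing}, with the role of $1$-links played by top-faces $D\in X(d)$ and the role of $2$-links played by shared faces $F=D\cap D'$ of two consecutive top-faces. The key structural advantage of working with top-faces is that every pair of vertices in a common clique $D\in X(d)$ is an edge of $X(2)$ (since $X$ is a simplicial complex), so communication within a clique is a single hop and no internal-path construction as in Lemma~\ref{lem:inside-link} is needed.

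First I would set up the routing graph on $X(d)$. Fix a small constant $\beta\in(0,1)$ and let $H$ be the $(\beta d)$-swap walk (sample $D\sim\mu_d$, a uniform $K\subseteq D$ of size $\beta d$, then $D'\sim\mu_d$ conditioned on $K\subseteq D'$). Make it constant-degree regular by taking $Z=H\zz\cH$ for an expander family $\cH$ as in Lemma~\ref{lem:exp-construction}, exactly as in Section~\ref{sec:zz}. By Lemma~\ref{lem:spectral_gap_of_graphs_from_HDX} and $\gamma<1/\poly(d)$, $\sigma_2(H)\leq\beta+\poly(d)\gamma$, hence $\sigma_2(Z)$ is a small constant and Theorem~\ref{thm:expander-routing} produces a collection $\cR$ of paths $D^{(j)}_0\to\cdots\to D^{(j)}_T$ with $T=O(\log N)$, all but a $1/\log^C N$-fraction valid and every vertex/edge of $Z$ used in at most $\polylog N$ paths.

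Second, I would specify the round-by-round transfer. Let $K_t=D^{(j)}_t\cap D^{(j)}_{t+1}$, of size $\beta d$. Vertices of $K_t$ already carry an outgoing message from round $t$; each vertex $w\in D^{(j)}_{t+1}\setminus K_t$ gathers the $\beta d$ symbols sent by vertices of $K_t$ along the edges $\{u,w\}\subseteq D^{(j)}_{t+1}\subseteq X(2)$ and sets its new outgoing symbol to the $0.99$-majority (defaulting to $\perp$). Vertices of $K_t$ keep their prior values. At time $T$, define $A_T(\pi(D),v)$ to be the message held by $v\in\pi(D)$ on the routing path for $D$. Third, I would run the bad-face analysis mirroring Claims~\ref{claim:bad-1-link}--\ref{claim:bad-2-link}. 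Call $D\in X(d)$ \emph{bad} if more than $\sqrt{\eps}$ of its internal edges are corrupted, and $(D,D')\in H$ \emph{bad} if either endpoint is bad or more than $\eps^{1/4}$ of the vertices of $D'$ are \emph{doomed} (incident inside $D'$ to more than $\eps^{1/8}$-fraction corrupted edges). Applying Lemma~\ref{lem:sampling} to the bipartite sampler $(X(d),X(2))$, whose second singular value is at most $\sqrt{2/d}+\poly(d)\gamma$ by Lemma~\ref{lem:spectral_gap_of_graphs_from_HDX}, and to the analogous local sampler inside the link of a top-face, and taking the polynomial in $\gamma<1/\poly(d)$ sufficiently large, both bad fractions can be driven below $1/\log^{C+C_1}N$, where $C_1$ is the reuse exponent from Theorem~\ref{thm:expander-routing}. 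Fact~\ref{fact:distribution-zz} transfers the bounds from $H$ to $Z$.

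Finally, the induction of Lemma~\ref{lem:link-routing} carries over verbatim: along a valid path whose cliques and $H$-edges are all good, at each step the non-bridge vertex $w\in D^{(j)}_{t+1}\setminus K_t$ has at most $\eps^{1/8}$-fraction corrupted edges inside $D^{(j)}_{t+1}$, while at least $0.99$ of $K_t$'s vertices already carry the correct symbol $\sigma$, so $w$'s $0.99$-majority is $\sigma$ and the symbol is preserved all the way to $\pi(D)$. A union bound over the $\polylog N$-reuse paths in $\cR$ against bad cliques and bad $H$-edges then yields the required failure rate of $\eta+1/\log^C N$. The main obstacle I anticipate is the parameter juggling in the sampling step: unlike the link-based case where the CL complex provides a tunable $q$ that drives bad fractions down as $\poly(d)/q$, here we only have $\gamma<1/\poly(d)$, so pushing the bad fraction below $1/\log^C N$ requires coordinating $\beta$, the polynomial in the hypothesis on $\gamma$, and the reuse/path-length constants from Theorem~\ref{thm:expander-routing} carefully, and interpreting $\gamma<1/\poly(d)$ with a sufficiently large polynomial.
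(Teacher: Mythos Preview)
Your overall architecture mirrors the paper's sketch: set up paths in the clique-to-clique graph on $X(d)$ (edges are cliques sharing a $\beta d$-face), transfer via single-hop majorities inside each clique, define bad vertices/cliques, and union-bound over paths. That part is fine, and your observation that no internal-path machinery is needed (since cliques are actual cliques) is exactly the point. The paper also remarks that the zig-zag step is unnecessary here, but that is a minor simplification.

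The real gap is in your third step, and it is not a matter of ``parameter juggling''. You propose to bound the fraction of bad cliques by applying Lemma~\ref{lem:sampling} to the bipartite sampler $(X(d),X(2))$, whose second singular value you correctly quote as $\sqrt{2/d}+\poly(d)\gamma$ from Lemma~\ref{lem:spectral_gap_of_graphs_from_HDX}. The problem is that the $\sqrt{2/d}$ term comes from the complete complex and is completely insensitive to $\gamma$; taking ``the polynomial in $\gamma<1/\poly(d)$ sufficiently large'' does nothing to it. Plugging into Lemma~\ref{lem:sampling} you get a bad-clique fraction of order $\lambda^{2}\approx 1/d$, and with $d=\Theta((\log\log n)^{2})$ this is only $1/\polyloglog n$. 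You need it below $1/\log^{C+C_{1}}N=1/\polylog n$ to beat the $\polylog N$ reuse factor from Theorem~\ref{thm:expander-routing}, and $1/d$ is nowhere near that. The same ceiling applies if you route the argument through bad vertices first and then sample via $(X(1),X(d))$.

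The paper's fix, and the reason the hypothesis $d=\Theta((\log\log n)^{2})$ is there at all, is to replace the spectral sampling bound by the Chernoff-type concentration for HDX of~\cite{dikstein2024chernoff}, which gives a bad-clique fraction of $2^{-\Theta_{\eps}(\sqrt{d})}$. With $\sqrt{d}=\Theta(\log\log n)$ this is $1/\polylog n$, exactly what the union bound needs. So the missing ingredient in your proposal is precisely this stronger concentration; without it the argument does not close, and with it the choice of $d$ in the statement is forced.
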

\begin{proof}
    We omit the formal
    proof and instead provide a brief sketch, as it is 
    very similar to the proof of Lemma~\ref{lem:link-routing} (also, we do not use this statement later on
    in the paper). 
    Analogously to the 
    proof therein, one 
    sets up a collection of paths, this time in the clique to clique graph, whose vertices are $X(d)$ and edges correspond to cliques that intersect in size $\alpha d$ (the zig-zag product
    trick is not necessary in this case). One defines the notion 
    of ``bad vertices'', 
    which are vertices that have many 
    corrupted edges adjacent to them, and 
    subsequently defines the notion of bad cliques, which are cliques 
    that contain many bad vertices. In contrast to the proof of 
    Lemma~\ref{lem:link-routing}, an upper bound of the fraction of bad cliques is not established by spectral bounds this time; instead one appeals to the Chernoff-type bound of~\cite{dikstein2024chernoff}, which gives bounds of the form $2^{-\Theta_{\eps}(\sqrt{d})}$. This
    is ultimately the reason that the argument requires the dimension of the complex to be 
    somewhat super constant.
\end{proof}

\section{Embedding a PCP on an HDX}\label{sec:routing-to-pcp}
The works of~\cite{polishchuk1994nearly,DinurMeir} used routing networks to transform the graph underlying 2-CSPs to an explicit graph. Towards this end, they used a pebble routing protocol on De-Bruijn graphs.
In this section we  generalize their argument and show that any tolerant routing protocol on $G$ gives rise to a PCP embedding result on $G$. 
We then apply this connection to our specific link-to-link routing protocol. We show that in this case, 
this connection is in fact stronger and gives a gap amplification statement.

\subsection{Connection between Routing Protocols and PCPs}
The transformation we describe for modifying the underlying graph of a $2$-CSP has one significant downside: it increases the alphabet size considerably. 
Since the routing length is always $\Omega(\log n)$ on constant degree graphs, the alphabet size always increases to be at least polynomial. In fact, since in our case the work complexity is poly-logarithmic, the alphabet size will increase further and will be $2^{\poly(\log n)}$. 
Therefore, to 
facilitate alphabet reduction 
steps later on, 
we need a more 
refined notion related to 
the alphabet size of CSPs, 
called the \emph{decision complexity}. 
Additionally, to simplify the presentation, we also generalize the notion of $2$-CSPs, and allow for varying alphabets $\Sigma(u) \subseteq \Sigma$ for the vertices $u \in G$.\footnote{This helps us to restrict the provers' strategy to be one that satisfies additional constraints. This technique is usually referred to as folding, and it makes the soundness analysis slightly cleaner.}

\begin{definition}\label{def:gen-2-csp}
An instance $\Psi = (G=(V,E), \Sigma, \{\Sigma(u)\}_{u\in V}, \{\Psi_{e}\}_{e\in E})$ of a generalized $2$-CSP consists of a weighted graph $G$, alphabets $\Sigma,\{\Sigma(u)\}$ with $\Sigma(u)\subseteq \Sigma$ for all $u\in V$, and constraints $\Psi_e\subseteq \Sigma\times\Sigma$, one for each edge. The decision complexity of $\Psi$ is defined as the maximum, over all edges $e = (u,v)$, of the sum of the following circuit complexities: the circuit complexity of checking membership in $\Psi_{(u,v)}$ i.e. 
the circuit complexity of deciding if $(\sigma,\sigma') \in \Psi_{(u,v)}$, the circuit complexity of checking membership in $\Sigma(u)$, and the circuit complexity of checking membership in $\Sigma(v)$.

\end{definition}

Informally, the decision complexity
of a CSP is the complexity of the 
constraints of it. In many PCP reductions, the alphabet size is a constant, and as the decision complexity is always upper bounded by ${\sf poly}(|\Sigma|)$, it is often omitted from the discussion. In our case the decision complexity will 
be ${\sf poly}(\log|\Sigma|)$, 
and it is closely related to
the notion of work complexity
in the context of the almost everywhere reliable transmission problem (as we exhibit next).

The following lemma translates routing protocols for a graph $G$ to PCPs on the graph $G$, generalizing~\cite[Lemma 3.4]{DinurMeir}. 
\begin{lemma}[\Cref{lem:routing-to-pcp-general} restated]\label{lem:routing-to-pcp-general-restated}
Suppose $G$ is a regular graph on $2n$ vertices that has an $(\eps,\nu)$-edge-tolerant routing protocol on the alphabet $\Sigma$ with work complexity $W$, that can be constructed in time $\poly(n)$. Then there is a $\poly(n)$ time reduction that, given a 2-CSP instance $\Psi'$ on a $k$-regular graph $H$ and alphabet $\Sigma$, with $|V(H)|\leq n$, produces a 2-CSP instance $\Psi$ on $G$ such that:
\begin{itemize}
\item If $\val(\Psi')=1$ then $\val(\Psi)=1$.
\item If $\val(\Psi')\leq 1-8\nu$ then $\val(\Psi)\leq 1-\eps$.
\item The alphabet size of $\Psi$ is at most $|\Sigma|^{kW}$. 
\item The decision complexity of $\Psi$ is $O(W)+O(k|\Sigma|)$.
\end{itemize}
\end{lemma}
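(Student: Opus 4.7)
The plan is to follow the Dinur--Meir embedding scheme~\cite{DinurMeir} but replace their specific pebble-routing on the De-Bruijn graph with a generic appeal to the hypothesized $(\eps,\nu)$-edge-tolerant protocol on $G$, exploiting folding to reduce arbitrary prover strategies to honest protocol executions under edge corruption.

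\textbf{Setup.} First I would reduce $H$ to a bipartite $k$-regular graph on at most $2n$ vertices via the bipartite double cover; this incurs a constant-factor loss in the size and soundness and is standard. Then by König's edge-coloring theorem I can write $E(H) = M_1 \sqcup \cdots \sqcup M_k$ as a disjoint union of perfect matchings, and each $M_i$ defines a permutation $\pi_i$ of $V(H)$. Inject $V(H) \hookrightarrow V(G)$ arbitrarily (possible as $|V(G)| = 2n \geq |V(H)|$) and extend $\pi_i$ to a permutation on $V(G)$ by acting as the identity on $V(G) \setminus V(H)$. For each $i$, invoke the hypothesis to build in $\poly(n)$ time an $(\eps,\nu)$-edge-tolerant routing protocol $\cR_i$ on $G$ implementing $\pi_i$ over alphabet $\Sigma$, with work complexity $W$ and some number of rounds $T$.

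\textbf{Construction of $\Psi$.} The constraint graph of $\Psi$ is $G$. The label of $u\in V(G)$ encodes (i) a symbol $a_u\in \Sigma$ (a guess at $u$'s label in $\Psi'$), and (ii) for each protocol index $i$, the full transcript of the messages $u$ claims to have sent and received throughout $\cR_i$ when the inputs are $\{a_w\}_{w\in V(G)}$. Since each protocol has work complexity $W$, each transcript takes $O(W)$ bits, so the alphabet size is at most $|\Sigma|^{O(kW)}$. I fold $\Sigma(u)$ to enforce two unary predicates: (F1) $u$'s claimed outgoing messages are exactly the ones prescribed by $\cR_i$ from $a_u$ and $u$'s claimed incoming messages, and (F2) for each $i$ the final incoming message claimed by $u$, together with $a_u$, satisfies the $\Psi'$-constraint $\Psi'_{(\pi_i^{-1}(u),u)}$. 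The edge constraint $\Psi_{(u,v)}$ for $(u,v)\in E(G)$ simply checks send/receive consistency: for every round $t$ and protocol $i$ in which $u$ and $v$ communicate, the symbol $u$ claims to have sent to $v$ equals the symbol $v$ claims to have received from $u$. Checking (F1) takes circuit size $O(W)$ (by the work complexity bound), checking (F2) takes $O(k|\Sigma|)$ (table lookup across $k$ protocols), and the send/receive consistency check takes $O(W)$, giving decision complexity $O(W)+O(k|\Sigma|)$.

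\textbf{Completeness and soundness.} Completeness is immediate: from a satisfying assignment to $\Psi'$ set $a_u$ accordingly and fill the transcripts by running each $\cR_i$ honestly; (F1), (F2) and all send/receive checks then hold. For soundness, suppose $\val(\Psi) > 1-\eps$ with witness $A$. Set $a_u = A(u)|_\Sigma$, and for each $i$ let $\cE_i \subseteq E(G)$ denote the set of edges violating send/receive consistency within $\cR_i$. Since a violation of $\Psi_{(u,v)}$ requires some such inconsistency, $|\cE_i|\leq \eps |E(G)|$. The crucial observation is that, thanks to the folding (F1), the transcripts claimed by the vertices are the exact transcripts that would arise from an honest execution of $\cR_i$ on inputs $\{a_w\}$ against an adversary that corrupts precisely the edges in $\cE_i$. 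The edge-tolerance of $\cR_i$ then guarantees that for at most a $\nu$ fraction of $u\in V(G)$ the final message received at $\pi_i(u)$ differs from $a_u$. By (F2), for every other $u$ the edge $(u,\pi_i(u))\in M_i$ is satisfied by $(a_u,a_{\pi_i(u)})$ in $\Psi'$. Summing over the $k$ matchings (each $e\in E(H)$ lies in exactly one $M_i$), and absorbing the factors of $2$ from the bipartite double cover and from the density $\geq 1/2$ of $V(H)$ inside $V(G)$, at least a $1-8\nu$ fraction of the edges of $H$ are satisfied.

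\textbf{Main obstacle.} The one genuinely nontrivial step is the folding-based reduction of a cheating $\Psi$-assignment to an honest execution of $\cR_i$ under edge corruption — once that is set up, the lemma is essentially an immediate consequence of the edge-tolerance definition. The remaining delicate point is bookkeeping: ensuring that the circuit realizing $\cR_i$'s local computation really fits within the work-complexity bound $W$ (so that (F1) is checkable in $O(W)$), and verifying that the matching decomposition of $H$ makes the soundness loss scale as $\nu$ rather than $k\nu$.
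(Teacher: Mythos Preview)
Your overall approach is the same as the paper's: encode protocol transcripts in the vertex labels, fold in (F1) and (F2), and observe that the $\Psi$-edges violated by a purported assignment act exactly as adversarially corrupted edges, so that the $(\eps,\nu)$-tolerance bound directly controls the number of failed transmissions. That core reduction is correct and is indeed the heart of the argument.

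The genuine gap is in your preprocessing. The bipartite double cover does \emph{not} preserve soundness up to a constant factor: if $H$ is an odd cycle with inequality constraints over $\{0,1\}$ then $\val(H)<1$, yet its double cover is an even cycle with value $1$. The paper instead forms a bipartite $2k$-regular graph on $2n$ vertices by placing two copies $(u,1),(u,2)$ of each vertex and, in addition to the cross edges, inserting $k$ parallel \emph{equality} edges between $(u,1)$ and $(u,2)$; it is precisely these equality edges that force the two sides to (mostly) agree and make the soundness go through, with a factor-$4$ loss. Your density claim is also unjustified: when $|V(H)|\ll n$, the $\nu$-fraction of failed routings is measured over all of $V(G)$ and can swamp every vertex of the embedded copy of $H$. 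The paper handles this by first padding with $\lfloor n/|V(H)|\rfloor$ disjoint copies of $H$ (plus a dummy $k$-regular piece) so that the embedded instance has density at least $1/2$, at the cost of another factor $2$. These two losses together give the factor $8$ in the statement; your ``two factors of $2$'' bookkeeping does not match it.
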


\begin{proof}
We begin by reducing the 2-CSP $\Psi'$ to a 2-CSP $\Phi$ on a bipartite $2k$-regular graph $G'$ on $|V(G)|$ vertices such that if $\Psi'$ is satisfiable then so is $\Phi$ and if $\val(\Psi')\leq 1-8\nu$ then $\val(\Phi)\leq 1-\nu$. 
The goal here is to obtain a regular bipartite graph (which can hence be decomposed 
into perfect matchings), as well as align 
the number of vertices 
of $\Phi$ to match 
the number of vertices in $G$.
\vspace{-2ex}
\paragraph{Reducing to $\Phi$:}
Let $n= a\cdot |V(H)|+r$ for some $a,r\in \N$ with $r\leq |V(H)|$. Define a graph $G''$ on the vertex set $[n]$ which is formed by taking $a$-many disjoint copies of the $k$-regular graph $H$ and an arbitrary $k$-regular graph (possibly with multi-edges) on the remaining $r$ vertices. Define a 2-CSP $\Psi''$ on $G''$ which is equal to $\Psi'$ on each of the disjoint copies of $H$ and has the ``equality'' constraint on the edges of the remaining $r$ vertices. It is easy to see that if $\val(\Psi')=1$ then $\val(\Psi'')=1$ and if $\val(\Psi')\leq 1-8\nu$ then $\val(\Psi'')\leq 1-4\nu$. 

We will now reduce the 2-CSP $\Psi''$ to a 2-CSP $\Phi$ whose constraint graph $G'=(L\cup R, E')$ is bipartite. Let $L,R$ be equal to $V(G'')$. The left vertex set is defined as $L = V(G'')\times \{1\}$ and the right vertex set is defined as $R = V(G'')\times \{2\}$.  Let $E'$ be the set of edges $((u,1),(v,2))$ where $(u,v)$ is an edge in $G''$ and additionally add $k$ edges between each pair $((u,1),(u,2))$. For the former edges put the constraints corresponding to $\Psi''$ and for the latter put in equality constraints. The completeness is clear, so let us prove the soundness of the reduction. Assume that $\val(\Phi)\geq 1-\nu$ via the assignment $A$. Define the assignment $B$ on $G''$ as $B(u)=A((u,1))$. For at least $1-2\nu$-fraction of $u$, $A((u,1))=A((u,2))$, denoted by the set $\good$. 
Sampling a constraint $((u,1),(v,2))$, with probability at least $1-4\nu$ we have that $v$ is in $\good$ and the constraint is satisfied in $\Phi$, in which case $B$ satisfies the constraint $(u,v)$ in $\Psi''$. In particular, 
it follows that ${\sf val}(\Psi'')\geq 1-4\nu$, 
finishing the proof of
the soundness.
\vspace{-2ex}
\paragraph{Routing Setup:}
Let $[2n]$ denote the vertex set of both $G, G'$; abusing notation we will use the same letters to denote a vertex in $G$ and in $G'$. Since the graph $G'$ is a bipartite  $2k$-regular graph, we may partition the edge set of $G'$ into $2k$ perfect matchings, which we denote by $\pi_1,\ldots, \pi_{2k}$. We think of $\pi_1,\ldots,\pi_{2k}$ both as matchings and also as the permutations on $V(G)$. Note that since each $\pi_i$ is a perfect matching we have that $\pi_i^2=\text{id}$. 

Let $\cR_i$ denote the protocol that routes the matching $\pi_i$ on $V(G)$. We know that $\cR_i$ has round complexity $T$ and every vertex $u\in X(1)$ receives at most $T_u$ messages (over all rounds), and has work complexity $W$. We now describe the CSP $\Psi$. 

\vspace{-2ex}
\paragraph{The graph:} 
The graph underlying the CSP $\Psi'$ is $G$.
\vspace{-2ex}
\paragraph{The alphabet:} 
Let $\Sigma$ be the alphabet of $\Phi$. The alphabet of each vertex $u$ is a subset of $\Sigma^{O(kT_u)}$ and 
we think of a label to $u$ as describing the messages received by $u$ throughout all the protocols $\cR_i$. The messages that $u$ sends to its neighbor $v$ at every round is a deterministic function of the messages it has received so far. With this in mind, an assignment to the CSP can be thought of as 
a transcript of the messages that 
were routed in the protocol, 
and our goal in designing the 
constraints is to check that 
the message that is sent by a vertex $u$ to its neighbor $v$ at round $t$, is received correctly by $v$ at round $t+1$.

Formally, we think of an assignment to the CSP as a maps $A_0\colon V(G) \to \Sigma$, and maps $\In_{i,t}\colon 2E(G)\to \Sigma^*$ for each $i\in [2k]$ and round $t\in [T]$; here $2E(G)$ denoted the set of ordered tuples $[u,v]$ for each $(u,v)\in E(G)$. As part of its alphabet, every vertex $u$ holds the value $A_0(u)\in\Sigma$ and the symbols $\In_{i,t}[u,v]$ that specify the messages that $u$ receives from its neighbors $v$ in the protocol $\cR_{i}$ at round $t$. 

\vspace{-2ex}
\paragraph{Viewing the rules of each protocol as maps:} We now define certain maps that are useful for specifying the constraints of $\Psi$. First, the rules of each protocol $\cR_i$ can be described using the following maps $\out_{i,t}\colon 2E(G)\times \Sigma^* \rightarrow \Sigma^*\cup \{\perp\}$, for $i\in [2k], t\in [T-1]$; the symbol $\out_{i,t}[u,v,\sigma]$ is the message that $u$ sends to $v$ in round $t$ in the protocol $\cR_i$ if the messages $u$ received in previous rounds are given by $\sigma$, for a valid  $\sigma$ of the correct length, and otherwise $\out_{i,t}[u,v,\sigma] = \perp$. Given the transcript $\In_{i,t'}[u,\cdot]$ for all $t'\leq t$ (which is part of the assignment to $u$), the message that $u$ sends to $v$ is  $\out_{i,t}[u,v,A_0[\cdot,u]\circ \In_{i,1}[u,\cdot]\ldots \circ \In_{i,t}[u,\cdot]]$ which we will denote by $\out_{i,t}[u,v]$ for brevity. We stress that this value only depends on 
the assignment given to $u$.

Finally define the output of the protocol $\cR_i$ as the map $A_{i,T}$. Formally, for all $i\in [2k]$, consider the map $A_{i,T}\colon V(G) \times \Sigma^* \to \Sigma$ where the symbol $A_{i,T}[u,\sigma]$ specifies the output of the routing algorithm at a vertex $u$ in $\cR_i$ when given as input the transcript $\sigma$, which in our case equals $A_0[\cdot,u]\circ \In_{i,1}[u,\cdot]\circ \ldots \In_{i,T-1}[u,\cdot]$. Again for brevity, we omit the dependence on $\sigma$ as it is clear from context, and use $A_{i,T}(u)$ to denote the corresponding output symbol. 

We emphasize that the maps $A_{i,T}$ and $\out_{i,t}$ are not part of the alphabet, but since they are a deterministic function of the messages received at a vertex, we use them while defining the constraints.

\vspace{-2ex}
\paragraph{Intuition towards defining constraints:}
In an ideal proof we want $A_0$ to be a satisfying assignment to $\Phi$, and the maps $\In_{i,t},\out_{i,t}$ to be the transcript when the protocol $\cR_i$ is executed on $G$.

It is therefore convenient to view the maps $A_0, A_{i,T}$ and $\In_{i,t},\out_{i,t}$ from the point of view of what happens in the routing protocol. We want to ensure that the message transmission across every edge behaves as it is supposed to -- for the edge $(u,v)$ the outgoing message that $v$ sends to $u$ at any round should equal the message that $u$ receives at the next round and vice versa. Note that this check only depends on the alphabet of $u$ and $v$. Secondly, suppose that the routing protocol was successful and that $A_0$ was indeed a satisfying assignment to $\Phi$. Then for every $u$, the protocol $\cR_i$ successfully transmitted the symbol $A_0(\pi_i(u))$ from the vertex $\pi_i(u)$ to the vertex $\pi_i(\pi_i(u))=u$, that is, $A_{i,T}(u)$ equals $A_0(\pi_i(u))$. In particular, we would have that $(A_0(u),A_{i,T}(u))$ would satisfy the constraint $(u,\pi_i(u))$ in $\Phi$. Since this only depends on $u$, we enforce this as a hard constraint on the alphabet of $u$ 
via folding.

\vspace{-2ex}
\paragraph{Folding:} We constrain the label set of 
$u$ to only be tuples where $(A_0(u),A_{i,T}(u))$ satisfies the constraint $\Phi(u,\pi_i(u))$ in $\Phi$, for all $i\in [2k]$. By that, we mean that only labels that satisfy this condition are allowed in an assignment to $\Psi$. 

\vspace{-2ex}
\paragraph{The constraints of $\Psi$:} For an edge  $(v,u)\sim E(G)$, read the labels of $u,v$ and 
for each $i\in [2k]$ and $t\in [T-1]$ check that $\In_{i,t+1}(u,v) = \out_{i,t}(v,u)$ and $\In_{i,t+1}(v,u) = \out_{i,t}(v,u)$.

In words, the constraint on $v,u$ checks
that the message that $u$ receives from $v$ at round $t+1$ is the message that $v$ sent to it at the prior round and vice versa. The decision complexity of the constraints is the sum of the circuit complexity of (1) computing $\out_{i,t}(v,u)$, $\out_{i,t}(u,v)$ and $A_{i,T}(u), A_{i,T}(v)$ over $i$ and $t$, (2) checking $(A_0(u),A_{i,T}(u))\in \Phi(u,\pi_i(u))$ and $(A_0(v),A_{i,T}(v))\in \Phi(v,\pi_i(v))$ over all $i$, and (3) checking if $\In_{i,t+1}(u,v)=\out_{i,t}(v,u)$ and vice versa for all $i,t$. This in total amounts to $O(W)+O(k|\Sigma|)$.

\skipi
This completes the description of $\Psi$, and we now analyze the completeness and the soundness of the reduction.

\vspace{-2ex}

\paragraph{Completeness:} Suppose that ${\sf val}(\Phi)=1$ and that $A:V(G') \rightarrow \Sigma$ 
is satisfying assignment for $\Phi$. We take $A_0(u)=A(u)$ for all $u\in V(G)$ and define the maps $\In_{i,t},\out_{i,t}$ and $A_{i,T}$ according to the execution of the routing protocols $\cR_i$ for each $i\in [2k]$ when instantiated with $A_0$. To argue that this is a valid assignment we must check that it satisfies the folding constraints; to check that its value is $1$, we must verify that it satisfies all the constraints of $\Psi$. The latter condition is clear since the assignments $\In_{i,t}$ satisfy all of the routing constraints of $\Psi$ by definition. To check the folding constraint, fix a vertex $u$ and $i\in [2k]$. Since $A_{i,T}$ is the output of $\cR_i$ when executed on a graph with no corrupted edges we get that $A_{i,T}(u)=A_0(\pi_i(u))=A(\pi_i(u))$ for all $u$. Since $A$ is a satisfying assignment, $(A_0(u),A_{i,T}(u))$ satisfies the constraint $\Phi(u,\pi_i(u))$ as required.

\vspace{-2ex}
\paragraph{Soundness:} 
Suppose that $\val(\Psi) \geq 1-\eps$. Let $(A_0$, $\{\In_{i,t}\}_{i\in [2k], t\in [T]})$ be the assignment achieving this value. Let $\{\out_{i,t}\}, A_{i,T}$ be the deduced maps with $A_0,\In_{i,t}$ as input. 

We will show that this implies that $\val(\Phi) \geq 1-\nu$ by exhibiting that the assignment $B$ defined as $B(u)=A_0(u)$ has  high value for $\Phi$. Let $\mathcal{E} \subseteq E(G)$ be the set of edges violated by $(A_0, \{A_{i,t}\}_{i\in [2k], t\in [T]})$; we know that $\mu(\mathcal{E}) \leq \eps$. 

Fix any $i\in [2k]$. We know that for all edges $(u,v)\not\in \mathcal{E}$, any message that $u$ sends to $v$ is received correctly by $v$, in which case the tables $\In_{i,t}$ describe a correct simulation of the routing protocol initiated with the assignment $A_0$ and the set of corrupted edges $\mathcal{E}$. For every $i\in [2k]$, the tolerance guarantee of the routing protocol $\cR_i$ gives that
\[
\Pr_{u\sim [2n]}[A_0(u) \neq A_{i,T}(\pi_i(u))]\leq \nu.\]
By folding, for all $i$ and $u\in [2n]$, $(A_0(u),A_{i,T}(u))\in\Phi(u,\pi_i(u))$. Therefore letting $\viol(B)$ denote the fraction of edges violated by $B$, we get that, 
\begin{align*}
{\sf viol}(B)
&=\Pr_{i\sim [2k],u\in [2n]}[(B(u),B(\pi_i(u))) \notin \Phi(u,\pi_i(u))]\\ 
&\leq \Pr_{i,u}[(A_0(u),A_{i,T}(u))\notin \Phi(u,\pi_i(u))]+\Pr_{i,u}[A_{i,T}(u) \neq A_0(\pi_i(u))]\\
&\leq \nu,
\end{align*}
where $\Pr_{i,u}[(A_0(u),A_{i,T}(u))\notin \Phi(u,\pi_i(u))] = 0$ by folding. The conclusion now follows by the fact that $\val(\Phi)\geq 1-\nu$ implies that $\val(\Psi')\geq 1-8\nu$.
\end{proof}

\subsection{Embedding PCPs on an HDX, with amplification}\label{sec:link-routing-pcp}
In this section we show how to use the link-to-link routing protocol from \Cref{sec:link-to-link} to convert a 2-CSP $\Psi$ to a 2-CSP $\Psi'$ on a graph underlying an HDX.
The idea is similar to the idea
in the proof of Lemma~\ref{lem:routing-to-pcp-general-restated}, but since our graph $G=(X(1),X(2))$ may not be regular we cannot directly apply the lemma. As remarked earlier, by associating a vertex of the CSP to a link of $X$ though, we can handle these regularity issues, as well as get an amplification result. More precisely, we have:

\begin{lemma}\label{lem:improved-routing-to-PCP}
There exists $\eps>0$ such that for all $C>0$ and $\delta\in (0,1)$ there are constants $C',d_0,n_0\in \N$ such that the following holds for all $n\geq n_0$ and $d\geq d_0$. Let $X$ be the $d$-dimensional complex from \Cref{thm:cl} with parameters $q=\log^{C'} n$ and $\delta$, and $2n\leq |X(1)|\leq O_{\delta}(1) n$
(which can be constructed in time $\poly(n)$). Then there is a $\poly(n)$ time procedure mapping any 2-CSP $\Psi'$ with $n$ vertices on a $k$-regular graph $H$ and alphabet $\Sigma$, to a 2-CSP $\Psi$ on the graph $G = (X(1),X(2))$ with alphabet size at most $|\Sigma|^{kq^{O(d^2)}}$, satisfying the following properties:
\begin{enumerate}
\item Completeness: If $\val(\Psi')=1$ then $\val(\Psi)=1$.
\item Soundness: If $\val(\Psi')\leq 1-\frac{1}{\log^C n}$, then $\val(\Psi) \leq 1-\eps$.
\item The decision complexity of $\Psi$ 
is at most $kq^{O(d^2)}|\Sigma|$.
\end{enumerate}
\end{lemma}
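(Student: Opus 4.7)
The plan is to adapt the proof of \Cref{lem:routing-to-pcp-general-restated} to the non-regular graph $G = (X(1),X(2))$ by replacing the vertex-level pebble-routing with the link-to-link protocol $\cR_i$ of \Cref{lem:link-routing}, and by composing with the zig-zag product $Z = G\zz \cH$ for an appropriate constant-degree expander family $\cH$ (via \Cref{lem:exp-construction}) so as to recover effective regularity. First I would reduce $\Psi'$ on the $k$-regular graph $H$ to a 2-CSP $\Phi$ on a bipartite $2k$-regular graph whose vertex set is identified with $V(Z)$, using the padding-and-bipartization trick from the proof of \Cref{lem:routing-to-pcp-general-restated} (preserving completeness and giving $\val(\Psi')\leq 1-8\nu\Rightarrow \val(\Phi)\leq 1-\nu$). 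Then I would decompose the edge set of $\Phi$ into $2k$ perfect matchings $\pi_1,\ldots,\pi_{2k}$, viewed as involutions on $V(Z)$, and set up the protocol $\cR_i$ of \Cref{lem:link-routing} for each $\pi_i$.

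The CSP $\Psi$ lives on $G$, and the label at each $u\in X(1)$ has two parts: (a) symbols $A_0(v,u)\in\Sigma$ for every $v=(v_1,v_2)\in V(Z)$ with $v_1$ a neighbor of $u$ in $G$ (equivalently $u\in X_{v_1}(1)$); and (b) received-message transcripts $\In_{i,t}[u,\cdot]$ for each $i\in[2k]$ and $t\in[T]$. The outgoing messages $\out_{i,t}[u,\cdot]$ and the final routed values $A_{i,T}(v,u)$ are then deterministic functions of $u$'s label. I would fold $\Sigma(u)$ to only contain labels for which $(A_0(v,u),A_{i,T}(v,u))\in\Phi(v,\pi_i(v))$ holds for every relevant $v$ and every $i\in[2k]$. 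An edge $(u,w)\in X(2)$ of $\Psi$ carries two kinds of checks: routing-consistency checks $\In_{i,t+1}[u,w]=\out_{i,t}[w,u]$ (and the symmetric variant) across all $i,t$; and link-consistency checks $A_0((v_1,v_2),u)=A_0((v_1,v_2),w)$ for every $v_1\in X_{uw}(1)$ and every $v_2$ in the cloud of $v_1$ in $Z$.

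Completeness is immediate: given a satisfying $B$ of $\Psi'$ (hence of $\Phi$), set $A_0(v,u)=B(v)$ on every link and simulate each $\cR_i$ faithfully; then $A_{i,T}(v,u)=B(\pi_i(v))$ and folding together with every edge check is satisfied. For soundness, suppose an assignment to $\Psi$ has value $\geq 1-\eps$ and let $\cE\subseteq X(2)$ be the violated edges, so $\mu(\cE)\leq\eps$. Using the link-consistency constraints together with the local spectral expansion of each link $L_{v_1}$ guaranteed by \Cref{thm:cl} and the expander mixing lemma, I would show that for all but an $\eta$-fraction of $v\in V(Z)$ the function $u\mapsto A_0(v,u)$ on $X_{v_1}(1)$ is concentrated at level $0.99$ on a single value $\sigma_v$, so $\maj_{0.99}(A_0(v,\cdot))=\sigma_v$. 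Applying \Cref{lem:link-routing} to each $\cR_i$ with initial-error $\eta$ and adversary $\cE$ gives $\maj_{0.99}A_{i,T}(\pi_i(v),\cdot)=\sigma_v$ for all but an $\eta+1/\log^{C'}n$ fraction of $v$; combining with the folding condition on any $u$ realizing both majorities (at least $98\%$ of $u$'s whenever both majorities are well-defined) yields $(\sigma_v,\sigma_{\pi_i(v)})\in\Phi(v,\pi_i(v))$ for all but an $O(\eta)+1/\log^{C'}n$ fraction of $(v,i)$. Setting $B(v):=\sigma_v$ then gives $\val(\Phi)\geq 1-O(\eta)-1/\log^{C'}n$, and the bipartization reduction lifts this to a bound on $\val(\Psi')$.

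The hard part is the soundness: the claimed gap amplification from $1/\log^C n$ in $\Psi'$ to a constant gap $\eps$ in $\Psi$ rests crucially on the \emph{polylogarithmic} routing-error guarantee of \Cref{lem:link-routing}, and one must carefully bound $\eta$ using the link-expansion inputs from \Cref{thm:cl} and pick $\eps_0$ (the tolerance threshold of \Cref{lem:link-routing}) together with $C'\gg C$ so that $O(\eta)+1/\log^{C'}n$ falls below $1/(8\log^C n)$ whenever the assumed value $1-\eps$ has $\eps\leq\eps_0$; this is the analogue of the $(\eps,\nu)$-tolerance step in the proof of \Cref{lem:routing-to-pcp-general-restated} but now sensitive to the link structure. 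The alphabet bound $|\Sigma|^{kq^{O(d^2)}}$ and the decision-complexity bound $kq^{O(d^2)}|\Sigma|$ both follow from the degree bound $\deg_G(u)\leq q^{O(d^2)}$ of \Cref{thm:cl}, the work-complexity bound from \Cref{lem:link-routing}, and the total number of folding and consistency checks per edge.
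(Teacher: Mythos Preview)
Your proposal is correct and follows essentially the same approach as the paper: the bipartization to $\Phi$ on $V(Z)$, the matching decomposition into $\pi_1,\ldots,\pi_{2k}$, the link-to-link protocols from \Cref{lem:link-routing}, the alphabet/folding structure, the two-part edge constraints, and the soundness argument via the majority assignment $B(v)=\maj_{0.99}A_0(v,\cdot)$ all match the paper's proof. The only cosmetic difference is that the paper also imposes an $A_{i,T}$ link-consistency check on each edge (used to directly bound $\Pr[B(i,j,T)=\perp]$ via a bad-link argument), whereas you obtain the equivalent bound from the output-majority guarantee already packaged into \Cref{lem:link-routing}; either route works.
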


\begin{proof}
We first use \Cref{thm:cl} to construct a complex $X$ in $\poly(n)$-time which is a $d$-dimensional complex with $2n\leq |X(1)| \leq O_{\delta}(1)n$ and $q=\Theta(\log^{C'}n)$ for $C'$ chosen to be large enough. As before, for $S \in X(i)$ let $\Delta_S$ be the number of $d$-faces containing $S$, i.e. $\Delta_S = |X_S(d-i)|$. Let $G=(X(1), X(2))$ and let $Z=G\zz \cH$ be the zig-zag product of $G$ (thought of as an undirected graph with multi-edges) with the family of expanders $\cH$ from \Cref{lem:exp-construction}, as described in \Cref{sec:zz}. The number of vertices of $Z$ is equal to the number of multi-edges in $G$, which is $\sum_{u\in X(1)}\Delta_{uv}=\binom{d}{2}|X(d)|$ and is denoted by $N$ throughout. 

We start by reducing to a 2-CSP $\Phi$ on a bipartite $2k$-regular graph $G'$ on $|V(Z)|$ vertices (which is at least $2n$) such that if $\Psi'$ is satisfiable then so is $\Phi$ and if $\val(\Psi')\leq 1-8\nu$ then $\val(\Phi)\leq 1-\nu$. This reduction is the same as that in \Cref{lem:routing-to-pcp-general}, and we omit the details.

Since the graph $G'$ is a bipartite $2k$-regular graph, we may partition the edge set of $G'$ into $2k$ perfect matchings, which we denote by $\pi_1,\ldots, \pi_{2k}$. Abusing notation, we think of $\pi_1,\ldots,\pi_{2k}$ also as the permutations on $V(Z)$ corresponding to the matchings. Note that since each $\pi_i$ is a perfect matching, we have that $\pi_i^2=\text{id}$. 

Let $\cR_i$ denote the protocol from \Cref{lem:link-routing} that routes the matching $\pi_i$ on $V(Z)$ with the parameter $2C$ in place of $C$. We know that $\cR_i$ has round complexity $T=O(\log n)$ and work complexity $q^{O(d^2)}\log|\Sigma|$. We now describe the CSP $\Psi$.

\vspace{-2ex}
\paragraph{The graph:} 
the graph underlying the CSP $\Psi$ is $G = (X(1),X(2))$.
\vspace{-2ex}
\paragraph{The alphabet:} 
For every $u\in X(1)$ let $T_u$ denote the number of messages that $u$ receives over all rounds. The alphabet of each vertex $u$ is a subset of $\Sigma^{O(kT_u)}$.

First recall that a vertex $j\in V(Z)$ is a tuple $(j_1,j_2)$ with $j_1\in X(1)$ and $j_2$ in the cloud of $j_1$. We think of an assignment to the CSP as a collection of maps, $A_0\colon \cS \to \Sigma$, for $\cS=\{(j,u):j\in V(Z),u\in X_{j_1}(1)\}$ and $\In_{i,t}\colon 2E(G)\to \Sigma^*$ for each $i\in [2k]$ and round $t\in [T]$. 

As a part of its alphabet, every vertex $u\in X(1)$ holds the symbols $\{A_0(j,u)\mid j\in V(Z), u\in X_{j_1}(1)\}$. Additionally it holds the symbols $\In_{i,t}[u,v]$, which denotes the message that $u$ receives from its neighbors $v$ in the protocol $\cR_{i}$ at round $t$. 

\vspace{-2ex}
\paragraph{Viewing the rules of each protocol as maps:} We define the maps $\out_{i,t}\colon 2E(G)\times \Sigma^* \rightarrow \Sigma^*\cup \{\perp\}$, for $i\in [2k], t\in [T-1]$ in the following way. The symbol $\out_{i,t}[u,v,\sigma]$ is the message that $u$ sends to $v$ in round $t$ and protocol $\cR_i$ if the messages received by it in previous rounds are given by $\sigma$, for a valid  $\sigma$ of the correct length; otherwise $\out_{i,t}[u,v,\sigma] = \perp$. We let $\out_{i,t}[u,v]$ denote the message that $u$ sends to $v$ at round $t$, on the transcript $\In_{i,\cdot}[u,\cdot]$.

Similarly define the output of the protocol $\cR_i$ as the map $A_{i,T}$ like in \Cref{lem:link-routing}. Formally, for all $i\in [2k]$, consider the map $A_{i,T}\colon \cS \times \Sigma^* \to \Sigma$ where the symbol $A_{i,T}[(j,u),\sigma]$ specifies the output of the routing algorithm at a vertex $u$ in $\cR_i$ with respect to $j\in V(Z)$ (for some $u\in X_{j_1}(1)$) when given as input the transcript $\sigma$. On the transcript $\In_{i,\cdot}[u,\cdot]$ we use $A_{i,T}(j,u)$ to denote the corresponding output symbol.

\vspace{-2ex}
\paragraph{Intuition towards defining constraints:}
In an ideal proof we want $A_0(j,u)$ to be the same on all $u\in X_{j_1}(1)$ and $A_0(j,\cdot)$ to be a satisfying assignment for $\Phi$. The maps $\In_{i,t},\out_{i,t}$ should be the transcript when the protocol $\cR_i$ is executed on $G$.

We want to ensure that the message transmission across every edge behaves as it is supposed to -- for the edge $(u,v)$ the outgoing message that $v$ sends to $u$ at any round should equal the message that $u$ receives at the next round and vice versa. 

Secondly, suppose that $A_0(j,\cdot)$ was a satisfying assignment to $\Phi$ and that no edge in the protocol is corrupted. Then, for every $j$ for which $u\in X_{j_1}(1)$, the protocol $\cR_i$ successfully transmitted the symbol $A_0(\pi_i(j),\cdot)$ from the link $X_{\pi_i(j)_1}$ to the link $X_{j_1}$, that is, $A_{i,T}(j,\cdot)$ equals $A_0(\pi_i(j),\cdot)$. In particular, we would have that $(A_0(j,u),A_{i,T}(j,u))$ satisfies the constraint $(j,\pi_i(j))$ in $\Phi$. Since this only depends on $u$, we enforce this as a hard constraint on the alphabet of $u$.

\vspace{-2ex}
\paragraph{Folding:} We constrain the label set of 
$u$ to only be tuples where $(A_0(j,u),A_{i,T}(j,u))$ satisfies the constraint $\Phi(j,\pi_i(j))$ in $\Phi$, for all $i\in [2k]$ and $j\in V(Z)$ where $u\in X_{j_1}(1)$. By that, we mean that only labels that satisfy this condition are allowed in an assignment to $\Phi$. 

\vspace{-2ex}
\paragraph{The constraints of $\Psi$:} For an edge  $(v,u)\sim X(2)$, read the labels of $u,v$ and check that,
\begin{enumerate}
\item For each $j\in V(Z)$ for which $u,v$ are both in $X_{j_1}(1)$: $A_0(j,u) = A_0(j,v)$, and for all $i\in [2k]$ it holds that $A_{i,T}(j,u) = A_{i,T}(j,v)$. 
\item For each $i\in [2k]$ and $t\in [T-1]$: $\In_{i,t+1}(u,v) = \out_{i,t}(v,u)$ and $\In_{i,t+1}(v,u) = \out_{i,t}(v,u)$.
\end{enumerate}
In words, the constraint on $v,u$ check
that they hold the same value when they are 
inside the same link at the beginning and end of the protocols, and that the message that $u$ receives from $v$ at round $t+1$ is the message that $v$ sent to it at the prior round and vice versa. 

One can check that the decision complexity of the constraints is the sum of the circuit complexity of 1) computing $\out_{i,t}(v,u)$ and $\out_{i,t}(u,v)$ over $i$ and $t$, 2) checking the routing constraints hold, 3) $(A_0(j,u),A_{i,T}(j,u))\in \Phi(j,\pi_i(j))$ and $(A_0(j,v),A_{i,T}(j,v))\in \Phi(j,\pi_i(j))$ over all $i,j$. Since the work complexity of $\cR_i$ is $q^{O(d^2)}\log|\Sigma|$ and the complexity of computing (2) is at most $kq^{O(d^2)}|\Sigma|$ the decision complexity amounts to $kq^{O(d^2)}|\Sigma|$.
This completes the description of $\Psi$, and we now analyze the completeness and the soundness of the reduction.

\vspace{-2ex}

\paragraph{Completeness:} Suppose 
that ${\sf val}(\Phi)=1$, and let $A:V(G') \rightarrow \Sigma$ be a satisfying assignment. We take $A_0(j,u)=A(j)$ for all $u\in X_{j_1}(1)$ and define the maps $\In_{i,t},\out_{i,t}$ and $A_{i,T}$ according to the execution of the routing protocols $\cR_i$ for each $i\in [2k]$ when instantiated with $A_0$. To argue that this is a valid assignment we must check that it satisfies the folding constraints; 
to check that its value is $1$, we must verify that it satisfies all the constraints of $\Psi$. 
The latter condition is clear since the assignment $A_0$ is equal on all the vertices in the link $X_{j_1}$ for all $j$, and the assignments $\In_{i,t}$ satisfy all of the routing constraints
of $\Psi$ by definition. To check the folding constraints, fix a vertex $u$, take $j\in V(Z)$ with $u\in X_{j_1}(1)$, and take $i\in [2k]$. Since $A_{i,T}$ is the output of $\cR_i$ when executed on a graph with no corrupted edges we get that $A_{i,T}(j,u)=A_0(\pi_i(j),u')=A(\pi_i(j))$ for all $u'\in X_{\pi_i(j)_1}$. Since $A$ is a satisfying assignment, $(A_0(j,u),A_{i,T}(j,u))$ satisfies the constraint $\Phi(j,\pi_i(j))$ as required.

\vspace{-2ex}
\paragraph{Soundness:} Suppose that $\val(\Psi) \geq 1-\eps$, where $\eps$ is less than $\eps_0$, the absolute constant in \Cref{lem:link-routing}. Let $(A_0$, $\{\In_{i,t}\}_{i\in [2k], t\in [T]})$ be the assignment achieving this value. Let $\{\out_{i,t}\}, A_{i,T}$ be the deduced maps with $A_0,\In_{i,t}$ as input. We will show that this implies that $\val(\Phi) \geq 1-\frac{1}{8\log^C n}$ by exhibiting a high-valued assignment $B$ for it. In fact, our assignment for $G'$ will be $B(j) = \maj_{u\in X_{j_1}(1)}(A_0(j,u))$ if a clear majority of at least $99\%$ inside $X_{j_1}$ exists, and $\perp$ otherwise. 

Let us upper bound $\viol(B)$, where we count every edge on the vertices assigned $\perp$ as violated. Let 
$\mathcal{E} \subseteq X(2)$ be the set of edges violated by $A_0, \{A_{i,t}\}_{i\in [2k], t\in [T]}$; we know that $\mu(\mathcal{E}) \leq \eps$. 

We first upper bound the probability that $B(j)=\perp$. Suppose that $j\in V(Z)$ is such that, $\mu_{j_1}(\mathcal{E})\leq 0.05$. Then there exists $\sigma_j\in \Sigma$ such that $A_0(j,v)=\sigma_j$ for at least $0.99$-fraction of $v \sim X_{j_1}$ (since the spectral gap of the graph $(X_u(1),X_u(2))$ is at least $1/2$). This in turn implies that $B(j)\neq \perp$. Therefore, 
\begin{equation*}
\Pr_{j \sim V(Z)}[B(j)=\perp]\leq \Pr_{j\sim V(Z)}[\mu_{j_1}(\mathcal{E}) \geq 0.05]=\Pr_{u\sim X(1)}[\mu_{u}(E)\geq 0.05] \leq \frac{1}{\log^{2C} n},    
\end{equation*}
where we used \Cref{fact:distribution-zz} in the second transition and \Cref{claim:bad-1-link} in the last transition.

For convenience of analysis, we also define $B(i,\pi_i(j),T)$ as $\maj_{u\in X_{j_1}(1)}(A_{i,T}(\pi_i(j),u))$ if at least $99\%$ of the vertices in $X_{j_1}(1)$ have the same value in $A_{i,T}$, and $\perp$ otherwise. As is the case above, for $B(i,j,T)$ to be $\perp$ it holds that at least $0.05$-fraction of the edges in $X_{\pi_i(j)_1}$ are in $\mathcal{E}$ too, so for all $i\in [2k]$,
\[\Pr_{j \sim V(Z)}[B(i,j,T)=\perp] \leq \frac{1}{\log^{2C} n}.\]
Fix any $i\in [2k]$. We know that for all edges $(u,v)\not\in \mathcal{E}$, any message that $u$ sends to $v$ is received correctly by $v$, in which case the tables 
$\In_{i,t}$ describe a correct simulation of the routing protocol in~\Cref{lem:link-routing} initiated with the assignment $A_0$. Provided that $\eps$ is small enough, Lemma~\ref{lem:link-routing} tells us 
that
\[
\Pr_{j\sim V(Z)}[B(j) \neq B(i,\pi_i(j),T)]\leq \Pr_{j\sim V(Z)}[B(j)=\perp]+\frac{1}{\log^{2C} N}\leq \frac{2}{\log^{2C} n}.\]
By folding, for all $i,j$ and $u\in X_{j_1}(1)$, $(A_0(j,u),A_{i,T}(j,u))\in\Phi(j,\pi_i(j))$, therefore we get that
\[\Pr_{i,j}[(B(j),B(i,j,T))\notin \Phi(j,\pi_i(j))] \leq \Pr_{i,j}[B(j)\neq \perp]+\Pr_{i,j}[B(i,j,T)\neq \perp]\leq \frac{2}{\log^{2C} n}.\]
Using the union bound we conclude that
\begin{align*}
{\sf viol}(B)
&=\Pr_{i\sim [2k],j\in [N]}[(B(j),B(\pi_i(j))) \notin \Phi(j,\pi_i(j))]\\ 
&\leq \Pr_{i,j}[(B(j),B(i,j,T))\notin \Phi(j,\pi_i(j))]+\Pr_{i,j}[B(i,j,T) \neq B(\pi_i(j))]\\
&\leq \frac{4}{\log^{2C} n},
\end{align*}
which implies that $\val(\Phi) \geq 1-\frac{1}{8\log^C n}$. This implies that $\val(\Psi')\geq 1-\frac{1}{\log^C n}$ as required.
\end{proof}

\section{Amplification of 2-CSPs on Complexes with Direct Product Testers}\label{sec:amplification}
In this section, we show how to amplify the soundness of a 2-CSP on $G=(X(1),X(2))$ where $X$ is an HDX that supports a direct product test. If $X$ is a sparse complex, then this result is a derandomized parallel repetition procedure for 2-CSPs on $G$.
In the time of writing this paper we only know of one family of sparse complexes with this property: the Chapman-Lubotzky complexes from \Cref{thm:cl}. Thus, in~\Cref{sec:final} we instantiate 
this idea with these complexes.

\subsection{Gap Amplification to Low Soundness}
Fix any complex $X$ for which the $(k,\sqrt{k})$-direct-product tester on $X$ has soundness $\delta$, and consider any 2-CSP $\Psi$ on $G=(X(1),X(2))$. Our reduction will produce a Label Cover instance $\Phi$ on the bipartite inclusion graph $G'=(X(k),X(\sqrt{k}))$ with left alphabet $\Sigma^k$ and right alphabet $\Sigma^{\sqrt{k}}$. The constraint on $(U,V)$ check whether the label on $U \in X(k)$ satisfies all the constraints in $G$ (since for all $u,v\in U$, $(u,v)\in X(2)$) and further if projected to $B$ it equals the label given to $B$. 

As we did in \Cref{def:gen-2-csp}, to simplify the presentation of the proof, we define a generalized version of the Label Cover problem from \Cref{def:label-cover}. In particular we allow for varying alphabets $\Sigma_L(U) \subseteq \Sigma_L$ to the left-side of vertices $U \in L$. This helps us to restrict the prover to provide a label to $U$ that satisfies additional constraints (which in our case would be that the label in $\Sigma^k$ given to $U$ satisfies all the constraints inside $U$) which makes our soundness analysis cleaner to carry out. 
\begin{definition}\label{def:label-cover-generalized}
An instance $\Phi = (G=(L\cup R,E,w), \Sigma_L, \Sigma_R, \{\Sigma_L(U)\}_{U\in L}, \{\Phi_{e}\}_{e\in E})$ of generalized label cover consists of a weighted bipartite 
graph $G$, alphabets $\Sigma_L, \Sigma_R, \{\Sigma_L(U)\}$ with $\Sigma_L(U)\subseteq \Sigma_L$ for all $U\in L$, and constraints 
$\Phi_e\subseteq \Sigma_L\times \Sigma_R$, one for each edge. Each one of the constraints is a projection constraint, meaning
that for every $e=(U,V)\in E$ there is a map $\phi_e\colon \Sigma_L(U)\to\Sigma_R$ such that
\[
\Phi_e = \{(\sigma,\phi_e(\sigma))~|~\sigma\in \Sigma_L\}.
\]
\end{definition} 
We remark that a hardness result for generalized label cover can be  easily converted to a hardness result for the standard definition. When the alphabet size is super-constant though, one needs to be careful so as to preserve the decision complexity of the constraints while performing this translation. Therefore, since our alphabet size is large, in all the intermediate step in our reductions from now on we use the generalized Label Cover problem. After performing alphabet reduction to get a generalized Label Cover instance with constant-sized alphabet in \Cref{sec:final}, we use this simple translation to go back to the standard label cover problem. 

The main result of this section 
is the following statement:
\begin{lemma}\label{lem:soundness-amp}
For all $\delta>0$ there exists $k\in\mathbb{N}$ such that the following holds. Suppose that $X$ is a complex for which the $(k,\sqrt{k})$-direct product tester has soundness $\delta^2$. Then there is a polynomial time procedure such that given a generalized 2-CSP instance $\Psi$ over the weighted graph $G=(X(1), X(2))$ with alphabets $\Sigma,\{\Sigma(u)\}_{u\in G}$ and decision complexity $D$, produces an instance of generalized Label Cover $\Phi$ over the weighted inclusion graph $(X(\sqrt{k}), X(k))$ over left alphabet $\Sigma^k$ and right alphabet $\Sigma^{\sqrt{k}}$ such that:
\begin{enumerate}
\item The projection map $\phi_{(A,B)}$ associated to the edge $(A,B)$ is defined as the restriction of the assignment to $A$ to the coordinates in $B$. That is, $\forall \sigma\in \Sigma_L(A), \phi_{(A,B)}(\sigma)=\sigma|_B$.
\item For all $A \in X(k)$, the circuit complexity of checking membership in $\Sigma_L(A)$ is $O(k^2D)$.
\item If ${\sf val}(\Phi) = 1$, then ${\sf val}(\Psi) = 1$.
\item If ${\sf val}(\Phi)\leq 1-4\delta$, then ${\sf val}(\Psi)\leq \delta$.
\end{enumerate}    
\end{lemma}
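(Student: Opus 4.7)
I plan to construct $\Phi$, verify items~(1) and~(2) from the construction, prove item~(3) via perfect decoding, and prove item~(4) via its contrapositive using the $(k,\sqrt{k})$-direct product tester's $\delta^2$-soundness.

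Construction. For each $A \in X(k)$, define $\Sigma_L(A) \subseteq \prod_{u \in A} \Sigma(u)$ as the tuples $\tau$ satisfying $(\tau_u, \tau_v) \in \Psi_{(u,v)}$ for every $\Psi$-edge $(u,v) \in X(2)$ with $\{u,v\} \subseteq A$. Take $\Phi$ as the bipartite label cover with left $X(k)$ (alphabet $\Sigma^k$, folded to $\Sigma_L(A)$), right $X(\sqrt{k})$ (alphabet $\Sigma^{\sqrt{k}}$), and edges $(A, B)$ for each inclusion $B \subseteq A$, weighted by $A \sim \mu_k$ and $B \subseteq_{\sqrt{k}} A$ uniformly, with projection $\phi_{A,B}(\tau) = \tau|_B$. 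Item~(1) is immediate. Item~(2) holds since $\Sigma_L(A)$-membership scans the $\binom{k}{2}$ internal constraints, each of decision complexity $\leq D$.

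Item~(3). Suppose $(F, G)$ perfectly satisfies $\Phi$. Every $F[A] \in \Sigma_L(A)$ already satisfies each $\Psi$-edge inside $A$, and whenever $B \subseteq A_1 \cap A_2$ is a $\sqrt{k}$-face, $F[A_1]|_B = G[B] = F[A_2]|_B$. The local spectral expansion of each link $X_u$ implies that the graph on $\{A \in X(k) : A \ni u\}$ (with edges linking pairs sharing a $\sqrt{k}$-face through $u$) is connected, so the value $f(u) := G[B](u)$ for any $B \ni u$ is globally well-defined. Then $f|_A = F[A] \in \Sigma_L(A)$ for every $A$, and since every edge of $X(2)$ sits inside some $k$-face, $f$ satisfies all of $\Psi$ and $V(\Psi) = 1$.

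Item~(4). I argue by contrapositive: assume $V_\Psi(f) > \delta$ for some $f : X(1) \to \Sigma$, and build $(F,G)$ with $V_\Phi(F,G) > 1 - 4\delta$. Set $G[B] := f|_B$, and for each $A$ pick $F[A] \in \Sigma_L(A)$ maximizing $\Pr_{B \subseteq_{\sqrt{k}} A}[F[A]|_B = f|_B]$; a constraint $(A,B)$ is satisfied precisely when $B$ avoids the coordinates on which $F[A]$ disagrees with $f|_A$. The technical heart is bounding this disagreement globally: the naive bound fails because $f|_A \notin \Sigma_L(A)$ for most $A$ and the closest element of $\Sigma_L(A)$ to $f|_A$ can be Hamming-far. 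The plan is to combine the soundness $\delta^2$ of the $(k,\sqrt{k})$-direct product test with the HDX spectral sampling (via \Cref{lem:spectral_gap_of_graphs_from_HDX,lem:sampling}) to argue that the fraction of $(A,B)$ pairs where $F[A]|_B \ne f|_B$ is at most $4\delta$: if more than a $4\delta$ fraction were violated, the induced $F$ would pass the $(k,\sqrt{k})$-DP test below the soundness threshold $\delta^2$, forcing a contradiction with the existence of a global decoding that is forced by $V_\Psi(f) > \delta$. The principal obstacle will be executing this chain cleanly, reconciling the local folding constraint $F[A] \in \Sigma_L(A)$ with the averaged DP-test guarantees so that the constants line up to give exactly $1 - 4\delta$ on the $\Phi$-side.
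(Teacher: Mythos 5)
Your construction and items (1)--(3) are fine (the connectivity argument for item (3) is a reasonable way to well-define $f$ from a perfectly consistent $(F,G)$, and the $O(k^2D)$ bound for item (2) matches the intended one). The problem is item (4): your contrapositive goes in the wrong direction, and the statement you are trying to prove is false. You start from a global assignment $f$ with ${\sf val}_\Psi(f)>\delta$ and try to \emph{encode} it as a table $(F,G)$ of value $>1-4\delta$. This cannot work because of the folding: $F[A]$ must lie in $\Sigma_L(A)$, i.e.\ satisfy \emph{all} constraints inside $A$, whereas an $f$ satisfying barely more than a $\delta$ fraction of the edges of $\Psi$ violates close to a $(1-\delta)$ fraction of the constraints inside a typical $k$-face; any element of $\Sigma_L(A)$ must then differ from $f|_A$ on a large fraction of coordinates (it has to hit a vertex cover of the violated edges), so $F[A]|_B\neq f|_B$ for most $B$, and ${\sf val}(\Phi)$ is small, not $>1-4\delta$. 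You correctly identify this obstacle yourself, but the proposed rescue via the direct-product tester is backwards: the tester's soundness is a \emph{decoding} guarantee (a table passing the test yields a global function), and gives no help in \emph{constructing} a high-value table from a global function.

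The argument has to run from $\Phi$ to $\Psi$: assume ${\sf val}(\Phi)\geq\delta$ with optimal $(F,G)$, apply Cauchy--Schwarz over the choice of $D\sim X(d)$, $B\subseteq D$ to eliminate $G$ and conclude that $F$ passes the $(k,\sqrt{k})$-direct product test with probability $\geq\delta^2$; invoke the tester's soundness to decode a global $f$ with $\Pr_{A}[\Delta(F[A],f|_A)\leq\delta]\geq\poly(\delta)$; use that $F[A]\in\Sigma_L(A)$ satisfies all constraints inside $A$ to deduce that on each such face $f$ violates at most a $2\delta$ fraction of the internal constraints; finally, if the set $\cB$ of constraints violated by $f$ had $\mu(\cB)>4\delta$, the sampling lemma for the bipartite graph $(X(2),X(k))$ (whose second singular value is $O(1/\sqrt{k})$) would force $\Pr_A[\mu(\cB|_A)\leq 2\delta]\lesssim\frac{1}{k\delta}<\poly(\delta)$ for $k$ large, a contradiction. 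Hence $\mu(\cB)\leq 4\delta$ and ${\sf val}(\Psi)\geq 1-4\delta$, which is the contrapositive of the soundness claim actually needed downstream (note the roles of $\Phi$ and $\Psi$ in item (4) as printed appear to be transposed relative to this, which may be what led you astray). Your plan as written cannot be repaired without reversing the direction of the implication.
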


\begin{proof}
Our label cover instance $\Phi$ has vertices $L = X(k)$, $R = X(\sqrt{k})$ and edges between them given by inclusion. Letting $\Sigma$ be the alphabet of $\Psi$, we take $\Sigma_L=\Sigma^k$ to be the alphabet of the left side of $\Phi$ and $\Sigma^{\sqrt{k}}$ to be the alphabet of the right side of $\Phi$. For every vertex $A = (a_1,\ldots,a_k) \in U$ let $\Sigma_L(A)$ be the set of assignments $(\sigma_1,\ldots,\sigma_k) \in \Sigma^k$ where for every $i$, $\sigma_i\in \Sigma(a_i)$ and for every $i \neq j$, $(\sigma_i,\sigma_j)$ satisfies the constraint $\Psi_{(a_i,a_j)}$ on the edge $(a_i,a_j)\in G$. The decision complexity of membership in $\Sigma_L(A)$ is easily seen to be $O(k^2 D)$. The constraints $\Phi_e$ are as defined as in the lemma statement.

The completeness of the reduction is clear, and we move on to the soundness analysis. Suppose 
that $\val(\Phi) \geq \delta$, and 
fix assignments $F:X(k) \rightarrow \Sigma^k$ and $G: X(\sqrt{k}) \rightarrow \Sigma^{\sqrt{k}}$ realizing $\val(\Phi)$, where $F(A) \in \Sigma_L(A)$ for all $A \in X(k)$. Thus,
\[\val(\Phi) = \Pr_{\substack{B \sim X(\sqrt{k})\\A \supset_k B}}[F[A]|_B = G[B]] \geq \delta.\]
Using Cauchy-Schwarz we conclude that
\begin{align*}
\Pr_{\substack{D \sim X(d) \\ B \sim X(\sqrt{k})\\ B \subset A,A' \subset D}}\left[F[A]|_B = F[A']|_B\right] &\geq \E_{\substack{D \sim X(d) \\ B \sim X(\sqrt{k})}}\left[\E_{B \subset A,A' \subset D}[\Ind(F[A]|_B=G[B])\Ind(F[A']|_B=G[B])]\right] \\
&=\E_{\substack{D \sim X(d) \\ B \sim X(\sqrt{k})}}\left[\E_{B \subset A \subset D}[\Ind(F[A]|_B=G[B])]^2\right] \\
&\geq \left(\E_{\substack{D \sim X(d) \\ B \sim X(\sqrt{k})}}\left[\E_{B \subset A \subset D}[\Ind(F[A]|_B=G[B])]\right]\right)^2 \\
&= 
\left(\E_{\substack{D \sim X(d) \\ A \subset_k D \\ B \subset_{\sqrt{k}} A}}[\Ind(F[A]|_B=G[B])]]\right)^2 \\
&\geq \delta^2.
\end{align*} 
This implies that $F$ passes the direct product test and therefore using the soundness of the test we get a function $f:X(1) \rightarrow \Sigma$ such that,
\[\Pr_{A\sim X(k)}\left[\Delta(F[A],f|_{A})\leq \delta \right]\geq \poly(\delta). 
\]
Let $\cB \subseteq X(2)$ be the set of constraints that $f$ violates. By construction $F[A]$ satisfies all the constraints inside $A$, therefore wherever it holds that $\Delta(F[A],f|_{A})\leq \delta k$ we get that $f$ satisfies at least $(1-\delta)^2 \geq 1-2\delta$ fraction of the constraints inside $A$.
In particular, we conclude that
\begin{equation}\label{eq:viol-f}
\Pr_{A\sim X(k)}\left[\mu(\cB|_A) \leq 2\delta\right]\geq \poly(\delta).    
\end{equation}
Suppose for the sake of contradiction that $\mu(\cB)>4\delta$. Applying \Cref{lem:sampling} we get,
\[\Pr_{A \sim X(k)}\left[\mu(\cB|_K) \leq 2\delta\right] \lll \frac{1}{k\delta},\]
since the bipartite graph $(X(2), X(k))$ has second largest eigenvalue at most $O(1/\sqrt{k})$ by \Cref{lem:spectral_gap_of_graphs_from_HDX}. Since $k$ is chosen to be large enough as a function of $\delta$, in particular at least $\frac{1}{\poly(\delta)}$, this is a contradiction to \eqref{eq:viol-f}. Thus we get that $\mu(\cB) \leq 4\delta$, which in turn means that $\val(\Psi) \geq 1-4\delta$.
\end{proof}

\section{Alphabet Reduction via Decodable PCPs}
In this section we discuss the construction of PCPs for Circuit-SAT with small alphabet but large size (polynomial, or even exponential). 
The tools presented in the paper so far 
lead to size efficient PCPs with large alphabets, and our 
goal here is to facilitate the use the efficient composition
theorems of~\cite{MoshkovitzR08,DinurH13} to reduce the alphabet size.

To apply the abstract composition theorem of~\cite{DinurH13} we require 
PCP constructions in which one has a ``decodable verifier''. By that, we mean that the PCP verifier not only probabilistically checks whether a proof of satisfiability is correct or not, but it is also able to decode a symbol of the satisfying assignment with high probability. We present the formal definition in \Cref{sec:prelim-dpcp-defn}. These constructions will be used as inner PCPs 
in our composition.

We remark that so far in the paper we discussed 
$2$-query PCPs using the framekwork as label 
cover, and in~\cite{DinurH13} the proof composition is presented in the language of ``robust PCPs''. The language of robust PCPs can be seen to be an equivalent formulation of label cover, but it is easier to
use in the context of composition. Thus, 
for convenience we carry out most of the argument in the language of robust PCPs, formally defined in Section~\ref{sec:prelim-robust-pcp}. The material presented in Sections~\ref{sec:prelim-robust-pcp} and~\ref{sec:prelim-dpcp-defn} is almost verbatim repeat of~\cite{DinurH13}, but we give it here for the sake of completeness.

\subsection{Robust PCPs}\label{sec:prelim-robust-pcp}
We now discuss the notion of robust PCPs, 
which will be the outer PCPs in our composition.
First defined in~\cite{Ben-SassonGHSV06, DinurReingold}, robust PCPs have been implicit in all PCP constructions. The only difference between robust PCPs and standard PCPs is in the soundness condition: while the standard soundness condition measures how often the PCP verifier accepts a false proof, the robust soundness condition measures the average distance between the local view of the verifier and an accepting local view. The definition given below is from~\cite{DinurH13}:
\begin{definition}[Robust PCPs]
For functions $r, q, m, a, s : \mathbb{Z}^+ \rightarrow \mathbb{Z}^+$ and $\delta : \mathbb{Z}^+ \rightarrow [0,1]$, a verifier $V$ is a robust probabilistically checkable proof (robust PCP) system for a language $L$ with randomness complexity $r$, query complexity $q$, proof length $m$, alphabet size $a$, decision complexity $s$ and robust soundness error $\delta$ if $V$ is a probabilistic polynomial-time algorithm that behaves as follows: On input $x$ of length $n$ and oracle access to a proof string $\pi\in \Sigma^{m(n)}$ over the (proof) alphabet $\Sigma$ where $|\Sigma| = a(n)$, $V$ reads the input $x$, tosses at most $r(n)$ random coins, and generates a sequence of locations $I = (i_1, \dots, i_q) \in [m]^{q(n)}$ and a predicate $f : \Sigma^q \rightarrow \{0,1\}$ of decision complexity $s(n)$, which satisfies the following properties:
\textbf{Completeness:} If $x \in L$ then there exists $\pi$ such that
\[
\Pr_{(I,f)}[f(\pi_I) = 1] = 1.
\]
\textbf{Robust Soundness:} If $x \notin L$ then for every $\pi$
\[
\E_{(I,f)}[\agr(\pi_I, f^{-1}(1))] \leq \delta,
\]
where the distribution over $(I,f)$ is determined by $x$ and the random coins of $V$. 
\end{definition}
Next we define the notion of proof degree and regularity for a robust PCP. 
\begin{definition}
Given a robust PCP system, we will refer to the maximum number of local windows any index in the proof participates in, as the \emph{proof degree}, denoted by \(d(n)\). More precisely, for each \(i \in [m(n)]\), if we let
\[
R_i = \{ r \in \{0, 1\}^{r(n)} \mid i \in I(r) \},
\]
then \(d(n) = \max_i |R_i|\). Furthermore, if \(|R_i| = d(n)\) for all \(i\), we will say the PCP system is regular.    
\end{definition}

\paragraph{Equivalence of Label Cover and Robust PCPs:} the notion of robust PCP is in fact equivalent to generalized label cover (\Cref{def:label-cover-generalized}) as shown in~\cite[Lemma 2.5]{DinurH13}, and we now give some intuition for this equivalence. If a language $L$ has a robust PCP, then here is a reduction from $L$ to generalized Label Cover: the set of left vertices is the set of random strings of the robust PCP, the set of right vertices is the set of the proof locations. An edge $(r,i)$ exists if the proof location $i$ is probed on random string $r$. The label to a left vertex $r$ is an accepting local view of the verifier on random string $r$ while a label to the right vertex $i$ is the proof symbol in the corresponding proof location $i$. An edge $(r,i)$ is consistent if the local view is consistent with the proof symbol. Conversely, given a reduction from $L$ to generalized label cover, we can get a robust PCP verifier for $L$ as follows: the verifier expects as proof a labeling of the set of right vertices, the verifier chooses a random left vertex, queries all its neighbors and accepts if there exists a label to the left vertex that satisfies all the corresponding edges.
We summarize this discussion with the following lemma (see~\cite{DinurH13} for a formal proof): 
\begin{lemma}\label{lem:lc-robust-equivalence}
For every $\delta : \mathbb{Z}^+ \rightarrow \mathbb{R}^+$, and $r, q, m, a : \mathbb{Z}^+ \rightarrow \mathbb{Z}^+$, the following two statements are equivalent:
\begin{enumerate}
    \item Gap-Generalized-Label-Cover$[1,\delta]$ is $\mathbf{NP}$-hard for instances with the following parameters:
    \begin{itemize}
        \item left degree at most $q(n)$,
        \item right degree at most $d(n)$
        \item right alphabet $\Sigma(n)$ with $|\Sigma| = a(n)$,
        \item left alphabet $\{\Sigma_L(U)\}_{U\in L}$,
        \item size of right vertex set at most $m(n)$, and
        \item size of left vertex set at most $2^{r(n)}$.
    \end{itemize}
    \item Every $L \in \mathbf{NP}$ has a robust PCP with completeness $1$, robust soundness error $\delta$ and the following parameters:
    \begin{itemize}
        \item query complexity $q(n)$,
        \item proof degree at most $d(n)$
        \item proof alphabet $\Sigma(n)$ with $|\Sigma| = a(n)$,
        \item maximum number of accepting local views $\max_{U\in L}(|\Sigma_L(U)|)$,
        \item proof length $m(n)$, and
        \item randomness complexity $r(n)$
    \end{itemize}
\end{enumerate}
Furthermore, suppose that $\Sigma_L=\Sigma^k$ and $\Sigma_R=\Sigma^t$ for some alphabet $\Sigma$ and $k,t\in \N$, all the constraints $\phi_{(u,v)}$ of the Label Cover instance check if the label of $u$ restricted to $v$ is equal to the label of $v$, and the circuit complexity of checking membership in the language $\Sigma_L(U)$ is at most $D$, then the decision complexity of the robust PCP is $O(D+q(n))$. 
\end{lemma}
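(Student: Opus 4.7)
The plan is to exhibit a tight, parameter-preserving correspondence in each direction, after which the furthermore clause is a short circuit-complexity calculation. The underlying idea is that the ``random string, local view'' pair of a robust PCP verifier plays exactly the role of ``left vertex, left label'' of a generalized label cover, while a proof location plays the role of a right vertex.

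For the forward direction (Label Cover $\Rightarrow$ robust PCP), given a polynomial-time reduction $x \mapsto \Psi = (G = (L \cup R, E), \Sigma_L, \Sigma_R, \{\Sigma_L(U)\}, \{\phi_e\})$, I would define the verifier to interpret a proof $\pi$ as a labeling of $R$ by $\Sigma_R$: it uses $r(n) = \log|L|$ random bits to sample $U \in L$ uniformly, queries $\pi$ at the right-neighbours $v_1,\ldots,v_{\deg(U)}$ of $U$, and evaluates the predicate $f_U(\tau_1,\ldots,\tau_{\deg(U)}) := \mathbb{1}[\exists\,\sigma \in \Sigma_L(U)\text{ with }\phi_{(U,v_i)}(\sigma) = \tau_i\ \forall i]$. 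All listed parameters (proof length, alphabet, query complexity, proof degree) are immediate. Completeness is clear from a satisfying $(A_L^*, A_R^*)$. For robust soundness, the identity $f_U^{-1}(1) = \{(\phi_{(U,v_i)}(\sigma))_i : \sigma \in \Sigma_L(U)\}$ gives $\agr(\pi_I, f_U^{-1}(1)) = \max_{\sigma \in \Sigma_L(U)} \Pr_{v \in N(U)}[\pi(v) = \phi_{(U,v)}(\sigma)]$; setting $A_L(U)$ to be the maximizer, the expectation $\E_U[\agr(\pi_I, f_U^{-1}(1))]$ equals $\val_\Psi(A_L, \pi) \leq \delta$.

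For the reverse direction I would define the label cover with left set $\{0,1\}^{r(n)}$, right set $[m(n)]$, right alphabet $\Sigma$, left restriction $\Sigma_L(r) := f_r^{-1}(1) \subseteq \Sigma^{q(n)}$, and edges $(r,i)$ for each $i \in I(r)$, with $\phi_{(r,i)}$ reading off the coordinate of $i$ in the local view. Completeness is immediate from an accepting proof. For soundness, any assignment $(A_L, A_R)$ with value $>\delta$ yields a proof $\pi := A_R$ satisfying $\E_r[\agr(\pi_{I(r)}, f_r^{-1}(1))] \geq \val(A_L, A_R) > \delta$, contradicting robust soundness. The furthermore clause then follows by observing that when $\Sigma_L = \Sigma^k$ and each $\phi_{(U,v)}$ is a coordinate-restriction, the predicate $f_U$ decomposes as a pairwise consistency check on overlapping coordinates of the $\tau_i$'s (of circuit size $O(q(n))$) followed by a membership test for the reconstructed candidate label in $\Sigma_L(U)$ (of circuit size $D$ by hypothesis), giving the claimed $O(D + q(n))$ bound.

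The main obstacle is notational rather than technical: one must carefully check that the four-way dictionary --- ``left degree $\leftrightarrow$ query complexity'', ``right degree $\leftrightarrow$ proof degree'', ``folded restrictions $\Sigma_L(U) \leftrightarrow$ accepting views $f^{-1}(1)$'', and ``$\log|L| \leftrightarrow r(n)$'' --- preserves all listed parameters simultaneously, and that the projection maps on the label-cover side line up with coordinate extraction on the PCP side. Once this bookkeeping is in place, both directions are direct unpackings of the respective definitions.
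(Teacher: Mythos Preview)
Your proposal is correct and follows exactly the approach the paper sketches. The paper does not give a formal proof of this lemma at all: it presents the informal dictionary (left vertices $\leftrightarrow$ random strings, right vertices $\leftrightarrow$ proof locations, left labels $\leftrightarrow$ accepting local views) in the paragraph preceding the statement and then cites \cite{DinurH13} for the details, so your write-up is a faithful fleshing-out of precisely that correspondence.
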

It is important to note that this is a syntactic correspondence between the notions of generalized Label-Cover and robust PCPs and there is no loss of parameters in going from one framework to another. In particular, going from label cover to a robust PCP and back, one gets back the original label cover instance. Even though these two notions are syntactically equivalent, some results are easier to state/prove in one framework than the other. 
In Section~\ref{sec:amplification} 
we proved a hardness of generalized label cover with large alphabet, but applying alphabet reduction will be easier to carry out in the robust PCP framework. 

\subsection{Decodable PCP}\label{sec:prelim-dpcp-defn}
We  now describe the notion of a decodable PCP (dPCP) from~\cite{DinurH13}, which will serve as our inner PCP in the composition. It is sufficient to define dPCPs for the
problem $\csat_{\Sigma}$ for our purposes, and as such we focus
the discussion on it. The problem $\csat_{\Sigma}$ is concerned with circuits $C$ whose input is
a string from $\Sigma^n$. It will often be
more convenient for us to think of circuits
over large alphabet as the equivalent Boolean circuit $\tilde{C}\colon \{0,1\}^{n\log |\Sigma|}\to\{0,1\}$ in which each input wire of $C$ is split into $\log |\Sigma|$ wires in $\tilde{C}$ in the obvious way. With this in mind, we define
the circuit size of $C$ to be the size of 
$\tilde{C}$, and define the $\csat_{\Sigma}(N,S)$ problem in the following way:
\begin{definition}[Circuit-SAT]
An instance of $\csat_{\Sigma}(N,S)$ is a circuit $C:\Sigma^{N}\rightarrow \{0,1\}$ of size at most $S$. The goal is to decide whether there exists an
input $x\in\Sigma^N$ such that $C(x) = 1$.
\end{definition}

Given an instance $C$ of $\csat$, a probabilistically checkable proof for $C\in \csat$ often takes a string $y$ such that $C(y) = 1$ and encodes it using a probabilistically checkable proof. We refer to such a $y$ as an NP-witness of the fact that $C \in \csat$. 

A standard PCP verifier for the language $\csat$ would verify that the input circuit is satisfiable, with the help of a PCP, which is typically (but not necessarily) an encoding of the NP-witness $y$. 
A PCP decoder for CircuitSAT is a stronger notion. Just like a PCP verifier, it expects the PCP to be an encoding of the NP witness.
However, in addition to that, after performing its local check, a PCP decoder is expected to decode back a location in the NP witness. 

\begin{definition}[PCP Decoders]\label{def:pcp-decoders}
A PCP decoder for $\csat_{\Sigma}$ over a proof alphabet $\sigma$ is a probabilistic polynomial-time algorithm $D$ that on input a circuit $C: \Sigma^k \rightarrow \{0,1\}$ of size $n$ and an index $j \in [k]$, tosses $r = r(n)$ random coins and generates (1) a sequence of $q = q(n)$ locations $I = (i_1, \dots, i_q)$ in a proof of length $m(n)$ over the alphabet $\sigma$ and (2) a (local decoding) function $f : \sigma^q \rightarrow \Sigma \cup \{\bot\}$ whose corresponding circuit has size at most $s(n)$, referred to henceforth as the decision complexity of the decoder.
\end{definition}

With this in mind we can now define decodable PCPs, where a verifier either rejects a proof, or decodes a symbol that belongs to a small list of satisfying assignments for the CircuitSAT instance.
\begin{definition}\label{dPCPs}[Decodable PCPs] 
For functions $\delta : \mathbb{Z}^+ \rightarrow [0,1]$ and $L: \mathbb{Z}^+ \rightarrow \mathbb{Z}^+$, we say that a PCP decoder $D$ is a decodable probabilistically checkable proof (dPCP) system for $\text{CircuitSAT}_\Sigma$ with soundness error $\delta$ and list size $L$ if the following completeness and soundness properties hold for every circuit $C : \Sigma^k \rightarrow \{0,1\}$:
\begin{itemize}
    \item \textbf{Completeness:} For any $y \in \Sigma^k$ such that $C(y) = 1$ there exists a proof $\pi \in \sigma^m$, also called a decodable PCP, such that
    \[
    \Pr_{j,I,f} [f(\pi_I) = y_j] = 1,
    \]
    where $j \in [k]$ is chosen uniformly at random and $I, f$ are distributed according to $C_j$ and the verifier’s random coins.

    \item \textbf{Soundness:} For any $\pi \in \sigma^m$, there is a list of $0 \leq \ell \leq L$ strings $y^1, \dots, y^\ell$ satisfying $C(y^i) = 1$ for all $i$, and furthermore that
    \[
    \Pr_{j,I,f} [f(\pi_I) \notin \{\bot, y_j^1, \dots, y_j^\ell\}] \leq \delta.
    \]

    \item \textbf{Robust Soundness:} We say that $D$ is a robust dPCP system for $\text{CircuitSAT}_\Sigma$ with robust soundness error $\delta$, if the soundness criterion in can be strengthened to the following robust soundness criterion,
    \[
    \mathbb{E}_{j,I,f} [\operatorname{agr}(\pi_I, \operatorname{BAD}(f))] \leq \delta,
    \]
    where $\operatorname{BAD}(f) := \{ w \in \sigma^q | f(w) \notin \{\bot, y_j^1, \dots, y_j^\ell\} \}$.
\end{itemize}
\end{definition}

\subsection{Constructions of Decodable PCPs from Reed-Muller and Hadamard Codes}\label{sec:dpcp-statement}
In this section we discuss two well-known constructions of decodable PCPs. These 
constructions are based
on classical primitives in PCP literature, 
and we include them in full details for 
the sake of completeness.

First, we have the following construction 
of dPCPs based on Hadamard codes.
\begin{lemma}\label{lem:had-dpcp}
For all $\delta >0$, for $q = 1/\delta^{O(1)}$ and for all alphabets $\Sigma$, the language $\csat_{\Sigma}(N,S)$ has a regular decodable PCP with the following parameters:
\begin{enumerate}
    \item Robust soundness error $\delta$.
    \item Proof alphabet size $q$. %
    \item Proof length $q^{O(S^2)}$.
    \item Randomness complexity $O(S^2\log(q))$.
    \item Query complexity and decision complexity $q^{O(\log|\Sigma|)}$.
    \item List size $1/\delta^{O(1)}$.
\end{enumerate}
\end{lemma}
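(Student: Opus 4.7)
The plan is to adapt the classical Hadamard-based PCP of~\cite{ALMSS} into a robust decodable form. As a first step I will arithmetize the input circuit $C\colon\Sigma^N\to\{0,1\}$ over a field $\F_q$ with $q=1/\delta^{O(1)}$, identifying each symbol of $\Sigma$ with a block of $t=\lceil\log|\Sigma|/\log q\rceil$ field elements. Encoding each gate of $C$ by $O(1)$ quadratic polynomial constraints and introducing auxiliary variables for intermediate wires, satisfiability of $C$ reduces to finding an assignment $y\in\F_q^{n}$ with $n=O(S)$ that satisfies a system $\{P_i(y)=0\}_{i=1}^{m}$ of $m=O(S)$ quadratic equations. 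The intended proof is the Hadamard encoding of such a $y$: two linear functions $A\colon\F_q^n\to\F_q$ and $B\colon\F_q^{n^2}\to\F_q$ corresponding to $y$ and $y\otimes y$, giving proof length $q^{O(S^2)}$ over the alphabet $\F_q$.

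Second, I will equip the verifier with the four standard tests, each of constant query complexity: (i)~linearity of $A$, (ii)~linearity of $B$, (iii)~consistency $A(x_1)A(x_2)=B(x_1\otimes x_2)$ for random $x_1,x_2\in\F_q^n$, and (iv)~satisfaction, in which one picks a random linear combination $P=\sum_i c_i P_i$, writes $P(y)=\langle a,y\rangle+\langle M,y\otimes y\rangle$, and checks $A(a)+B(\mathrm{vec}(M))=0$. The decoder, on input $j\in[N]$, will translate $j$ to its $t$ coordinate positions in $\F_q^n$ and self-correct $A$ at each (by querying $A(r)$ and $A(r+e_i)$ for random $r$), then assemble the $t$ returned symbols into $y_j\in\Sigma$ or output $\perp$ if any self-correction is inconsistent. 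Each such test consumes $O(S^2\log q)$ random bits, matching the claimed randomness complexity.

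To promote this into a robust dPCP with soundness error $\delta$, I will apply the bundling construction of~\cite{DinurReingold,DinurH13}: run $k=1/\delta^{O(1)}$ independent parallel copies of each of the four tests together with the $t$ decoding self-corrections, and redefine the verifier's local window on each random string to consist of all proof symbols it queries across the bundle. Because each atomic sub-test rejects any inconsistent encoding with constant probability, a local view that is $\eta$-far from every accepting view has an $\Omega(\eta)$-fraction of failing sub-tests, yielding the robust soundness bound; regularity will follow from the symmetry of the random choices and the uniform usage of proof indices across parallel copies. For the list-decoding guarantee I will invoke the Goldreich--Levin list-decoder of the Hadamard code: if $A$ passes linearity with probability $\geq\delta$, the set of linear functions within $(1-\delta/2)$-agreement of $A$ has at most $\mathrm{poly}(1/\delta)$ elements; combining this with pruning by the consistency and satisfaction tests gives a list of at most $1/\delta^{O(1)}$ candidate satisfying assignments. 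The main obstacle I anticipate is the bundling step: the four basic tests have straightforward soundness but not immediate robustness, so care will be needed to verify that the bundled test genuinely has robust soundness error $\delta$ while keeping the alphabet at $q$ and the decision complexity at $q^{O(\log|\Sigma|)}$, following the framework of~\cite{BensasonSudan,DinurReingold,DinurH13}.
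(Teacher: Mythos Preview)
Your bundling argument for robust soundness has a genuine gap: running $k=1/\delta^{O(1)}$ independent copies of the four constant-query tests yields at best a \emph{constant} robust soundness error, not $\delta$. Here is the obstruction. Take the proof to be the honest Hadamard encoding of some $y$ that fails the quadratic system, i.e.\ $A$ is exactly linear and $B$ is exactly $y\otimes y$. Then every linearity, consistency, and self-correction sub-test passes; only the satisfaction sub-tests fail. Those are a fixed constant fraction (say $1/5$) of the bundle, and each can be ``repaired'' by flipping a single one of its $O(1)$ queried symbols. The resulting local view now passes all tests and decodes to $y$, which is not a satisfying assignment and hence lies outside any list. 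Thus the expected agreement with a BAD view is $1-\Theta(1)$, independent of $k$. Increasing $k$ does not help because the fraction of repair-needed symbols stays constant.

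The paper obtains robust soundness $\delta$ by a structurally different test: a subspace-versus-point linearity test, which is a Label Cover instance (hence robust soundness equals the game value). The assignment $(X,X\otimes X)$ is encoded as a single linear function on $\F_q^{n+n^2}$; a left vertex is an $O(\log|\Sigma|)$-dimensional subspace $\Omega$ that simultaneously contains the standard basis vectors for the block to be decoded, a random equation vector $E_j$, a tensor-consistency pair $((y,0),(0,y\otimes y))$, and two fresh random vectors; the left label is a linear function on $\Omega$ forced to satisfy the equation and the tensor identity; the right vertices are the $q^{O(\log|\Sigma|)}$ points of $\Omega$. The list-decoding linearity theorem of~\cite{HastadW01,MoshkovitzR08} (the paper's Lemma~\ref{lem:linearity-testing}) directly gives a short list $L_1,\dots,L_t$ with $\Pr_{\Omega,x}[\cA(\Omega)(x)=f(x)\wedge \cA(\Omega)\not\equiv L_j|_\Omega\text{ for all }j]\le\delta/4$; a Schwartz--Zippel argument using the built-in equation and tensor checks prunes the list to genuine satisfying assignments at cost $O(t/q)$. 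This is where the $q^{O(\log|\Sigma|)}$ query complexity comes from (the number of points in $\Omega$), and it is precisely the many-points-per-constraint structure that makes a single symbol change unable to flip a BAD view to a good one.
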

\begin{proof}
Deferred to~\Cref{sec:had-dpcp}.
\end{proof}

Second, we have the following construction 
of dPCPs based on Reed-Muller codes.
\begin{lemma}\label{lem:rm-dpcp}
For all $\delta >0$ and all alphabets $\Sigma$, $\csat_{\Sigma}(N,S)$ has a regular decodable PCP with the following parameters: 
\begin{enumerate}
    \item Robust soundness error $\delta$.
    \item Proof alphabet size and proof length 
    at most $S^{O(1)}$.
    \item Randomness complexity at most $O(\log S)$.
    \item Query and decision complexity at most
    $(\log(S))^{O(\log|\Sigma|)}$.
    \item List size at most $1/\delta^{O(1)}$.
\end{enumerate} 
\end{lemma}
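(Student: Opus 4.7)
The plan is to instantiate the standard low-degree / Reed--Muller based PCP construction and augment it with a local decoder, mirroring the structure of the Hadamard-based dPCP in Lemma~\ref{lem:had-dpcp} but using Reed--Muller codes over a poly-logarithmic field. Given $C\colon \Sigma^N\to\{0,1\}$ of size $S$, I would first expand to a Boolean circuit $C'\colon \{0,1\}^{N\log|\Sigma|}\to\{0,1\}$ of size $O(S\log|\Sigma|)$, then arithmetize: fix a field $\mathbb{F}$ of size $q=\polylog(S)$, a subset $H\subseteq\mathbb{F}$ with $|H|=\Theta(\log S/\log\log S)$, and dimension $m=\Theta(\log S/\log\log S)$, so that $|H|^m=\poly(S)$ and every wire of $C'$ corresponds to a point of $H^m$. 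The NP-witness becomes a map $A\colon H^m\to\mathbb{F}$, and the wire/gate constraints become a constant number of polynomial identities of total degree $d=m|H|=\polylog(S)$ that must vanish on $H^m$.

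The proof $\pi$ would be the low-degree extension $\hat A\colon \mathbb{F}^m\to\mathbb{F}$ of $A$, tabulated along with the auxiliary sum-check polynomials certifying that the arithmetized constraint polynomial vanishes on $H^m$. Proof length is $|\mathbb{F}|^m=S^{O(1)}$ and proof alphabet is $\mathbb{F}$ of size $\polylog(S)\leq S^{O(1)}$. To decode an index $j\in[N]$, the verifier locates the $\log|\Sigma|$ points $v_1,\dots,v_{\log|\Sigma|}\in H^m$ carrying the bits of $y_j$, samples a random affine subspace $U\subseteq\mathbb{F}^m$ of dimension $t=O(\log|\Sigma|)$ passing through all the $v_i$, reads the proof restricted to $U$ (a total of $|\mathbb{F}|^{t}=(\log S)^{O(\log|\Sigma|)}$ symbols), checks that each restriction is a polynomial of the expected low degree, checks the arithmetized constraints and sum-check identities inside $U$, and if all checks pass interpolates to recover the $\hat A(v_i)$ and outputs the resulting $\Sigma$-symbol (outputting $\bot$ otherwise). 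The randomness is $O(m\log q)=O(\log S)$, and the decision complexity is dominated by the interpolation and the constraint evaluation on $U$, both $(\log S)^{O(\log|\Sigma|)}$.

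For robust soundness, I would invoke the robust form of the subspace-vs-point / plane-vs-point test, as used in Moshkovitz--Raz and Dinur--Harsha. Any proof that is accepted with probability $\geq\delta$ must, after discarding $O(\delta)$ mass, be close to a global degree-$d$ polynomial $\hat A$, and the auxiliary polynomials force $\hat A$ to encode a consistent assignment; the usual list-decoding of Reed--Muller combined with the arithmetized constraints then produces a list of at most $1/\delta^{O(1)}$ candidate witnesses $y^1,\dots,y^\ell$, and with probability $\geq 1-\delta$ the decoded value at $j$ is either $\bot$ or equal to $y^i_j$ for some $i$. Regularity of the verifier is obtained in the standard way by symmetrizing over the choice of $U$ using the natural affine symmetries of $\mathbb{F}^m$.

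The main obstacle will be aligning the local decoder with the robust soundness analysis: the subspace $U$ must be ``random enough'' for the low-degree test to certify closeness to a global $\hat A$, but it is conditioned on containing the prescribed points $v_1,\dots,v_{\log|\Sigma|}$. Choosing $\dim U = \Theta(\log|\Sigma|)$ (rather than exactly $\log|\Sigma|$) gives sufficient leftover entropy in the conditional distribution of $U$ to invoke the standard subspace-vs-point test with only a constant-factor loss in soundness, which is precisely what dictates the $(\log S)^{O(\log|\Sigma|)}$ query bound in the statement. The remaining details (computing the arithmetization, setting up the sum-check layer, and translating between list-decoding Reed--Muller and the dPCP list size bound) are by now standard and can be imported directly from the Moshkovitz--Raz / Dinur--Harsha framework.
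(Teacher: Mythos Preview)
Your proposal is correct and follows essentially the same approach as the paper: arithmetize over a field of size $\polylog(S)$, encode the witness via its low-degree extension together with auxiliary polynomials certifying the constraints, and decode by querying the proof on an $O(\log|\Sigma|)$-dimensional subspace containing the $\log|\Sigma|$ points for the target coordinate, with soundness coming from the list-decoding guarantee of the subspace-vs-point test after showing the conditional subspace distribution is close to uniform. The only cosmetic differences are that the paper first reduces to $3$-SAT and then uses the zero-on-subcube test (via the divisibility witnesses $p_0=\sum_j g_H(x_j)p_j$) rather than a sum-check layer, and its subspace is explicitly spanned by the decoding points together with carefully chosen auxiliary points $z,z',\rho(z),\rho^2(z)$ so that the clause-polynomial identity $p_0(z)=\tilde\varphi(z)(p_{-1}(z)-z_{3m+1})(p_{-1}(\rho(z))-z_{3m+2})(p_{-1}(\rho^2(z))-z_{3m+3})$ can be checked locally; these are interchangeable standard choices and do not affect the parameter analysis.
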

\begin{proof}
Deferred to~\Cref{sec:rm-dpcp}.
\end{proof}

\section{The Final PCP: Putting it All Together}\label{sec:final}
In this section we combine all the components from the previous sections to get a 2-query PCP of quasi-linear size, constant alphabet and small soundness, thereby proving Theorem~\ref{thm:main}. We begin by presenting a few tools from~\cite{Dinur07,DinurH13} that are necessary for us, namely their regularization and alphabet reduction lemmas and their composition theorem.

\subsection{Regularization Procedures for PCPs}

First we state the following lemma~\cite[Lemma 4.2]{Dinur07} to convert an arbitrary constraint graph to a 2-CSP on a regular graph with constant degree.
\begin{lemma}\label{lem:regular-dinur}
There exist constants $c,k \in \N$ and a polynomial time procedure that when given as input a 2-CSP instance $\Psi$ over a constraint graph $G'$ with $|V(G')|+|E(G')|=n$ over alphabet $\Sigma$, outputs a 2-CSP $\Psi'$ over a constraint graph $G'$ with $|V(G')|\leq 2|E(G)|$ and $|E(G')|=\Theta(kn)$ over alphabet $\Sigma$ such that,
\begin{itemize}
\item $G$ is $k$-regular.
\item If $\val(\Psi)=1$ then $\val(\Psi')=1$.
\item If $\val(\Psi)=1-\rho$ then $\val(\Psi')\leq 1- \rho/c$. 
\end{itemize}
\end{lemma}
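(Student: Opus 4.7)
The plan is to run Dinur's standard \emph{equalization plus expanderization} recipe. Write $G=(V,E)$ for the constraint graph of $\Psi$, and let $d_v$ denote the degree of $v\in V$.

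First, I replace each vertex $v\in V$ by a ``cloud'' $C_v=\{v_e : e\ni v\}$ of $d_v$ fresh copies. Fix once and for all a $d_0$-regular family of spectral expanders $\{H_m\}_m$ with second-largest normalized eigenvalue $\lambda\le 1/2$ (cf.~\Cref{lem:exp-construction}), and on each cloud $C_v$ place a copy of $H_{d_v}$ (padding is used for small clouds). The new graph $G'$ contains (a) for every original edge $e=(u,v)\in E$, an inter-cloud edge $(u_e,v_e)$ carrying the original constraint $\Psi_e$, and (b) for every edge of each $H_v$, an intra-cloud edge carrying the equality constraint $\{(\sigma,\sigma):\sigma\in\Sigma\}$. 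Adding trivially-satisfied self-loops where needed makes $G'$ exactly $k$-regular, for $k:=d_0+1$. The resulting graph satisfies $|V(G')|=\sum_v d_v=2|E(G)|$ and $|E(G')|=\Theta(k\cdot n)$, and the whole construction runs in polynomial time.

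Completeness is immediate: given a satisfying assignment $A\colon V\to \Sigma$ for $\Psi$, the lift $A'(v_e):=A(v)$ satisfies both the equality and inter-cloud constraints of $\Psi'$. For soundness, given an assignment $A'$ with $\val_{\Psi'}(A')\ge 1-\delta$, I define $A(v)$ to be the \emph{plurality} label of $A'$ on $C_v$, and set $B_v:=\{u\in C_v: A'(u)\ne A(v)\}$. Every original edge $(u,v)$ violated by $A$ in $\Psi$ must witness at least one of the following in $\Psi'$: a violated inter-cloud edge $(u_e,v_e)$, an inconsistent vertex $u_e\in B_u$, or an inconsistent vertex $v_e\in B_v$. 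Hence
\[
\viol_A(\Psi)\cdot |E(G)| \;\le\; \delta\,|E(G')|+\sum_{v\in V}|B_v|.
\]
To bound $\sum_v|B_v|$, I apply the expander mixing lemma inside each $H_v$ to the partition of $C_v$ into level sets of $A'$. Writing $\mu_{v,\sigma}$ for the fraction of $C_v$ labeled $\sigma$ and combining the EML bounds $|E(V_{v,\sigma},V_{v,\sigma})|/|E(H_v)|\le \mu_{v,\sigma}^2+\lambda\mu_{v,\sigma}(1-\mu_{v,\sigma})$ shows that the fraction of violated equality edges in $C_v$ is at least $(1-\lambda)(1-\sum_\sigma \mu_{v,\sigma}^2)$. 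Since $\mu_{v,\sigma}\le \mu_{v,A(v)}$ for every $\sigma$, we have $1-\sum_\sigma\mu_{v,\sigma}^2\ge 1-\mu_{v,A(v)}=|B_v|/d_v$. Thus the number of violated intra-cloud edges in $C_v$ is $\Omega(|B_v|)$, and summing gives $\sum_v|B_v|\le O(\delta|E(G')|)$. Combining, $\viol_A(\Psi)\le O(\delta)\cdot |E(G')|/|E(G)|=O(k\delta)$, and taking $c=\Theta(k)$ finishes the soundness analysis.

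The only mildly delicate step is the EML/plurality calculation inside the clouds; everything else is routine bookkeeping, and both $k$ and $c$ come out as absolute constants determined by the expander family.
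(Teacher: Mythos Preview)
The paper does not prove this lemma; it simply cites it as \cite[Lemma 4.2]{Dinur07}. Your proposal is exactly Dinur's original cloud-and-expander regularization argument, and the details (the EML/plurality calculation inside each cloud, the bookkeeping of inter- versus intra-cloud violations) are carried out correctly. So your proof is correct and matches the source the paper defers to.
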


Next we state a similar procedure that converts a robust PCP into a robust PCP that is also regular. Additionally it also reduces the alphabet of a robust PCP.
\begin{lemma}\label{lem:regular-pcp}
There exists a constant 
$C > 0$ such that for all 
$\eps: \mathbb{Z}^+ \rightarrow [0,1]$, the following holds. Suppose 
$L$ has a robust PCP verifier $V$ with randomness complexity $r$, query complexity $q$, proof length $m$, average proof degree $d$, robust soundness error $\delta$ over a proof alphabet $\Sigma$. Then $L$ has a regular reduced robust PCP verifier, which we shall denote by 
$\text{regular}_{\eps}(V)$ with:
\begin{itemize}
    \item randomness complexity 
    $\log m + \log d$,
    \item query complexity 
    $Cq \log |\Sigma|^{1/4}$,
    \item proof length 
    $Cq^2 2^r \log |\Sigma|^{1/10}$,
    \item proof degree $C/\eps^4$,
    \item proof alphabet of size at most $C/\eps^6$,
    \item and robust soundness error 
    $\delta + \eps$.
\end{itemize}
\end{lemma}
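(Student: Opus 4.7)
The plan is to perform three sequential reductions, each preserving perfect completeness and incurring at most $\eps/3$ additive loss in robust soundness: (i) randomness re-parameterization to achieve the target $\log m + \log d$ randomness complexity, (ii) proof-degree regularization to degree $C/\eps^4$ via expander ``clouds'', and (iii) alphabet reduction to size $C/\eps^6$ via a small-alphabet error-correcting code.

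For step (i), I view $V$ as a bipartite constraint graph on $[2^r]\cup [m]$ in which every random string has degree exactly $q$ and the average proof-location degree is $d$, so that the total edge count is $q\cdot 2^r = md$. I re-parameterize the verifier so that one coin toss uniformly samples an edge of this graph; the new verifier simulates the test of the random string containing the sampled edge. This yields randomness complexity $\log m + \log d$ while leaving query complexity, alphabet, completeness, and soundness essentially unchanged. For step (ii), I replace each proof location $v$ by a cloud $C_v$ of $\Theta(\deg(v))$ copies connected as a $\Delta$-regular expander of spectral gap $\Omega(1)$ with $\Delta = C/\eps^4$. Each original edge incident to $v$ is routed to a distinct copy, and one ``consistency'' random string per expander edge is added, performing a two-query equality check. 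This brings the proof-location degree to $O(1/\eps^4)$ and the proof length to $\Theta(md) = \Theta(q^{2}\cdot 2^{r})$. For step (iii), I encode every proof symbol by a code $\mathcal{C}\colon\Sigma\to(\Sigma')^{L}$ with $|\Sigma'|=O(1/\eps^6)$, constant relative distance, and $L = O(\log^{1/10}|\Sigma|)$ (obtained e.g.\ by concatenating a Reed--Solomon code with a small-alphabet inner code). Each proof location is replaced by the block of $L$ symbols encoding its value, and the verifier is modified to query the entire block for each original query and decode before applying the original predicate. The constant relative distance of $\mathcal{C}$ transfers robust soundness across this step: a local view $\eps'$-far from accepting in the new PCP corresponds to a local view $\Omega(\eps')$-far from accepting in the old.

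The main technical obstacle is preserving \emph{robust} (rather than ordinary) soundness through step (ii). One must show that the expected Hamming distance between the new verifier's local view and its set of accepting local views is at most $\delta + \eps/3$. The argument splits each new random string into its ``original-test'' component and its ``consistency-check'' component, and uses expander mixing: any proof that is not approximately cloud-consistent must violate an $\Omega(1)$ fraction of its consistency checks, and so is charged at rate $\Omega(\eps)$ on the expander side; any proof that \emph{is} approximately cloud-consistent can be majority-decoded into a proof of the original PCP, on which the robust soundness of $V$ directly applies. Combining the two cases via a careful averaging gives the claimed $\delta + \eps$ robust soundness bound.
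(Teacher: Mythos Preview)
The paper does not actually prove this lemma; it is stated in Section~8.1 as a known regularization-plus-alphabet-reduction tool (in the spirit of the Dinur--Harsha framework) and used as a black box in the later composition steps. So there is no paper proof to compare against. Your three-step outline (reparameterize randomness, expander-regularize the proof side, then reduce the alphabet via a good code) is the standard route for such a statement, and your sketch of why robust soundness survives step~(ii) via expander mixing plus majority decoding is the right idea.

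There is, however, a genuine error in step~(iii). You posit a code $\mathcal{C}\colon\Sigma\to(\Sigma')^{L}$ with $|\Sigma'|=O(1/\eps^{6})$, constant relative distance, and block length $L=O((\log|\Sigma|)^{1/10})$. This is information-theoretically impossible: to encode $|\Sigma|$ distinct messages one needs $|\Sigma'|^{L}\geq|\Sigma|$, i.e.\ $L\geq \log|\Sigma|/\log|\Sigma'|=\Omega_{\eps}(\log|\Sigma|)$, not $O((\log|\Sigma|)^{1/10})$. The mistake appears to stem from misparsing the lemma's parameters: the query complexity $Cq\log|\Sigma|^{1/4}$ is to be read as $Cq\log(|\Sigma|^{1/4})=\tfrac{C}{4}\,q\log|\Sigma|$ (the exponent $1/4$ binds to $|\Sigma|$, not to the whole logarithm), and likewise the proof-length bound is $\tfrac{C}{10}\,q^{2}2^{r}\log|\Sigma|$. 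With this reading, the code you need in step~(iii) has block length $L=\Theta(\log|\Sigma|)$ over an alphabet of size $O(1/\eps^{6})$ with constant relative distance, which is standard (e.g.\ Reed--Solomon concatenated with a fixed inner code). Once $L$ is corrected, your outline yields query complexity $O(q\log|\Sigma|)$ and proof length $O(q\,2^{r}\log|\Sigma|)\leq O(q^{2}2^{r}\log|\Sigma|)$, consistent with the lemma's stated bounds.
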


\subsection{PCP Composition}
We need the following efficient and abstract
composition theorem due to~\cite{DinurH13}:
\begin{theorem}\label{thm:composition}
For all $\eps >0$ the following holds. Suppose $3$SAT has a regular robust PCP verifier $V$ with robust soundness error $\Delta$, proof alphabet $\Sigma$, query complexity $Q$, decision complexity $S(n)$ and suppose $\text{CircuitSAT}_\Sigma(Q,S(n))$ has a robust PCP decoder $\cD$ with proof alphabet $\sigma$, robust soundness error $\delta$ and list size $\ell$. Then, $3$SAT has a robust PCP verifier $V' = V \circledast \cD$, with query complexity $O(q/\eps^4)$, robust soundness error $\Delta \ell + 4\ell\eps + \delta$ and other parameters as stated in \Cref{table-comp}. Furthermore, if the PCP decoder $\cD$ is regular, then so is the composed verifier $V'$.
\end{theorem}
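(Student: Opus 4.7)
The plan is to follow the standard two-layer composition, with careful bookkeeping to reach the claimed parameters. I will construct $V' = V \circledast \cD$ by replacing each local decision of the outer verifier $V$ with an invocation of the inner decodable PCP $\cD$. The composed proof $\pi'$ has two layers: (a) a copy of the outer proof $\pi \in \Sigma^m$, and (b) for each random string $r$ of $V$, an inner decodable PCP $\pi_r \in \sigma^{m_{\text{in}}}$ for the local circuit $C_r : \Sigma^Q \to \{0,1\}$ that $V$ uses to decide acceptance on coin $r$. The composed verifier $V'$ will sample $r$ (coins of $V$) and $j \in [Q]$ uniformly, run $\cD$ on the instance $(C_r, j)$ to obtain inner queries $J$ into $\pi_r$ together with a decoding map $f : \sigma^q \to \Sigma \cup \{\bot\}$, read $\pi_r|_J$ and the single outer symbol $\pi[I(r)_j]$, and accept iff $f(\pi_r|_J) = \pi[I(r)_j]$. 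To present everything over a common small alphabet and to reach query complexity $O(q/\eps^4)$, I will apply an alphabet-reduction step analogous to \Cref{lem:regular-pcp} with error parameter $\eps$.

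Completeness will be immediate: given a satisfying outer proof $\pi$, for every $r$ the restriction $y_r := \pi|_{I(r)}$ satisfies $C_r$, so I take $\pi_r$ to be an honest decodable PCP for $y_r$. Then for every $(r,j)$, the decoder recovers $y_r(j) = \pi[I(r)_j]$ with probability one, and $V'$ accepts.

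The main work will be the robust-soundness analysis. Fix any purported proof $\pi' = (\pi, \{\pi_r\}_r)$. For each $r$, the soundness guarantee of $\cD$ yields a list $\cL_r = \{y^1_r, \ldots, y^{\ell_r}_r\}$ with $\ell_r \le \ell$ of satisfying assignments to $C_r$ for which the expected ``bad'' agreement $\E_{j, I_r, f}[\agr(\pi_r|_{I_r}, \mathrm{BAD}(f))]$ is at most $\delta$. I will decompose the expected agreement of $V'$'s local view with its accepting views: at a sampled $(r, j, I_r, f)$, either $f(\pi_r|_{I_r})$ fails to lie in $\{\bot\} \cup \{y^i_{r,j}\}_i$, whose total contribution is at most $\delta$ by the inner soundness, or it equals some $y^i_{r,j}$, in which case the nearest accepting view to the local view is controlled by the mismatch between $\pi[I(r)_j]$ and the values in $\{y^i_{r,j}\}_i$. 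In the latter branch I will choose a random index $i \in [\ell]$ and define a candidate outer assignment $Y^i$ by $Y^i|_{I(r)} := y^i_r$ whenever $\ell_r \ge i$ (and arbitrarily otherwise); each $Y^i$ is then fed into the outer robust soundness of $V$. A union bound over $\ell$ candidates converts the outer robust-soundness error $\Delta$ into a loss of $\ell \cdot \Delta$, the final $4\ell\eps$ term absorbs the alphabet-reduction overhead, and the three contributions sum to the claimed bound $\Delta \ell + 4\ell \eps + \delta$.

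The main obstacle will be the careful routing of the robust-agreement inequality across the two layers: the inner list-decoding guarantee is inherently a union-bound-style statement, so the outer soundness error $\Delta$ must be summed $\ell$ times, and the factor of $\ell$ in front of $4\eps$ arises because the alphabet-reduction overhead is incurred once per candidate. Finally, regularity of $V'$ will be inherited from regularity of both $V$ and $\cD$, and the remaining parameters in \Cref{table-comp} (randomness, proof length, proof degree, alphabet size) are routine bookkeeping.
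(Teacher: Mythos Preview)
First, note that the paper does not prove this theorem at all; it is quoted verbatim as a black box from Dinur--Harsha. So there is no ``paper's proof'' to compare against, only the original construction in~\cite{DinurH13}.

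Your construction departs from that of~\cite{DinurH13} in a way that breaks robust soundness. You keep a copy of the outer proof $\pi$ and have the composed verifier compare the inner-decoded value $f(\pi_r|_J)$ against the single outer symbol $\pi[I(r)_j]$. But then the local view consists of $q$ inner symbols together with one outer symbol, and an adversary can exploit the imbalance: set each $\pi_r$ to be an honest dPCP for \emph{some} satisfying assignment $y_r$ of $C_r$ (such $y_r$ exists for many $r$ even when $x\notin L$). The decoder then outputs $y_{r,j}\ne\bot$ with probability $1$, and the nearest accepting local view is obtained by flipping only the outer coordinate to $y_{r,j}$. This gives agreement at least $q/(q+1)$ on every such $(r,j,\rho)$, so the robust soundness error is essentially $1$, not $\Delta\ell+4\ell\eps+\delta$. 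Invoking an alphabet-reduction/robustification step does not help: whatever relative weight you give the outer symbol versus the $q$ inner symbols, one side can always be flipped at cost bounded away from $1$, so the expected agreement cannot be driven below a fixed constant. This imbalance is exactly why the pre-Moshkovitz--Raz/Dinur--Harsha ``na\"ive'' composition does not yield $2$-query (equivalently, robust) PCPs with small error.

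The actual composition in~\cite{DinurH13} discards the outer proof entirely---this is visible in the table: the composed proof length is exactly $2^R\cdot m$ with no additive $M$, and the alphabet is $\sigma$, not a mixture. Consistency is enforced \emph{between} inner proofs: the verifier fixes an outer proof location $i$, invokes the decoder on inner proofs $\pi_{r'}$ for several $r'\in R_i$, and checks that the decoded values agree. Since every queried symbol is an inner symbol of the same type, there is no light coordinate to flip. Your candidate global assignments $Y^i$ defined by $Y^i|_{I(r)}:=y^i_r$ are also ill-posed (the windows $I(r)$ overlap), but this is a secondary symptom: in~\cite{DinurH13} the outer soundness is applied to a single global string obtained by plurality decoding across the inner proofs, not to pasted-together local lists.
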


\begin{table}[H]
\centering
\caption{Parameters for Efficient Composition.}\label{table-comp}
\begin{tabular}{|c|c|c|c|}
\hline & $V$ & $\cD$ & $V' = V \circledast \cD$ \\
\hline
Proof Alphabet & $\Sigma$ & $\sigma$ & $\sigma$ \\
Randomness Complexity & $R$ & $r$ & $\log M + r + \log D$ \\
Query Complexity & $Q$ & $q$ & $\frac{4}{\eps^4} \cdot q$ \\
Decision Complexity & $S$ & $s$ & $4s/\eps^4 + D\log\sigma$ \\
Proof Degree & $D$ & $d$ & $d$ \\
Proof Length & $M$ & $m$ & $2^R \cdot m$ \\
Robust Soundness Error & $\Delta$ & $\delta$ & $\Delta \ell + 4\ell\eps + \delta$ \\
List Size & - & $\ell$ & - \\
Input Size & $n$ & $S(n)$ & $n$ \\
\hline
\end{tabular}
\end{table}
Note that all the parameters (for $V$) with capitalized letters are functions of $n$ and the parameters (for $\cD$) with uncapitalized letters are functions of $S(n)$. The parameters of the composed PCP should be read accordingly.

\subsection{Proof of Theorem~\ref{thm:main}}
We start from a known size efficient PCP construction; either 
the construction of~\cite{BensasonSudan} that has soundness $1-1/\pl n$ or or the construction of~\cite{Dinur07} that has soundness $1-\Omega(1)$, will do. 
For a graph $G=(V,E)$, let $\size(G) = |V|+|E|$. Below we state the result of~\cite{BensasonSudan} in its more convenient formulation in terms of hardness of 2-CSPs; this formulation  can be found in~\cite[Lemma 8.3]{Dinur07}.
\begin{theorem}\label{thm:bs}
There exist constants $c_1, c_2 > 0$ such that there is a polynomial time reduction mapping a $3$SAT instance $\varphi$ of size $n$ to a 2-CSP instance $\Psi$ over the graph $G = (V, E)$ and alphabet $\Sigma$ where
\begin{itemize}
    \item We have $\size(G) \leq n(\log n)^{c_1}$ and $|\Sigma| = O(1)$.
    \item If $\varphi$ is satisfiable, then $\val(\Psi) = 1$.
    \item If $\varphi$ is not satisfiable, then $\val(\Psi) \leq 1-\frac{1}{(\log n)^{c_2}}$.
\end{itemize}
\end{theorem}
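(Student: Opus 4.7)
The plan is to combine Theorem~\ref{thm:CSP_on_small_soundness_large_alphabet} — which already produces a quasi-linear Label Cover instance with soundness $\delta$, but over an alphabet of size $\exp((\log n)^{O(1)})$ — with the alphabet-reduction framework of Dinur--Harsha composition (Theorem~\ref{thm:composition}), instantiated with the Reed--Muller and Hadamard decodable PCPs of Lemmas~\ref{lem:rm-dpcp} and~\ref{lem:had-dpcp}. The starting CSP is itself obtained by chaining Theorem~\ref{thm:bs} (or Dinur's PCP) with the embedding Lemma~\ref{lem:improved-routing-to-PCP} onto the HDX of Theorem~\ref{thm:cl} and then with the derandomized parallel repetition of Lemma~\ref{lem:soundness-amp}, as already summarized in Theorem~\ref{thm:CSP_on_small_soundness_large_alphabet}.

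Concretely, I would first invoke Theorem~\ref{thm:CSP_on_small_soundness_large_alphabet} with a soundness parameter $\delta' = \delta/C(\delta)$ to produce a Label Cover instance $\Psi_0$ of the claimed quasi-linear size. By the syntactic equivalence of Label Cover and robust PCPs (Lemma~\ref{lem:lc-robust-equivalence}), $\Psi_0$ is a regular robust PCP $V_0$ for $3$-SAT with proof length $n \cdot (\log n)^{O(1)}$, alphabet of size $\exp((\log n)^{O(1)})$, and decision complexity $(\log n)^{O(1)}$. I would then apply the regularization of Lemma~\ref{lem:regular-pcp} with a small parameter $\eps$ to obtain a regular robust PCP $V_1$ whose proof alphabet is a constant $C_\delta$, whose proof length remains $n \cdot (\log n)^{O(1)}$, and whose query and decision complexity are polylogarithmic in $n$.

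With $V_1$ in hand, I would compose it with the Reed--Muller decodable PCP (Lemma~\ref{lem:rm-dpcp}) via Theorem~\ref{thm:composition}. Since the outer alphabet is constant, the Reed--Muller dPCP has decision complexity $(\log S)^{O(1)}$ and proof length $S^{O(1)}$, so one composition drops the decision complexity from $\text{polylog}(n)$ to $\text{polyloglog}(n)$, the alphabet grows to $\text{polylog}(n)$, and the proof length stays quasi-linear. Regularizing again reduces the alphabet back to constant. Iterating the ``regularize then compose with Reed--Muller'' step a bounded number of times $k = k(\delta)$ brings the decision complexity down to $\log^{(k+1)}(n)^{O(1)}$; for $k$ large enough this is much smaller than $\sqrt{\log\log n}$, so that a final regularization followed by composition with the Hadamard dPCP of Lemma~\ref{lem:had-dpcp} yields a regular robust PCP with constant alphabet, because its inner proof length $q^{O(S^2)}$ is then at most $(\log n)^{O(1)}$. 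Lemma~\ref{lem:lc-robust-equivalence} finally translates the resulting regular robust PCP back into a Label Cover instance $\Psi$ of size $n \cdot (\log n)^{C(\delta)}$ over an alphabet of size $O_\delta(1)$.

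The main obstacle will be the parameter bookkeeping across this chain. At each composition step the soundness error grows additively by $\Delta \cdot \ell + 4 \ell \eps + \delta$ (with $\ell$ the list size and $\Delta, \delta$ the outer/inner robust soundness errors), and each regularization blows up the proof length by a factor $q^2 \cdot (\log |\Sigma|)^{1/10}$. I will need to carefully choose each $\eps$ and each inner-PCP soundness so that the sum of soundness errors after the $k + O(1)$ compositions stays $O(\delta)$, and so that the cumulative multiplicative blow-ups — from regularizations and from inner proof lengths, including the final Hadamard step whose proof length must lie in $(\log n)^{O(1)}$ — contribute only polylogarithmic factors to the final size, keeping it at $n \cdot (\log n)^{C(\delta)}$.
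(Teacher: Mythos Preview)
You have misidentified the theorem. Theorem~\ref{thm:bs} is the Ben--Sasson--Sudan quasi-linear PCP with soundness $1-1/(\log n)^{c_2}$; in the paper it is \emph{not} proved but merely quoted from~\cite{BensasonSudan} (in the 2-CSP formulation of~\cite[Lemma~8.3]{Dinur07}). It serves as the \emph{starting point} of the entire pipeline.

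Your proposal instead sketches the proof of Theorem~\ref{thm:main} (the main result), and in doing so it is circular: you invoke Theorem~\ref{thm:CSP_on_small_soundness_large_alphabet}, which is obtained via Lemma~\ref{lem:CSP_on_CL} and Lemma~\ref{lem:low-soundness-final}, both of which begin by applying Theorem~\ref{thm:bs} (or the equivalent Theorem~\ref{thm:dinur}) to the input 3SAT instance. You even say explicitly that ``the starting CSP is itself obtained by chaining Theorem~\ref{thm:bs}\ldots''. So the argument assumes the very statement it is meant to establish. Beyond the circularity, the machinery you deploy yields soundness an arbitrary constant $\delta$, which is vastly stronger than the $1-1/(\log n)^{c_2}$ asserted here; that overshoot is another sign that you are proving the wrong theorem.
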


The work of~\cite{Dinur07} showed how to get to constant soundness while maintaining quasi-linear size. Again we state her result in the more convenient 2-CSP formulation.
\begin{theorem}\label{thm:dinur}
There exist constants $c_1, c_2, c_3 > 0$ such that there is a polynomial time reduction mapping a $3$SAT instance $\varphi$ of size $n$ to a 2-CSP instance $\Psi$ over the graph $G = (V, E)$ and alphabet $\Sigma$ where,
\begin{itemize}
    \item $\size(G) \leq n(\log n)^{c_1}$ and $|\Sigma| = c_2$.
    \item If $\varphi$ is satisfiable, then $\val(\Psi) = 1$.
    \item If $\varphi$ is not satisfiable, then $\val(\Psi) \leq 1-c_3$.
\end{itemize}
\end{theorem}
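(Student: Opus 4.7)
The plan is to prove Theorem~\ref{thm:dinur} by Dinur's original iterative gap-amplification strategy, using Theorem~\ref{thm:bs} of Ben-Sasson and Sudan as the starting PCP. We begin with a 2-CSP $\Psi_0$ of size $n(\log n)^{c_1}$, constant alphabet, and soundness gap $1/(\log n)^{c_2}$, and repeatedly apply an amplification step that (at least) doubles the soundness gap while only multiplying the instance size by a constant. Since we need to reach a constant gap starting from one of size $1/\polylog(n)$, a total of $T = O(\log\log n)$ iterations suffice, and the final size is bounded by $n(\log n)^{c_1}\cdot 2^{O(T)} = n(\log n)^{c_1'}$ for some $c_1'>0$.

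Each iteration takes a 2-CSP $\Psi_i$ over a constant alphabet $\Sigma$ with soundness gap $\eps_i$ and produces $\Psi_{i+1}$ over a constant alphabet with gap $\min(c,2\eps_i)$, for an absolute constant $c>0$. First I would regularize and expanderize $\Psi_i$ by applying \Cref{lem:regular-dinur} and then taking the union of the constraint graph with a constant-degree expander from \Cref{lem:exp-construction} (placing trivial equality constraints on the newly added edges) to ensure the resulting graph is a constant-degree spectral expander; this step costs only an absolute constant factor in the gap and in the size. Second I would apply graph powering: replace the constraint graph by its $t$-th power $G^t$ for a large constant $t$, where each vertex is assigned a label in $\Sigma^{d^{t/2}}$ encoding an assignment to its $t/2$-neighborhood, and the constraint on $(u,v)\in G^t$ checks both that the two local views agree on common vertices and that all original constraints inside each neighborhood are satisfied. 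Because $d$, $t$, and $|\Sigma|$ are all constants, the resulting alphabet $\Sigma^{d^{t/2}}$ is still a (larger) constant. Third I would apply alphabet reduction via composition: view the powered CSP as a robust PCP via \Cref{lem:lc-robust-equivalence}, regularize it using \Cref{lem:regular-pcp}, and then compose it with the Hadamard-based dPCP of \Cref{lem:had-dpcp} for a CircuitSAT of constant size (the decision complexity of a constraint of $\Psi_{i+1}$). The composition theorem \Cref{thm:composition} then yields $\Psi_{i+1}$ with an absolute constant alphabet, a constant-factor size blow-up, and a constant-factor loss in the soundness gap.

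The main obstacle and heart of the argument is the soundness analysis of the graph-powering step: showing that if every global assignment to $\Psi_i$ violates at least an $\eps_i$ fraction of constraints, then every (possibly non-global) assignment to $\Psi_{i+1}$ violates at least a $\min(c,2\eps_i)$ fraction of the powered constraints. This is proved by first showing that any assignment to $G^t$ can be decoded to a global assignment on $\Psi_i$ via a plurality vote over the local neighborhoods, and then showing that a length-$t$ random walk on the expander $G$ hits a violated constraint of the decoded assignment with probability roughly $t\eps_i$ in the small-$\eps_i$ regime. The spectral expansion provided by the expanderization step is exactly what enables the random walk to spread out quickly enough to make this estimate go through, and in fact this is the $99\%$-regime instance of the direct-product-testing phenomenon that is later formalized and generalized in this paper through the HDX direct-product testers of \Cref{thm:dp_intro}. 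Once the amplification step is established, iterating it $O(\log\log n)$ times with the parameters tracked carefully completes the proof.
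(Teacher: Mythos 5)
This statement is not proved in the paper at all: Theorem~\ref{thm:dinur} is imported verbatim as a black box from Dinur's work \cite{Dinur07} (restated in the 2-CSP formulation), so there is no internal proof to compare against. Your outline is a faithful reconstruction of Dinur's original argument --- preprocess (regularize and expanderize), graph-power, reduce the alphabet by composition, and iterate $O(\log\log n)$ times starting from the Ben-Sasson--Sudan PCP of Theorem~\ref{thm:bs} --- and the bookkeeping you give (constant-factor size blow-up per round, $2^{O(\log\log n)} = \polylog(n)$ total, constant alphabet throughout) is right. Substituting the paper's robust-PCP composition machinery (Lemmas~\ref{lem:regular-pcp} and~\ref{lem:had-dpcp}, Theorem~\ref{thm:composition}) for Dinur's assignment testers in the alphabet-reduction step is legitimate since all the relevant parameters are constants at that point. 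One quantitative imprecision: in the powering analysis the probability that the length-$t$ walk detects a violated edge of the decoded assignment is $\Omega(\sqrt{t}\cdot\eps_i)$ (for $\eps_i \leq 1/t$), not $\Theta(t\,\eps_i)$ --- the expected number of violated edges hit is $\Theta(t\eps_i)$, but correlations along the walk reduce the hitting probability to a $\sqrt{t}$ gain. This does not affect your argument, since a $\sqrt{t}$ gain still doubles the gap for a large enough constant $t$, but the heart of the matter (Dinur's gap amplification lemma) remains only sketched in your proposal rather than proved.
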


Using Dinur's PCP in conjunction with \Cref{lem:improved-routing-to-PCP}, we get a 2-CSP instance with soundness $1-\Omega(1)$ whose constraint graph is the base graph of the complex from Theorem~\ref{thm:cl}.

\begin{lemma}[\Cref{thm:CSP_on_CL_large_soundness2} restated]\label{lem:CSP_on_CL}
There exists $\eps>0$ such that for all $\delta\in (0,1)$ there exist constants $C,C'>0$ so that the following holds for all sufficiently large integers $d$ and $n$. Let $\{X_{n'}\}_{n'\in N}$ be the infinite sequence of complexes from \Cref{thm:cl}, where every $X_{n'}$ is a $d$-dimensional complex on $n'$ vertices with parameters $q=\Theta(\log^{C'} n')$ and $\delta$, that is constructible in time $\poly(n')$. Then there is a polynomial time reduction mapping any $3$SAT instance $\varphi$ of size $n$ to a 2-CSP $\Psi$ over the graph $G=(X_{n'}(1),X_{n'}(2))$, for some complex $X_{n'}$ from the family, such that:
\begin{enumerate}
\item We have that $n'\leq n\log^C n$, the alphabet $\Sigma$ satisfies that $\log(|\Sigma|)\leq q^{Cd^2}$, and the decision complexity of the constraints of $\Psi$ is at most $q^{Cd^2}$.
\item If $\varphi$ is satisfiable, then $\Psi$ is satisfiable.
\item If $\varphi$ is unsatisfiable, then ${\sf val}(\Psi)\leq 1-\eps$.
\end{enumerate}
\end{lemma}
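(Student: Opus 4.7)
The plan is to feed the size-efficient PCP of Ben-Sasson and Sudan (\Cref{thm:bs}) into the link-based embedding lemma (\Cref{lem:improved-routing-to-PCP}); the latter already performs a one-shot gap amplification from $1 - 1/\polylog(n)$ soundness to $1 - \Omega(1)$ soundness on the graph underlying a Chapman-Lubotzky complex, so once the sizes, alphabets, and parameters are matched, the conclusion is immediate. (Alternatively, one could start from Dinur's constant-gap PCP \Cref{thm:dinur} and use the weaker connection \Cref{lem:routing-to-pcp-general-restated} together with the fact that our link-to-link protocol is $(\eps, 1/\polylog)$-tolerant; this works because $\nu = 1/\polylog$ is much stronger than the pebble-routing loss $\nu = \eps L$ used in \cite{polishchuk1994nearly,DinurMeir}. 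I describe the Ben-Sasson-Sudan route as it is cleaner.)

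First, apply \Cref{thm:bs} to $\varphi$ to obtain a 2-CSP $\Psi_0$ on a constraint graph $G_0$ with $|V(G_0)| + |E(G_0)| \le n(\log n)^{c_1}$, constant alphabet $\Sigma_0$, perfect completeness, and soundness at most $1 - 1/(\log n)^{c_2}$ if $\varphi$ is unsatisfiable. Next, apply the regularization procedure (\Cref{lem:regular-dinur}) to $\Psi_0$ to obtain a 2-CSP $\Psi_1$ on a $k$-regular graph over $\Sigma_0$ with $N := |V(\Psi_1)| \le 2|E(G_0)| = n\cdot\polylog(n)$ vertices, perfect completeness, and soundness at most $1 - 1/(c(\log n)^{c_2}) \le 1 - 1/\log^{c_2+1} N$ for all sufficiently large $n$.

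Now instantiate \Cref{lem:improved-routing-to-PCP} with parameter $C := c_2 + 1$ and $\delta$ chosen arbitrarily in $(0,1)$, $d$ large enough, and with $N$ in place of $n$. The lemma yields a complex $X = X_{n'}$ in the Chapman-Lubotzky family with $2N \le n' \le O_\delta(N)$, so in particular $n' \le n(\log n)^{C_0}$ for some constant $C_0$ depending only on $c_1, c_2, \delta$. It outputs a 2-CSP $\Psi$ on the graph $G = (X(1), X(2))$ with alphabet $\Sigma \subseteq \Sigma_0^{k q^{O(d^2)}}$ and decision complexity $O(k q^{O(d^2)} |\Sigma_0|)$, where $q = \log^{C'}(n')$. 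Since $|\Sigma_0|$ and $k$ are constants, both $\log|\Sigma|$ and the decision complexity are at most $q^{Cd^2}$ for an appropriate absolute constant $C$, matching item~1 of the lemma statement. Completeness transfers directly. For soundness, if $\varphi$ is unsatisfiable then $\val(\Psi_1) \le 1 - 1/\log^{c_2+1} N = 1 - 1/\log^{C} N$, and the soundness guarantee of \Cref{lem:improved-routing-to-PCP} gives $\val(\Psi) \le 1 - \eps$ for the absolute constant $\eps$ appearing in that lemma.

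There is no serious mathematical obstacle: the hard work is already done inside \Cref{lem:improved-routing-to-PCP}, and the step to be taken here is essentially bookkeeping. The two things to verify carefully are (i)~that the constant $C$ fed into \Cref{lem:improved-routing-to-PCP} can be chosen \emph{uniformly in $n$} and still absorbs the $(\log n)^{c_2}$ soundness gap coming from \Cref{thm:bs}, and (ii)~that each transformation --- the initial PCP, the regularization, and the embedding --- blows up the number of vertices by at most a polylogarithmic factor, so that the final bound $n' \le n(\log n)^{C}$ holds. Both follow directly from the corresponding quantitative statements cited above.
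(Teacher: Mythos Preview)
Your proposal is correct and follows essentially the same route as the paper: regularize a size-efficient starting PCP and then apply \Cref{lem:improved-routing-to-PCP}. The only difference is that the paper uses Dinur's PCP (\Cref{thm:dinur}) as the primary starting point and notes in a footnote that Ben-Sasson--Sudan (\Cref{thm:bs}) works equally well, whereas you take the reverse convention.
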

\begin{proof}
Applying Dinur's reduction from \Cref{thm:dinur} to $\varphi$ and then applying the regularization procedure in \Cref{lem:regular-dinur}, in $\poly(n)$ time we get a 2-CSP $\Psi'$ whose constraint graph $G'$ is $k$-regular for an absolute constant $k$, with $|V(G')|\leq n\log^{O(1)} n$, and alphabet size $|\Sigma'|=O(1)$. We have that $\val(\Psi')=1$ if $\val(\varphi)=1$ and $\val(\Psi')=1-\eps'$ if $\val(\varphi)<1$, for some universal constants $\eps'\in (0,1)$.

We will now apply the polynomial time reduction in \Cref{lem:improved-routing-to-PCP} to $\Psi'$. This gives us a 2-CSP $\Psi$ on the constraint graph $G=(X_{n'}(1),X_{n'}(2))$, where $X_{n'}$ is a $d$-dimensional complex with $|V(G')|\leq n' \leq O_{\delta}(1)|V(G')|$ and parameters $q=\Theta(\log^{C'}n')$ for some large enough constant $C'$ and $\delta$. The alphabet size of $\Psi$ satisfies $\log|\Sigma|=kq^{O(d^2)}\log|\Sigma'| = q^{O(d^2)}$ and the decision complexity is $kq^{O(d^2)}|\Sigma|=q^{O(d^2)}$. If $\Psi'$ is satisfiable then so is $\Psi$, and if $\val(\Psi')\leq 1-\eps'$ then $\val(\Psi)\leq 1-\eps$ for some absolute constant $\eps >0$, as required.\footnote{One can check that the proof above works even if we apply the result of~\cite{BensasonSudan}, \Cref{thm:bs}, instead of \Cref{thm:dinur}, since \Cref{lem:improved-routing-to-PCP} only requires that $\val(\Psi')\leq 1-\frac{1}{(\log n)^c}$ to get the desired conclusion.}
\end{proof}

Now that we have a constant soundness PCP on the graphs underlying the complexes from \Cref{thm:cl}, we can apply the gap amplification procedure from \Cref{lem:soundness-amp} to get a 2-CSP with small soundness (but large alphabet size). This uses the fact that these complexes support a direct product test with small soundness. 

\begin{lemma}\label{lem:low-soundness-final}
For all $\delta\in (0,1)$ there exist constants $C,C'>0$ so that the following holds for all sufficiently large integers $k, d$ and $n$. Let $\{X_{n'}\}_{n'\in N}$ be the infinite sequence of complexes from \Cref{thm:cl}, where every $X_{n'}$ is a $d$-dimensional complex on $n'$ vertices with parameters $q=\Theta(\log^{C'} n')$ and $\delta$ that is constructible in time $\poly(n')$. There is a polynomial time reduction mapping a $3$SAT instance $\varphi$ of size $n$ to a generalized label cover instance $\Psi$ over the weighted inclusion graph $(X_{n'}(k), X_{n'}(\sqrt{k}))$ for some $n'\leq n\log^C n$, such that,
\begin{enumerate}
\item If $\varphi$ is satisfiable, then $\Psi$ is satisfiable.
\item If $\varphi$ is unsatisfiable, then ${\sf val}(\Psi)\leq \delta$.
\item The left alphabet of $\Psi$ is $\Sigma^k$ and right alphabet is $\Sigma^{\sqrt{k}}$ for some alphabet $\Sigma$ with $\log|\Sigma|\leq q^{Cd^2}$.
\item The projection map $\phi_{(A,B)}$ associated to the edge $(A,B)$ in $\Psi$ is defined as: $\forall \sigma\in \Sigma_L(A), \phi_{(A,B)}(\sigma)=\sigma|_B$. Furthermore, for all $A \in X(k)$, the circuit complexity of checking membership in $\Sigma_L(A)$ is at most $q^{Cd^2}$.
\end{enumerate}
\end{lemma}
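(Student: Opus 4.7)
The plan is to prove Lemma~\ref{lem:low-soundness-final} as a two-step composition of Lemma~\ref{lem:CSP_on_CL} with the gap amplification Lemma~\ref{lem:soundness-amp}. Let $\eps>0$ denote the absolute constant produced by Lemma~\ref{lem:CSP_on_CL}, and set $\delta' := \min(\delta, \eps/4)$. I would first choose $k = k(\delta') \in \mathbb{N}$ large enough that item 7 of Theorem~\ref{thm:cl} applied with parameter $(\delta')^2$ guarantees that the $(k, \sqrt{k})$-direct product tester on the resulting family has soundness $(\delta')^2$, and simultaneously large enough to satisfy the requirements on $k$ in Lemma~\ref{lem:soundness-amp} when its target soundness is $\delta'$.

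Next, I would invoke Lemma~\ref{lem:CSP_on_CL} with the direct-product-test soundness parameter set to $(\delta')^2$ (playing the role of ``$\delta$'' in that lemma) and with dimension $d$ sufficiently large. This yields a polynomial-time reduction from the $3$SAT instance $\varphi$ to a 2-CSP $\Psi'$ on the graph $G = (X_{n'}(1), X_{n'}(2))$, where $X_{n'}$ is one of the complexes from Theorem~\ref{thm:cl} with $q = \Theta(\log^{C'} n')$ and $n' \leq n \log^C n$. The alphabet $\Sigma$ satisfies $\log|\Sigma| \leq q^{Cd^2}$, the decision complexity of $\Psi'$ is at most $q^{Cd^2}$, $\val(\Psi') = 1$ when $\varphi$ is satisfiable, and $\val(\Psi') \leq 1 - \eps \leq 1 - 4\delta'$ otherwise.

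Then I would apply Lemma~\ref{lem:soundness-amp} to $\Psi'$ with target soundness $\delta'$. Since $X_{n'}$ was constructed so that its $(k, \sqrt{k})$-direct product tester has soundness $(\delta')^2$, the hypothesis is met, and the lemma outputs a generalized Label Cover $\Psi$ on the bipartite inclusion graph $(X_{n'}(k), X_{n'}(\sqrt{k}))$ with left alphabet $\Sigma^k$, right alphabet $\Sigma^{\sqrt{k}}$, projection constraints of the form $\phi_{(A,B)}(\sigma) = \sigma|_B$, and circuit complexity of membership in $\Sigma_L(A)$ bounded by $O(k^2 \cdot q^{Cd^2})$. Since $k$ depends only on $\delta$, this bound is of the form $q^{C_1 d^2}$ for a slightly larger absolute constant $C_1$, matching item 4 of the target lemma. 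Completeness chains through both invocations, and for soundness, if $\varphi$ is unsatisfiable then $\val(\Psi') \leq 1 - 4\delta'$ and the soundness clause of Lemma~\ref{lem:soundness-amp} gives $\val(\Psi) \leq \delta' \leq \delta$.

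The entire reduction runs in polynomial time, and the size of the output inclusion graph is at most $n \log^{C''} n$ for a suitable $C''$, using that $|X_{n'}(k)| \leq |X_{n'}(d)|\binom{d}{k} \leq n' \cdot q^{O(d^2)}$ by items 1--2 of Theorem~\ref{thm:cl} with $k, d$ constant, together with $n' \leq n \log^C n$. I do not foresee any substantive obstacle: the only care required is to thread the parameter $(\delta')^2$ through Lemma~\ref{lem:CSP_on_CL} so that the complex it outputs is precisely the one on which Lemma~\ref{lem:soundness-amp} is subsequently invoked; once this bookkeeping is set up, the rest is a direct plug-in of the two earlier results.
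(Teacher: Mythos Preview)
Your proposal is correct and follows essentially the same route as the paper: apply Lemma~\ref{lem:CSP_on_CL} to land on a 2-CSP over $(X_{n'}(1),X_{n'}(2))$ with soundness $1-\eps$, then feed that into Lemma~\ref{lem:soundness-amp} to obtain the generalized Label Cover on $(X_{n'}(k),X_{n'}(\sqrt{k}))$. Your parameter threading via $\delta'=\min(\delta,\eps/4)$ and invoking the complex with direct-product-test parameter $(\delta')^{2}$ is in fact slightly more careful than the paper, which handles the same issue with the clause ``by lowering $\delta$ if required.''
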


\begin{proof}
Fix $\delta$ and then fix $k,d\in \N$ to be sufficiently large constants depending on $\delta$, as dictated by \Cref{thm:cl} and \Cref{lem:CSP_on_CL}. Applying the polynomial time reduction in \Cref{lem:CSP_on_CL} on $\varphi$, we get a 2-CSP $\Psi'$ on the weighted graph $(X_{n'}(1),X_{n'}(2))$, where $X_{n'}$ is a $d$-dimensional complex from \Cref{thm:cl} with $n'\leq n\log^{O(1)}n$ and parameters $q=\Theta(\log^{C'} n)$ and $\delta$. The alphabet size of $\Psi'$ satisfies $\log|\Sigma|\leq q^{O(d^2)}$ and the decision complexity is at most $q^{O(d^2)}$. If $\varphi$ is satisfiable then so is $\Psi'$ and if not then $\val(\Psi')\leq 1-\eps$ for some absolute constant $\eps>0$. 

\Cref{thm:cl} states that the $(k,\sqrt{k})$-direct product test on $X_{n'}$ has soundness $\delta$. Thus applying the polynomial time reduction in \Cref{lem:soundness-amp} on $\Psi'$ we get a generalized label cover instance $\Psi$ with $\val(\Psi)\leq \delta$ if $\val(\Psi')\leq 1-\eps$ which is at most $1-4\delta$ (by lowering $\delta$ if required). The other properties required of $\Psi$ follow immediately from \Cref{lem:soundness-amp}. 
\end{proof}

We now apply alphabet reduction using the standard technique of proof composition of PCPs, and for that 
we switch to the framework of robust PCPs using~\Cref{lem:lc-robust-equivalence}. Alphabet reduction for label cover corresponds to query/decision complexity reduction for the equivalent robust PCP, therefore applying proof composition with the PCP above as an outer PCP and the decodable PCP based on the Reed-Muller code from \Cref{lem:rm-dpcp} as an inner PCP, we can reduce the queries to $\poly(\log\log\log n)$, while maintaining the almost-linear size. 

\begin{lemma}\label{lem:rm-comp}
For all $\delta>0$ there exists $C \in \N$, such that for sufficiently large $n\in \N$, $3$SAT on $n$ variables has a regular robust PCP with proof length $\leq n(\log n)^C$, randomness complexity $\leq \log_2(n) + C\log\log n$, query and decision complexity $\leq (\log\log\log n)^{C}$ and robust soundness error $\delta$.
\end{lemma}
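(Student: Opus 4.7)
The plan is to start from the low-soundness generalized label cover instance of \Cref{lem:low-soundness-final} with parameter $\delta_0=\delta_0(\delta)$ to be chosen small enough, convert it to a robust PCP via the syntactic correspondence of \Cref{lem:lc-robust-equivalence}, and then iteratively reduce its query and decision complexity by alternating applications of the regularization \Cref{lem:regular-pcp} and of the efficient composition \Cref{thm:composition} with the Reed--Muller based decodable PCP of \Cref{lem:rm-dpcp} as the inner verifier. The main obstacle, and the reason I must alternate rather than compose once, is that \Cref{lem:rm-dpcp} has query and decision complexity $(\log S)^{O(\log|\Sigma|)}$, which is prohibitive unless the outer proof alphabet $|\Sigma|$ is a constant; hence the outer PCP must be regularized back down to a constant alphabet before each composition step.

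Concretely, Step~1 uses \Cref{lem:lc-robust-equivalence} to turn the label cover from \Cref{lem:low-soundness-final} into a robust PCP $V_0$ for $3$SAT, with proof length $n(\log n)^{O(1)}$, randomness $\log n + O(\log\log n)$ (using $|X_{n'}(k)|\leq |X_{n'}(1)|\cdot q^{O(d^2)}$ from item~2 of \Cref{thm:cl}), constant query complexity $O_\delta(1)$ (the left degree $\binom{k}{\sqrt{k}}$), decision complexity $(\log n)^{O(1)}$ (from the circuit-complexity bound on membership in $\Sigma_L(A)$), proof alphabet of size $2^{(\log n)^{O(1)}}$, and robust soundness $\delta_0$. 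Step~2 applies \Cref{lem:regular-pcp} with a small parameter $\eps_1$, producing a regular robust PCP with a constant proof alphabet, constant proof degree, query and decision complexity $(\log n)^{O(1)}$, proof length $n(\log n)^{O(1)}$, randomness $\log n + O(\log\log n)$, and robust soundness $\delta_0 + \eps_1$. Step~3 composes via \Cref{thm:composition} with the Reed--Muller dPCP applied to $\csat$ of size $(\log n)^{O(1)}$ over the constant alphabet: since $\log|\Sigma| = O(1)$, the inner dPCP has randomness $O(\log\log n)$, proof alphabet and proof length $(\log n)^{O(1)}$, and query/decision complexity $(\log\log n)^{O(1)}$. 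Plugging into the table of \Cref{thm:composition} gives a regular robust PCP with proof length $n(\log n)^{O(1)}$, randomness $\log n + O(\log\log n)$, proof alphabet $(\log n)^{O(1)}$, and query/decision complexity $(\log\log n)^{O(1)}$. Step~4 regularizes once more to restore a constant alphabet (the query and decision complexity stay at $(\log\log n)^{O(1)}$), and Step~5 composes again with the Reed--Muller dPCP, this time on a $\csat$ instance of size $(\log\log n)^{O(1)}$ over a constant alphabet; the inner query and decision complexity is now $(\log\log\log n)^{O(1)}$, which after composition gives the desired final parameters: proof length $n(\log n)^{O(1)}$, randomness $\log n + O(\log\log n)$, and query and decision complexity $(\log\log\log n)^{O(1)}$.

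Tracking the soundness is the last, routine step: each composition contributes $\Delta\ell + 4\ell\eps + \delta_{\mathrm{inner}}$ to the robust soundness (with inner list size $\ell = \delta_{\mathrm{inner}}^{-O(1)}$ by \Cref{lem:rm-dpcp}), and each regularization adds an $\eps$. Choosing $\delta_0$, $\eps_1$, $\eps_2$, and the robust soundness parameters of each of the two Reed--Muller dPCPs small enough in terms of $\delta$ (all of which is allowed since \Cref{lem:low-soundness-final} and \Cref{lem:rm-dpcp} achieve arbitrarily small soundness) makes the final robust soundness at most $\delta$. Regularity of the final verifier follows from the regularity clause of \Cref{thm:composition} together with the fact that the Reed--Muller dPCP used as the outermost inner verifier is regular. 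The combinatorial content is concentrated in \Cref{lem:low-soundness-final} and in the Reed--Muller dPCP; the proof of \Cref{lem:rm-comp} is entirely a careful bookkeeping exercise in parameter tracking through two alternating rounds of regularization and composition.
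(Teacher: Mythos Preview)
Your proposal is correct and follows essentially the same approach as the paper: convert the generalized label cover of \Cref{lem:low-soundness-final} into a robust PCP via \Cref{lem:lc-robust-equivalence}, then alternate two rounds of regularization (\Cref{lem:regular-pcp}) and composition (\Cref{thm:composition}) with the Reed--Muller dPCP of \Cref{lem:rm-dpcp}, tracking parameters exactly as you describe. The paper records the parameter evolution in two tables that match your Steps~2--5, and handles the soundness bookkeeping by choosing the inner soundness and regularization parameters as suitable powers of an initial $\delta'$, just as you suggest.
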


\begin{proof}
Let $\delta'$ be a function of $\delta$ that we will set later. Applying the polynomial time reduction in \Cref{lem:low-soundness-final} on $\varphi$ with  soundness parameter $\delta'$ and parameters $k,d$ chosen to be a large enough constants, we get a generalized label cover instance $\Psi'$ on the weighted graph $(X_{n'}(k),X_{n'}(\sqrt{k}))$ for $n'\leq n\log^{O(1)}n$, and alphabet $\Sigma$ satisfying that both $\log|\Sigma|$ and the decision complexity are at most $\log^{O(1)}n$. Note that the distribution over the left side of vertices in $\Psi'$ equals $\mu_k$, which is not uniform. The randomness complexity of sampling from $\mu_k$ is at most $\log_2(|X(d)|)+\log_2(\binom{d}{k})=\log_2(n\log^{O(1)}n)$, therefore we can replace the left side by putting in a vertex for every random string. Now, using the equivalence between generalized Label Cover and robust PCPs from \Cref{lem:lc-robust-equivalence} this gives us a robust PCP $P_\varphi$ for $3$SAT with the parameters in \Cref{table-rm1}.

Now we can conclude by applying the composition from \Cref{thm:composition} with the Reed-Muller based dPCP from \Cref{lem:rm-dpcp}. Our goal is to reduce the decision complexity of $P_\varphi$ to $\poly(\log\log\log n)$. To get some intuition of the parameters, note that one step of composition with the Reed-Muller dPCP roughly reduces the original decision complexity $D$ to roughly $\poly\log(D)$, while increasing the proof size by a factor of $\poly(D)$. Since the PCP $P_\varphi$ has a decision complexity of $\pl n$, if we apply the composition twice we will reduce the decision complexity to ${\sf{polylogloglog}} n$, while incurring a factor of $\pl n$ blow-up in size.

{\bf Regularizing:} before applying composition we must ensure that we have a regular robust PCP with constant-sized alphabet. The robust PCP $P_\varphi$ produced in Lemma~\ref{lem:low-soundness-final} may not have these properties. To remedy this we first apply \Cref{lem:regular-pcp} with the parameter $\delta'$ that regularizes the PCP and also reduces its alphabet while paying nominally in the proof length, to get the PCP verifier $P'_\varphi$. 

{\bf A first composition step with Reed-Muller based robust dPCP:} we apply composition, namely~\Cref{thm:composition}, with the parameter $\eps_1=\delta'^{c_1}$ for large enough $c_1 > 0$, with the decodable PCP $D_{\delta_1}$ from \Cref{lem:rm-dpcp} with soundness $\delta_1=\delta'^{c_2}$ for small enough $c_2 \in (0,1)$, to get the PCP verifier $V_1 = P'_\varphi \circledast D_{\delta_1}$ with soundness $\delta'\poly(\frac{1}{\delta_1})+\eps_1\poly(\frac{1}{\delta_1})+\delta_1$ which is $\delta'^c$ for some constant $c\in (0,1)$. The parameter evolution of both these operations is summarized below in \Cref{table-rm1}.

\vspace{-1ex}
\begin{table}[H]
\centering
\caption{Parameters after One Step of Composition with Reed-Muller dPCP.\vspace{-1ex}}\label{table-rm1}
\begin{tabular}{|c|c|c|c|}
\hline & $P_\varphi$ & $P'_\varphi$ & $V_1=P'_\varphi \circledast D_{\delta_1}$ \\
\hline
Proof Alphabet & $\exp(\pl n)$ & $\poly(1/\delta')$ & $\pl n$ \\
Randomness Complexity & $\log_2(n\pl n)$ & $\log(n\pl n)$ & $\log(n\pl n)$ \\
Query Complexity & $O(1)$ & $\pl n$ & $\poly(\frac{1}{\delta'})(\log\log n)^{O(\log(1/\delta'))}$ \\
Decision Complexity & $\pl n$ & $\pl n$ & $\poly(\frac{1}{\delta'})(\log\log n)^{O(\log(1/\delta'))}$ \\
Proof Length & $n\pl n$ & $n\pl n$ & $n\pl n$ \\
Robust Soundness Error & $\delta'$ & $2\delta'$ & $\delta'^{c}$ \\
\hline
\end{tabular}
\end{table}

{\bf A second composition step with Reed-Muller based robust dPCP:} we again apply the alphabet reduction procedure in \Cref{lem:regular-pcp} with the parameter $\delta'^c$ to get a regular robust PCP verifier $V'_2$. After that we apply one more step of composition, with the parameter $\eps_2=\delta'^{c_3}$, using the Reed-Muller dPCP $D_{\delta_2}$ with the soundness parameter $\delta_2=\delta'^{c_4}$. This gives us the regular robust PCP verifier $V_2=V'_2\circledast D_{\delta_2}$ with soundness error $\delta'^{c'}$ for some constant $c'\in (0,1)$. Setting $\delta'=\delta^{1/c'}$ finishes the proof. The parameter evolution is summarized below in \Cref{table-rm2}.

\vspace{-1ex}
\begin{table}[H]
\centering
\caption{Parameters after Second Step of Composition with Reed-Muller dPCP.\vspace{-1ex}}\label{table-rm2}
\begin{tabular}{|c|c|c|}
\hline & $V'_2$ & $V_2=V'_2 \circledast D_{\delta_2}$ \\
\hline
Proof Alphabet & $\poly(1/\delta')$ & $\poly(\frac{1}{\delta})(\log\log n)^{O(\log(1/\delta))}$ \\
Randomness Complexity & $\log(n\pl n)$ & $\log(n\pl n)$ \\
Query Complexity & $\poly(\frac{1}{\delta'})(\log\log n)^{O(\log(1/\delta'))}$ & $\poly(\frac{1}{\delta})(\log\log\log n)^{O(\log(1/\delta))}$ \\
Decision Complexity & $\poly(\frac{1}{\delta'})(\log\log n)^{O(\log(1/\delta'))}$ & $\poly(\frac{1}{\delta})(\log\log\log n)^{O(\log(1/\delta))}$ \\
Proof Length & $n\pl n$ & $n\pl n$ \\
Robust Soundness Error & $2\delta'^c$ & $\delta$ \\
\hline
\end{tabular}
\end{table}
\end{proof}

The PCP construction in~\Cref{lem:rm-comp} 
is size efficient and has a moderately small
alphabet size (which is not constant yet). 
Thus, we now apply another step of query reduction and composition, this time using the Hadamard code based dPCP as an inner PCP from~\Cref{lem:had-dpcp}.  The result of this process will be a robust PCP with constant query complexity and small soundness, which in the language of label cover corresponds to constant size alphabet and small soundness, thereby establishing~\Cref{thm:main}.

\begin{theorem}[\Cref{thm:main} restated]\label{thm:main-restated}
For all $\delta>0$, there exists $C>0$ and a 
polynomial time procedure such that given an instance 
$\varphi$ of $3$-SAT of size $n$ produces a label cover 
instance $\Psi$ with the following properties:
\begin{enumerate}
\item The size of $\Psi$ is at most $n\log^C n$ and the alphabet size of $\Psi$ is at most $\poly(1/\delta)$.
\item If $\varphi$ is satisfiable, then ${\sf val}(\Psi) = 1$.
\item If $\varphi$ is unsatisfiable, then ${\sf val}(\Psi)\leq \delta$.
\end{enumerate}
\end{theorem}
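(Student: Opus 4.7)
The plan is to take the size-efficient robust PCP supplied by \Cref{lem:rm-comp} and perform one final round of composition, this time using the Hadamard-based decodable PCP from \Cref{lem:had-dpcp}, to bring the query and decision complexity down to a constant depending only on $\delta$. The resulting robust PCP is then translated back into the label cover framework via \Cref{lem:lc-robust-equivalence}. The reason to use the Hadamard dPCP for this last step (rather than yet another Reed-Muller step) is that its proof length is $q^{O(S^2)}$, which is affordable precisely because the outer decision complexity $S$ coming out of \Cref{lem:rm-comp} is already as small as $(\log\log\log n)^{O(1)}$.

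First I would fix constants $c, c_1, c_2 \in (0,1)$ (to be determined by the soundness accounting below) and apply \Cref{lem:rm-comp} with soundness parameter $\delta' := \delta^{1/c}$, obtaining a regular robust PCP verifier $V_0$ for 3SAT with proof length $M_0 \le n (\log n)^{C_0}$, randomness complexity $R_0 \le \log n + C_0 \log\log n$, and query/decision complexity $S_0 \le (\log\log\log n)^{C_0}$, over some proof alphabet $\Sigma_0$. Next I would apply \Cref{lem:regular-pcp} with parameter $\eps := \delta'^{c_1}$ to regularize $V_0$ and reduce its proof alphabet to size $\poly(1/\delta')$, yielding a verifier $V_0'$ whose proof length, randomness and decision complexity are asymptotically the same as those of $V_0$ and whose robust soundness error is at most $2\delta'$. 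Then I would invoke \Cref{thm:composition} to compose $V_0'$ with the Hadamard-based decodable PCP $D_{\delta_1}$ from \Cref{lem:had-dpcp}, applied to instances of $\csat_{\Sigma_0}$ of size $O(S_0)$, with the inner soundness set to $\delta_1 := \delta^{c_2}$.

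Tracking parameters through the composition table in \Cref{thm:composition}: the inner proof alphabet has size $q = \poly(1/\delta)$; the inner proof length is $q^{O(S_0^2)} = (1/\delta)^{O((\log\log\log n)^{2C_0})}$, which is bounded by $(\log n)^{o(1)}$ since $\log(1/\delta)$ is a fixed constant; the inner randomness complexity is $O(S_0^2 \log q) = o(\log\log n)$; and the inner query and decision complexity are $q^{O(\log|\Sigma_0|)} = \poly(1/\delta)$, since $\log|\Sigma_0|$ is constant. Therefore the composed verifier $V := V_0' \circledast D_{\delta_1}$ has proof alphabet of size $\poly(1/\delta)$, proof length at most $2^{R_0} \cdot q^{O(S_0^2)} \le n (\log n)^{C_1}$, query and decision complexity $\poly(1/\delta)$, and robust soundness error at most $2\delta'\cdot \ell + 4\ell \eps + \delta_1$, where the list size $\ell = \poly(1/\delta)$ comes from \Cref{lem:had-dpcp}. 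By choosing $c, c_1, c_2$ small enough in terms of the exponents hidden in $\ell$, this soundness error can be bounded by $\delta$. Finally, \Cref{lem:lc-robust-equivalence} converts $V$ into a generalized label cover instance over an alphabet of size $\poly(1/\delta)$ and total size $n(\log n)^{C}$; the generalized instance is then reduced to a standard label cover instance (with the same parameters up to constants) by folding the constraints $\Sigma_L(U)$ into the projection maps.

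The main obstacle I expect is bookkeeping rather than conceptual: one must verify that the exponent $S_0^2$ appearing in the Hadamard proof length is small enough to keep the total proof size at $n \cdot \pl n$. This works only because the two prior rounds of Reed-Muller composition baked into \Cref{lem:rm-comp} have already reduced the outer decision complexity from $\pl n$ all the way down to $(\log\log\log n)^{O(1)}$; without both of those rounds, the $q^{O(S_0^2)}$ blow-up in the Hadamard step would destroy quasi-linearity. A secondary point to be careful about is that the list size $\ell$ of the Hadamard dPCP multiplies the outer soundness, so $\delta'$ must be chosen polynomially smaller than the desired final soundness $\delta$, which is accommodated by our choice $\delta' = \delta^{1/c}$ above.
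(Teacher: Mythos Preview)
Your proposal is correct and follows essentially the same route as the paper: start from the robust PCP of \Cref{lem:rm-comp}, regularize via \Cref{lem:regular-pcp}, compose once with the Hadamard dPCP of \Cref{lem:had-dpcp} using \Cref{thm:composition}, and translate back to label cover via \Cref{lem:lc-robust-equivalence}. One small arithmetic slip: after regularization the outer alphabet has size $\poly(1/\delta')$, so $\log|\Sigma_0|=\Theta(\log(1/\delta))$ and the inner query complexity $q^{O(\log|\Sigma_0|)}$ comes out to $(1/\delta)^{O(\log(1/\delta))}$ rather than $\poly(1/\delta)$ (the paper records exactly this in its parameter table); this is still $O_\delta(1)$, so the argument is unaffected.
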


\begin{proof}
Let $\delta'>0$ be a function of $\delta$ that we will set later. Given a $3$SAT instance $\varphi$, using \Cref{lem:rm-comp} we get a robust and regular PCP $P_\varphi$ with soundness $\delta'$. We first apply alphabet reduction using \Cref{lem:regular-pcp} with the parameter $\delta'$ to get a robust PCP $P''_\varphi$ with constant-sized alphabet. We then apply composition, with the parameter $\delta'^{c_1}$ for $c_1\in (0,1)$ chosen to be a large enough absolute constant, and then apply composition with the Hadamard-based decodable PCP from \Cref{lem:had-dpcp} with soundness $\delta_1$, denoted by $D_{\delta_1}$. This gives us a regular PCP $V$ with constant query complexity and robust soundness error $\delta$ by setting $\delta'=\delta^{c}$ for some absolute constant $c >0$. The evolution of parameters is summarized below in \Cref{table-had}.

\begin{table}[H]
\centering
\caption{Parameters after Composition with Hadamard dPCP.}\label{table-had}
\begin{tabular}{|c|c|c|}
\hline & $P'_\varphi$ & $V=P'_\varphi \circledast D_{\delta_1}$ \\
\hline
Proof Alphabet & $\poly(1/\delta')$ & $\poly(1/\delta)$ \\
Randomness Complexity & $\log(n\pl n)$ & $\log(n\pl n)$ \\
Query Complexity & $\poly(\frac{1}{\delta})(\log\log\log n)^{O(\log(1/\delta))}$ & $(1/\delta)^{O(\log1/\delta)}$ \\
Decision Complexity & $\poly(\frac{1}{\delta})(\log\log\log n)^{O(\log(1/\delta))}$ & $\poly(1/\delta)$ \\
Proof Length & $n\pl n$ & $n\pl n$ \\
Robust Soundness Error & $2\delta'$ & $\delta$ \\
\hline
\end{tabular}
\end{table}
Using \Cref{lem:lc-robust-equivalence} to view the robust PCP $V$ as a generalized label cover instance,   gives us the instance $\Psi$ as required. Note that this is a generalized label cover instance, with $|\Sigma_L|\leq O_\delta(1)$, but where for every vertex $U \in L$ the alphabet $\Sigma_L(U)$ might be a subset of $\Sigma_L$. To convert this to a usual Label cover instance, we can simply allow the left alphabet to be all of $\Sigma_L$, where for every $U \in L$ we fix a mapping $G_U: \Sigma_L \rightarrow \Sigma_L(U)$
with $G_U(\sigma)=\sigma$ for all $\sigma\in \Sigma_L(U)$, and interpret the prover's assignment $A(U)$ as the assignment $G_U(A(U))$. It is easy to see that the modified instance has the same soundness.
\end{proof}

\section*{Acknowledgements}
We thank Pavel Etingof for 
bringing us into
contact with Zhiwei Yun. We sincerely thank Zhiwei Yun for helpful
communication about 
Theorem~\ref{thm:cl} 
and for kindly agreeing
to write an appendix to
this paper showing 
that variants of the Chapman-Lubotzky complexes can be constructed with $q=\pl n$. We thank 
Shiva Chidambaram for helpful conversations 
about pro-$p$ groups, Karthik C.S. for pointing out the implication to set cover and Ryan O'Donnell for drawing our attention to the question of polynomial-time constructability of the variants of the Chapman-Lubotzky complexes.

Dor Minzer is supported by NSF CCF award 2227876, and NSF CAREER award 2239160. Nikhil Vyas is supported by a Simons Investigator Fellowship, NSF grant DMS-2134157, DARPA grant W911NF2010021, and DOE grant DE-SC0022199.

\bibliographystyle{alpha}
\bibliography{references}

\appendix
\addtocontents{toc}{\protect\setcounter{tocdepth}{2}}

\section{Proof of Theorem~\ref{thm:dp_intro}}\label{sec:dp-large-alph}
This section is devoted to the proof of
Theorem~\ref{thm:dp_intro} in the case of large alphabet $\Sigma$. We first recall how the proof proceeds in the case $\Sigma=\{0,1\}$. In that case, the argument consists of two parts. The first part is the work of~\cite{BafnaMinzer}, which shows that complexes that possess coboundary expansion support a direct product tester over $\Sigma=\{0,1\}$. The second part is the work~\cite{BLM24}, which constructs sufficiently good coboundary expanders using variants of the Chapman-Lubotzky complexes~\cite{ChapmanL}. Combining the two 
results gives Theorem~\ref{sec:dp-large-alph} in the case that $\Sigma = \{0,1\}$.  

The only part of the argument that changes for larger alphabets is the first one, and in particular~\cite[Theorem B.1]{BafnaMinzer}. Below we verify that the result of~\cite[Theorem B.1]{BafnaMinzer} in fact works for all $\Sigma$, and combined with~\cite{BLM24} this implies \Cref{thm:dp_intro}. 
The main result of this section is the following strengthening of~\cite[Theorem B.1]{BafnaMinzer}:
\begin{theorem}\label{thm:stronger-coboundary_formal}
There is $c>0$ such that for all $\delta>0$ there are $m,r\in\mathbb{N}$ such that for sufficiently large $k$, sufficiently large $d$ and $\gamma$ small enough function of $d$ the following holds. If a $d$-dimensional simplicial complex $X$ is a $\gamma$-spectral expander and $(m,r,2^{-o(r)},c)$ weak UG coboundary expander, then the direct product test over $X(k)$ with respect to every alphabet $\Sigma$ has soundness $\delta$. Namely, if $F\colon X(k)\to \Sigma^k$ passes the $(k,\sqrt{k})$ direct product tester with respect to $X$ with probability at least $\delta$, then there is $f\colon X(1)\to\Sigma$
such that 
\[
\Pr_{A\sim \mu_k}
[\Delta(F[A], f|_A)\leq \delta]\geq \poly(\delta).
\]
\end{theorem}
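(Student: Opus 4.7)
The plan is to follow the blueprint of~\cite[Theorem B.1]{BafnaMinzer} and verify that every step carries over to an arbitrary alphabet $\Sigma$. The Boolean argument naturally splits into two phases. Phase (i) is a spectral/combinatorial local-to-global step that, given a function $F\colon X(k)\to \Sigma^k$ passing the direct-product test with probability $\delta$, extracts an approximate consistency structure on low-dimensional faces of $X$. Phase (ii) then invokes the weak UG coboundary expansion to convert this consistency structure into a global assignment $f\colon X(1)\to \Sigma$.

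First I would observe that Phase (i) is essentially alphabet-agnostic. All the events it controls (e.g.\ the probability that $F[A]|_B = F[A']|_B$, or that $F[A]$ and $F[A']$ agree on a specific coordinate) are well-defined for any $\Sigma$, and the concentration, sampling, and swap-walk estimates rely only on the spectral expansion of $X$ and the Chernoff-type bounds for links from~\cite{BafnaMinzer,BLM24}. In particular, one obtains a ``popular local opinion'' on small subsets, and the fraction of $k$-faces $A$ whose restrictions to small subfaces disagree with the popular opinion is bounded exactly as in the Boolean case.

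Second, for Phase (ii), the key observation is that the weak UG coboundary expansion property, as defined in~\cite{BafnaMinzer,BLM24}, is already stated with respect to $S_m$-valued cochains (this is the role of the parameter $m$ in the hypothesis). Given an alphabet $\Sigma$ with $|\Sigma|\leq m$, I would fix an embedding $\Sigma\hookrightarrow [m]$ and define, for each ordered pair of intersecting $k$-faces $(A,A')$, a permutation $\sigma_{A,A'}\in S_m$ that agrees on the restriction $A\cap A'$ with the partial bijection induced by $F[A]|_{A\cap A'}\to F[A']|_{A\cap A'}$ (using a canonical completion outside the restriction). The probabilistic guarantees from Phase (i) translate into an approximate cocycle condition $\sigma_{A,A''}\approx \sigma_{A',A''}\sigma_{A,A'}$ that holds on a $(1-o(1))$-fraction of triples. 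Applying weak UG coboundary expansion then yields a labeling $g\colon X(1)\to S_m$ such that $\sigma_{A,A'}$ is approximately the coboundary of $g$, and projecting back through the embedding $\Sigma\hookrightarrow [m]$ produces the desired $f\colon X(1)\to\Sigma$. The quantitative agreement bound $\Pr_{A}[\Delta(F[A],f|_A)\leq \delta]\geq \poly(\delta)$ follows by the same averaging argument as in the Boolean case.

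The main obstacle will be verifying the approximate non-abelian cocycle condition for the $S_m$-valued cochain $\sigma_{A,A'}$: in the Boolean case cocycles are additive in $\mathbb{F}_2$, whereas for general $\Sigma$ they must compose multiplicatively in $S_m$, and one has to be careful that the ``canonical completion'' used to turn partial bijections into full permutations does not spoil the triangle identity. This is precisely the scenario that the weak UG formulation of coboundary expansion was designed to accommodate, so once the cocycle condition is verified the rest of the argument goes through with no conceptual change. For the sake of completeness I would carry out the simple bookkeeping needed to match the definitions used in~\cite{BafnaMinzer,BLM24} and confirm that the parameters $(m,r,2^{-o(r)},c)$ suffice to conclude the statement for the chosen $\delta$ and $k$.
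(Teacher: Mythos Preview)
Your proposal has a genuine gap stemming from a misreading of the role of the parameter $m$. In the theorem, $m$ is fixed as a function of $\delta$ \emph{before} the alphabet $\Sigma$ is quantified, and the conclusion must hold for \emph{every} $\Sigma$. You cannot therefore assume $|\Sigma|\le m$ and embed $\Sigma\hookrightarrow[m]$. In the argument of~\cite{BafnaMinzer}, the $S_m$-valued cochain does not act on the alphabet at all: the permutations $\sigma_{D,D'}$ match up the short \emph{lists} $L[D]$ and $L[D']$ of local candidate functions (of size $O(1/\delta^{12})$), not elements of $\Sigma$. Your description of $\sigma_{A,A'}$ as coming from a ``partial bijection induced by $F[A]|_{A\cap A'}\to F[A']|_{A\cap A'}$'' does not type-check in this framework.

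There is a second, more subtle issue with your claim that Phase~(i) is alphabet-agnostic. The reduction from $1\%$-agreement to $99\%$-list-agreement in~\cite[Lemma~B.2]{BafnaMinzer} relies on a dense-CSP subsampling lemma (\cite{AlonVKK02}) which, in its large-alphabet form (\cite{BarakHHS11}), requires $d\ge\mathrm{poly}\log|\Sigma|$. This is exactly why the paper first proves a \emph{moderate-alphabet} version (Lemma~\ref{lem:moderate-alph}) in which $d$ is allowed to depend on $|\Sigma|$, and only then removes the dependence. The removal step is the part you are missing entirely: the paper picks a large constant $R$ (depending on $\delta$), applies a random hash $h\colon\Sigma\to[R]$, runs the moderate-alphabet argument on $G_h=h\circ F$ to obtain $g\colon X(1)\to[R]$, and finally uses $g$ together with a $99\%$-regime analysis on the induced subcomplex $\{A:\Delta(G_h[A],g|_A)\le\delta_1\}$ to recover $f\colon X(1)\to\Sigma$. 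The hashing-then-lifting maneuver, not a direct embedding of $\Sigma$ into $[m]$, is the key idea that makes the result alphabet-independent.
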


\subsection{Direct Product Testing over Moderately Sized Alphabets}
In this section we prove Theorem~\ref{thm:dp_intro} for ``moderately large'' alphabet. 
By that, we mean that final result will 
depend on the alphabet size $|\Sigma|$. We use the notation from~\cite[Section 4]{BafnaMinzer}, 
and more specifically: the definition of agreement of a global function $g:[d]\rightarrow \Sigma$ with respect to $G:X(k)\rightarrow\Sigma^k$, denoted by $\agr_\nu(g,G)$, 
the definition of Unique-Games coboundary expansion, and the definition of list-agreement-testing.
We refer the reader
to~\cite[Section 4]{BafnaMinzer} for a formal
presentation of these notions.
\begin{lemma}\label{lem:moderate-alph}
There is $c>0$ such that for all $\delta>0$ there are  $m,r\in\mathbb{N}$ such that for all $R\in\N$, sufficiently large $k$, sufficiently large $d$ and $\gamma$ small enough function of $d$ the 
following holds. If a $d$-dimensional simplicial complex $X$ is a $\gamma$-spectral expander and $(m,r,\exp(-o(r)),c)$ weak UG coboundary expander, then the direct product test over $X(k)$ with respect to every alphabet $\Sigma$ with $|\Sigma|\leq R$ has soundness $\delta$. Namely, if $F\colon X(k)\to \Sigma^k$ passes the $(k,\sqrt{k})$ direct product tester with respect to $X$ with probability at least $\delta$, 
then there is $f\colon X(1)\to\Sigma$
such that 
\[
\Pr_{A\sim \mu_k}
[\Delta(F[A], f|_A)\leq \delta]\geq \poly(\delta).
\]
\end{lemma}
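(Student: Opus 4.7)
The plan is to mimic the proof of~\cite[Theorem B.1]{BafnaMinzer} which handles the case $\Sigma=\{0,1\}$, and to identify the one step where the alphabet enters and has to be replaced by a UG coboundary expansion argument. The Boolean proof has three phases: (i) a \emph{spectral propagation} phase which shows that if $F\colon X(k)\to\Sigma^k$ passes the $(k,\sqrt{k})$ direct product tester with probability $\delta$, then for most vertex-links $X_v$ the restriction of $F$ to $k$-faces containing $v$ passes a stronger agreement test inside $X_v$; (ii) a \emph{local extraction} phase, which applies the direct product theorem inductively inside each such link to extract a candidate local function $f_v\colon X_v(1)\to\Sigma$ consistent with $F$ on most $k$-faces containing $v$; and (iii) a \emph{global stitching} phase, which combines the family $\{f_v\}_v$ into a single $f\colon X(1)\to\Sigma$ by resolving the pairwise inconsistencies between $f_v|_{X_{vv'}(1)}$ and $f_{v'}|_{X_{vv'}(1)}$ using coboundary expansion.

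First I would verify that phases (i) and (ii) go through for arbitrary $\Sigma$ with no essential change. Phase (i) relies only on averaging, the spectral sampling lemma (\Cref{lem:sampling}), and the $\gamma$-local expansion of $X$, all of which are oblivious to the alphabet, so the same quantitative bounds hold. Phase (ii) proceeds inductively on the dimension $d$, and the induction hypothesis must be stated uniformly in the alphabet: ``for every alphabet $\Sigma$, the $(k,\sqrt{k})$ direct product test on any $(d-1)$-dimensional link has soundness $\delta$''. With this uniform statement in place, the inductive step only uses spectral expansion of the link and the outcome of phase (i) applied inside the link, and is insensitive to $|\Sigma|$. As a consequence, at the end of phases (i) and (ii) one has local functions $f_v$ on almost all vertex-links with the property that for most edges $(v,v')\in X(2)$, $f_v$ and $f_{v'}$ agree on most of $X_{vv'}(1)$ up to a global permutation of $\Sigma$.

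The main obstacle, and the step where the alphabet truly enters, is the global stitching in phase (iii). For $\Sigma=\{0,1\}$, the mismatch of $f_v$ and $f_{v'}$ on their common domain is a single bit (they are either essentially equal or essentially opposite), and ensuring global consistency reduces to showing that a certain $\mathbb{F}_2$-valued cochain on $X(2)$ is close to a coboundary, which is exactly what $\mathbb{F}_2$-coboundary expansion delivers. For general $\Sigma$, the mismatch between $f_v$ and $f_{v'}$ on their common domain is captured (on most pairs) by a permutation $\pi_{v,v'}\in\sym(\Sigma)$, and global consistency requires showing that the $\sym(\Sigma)$-valued cochain $\{\pi_{v,v'}\}_{(v,v')\in X(2)}$ which satisfies the cocycle condition $\pi_{v,v'}\pi_{v',v''}\approx \pi_{v,v''}$ on most triangles is close to a coboundary: there exist $\{g_v\in\sym(\Sigma)\}_{v\in X(1)}$ with $g_v g_{v'}^{-1}=\pi_{v,v'}$ for most edges $(v,v')$. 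This is precisely the content of weak UG coboundary expansion, whose quantitative parameters $(m,r,\exp(-o(r)),c)$ are chosen so as to absorb the noisy-cocycle errors produced by phases (i) and (ii). Given such a normalization $\{g_v\}_v$, one defines $f(v)=g_v^{-1}(f_v(v))$ to obtain the desired global function $f\colon X(1)\to\Sigma$; the final quantitative $\poly(\delta)$ agreement bound follows by tracking the losses through the three phases and choosing $\gamma$ sufficiently small as a function of $d$ to absorb the spectral error terms.
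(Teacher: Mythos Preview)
Your proposal does not match the paper's argument and contains a genuine conceptual gap.

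The paper does \emph{not} proceed by an induction on dimension with vertex-link extraction. Instead it follows~\cite{BafnaMinzer} by (a) reducing the $1\%$ agreement test to a $99\%$ \emph{list}-agreement test, producing for each top face $D\in X(d)$ a short list $L[D]$ of candidate functions $[d]\to\Sigma$; and (b) showing via weak UG coboundary expansion that these lists can be globally aligned. The permutations that arise are permutations of \emph{list indices} (the lists have size $O(1/\delta^{12})$), not of $\Sigma$; this is precisely why the coboundary parameter $m$ in the statement depends only on $\delta$ and not on $R$. The alphabet enters in a completely different place: to show the lists satisfy the list-agreement test, one needs that the maximum agreement of $F|_D$ with any global function does not jump under random restriction, which requires a subsampling lemma for dense CSPs over $[R]$. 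The paper replaces the Boolean result of~\cite{AlonVKK02} by the large-alphabet version of~\cite{BarakHHS11}, and this is the (only) step that forces $d\ge \poly\log R$.

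Your phase (iii) is the problematic step. You assert that the mismatch between $f_v$ and $f_{v'}$ is captured by some $\pi_{v,v'}\in\sym(\Sigma)$. But there is no mechanism in the direct product test that would make two local decodings differ by a permutation of $\Sigma$; if both $f_v$ and $f_{v'}$ are consistent with the same $F$ on the shared domain, they simply agree (mostly), so the cochain you write down is essentially trivial and the argument degenerates. The real obstruction in the $1\%$ regime is \emph{list ambiguity}: each local piece comes with several candidate functions, and one must pick one per piece consistently. Even if one reinterpreted your $\pi_{v,v'}$ accordingly, your use of $\sym(\Sigma)$-valued coboundary expansion would force $m\ge R$, directly contradicting the quantifier order in the lemma (``there are $m,r$ such that for all $R$'').

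In short: the alphabet dependence belongs in the subsampling step (phase (i)/(ii) in your language), not in the coboundary step; and the coboundary step acts on list-index permutations of size $\poly(1/\delta)$, not on $\sym(\Sigma)$.
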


Note here that $d$ is allowed to depend on the alphabet size, which means that $|\Sigma|$ cannot be arbitrarily large as a function of $n$ or $d$. The proof of Lemma~\ref{lem:moderate-alph} follows the argument in~\cite{BafnaMinzer} 
closely. That proof has two major components:
\begin{enumerate}
    \item The first of which is~\cite[Lemma B.2]{BafnaMinzer}, which reduces the $1\%$ agreement testing problem to the $99\%$-list-agreement testing.
    \item The second of which is~\cite[Lemma B.5]{BafnaMinzer}, which deduces
the soundness of the list-agreement test from coboundary expansion. 
\end{enumerate}
Below we explain how to adapt each one of
these components in the case of moderate size
alphabets.

\subsubsection{Modifying\texorpdfstring{~\cite[Lemma B.2]{BafnaMinzer}}{BM23, Lemma B.2}}
Their proof utilizes~\cite[Theorem 1]{AlonVKK02} that says that the value of a dense max-$k$-CSP is preserved up-to small additive factors under sub-sampling of the variables of the CSP. This result is proved only for CSPs over $\{0,1\}$. Below we use the generalization of the statement from~\cite{AlonVKK02} to larger alphabet due to~\cite{BarakHHS11}. We state it in a format analogous to~\cite[Theorem 2.3]{BafnaMinzer} and show how it follows from~\cite[Theorem 8.2]{BarakHHS11}.
\begin{lemma}\label{lem:alon-mod-alph}
For all $k,R \in \N$ and $d \geq \max(\poly(k),\poly\log R)$, consider a $k$-CSP $\Psi$ over the alphabet $[R]$ with $\binom{d}{k}$ constraints that each depend on a unique $k$-set of variables. Then 
\[\Pr_{Q \subset_{d/2} [d]}\left[|\val(\Psi|_Q) - \val(\Psi)| \leq \frac{1}{d^{1/8}}\right] \geq 1-O\left(\frac{1}{d^{1/8}}\right).\]
\end{lemma}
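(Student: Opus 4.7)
The plan is to reduce the statement directly to~\cite[Theorem 8.2]{BarakHHS11}, which is the alphabet-$R$ generalization of the Alon--de la Vega--Kannan--Karpinski sampling theorem~\cite{AlonVKK02}. That result says, roughly, that for a dense $k$-CSP $\Psi$ on $d$ variables over alphabet $[R]$, a random induced sub-CSP on a subset $Q$ of size $q$ approximates the value of $\Psi$ up to additive error $\eps$ with probability $1-\eta$, provided $q\geq q_0(k,R,\eps,\eta)$. The key point is that the dependence of $q_0$ on $R$ is only polylogarithmic: the sample-size bound one reads off from~\cite{BarakHHS11} is of the form $q_0 = \poly(k,1/\eps,\log(1/\eta),\log R)$, essentially because the analysis union-bounds over partial assignments to the sample, of which there are at most $R^{q}$.

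The concrete execution is as follows. First, I would set the error parameter to $\eps = \tfrac{1}{2}d^{-1/8}$ and the confidence parameter to $\eta = \tfrac{1}{2}d^{-1/8}$ in~\cite[Theorem 8.2]{BarakHHS11}. This yields a threshold sample size $q_0 = \poly(k,d^{1/8},\log d,\log R)$. Under the hypothesis $d \geq \max(\poly(k),\poly\log R)$ (with the hidden polynomials chosen large enough to match the dependencies above), we have $q_0 \leq d/2$, so a uniformly random $Q\subseteq[d]$ of size $d/2$ majorizes the sampling distribution used in~\cite{BarakHHS11} (or, equivalently, one may first sample a set of size $q_0$ inside $Q$ and apply the theorem there, concluding via monotonicity/averaging for the value). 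Applying the theorem yields
\[
\Pr_{Q\subseteq_{d/2}[d]}\!\left[\,|\val(\Psi|_Q)-\val(\Psi)|\leq d^{-1/8}\,\right] \geq 1 - O(d^{-1/8}),
\]
which is the desired conclusion.

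The only minor subtlety is bookkeeping: the statement in~\cite{BarakHHS11} is phrased for $k$-CSPs whose constraints are supported on all $k$-tuples of variables (with possibly zero weight), and one must verify that the induced sub-CSP on $Q$ has the same normalization convention as $\Psi$ so that the two values $\val(\Psi)$ and $\val(\Psi|_Q)$ live on the same scale. This is routine: either restrict attention to the $\binom{|Q|}{k}$ constraints whose variable sets lie inside $Q$ and normalize accordingly, or equivalently view the CSPs as $[0,1]$-valued functions on assignments and appeal to Hoeffding-plus-net arguments exactly as in~\cite{BarakHHS11}. No step requires a new idea beyond tracking the $\log R$ dependence in the sample-size bound, and the main (and only) ``obstacle'' is confirming that this dependence is indeed polylogarithmic in $R$ in~\cite{BarakHHS11}, so that the hypothesis $d\geq \poly\log R$ suffices.
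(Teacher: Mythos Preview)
There is a gap in how you invoke~\cite[Theorem~8.2]{BarakHHS11}. That result is a \emph{one-sided expectation} bound: for $|Q|\ge \poly(k,1/\eps)\log R$ one has $\E_{Q}[\val(\Psi|_Q)]\le \val(\Psi)+\eps$. It is not a two-sided high-probability statement with a failure parameter $\eta$, so the plan of setting $\eta=\tfrac12 d^{-1/8}$ and reading off the conclusion does not go through. Your ``monotonicity/averaging'' step for passing from sample size $q_0$ to $d/2$ is also not innocent in the high-probability sense: sampling $Q_0\subset Q$ and applying the theorem gives information about $\val(\Psi|_{Q_0})$, not about $\val(\Psi|_Q)$, and propagating this upward would itself require an argument of the kind you are trying to avoid.

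The paper fills exactly this gap. The lower direction $\val(\Psi|_Q)\ge \val(\Psi)-\zeta$ (with $\zeta=d^{-1/4}$) holds with high probability by an elementary argument: the optimal assignment $f$ for $\Psi$ restricts to an assignment for $\Psi|_Q$, and the fraction of constraints it satisfies concentrates under subsampling by Lemma~\ref{lem:sampling}. For the upper direction they use~\cite[Theorem~8.2]{BarakHHS11} only for the expectation bound $\E_Q[\val(\Psi|_Q)]\le \val(\Psi)+d^{-1/2}$, and then run a first-moment argument: writing $p=\Pr_Q[\val(\Psi|_Q)\ge \val(\Psi)+\sqrt\zeta]$, the high-probability lower bound forces
\[
\E_Q[\val(\Psi|_Q)]\ \ge\ p\,(\val(\Psi)+\sqrt\zeta)\;+\;\bigl(1-p-\tfrac{k}{\sqrt d}\bigr)\,(\val(\Psi)-\zeta),
\]
and comparing with the expectation upper bound yields $p=O(\sqrt\zeta)=O(d^{-1/8})$. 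This expectation-to-high-probability conversion is the step your outline is missing.
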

\begin{proof}
Let $\zeta=1/d^{1/4}$. We start by proving that $\val(\Psi|_Q)$ is at least $\val(\Psi)-\zeta$ with high probability. Let $f$ be the maximizer of $\val(\Psi)$. Using Lemma~\ref{lem:sampling}, with the set of constraints $B \subseteq \binom{[d]}{k}$ that $f$ satisfies, we get that,
\[\Pr_{Q \subset_{d/2} [d]}[\val(\Psi|_Q) \leq \val(\Psi)-\zeta]\leq \frac{k\val(\Psi)}{d\zeta^2}\leq \frac{k}{\sqrt{d}}.\]
For proving the other direction, let $p$ denote $\Pr_{Q \subset_{d/2} [d]}[\val(\Psi|_Q)\geq \val(\Psi)+\sqrt{\zeta}]$. Combining with the above equation, we can lower bound the expectation of $\val(\Psi|_Q)$ as follows,
\[\E_{Q \subset_{d/2}[d]}\left[\val(\Psi|_Q)\right] \geq p(\val(\Psi)+\sqrt{\zeta})+\left(1-\frac{k}{\sqrt{d}}-p\right)(\val(\Psi)-\zeta).\]
To upper bound the expectation we use~\cite[Theorem 8.2]{BarakHHS11} which asserts that
\[\E_{Q\subset_{d/2} [d]}[\val(\Psi|_Q)] \leq  \val(\Psi)+\frac{1}{\sqrt{d}},\]
in the parameter regime under consideration here.
Combining the lower and upper bound on the expectation and solving for $p$ we get that, $p \leq O(\sqrt{\zeta})$ as required.     
\end{proof}

Analogously to~\cite[Lemma 4.12]{BafnaMinzer}, Lemma~\ref{lem:alon-mod-alph} implies the following result (we omit the formal proof).
\begin{lemma}\label{lem:no-jump-moderate-alph}
For all $k\in \N$, alphabets $\Sigma$, $d \geq \max(\poly(k),\poly\log |\Sigma|)$, and all functions $G: \binom{[d]}{k} \rightarrow \Sigma^k$ that satisfy $\agr_{t}(g,G) \leq \alpha$ for all $g\colon [d]\to\Sigma$, the following holds:
\[\Pr_{B\subseteq_{d/2} [d]}
\left[\max_g \agr_t(g|_B,G|_B) < \alpha+\frac{1}{d^{1/8}}\right] \geq 1-O\left(\frac{1}{d^{1/8}}\right).\]
\end{lemma}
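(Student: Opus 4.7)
The plan is to view the agreement problem as a max-$k$-CSP over $\Sigma$ and apply Lemma~\ref{lem:alon-mod-alph} directly, exactly mirroring the proof of~\cite[Lemma 4.12]{BafnaMinzer} but with the alphabet-generalized sub-sampling result in place of the Boolean Alon-VKK theorem.

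First I would define the max-$k$-CSP $\Psi$ over alphabet $\Sigma$ whose variable set is $[d]$ and whose $\binom{d}{k}$ constraints are indexed by the $k$-subsets $A \in \binom{[d]}{k}$; the constraint on $A$ is satisfied by an assignment $g\colon [d]\to \Sigma$ iff $\Delta(g|_A, G(A)) \le t$. By construction, $\val(\Psi) = \max_g \agr_t(g, G)$, which by hypothesis is at most $\alpha$. For any subset $B \subseteq [d]$, the restricted CSP $\Psi|_B$ consists of the constraints indexed by $A \in \binom{B}{k}$ on variables in $B$, so $\val(\Psi|_B) = \max_{g'\colon B \to \Sigma}\agr_t(g', G|_B) = \max_g \agr_t(g|_B, G|_B)$, where the final equality uses that extending $g'$ arbitrarily to $[d]$ does not affect its restriction to $B$.

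Now apply Lemma~\ref{lem:alon-mod-alph} with $Q = B$: the hypothesis $d \ge \max(\poly(k), \poly\log|\Sigma|)$ is exactly what we assumed, and each constraint of $\Psi$ depends on a unique $k$-subset. The lemma then gives
\[
\Pr_{B\subseteq_{d/2}[d]}\left[|\val(\Psi|_B) - \val(\Psi)| \le \tfrac{1}{d^{1/8}}\right] \ge 1 - O\!\left(\tfrac{1}{d^{1/8}}\right).
\]
Restricting to the upper-bound direction and using $\val(\Psi) \le \alpha$ yields $\max_g \agr_t(g|_B, G|_B) < \alpha + d^{-1/8}$ with the claimed probability.

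Since essentially all the content is packaged into Lemma~\ref{lem:alon-mod-alph}, there is no real obstacle in this step; the only thing to verify is that the CSP-reformulation is faithful in the non-Boolean setting, which is immediate because $\agr_t$ only depends on pointwise equality of $\Sigma$-valued symbols. The genuinely new ingredient for moderate alphabets was the upper bound $\E_{Q}[\val(\Psi|_Q)] \le \val(\Psi) + 1/\sqrt{d}$ from~\cite[Theorem 8.2]{BarakHHS11}, which replaces the Boolean Alon-VKK input and was already exploited in the proof of Lemma~\ref{lem:alon-mod-alph}.
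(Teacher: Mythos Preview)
Your proposal is correct and is exactly the approach the paper has in mind: the paper states that the lemma follows from Lemma~\ref{lem:alon-mod-alph} analogously to~\cite[Lemma 4.12]{BafnaMinzer} and omits the formal proof. Your reformulation of $\agr_t$ as the value of a max-$k$-CSP over $\Sigma$ with one constraint per $k$-set, followed by a direct application of Lemma~\ref{lem:alon-mod-alph}, is precisely this omitted argument.
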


We remark that the dependence of $d$ on the alphabet size in the above lemma is ultimately why \Cref{lem:moderate-alph} only works for moderately sized alphabet. Using this statement we get the following lemma that reduces the 1\% agreement test to the 99\% list agreement test.
\begin{lemma}\label{lem:agr-to-list-agr-improved-cbdry}
For all $\delta>0$, all alphabets $\Sigma$,
for sufficiently large $k,d \in \N$,
sufficiently small $\gamma$ compared to 
$d$, some $i \in [1/\delta^{80}]$, and $\tau = 1/\text{tower}_{i}(1/\delta)$, the following holds. Suppose that $X$ is a $d$-dimensional simplicial complex which is a $\gamma$-spectral expander, and $F: X(k) \rightarrow \Sigma^k$ passes the $(k,\sqrt{k})$-agreement-test with probability $\delta$. Then, there exists lists $(L[D])_{D \in X(d)}$ satisfying: 
\begin{enumerate}
\item 
\textbf{Short, non-empty lists:} With probability $1-O(\tau)$ over the choice of $D\sim X(d)$, the list $L[D]$ is non-empty and has size at most $O(1/\delta^{12})$.
\item  
\textbf{Good agreement:} For all $D\in X(d)$ and every $f \in L[D]$, we have that $\agr_\nu(f, F|_D) \geq \Omega(\delta^{12})$ for $\nu = 1/k^{\Omega(1)}$.
\item 
\textbf{Distance in the lists:} With probability at least $1-O(\tau)$ over the choice of $D\sim X(d)$, the list $L[D]$ has distance at least $\Omega(1/\log(1/\tau))$.
\end{enumerate}
Furthermore the lists above pass the List-Agreement-Test with parameter $\Theta(\tau)$, with probability $1-\tau$. 
\end{lemma}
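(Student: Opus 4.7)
The plan is to adapt the proof of~\cite[Lemma B.2]{BafnaMinzer} essentially verbatim, replacing each invocation of their binary-alphabet sub-sampling tool with the more general~\Cref{lem:no-jump-moderate-alph}. Since $d$ is chosen sufficiently large (in particular $d \geq \mathrm{poly}\log |\Sigma|$ as demanded by~\Cref{lem:no-jump-moderate-alph}), the replacement is valid and the remaining structure of the argument goes through with the same parameter evolution as in the binary case.

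Concretely, the list construction proceeds iteratively. Initialize for each $D \in X(d)$ the list $L_0[D] = \{g : D \to \Sigma \mid \agr_\nu(g, F|_D) \geq \Omega(\delta^{12})\}$ with $\nu = 1/k^{\Omega(1)}$; a standard packing argument (two $\nu$-close functions at agreement level $\Omega(\delta^{12})$ cannot both be fit without overlap) bounds $|L_0[D]| \leq O(1/\delta^{12})$. Using the assumption that $F$ passes the agreement test with probability $\delta$ together with the spectral expansion of $X$, deduce that $L_0[D]$ is nonempty on all but an $O(\tau_0)$-fraction of $D$ for some initial $\tau_0$.

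Next, iterate a refinement step indexed by $i \in [1/\delta^{80}]$: at step $i$ define $\tau_i = 1/\mathrm{tower}_i(1/\delta)$ and test whether the current lists $L_i[D]$ have distance $\Omega(1/\log(1/\tau_i))$ with probability at least $1 - \tau_i$. If so, terminate and output $L_i$. Otherwise, a large fraction of $D$ admit two close elements in $L_i[D]$, and~\Cref{lem:no-jump-moderate-alph} combined with a combinatorial pigeonhole argument produces a refined collection $L_{i+1}[D]$ that either achieves a strictly higher agreement threshold or strictly fewer equivalence classes. A potential-function argument, identical to the one in~\cite{BafnaMinzer}, bounds the number of iterations by $1/\delta^{80}$. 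The ``furthermore'' clause then follows because~\Cref{lem:no-jump-moderate-alph} guarantees that restriction $D \to B \subset D$ preserves the lists up to small perturbation, so combined with local spectral expansion of $X$, on a random pair $(D,D')$ of top faces intersecting in the right dimension there exist $g \in L[D]$ and $g' \in L[D']$ agreeing on $D \cap D'$; the distance property makes this choice essentially unique, giving the desired list-consistency.

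The main obstacle lies in tracking all parameter dependencies through the iterative refinement, particularly the interaction between the alphabet size $|\Sigma|$ and the dimension $d$. \Cref{lem:no-jump-moderate-alph} requires $d \geq \mathrm{poly}\log|\Sigma|$, and we must verify that none of the intermediate operations implicitly inflates this demand across all $1/\delta^{80}$ iterations. Since the alphabet remains $\Sigma$ throughout and only threshold and distance parameters change across iterations, this is met once $d$ is chosen sufficiently large as a function of $|\Sigma|$ and $\delta$ at the outset. A secondary subtlety is that the initial non-emptiness of $L_0[D]$ must be derived from the agreement test via a down-up walk argument that is identical in structure to the binary case; here again the key input is the no-jump behaviour under sub-sampling, supplied by~\Cref{lem:no-jump-moderate-alph}.
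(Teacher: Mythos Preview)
Your high-level plan is correct and matches the paper: follow \cite[Lemma B.2]{BafnaMinzer} verbatim and replace the binary-alphabet sub-sampling lemma \cite[Lemma 4.12]{BafnaMinzer} by \Cref{lem:no-jump-moderate-alph}. However, your sketch mislocates where that replacement is actually needed. The paper's point is that the construction of the lists $L[D]$ and the verification of properties (1)--(3) in \cite{BafnaMinzer} are obtained by applying the \cite{DinurG08} direct-product result inside each top face $D$ (which is a complete complex), and the \cite{DinurG08} result is already alphabet-independent---so nothing in that portion of the argument requires modification. The \emph{only} place the binary restriction enters in \cite{BafnaMinzer} is in establishing the ``furthermore'' clause, i.e.\ that the lists pass the list-agreement test, and that is the single invocation of \cite[Lemma 4.12]{BafnaMinzer} that gets swapped for \Cref{lem:no-jump-moderate-alph}. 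Your iterative description instead routes the no-jump lemma through the list-refinement steps themselves, which is unnecessary and not how the original argument is structured.

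Relatedly, your proposed initialization $L_0[D]=\{g:\agr_\nu(g,F|_D)\ge\Omega(\delta^{12})\}$ together with the stated packing bound does not work as written: for large $\Sigma$ this set can be enormous, since many mutually $\nu$-close functions can all have high agreement with $F|_D$ (the packing argument only bounds the number of \emph{pairwise-far} high-agreement functions). The lists in \cite{BafnaMinzer} are those output by the \cite{DinurG08} list decoder, which are short by construction and are produced without any appeal to the sub-sampling lemma.
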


\begin{proof}
In the proof of~\cite[Lemma B.2]{BafnaMinzer}, the direct product testing result for the complete complex from~\cite{DinurG08} is used to get a list of functions $L[D]$, such that for most $D \sim X(d)$ the lists satisfy properties (1), (2) and (3) from the lemma statement. The result of~\cite{DinurG08} is alphabet independent hence this part of the proof ports over easily. To show that these lists are consistent with each other, i.e. they pass the list-agreement-test, they use~\cite[Lemma 4.12]{BafnaMinzer} that asserts that the maximum agreement that $G$ has with any global function $g:[d]\rightarrow\{0,1\}$ doesn't increase after restricting to a random subset $B\subset_{d/2} D$ with high probability. We replace that
invocation with the analogous statement for large alphabet-- namely~\Cref{lem:alon-mod-alph} in place of ~\cite[Lemma 4.12]{BafnaMinzer}, gives us that the list-agreement test passes 
with probability $1-\tau$.
\end{proof}

\subsubsection{Modifying\texorpdfstring{~\cite[Lemma B.5]{BafnaMinzer}}{[BM23, Lemma B.5]}}
We now explain how the analysis of the list
agreement testing problem is reduced to 
coboundary expansion, again following the 
argument in~\cite{BafnaMinzer} closely.
\begin{lemma}\label{lem:list-agr-test-improved-cbry}
Assume there exists a collection of lists $\{L[D]\}_{D \in X(d)}$ that satisfy the premise of Lemma~\ref{lem:agr-to-list-agr-improved-cbdry}, and assume that $X$ is a $\gamma$-spectral expander for $\gamma < 1/\poly(d)$ and a weak $(O(1/\delta^{12}),t, \exp(-o(t)), c)$ UG coboundary expander for $t=\Theta(\text{tower}_{i-1}(1/\delta)^2)$.
Then there exists $G: X(1) \rightarrow \Sigma$ such that
\[
\Pr_{D \sim X(d)}\left[\Delta(G(D),L[D]) \leq \delta/3\right] \geq 1-O(c^{1/2} + \exp(-\sqrt{t}) + \gamma).\]
\end{lemma}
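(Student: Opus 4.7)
\textbf{Proof proposal for Lemma~\ref{lem:list-agr-test-improved-cbry}.}
The plan is to adapt the proof of~\cite[Lemma B.5]{BafnaMinzer} verbatim, with the single conceptual change that all $\mathbb{F}_2$-valued cochains are replaced by cochains valued in $\mathrm{Sym}(\Sigma)$, and Boolean coboundary expansion is replaced by the weak UG coboundary expansion hypothesis. The point is that the Boolean proof never uses the specific group structure of $\mathbb{F}_2$ beyond the coboundary-expansion input, which is precisely why UG coboundary expansion is the correct abstraction in the non-Boolean setting.

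First, I would select a representative $\phi_D\in L[D]$ for each $D\in X(d)$ (resolving ties arbitrarily and using the fact, from the premise, that the lists are non-empty and short with probability $1-O(\tau)$). For an edge $(D_1,D_2)$ in the swap walk at the top level, the distance-in-lists property plus the assumption that the list-agreement test passes with probability $1-\tau$ imply that for almost all such pairs, the representatives $\phi_{D_1}$ and $\phi_{D_2}$ are related by a unique permutation $\pi_{D_1,D_2}\in\mathrm{Sym}(\Sigma)$ on the intersection $D_1\cap D_2$ — unique by the $\Omega(1/\log(1/\tau))$ distance condition. This defines a partial $1$-cochain with values in $\mathrm{Sym}(\Sigma)$ on the top-level face system of $X$.

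Second, I would verify that this $1$-cochain has small coboundary. The key observation is that whenever three top faces $D_1,D_2,D_3$ sit consistently (all representatives lying in the appropriate lists and satisfying pairwise agreement on intersections), the composition $\pi_{D_3,D_1}\pi_{D_2,D_3}\pi_{D_1,D_2}$ is forced to fix almost all of $D_1\cap D_2\cap D_3$, and then is the identity by uniqueness. Using the list-agreement-test guarantee, the spectral expansion of $X$, and standard sampling on links (\Cref{lem:sampling}), the fraction of ``bad'' triples can be bounded by $O(\tau+\gamma)$. Applying weak UG coboundary expansion of $X$ with parameters $(O(1/\delta^{12}),t,\exp(-o(t)),c)$, I obtain $0$-cochain ``corrections'' $\psi_D\in\mathrm{Sym}(\Sigma)$ for most $D\in X(d)$ such that $\pi_{D_1,D_2}=\psi_{D_2}\psi_{D_1}^{-1}$ on a $1-O(c^{1/2}+\exp(-\sqrt{t}))$ fraction of pairs.

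Third, I would define the global function $G\colon X(1)\to\Sigma$ by, roughly, $G(v)=\psi_D^{-1}(\phi_D(v))$ for any $D\ni v$; the coboundary decomposition ensures this is well-defined up to a global permutation for most $v$ (one fixes the ambiguity by picking a reference face). A final spectral/sampling argument, together with the ``good agreement'' property $\agr_\nu(\phi_D,F|_D)\ge\Omega(\delta^{12})$, shows that $\Delta(G|_D, L[D])\le\delta/3$ with probability $1-O(c^{1/2}+\exp(-\sqrt{t})+\gamma)$.

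The main obstacle is bookkeeping: unlike the Boolean setting, permutation-valued cochains do not enjoy a linear structure, so compositions must be tracked carefully and the bad events (failures of uniqueness, failures of sampling on links, failures of the UG-coboundary decomposition) must be combined by a union bound that genuinely matches the parameters in the statement. However, no new combinatorial idea is required beyond what is already in~\cite{BafnaMinzer}: the Boolean argument is essentially $\mathrm{Sym}(\{0,1\})$-valued, and the analysis extends mutatis mutandis once UG coboundary expansion is invoked in place of ordinary coboundary expansion.
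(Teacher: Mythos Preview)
There is a genuine conceptual gap in your proposal: you have misidentified the group in which the $1$-cochain takes values. The permutations that arise in the list-agreement setting are permutations of \emph{list indices}, not of the alphabet $\Sigma$. Concretely, the list-agreement test asserts that for most pairs $(D_1,D_2)$ there is a bijection between the (short) lists $L[D_1]$ and $L[D_2]$ such that matched list elements are close on $D_1\cap D_2$; the distance-in-lists property guarantees this bijection is unique. Hence the cochain takes values in $S_m$ with $m=O(1/\delta^{12})$ --- this is exactly the first parameter in the weak UG coboundary expansion hypothesis, and it is why this entire step is \emph{alphabet-independent} rather than requiring any modification for general~$\Sigma$. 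Your proposed relation ``$\phi_{D_1}$ and $\phi_{D_2}$ are related by a unique $\pi_{D_1,D_2}\in\Sym(\Sigma)$ on the intersection'' does not make sense: two $\Sigma$-valued functions that are close on an intersection are simply close, not intertwined by an alphabet permutation.

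Correspondingly, the paper's argument does not attempt to define $G$ directly from the coboundary decomposition. Instead, UG coboundary expansion yields $0$-cochains $\psi_D\in S_m$ that consistently select one element $f_D\in L[D]$ from each list, so that $f_{D_1}$ and $f_{D_2}$ agree on intersections for a $1-O(\cdot)$ fraction of pairs. This is precisely the $99\%$-agreement testing problem on $X$, and the paper finishes by invoking the result of~\cite{DinurK17} (which holds for arbitrary~$\Sigma$) to extract the global function~$G$. Your formula $G(v)=\psi_D^{-1}(\phi_D(v))$ is not well-typed once $\psi_D$ lives in $S_m$ rather than $\Sym(\Sigma)$, and even after the correct alignment one still needs the~\cite{DinurK17} step to pass from locally consistent $f_D$'s to a single $G\colon X(1)\to\Sigma$.
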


\begin{proof}
The proof uses the UG coboundary expansion of $X$ to reduce the 99\% list agreement testing problem to the 99\%-agreement testing problem on HDX. This part of the proof is alphabet-independent. It then uses the result of~\cite{DinurK17} that showed soundness for the 99\%-agreement test on HDX to get a global function $G$ that agrees with some element of the lists on $D$ with high probability. It is easy to verify that the~\cite{DinurK17} result holds for any alphabet $\Sigma$, therefore in the large alphabet case too we get a global function $G$ as required,
finishing the proof of the lemma.
\end{proof}

We can now prove Lemma~\ref{lem:moderate-alph}.
\begin{proof}[Proof of Lemma~\ref{lem:moderate-alph}]
Combining Lemma~\ref{lem:agr-to-list-agr-improved-cbdry} and Lemma~\ref{lem:list-agr-test-improved-cbry} we get that there are lists $L[D]$ as in the former lemma and a function $G\colon X(1)\to\Sigma$ as in the latter lemma. Sampling $D\sim X(d)$ we get that with probability at least $1/2$ we have that $G|_{D}$ is $\delta/3$-close to some $f\in L[D]$. Conditioned on
that, with probability at least $\Omega(\delta^{12})$ we have that $\Delta(f|_{K}, F[K])\leq \nu$ 
and $\Delta(f|_{K}, G|_{K})\leq 2\delta/3-\nu$, 
in which case $\Delta(G|_{K}, F[K])\leq \delta$, 
as required.
\end{proof}

\subsection{Direct Product Testing over All Alphabets}
In this section we complete the proof of \Cref{thm:stronger-coboundary_formal} by a reduction to \Cref{lem:moderate-alph}. We start with a simple claim that relates the singular value of a large induced subgraph in terms of the singular value of the full graph.

\begin{claim}\label{claim:induced-sval}
Let $G=(V,E)$ be a graph and $\cR:L^2(V) \rightarrow L^2(V)$ be a random walk with stationary distribution $\mu$ over $V$ and second singular value $\sigma$. Then for all subgraphs $H \subseteq V(G)$ with $\mu(H) \geq 2\sigma$, the random walk $\cR_H: L^2(H) \rightarrow L^2(H)$  defined as $\cR$ conditioned on $H$, i.e. 
\[
\cR_H f(v)=\E_{u\sim \cR(v)}[f(u)|u \in H],
\]
has singular value at most $O(\sigma/\mu(H))$. 
\end{claim}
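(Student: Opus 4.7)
The plan is to prove the claim by an operator-theoretic factorization that reduces the spectral estimate for $\cR_H$ to the spectral gap of $\cR$. First, I would introduce the inclusion operator $I_H\colon L^2(H,\mu_H)\to L^2(V,\mu)$ given by extension by zero, where $\mu_H(v)=\mu(v)/\mu(H)$ is the conditional measure, together with the restriction operator $\Pi_H\colon L^2(V,\mu)\to L^2(H,\mu_H)$ given by $(\Pi_H g)(v)=g(v)$ for $v\in H$. A direct computation with the two inner products yields $\|I_H\|=\sqrt{\mu(H)}$ and $\|\Pi_H\|\leq 1/\sqrt{\mu(H)}$, both factors coming from the measure mismatch between $\mu$ and $\mu_H$.

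The key analytical step is to study the composed operator $T:=\Pi_H\cR I_H$ on $L^2(H,\mu_H)$. If $f\in L^2(H,\mu_H)$ has mean zero under $\mu_H$, then $\tilde f:=I_H f$ has mean zero under $\mu$, since $\E_\mu[\tilde f]=\mu(H)\E_{\mu_H}[f]=0$, and the spectral gap assumption on $\cR$ gives $\|\cR\tilde f\|_\mu\leq\sigma\|\tilde f\|_\mu$. Chaining with the norm bounds on $I_H$ and $\Pi_H$, the $\sqrt{\mu(H)}$ factors cancel and one obtains $\|Tf\|_{\mu_H}\leq\sigma\|f\|_{\mu_H}$ on the mean-zero subspace. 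I would then identify $T$ with a rescaling of $\cR_H$: a direct expansion gives $(Tf)(v)=\sum_{u\in H}P(v,u)f(u)=p_v\cdot(\cR_H f)(v)$ for $v\in H$, where $p_v:=\Pr_{u\sim\cR(v)}[u\in H]$, so $T=D_p\cR_H$ with $D_p$ the diagonal multiplication by $p_v$, and hence $\cR_H=D_p^{-1}T$. Provided $p_v\geq\mu(H)/2$ uniformly on $H$, this immediately gives $\sigma_2(\cR_H)\leq 2\sigma/\mu(H)$, which is the target bound.

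The main obstacle I expect is the lower bound $p_v\gtrsim\mu(H)$, which is not automatic. I would obtain it from the spectral gap applied to the centered indicator $g=\one_H-\mu(H)\cdot\one$: a direct computation gives $\cR g(v)=p_v-\mu(H)$, so $\sum_v\mu(v)(p_v-\mu(H))^2\leq\sigma^2\mu(H)(1-\mu(H))$, and Chebyshev's inequality yields $\mu(B)\leq 4\sigma^2/\mu(H)\leq 2\sigma$ for the bad set $B:=\{v\in H:p_v<\mu(H)/2\}$, using the hypothesis $\mu(H)\geq 2\sigma$. To handle the nonempty $B$, I would perturb $\cR_H$ by replacing its rows indexed by $B$ with identity rows; this perturbation has spectral norm of order $\sqrt{\mu(B)/\mu(H)}=O(\sigma/\mu(H))$, which is compatible with the target bound. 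Applying the preceding argument to the perturbed operator and combining with the perturbation bound then yields the claimed estimate on $\cR_H$ up to an absolute constant. I expect this final perturbative accounting to be the technically delicate step, but only standard spectral perturbation bounds are needed.
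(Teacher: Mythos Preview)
Your factorization $\cR_H=D_p^{-1}T$ is a genuinely different route from the paper's, and the perturbation step you flag as ``technically delicate'' is in fact where the argument breaks. Replacing the rows of $\cR_H$ indexed by $B$ with identity rows does \emph{not} give a perturbation of spectral norm $O(\sqrt{\mu_H(B)})$: each row of $\cR_H$ is a probability vector in $\ell^1$, but its norm as an element of $L^2(H,\mu_H)$ depends on where the mass lands and can be as large as $(\min_{u\in H}\mu_H(u))^{-1/2}$. Concretely, if some $v\in B$ sends almost all its $\cR_H$-mass to a single vertex $u$ of small $\mu_H$-weight, then testing against the unit vector $e_u/\sqrt{\mu_H(u)}$ shows the perturbation has norm of order $\sqrt{\mu_H(v)/\mu_H(u)}$, which need not be small. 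So the chain $\sigma_2(\cR_H)\le\sigma_2(\tilde\cR_H)+\|E\|$ does not close with the bound you want.

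The paper sidesteps the pointwise control of $p_v$ entirely by working with the \emph{bilinear form} rather than the operator norm. For $f\perp\one$ on $H$ and $\tilde f$ its zero-extension, one has
\[
\langle f,\cR_H f\rangle \;=\; \frac{\E_{(u,v)\sim\cR}[\tilde f(u)\tilde f(v)]}{\Pr_{(u,v)\sim\cR}[u\in H,\,v\in H]},
\]
because $\tilde f$ vanishes outside $H$; a single application of the expander mixing lemma gives $\Pr[u,v\in H]\geq\mu(H)^2-\sigma\mu(H)\geq\mu(H)^2/2$, and the numerator is at most $\sigma\|\tilde f\|_\mu^2$ by the spectral gap of $\cR$. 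Dividing, one gets $\langle f,\cR_H f\rangle\leq\frac{2\sigma}{\mu(H)}\langle f,f\rangle$ directly. The point is that the quadratic-form calculation only sees the \emph{aggregate} probability $\Pr[u,v\in H]=\sum_v\mu(v)p_v$, never the individual $p_v$, so no exceptional-set argument is needed. If you want to salvage an operator-norm statement from your factorization, you would need a different way to control the contribution of $B$; the quadratic-form route is the clean fix.
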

\begin{proof}
We know that the second singular value of $\cR$ is $\sigma_2(\cR) = \sup_{f \perp \Vec{1}}\frac{\ip{f,\cR f}}{\ip{f,f}}$ and the same holds for $\cR_H$. Let $f \perp \Vec{1} \in L^2(H)$ be a vector achieving $\sigma_2(\cR_H)$ and define $\wt{f} \in L^2(G)$ as the vector which is equal to $f$ on $H$ and $0$ outside it. Note that $\wt{f}$ is also perpendicular to $\Vec{1}$.

Let $(u,v) \sim E(G)$ and $(u,v)\sim E(H)$ denote an edge picked according to the random walk $\cR$ and $\cR_H$ respectively. We have that,
\begin{align*}
\sigma_2(\cR_H)=\frac{\ip{f,\cR_H f}}{\ip{f,f}}&=\frac{\E_{(u,v)\sim E(H)}[f(u)f(v)]}{\E_{u \sim \pi_H}[f(u)^2]}\\
&=\frac{\E_{(u,v)\sim E(G)}[\wt{f}(u)\wt{f}(v)|u\in H,v\in H]}{\E_{u \sim \pi}[\wt{f}(u)^2|u \in H]}\\
&=\frac{\E_{(u,v)\sim E(G)}[\wt{f}(u)\wt{f}(v)]}{\Pr_{(u,v)\sim E(G)}[u \in H,v\in H]}\cdot \frac{\Pr_{u \sim \pi}[u \in H]}{\E_{u \sim \pi}[\wt{f}(u)^2]}.  
\end{align*}
By the expander mixing lemma we have that,
\[\Pr_{(u,v)\sim E(G)}[u \in H,v\in H] \geq \mu(H)^2 - \sigma_2(\cR)\mu(H)(1-\mu(H)) \geq \mu(H)^2/2.\]
Plugging this in we get,
\[\sigma(\cR_H)\leq \frac{2}{\mu(H)}\cdot \frac{\E_{(u,v)\sim E(G)}[\wt{f}(u)\wt{f}(v)]}{\E_{u \sim \pi}[\wt{f}(u)^2]}\leq \frac{2}{\mu(H)}\sigma_2(\cR).\qedhere
\]
\end{proof}

We are now ready to prove \Cref{thm:stronger-coboundary_formal}.

\begin{proof}[Proof of \Cref{thm:stronger-coboundary_formal}]
We first explain the high level overview of the
argument. Given the function $F$ that passes the test with probability $\delta$, we choose a large constant $R\in \Z_+$ and create a function $G\colon X(k)\to [R]^k$ using a random hash function $h:\Sigma\rightarrow [R]$. We apply the direct product testing result, \Cref{lem:moderate-alph} on $G$ to get a global function $g$ agreeing with $G$ on a large fraction of $k$-sets. Then using $g$ we finally deduce a global function $f$ taking values in $\Sigma$ that agrees with $F$ on a large fraction of $k$-sets.
\vspace{-2ex}
\paragraph{Hashing to smaller alphabet:}
Given $\delta$ we choose $\delta_1 \ll \delta$ appropriately, and then set $\eta_1 = \poly(\delta_1)$ as dictated by \Cref{lem:moderate-alph}, and finally choose $R \gg 1/\eta_1$. The constants $k,d$ for the complex are chosen to be large in terms of $\delta_1,\delta,R$ as required by \Cref{lem:moderate-alph}. To summarize our parameters satisfy, $$\delta \gg \delta_1 \gg \eta_1 \gg 1/R \gg 1/k \gg 1/d.$$ 

Fix such a choice henceforth.
Let $h:\Sigma \rightarrow [R]$ be a randomly chosen function. For $A \in \Sigma^k$ let $h(A)$ denote the string obtained by applying $h$ to every coordinate separately. Consider the function $G_h:X(k)\rightarrow [R]^k$ defined as $G_h = h \circ F$. We know that the distance between two distinct strings  $S_1,S_2 \in \Sigma^t$ can only decrease under hashing, therefore for every $h$, $G_h$ passes the agreement test with probability $\geq \delta$.
Moreover for all $S_1 \neq S_2 \in \Sigma^t$, $\Pr_h[\Delta(h(S_1),h(S_2))<\Delta(S_1,S_2)-\delta_1]\leq \frac{1}{\delta_1R}\leq \frac{1}{\sqrt{R}}$. Therefore,
\begin{equation}\label{eq:dist-collapse}
\E_h\left[\Pr_{\substack{B\sim X(\sqrt{k})\\A,A'\supset_k B}}\left[\Delta(G_h[A]|_B,G_h[A']|_B)\leq\Delta(F[A]|_B,F[A']|_B)-\delta_1\right]\right]\leq \frac{1}{\sqrt{R}}.    
\end{equation}
Fix any hash function $h$ for which the inner probability above is at most $1/\sqrt{R}$. Since $G_h$ passes the agreement test with probability $\delta$, we can apply the agreement testing result for moderate alphabet, Lemma~\ref{lem:moderate-alph}, with $\delta_1$ to get a function $g:X(1)\rightarrow [R]$ such that
\[\Pr_{A\sim X(k)}[\Delta(G_h[A],g|_A)\leq \delta_1] \geq \eta_1.
\] 
Let $\cA=\{A \in X(k): \Delta(G_h[A],g|_A)\leq \delta_1\}$; then the above inequality translates to $\mu(\cA) \geq \eta_1$. 
For each $A\in \cA$, applying Chernoff-type bound gives that
\[\Pr_{\substack{B\sim X(\sqrt{k})\\A\supset_k B}}[\Delta(G_h[A]|_B,g|_B)\leq 2\delta_1 \mid A \in \cA] \geq 1-\exp(-\sqrt{k}).
\]
From here on we will draw the tuple $(B,A,A')$ from the distribution $B\sim X(\sqrt{k}), A,A'\supset_k B$ and omit this notation.
Applying the triangle inequality and a union bound we get,
\begin{equation}\label{eq:hash-agr}
\Pr_{B,A,A'}[\Delta(G_h[A]|_B,G_h[A']|_B)\leq 4\delta_1 \mid A,A' \in \cA] \geq 1-\exp(-\sqrt{k}).    
\end{equation}
By the expander mixing lemma we have that $\Pr_{B,A,A'}[A,A' \in \cA] \geq \mu(\cA)^2 - O\left(\frac{1}{\sqrt{k}}\right) \geq \poly(\eta_1)$.
We now show that $F[A]|_B,F[A']|_B$ are close, analogously to \eqref{eq:hash-agr}, 
and for that we use \eqref{eq:dist-collapse} and a union bound as follows:
\begin{align*}
&\Pr_{B,A,A'}[\Delta(F[A]|_B,F[A']|_B)>5\delta_1\mid A,A'\in \cA]\\
&\leq \Pr_{B,A,A'}[\Delta(F[A]|_B,F[A']|_B)> \Delta(G_h[A]|_B,G_h[A']|_B)+\delta_1\mid A,A'\in \cA]\\&+\Pr_{B,A,A'}[\Delta(G_h[A]|_B,G_h[A']|_B)> 4\delta_1 \mid A,A' \in \cA]\\
&\leq \frac{1}{\poly(\eta_1) \sqrt{R}}+\exp(-\sqrt{k}) \leq \delta_1,   
\end{align*}
where the last inequality holds since $R,k$ are both much larger than $\delta_1,\delta$ and therefore also $\poly(\eta_1)$. 

For the rest of the proof it will be convenient to work with the simplicial complex $Y(k)=\cA$ and its downward closure, endowed with the set of measures $\{\pi_i\}_{i\in [k]}$, where $\pi_k$ is the conditional distribution $\mu_k|\cA$, and as is usually the case, for all $i<k$, $\pi_i$ is  defined as $A \sim \pi_k, I \subset_i A$. Rewriting the above equation we get that $F$ passes the direct-product test (allowing for approximate equality on the intersection) with probability close to $1$:
\begin{equation}\label{eq:dp-test-on-y}
\Pr_{\substack{B \sim Y(k)\\A,A' \supset_k B}}[\Delta(F[A]|_B,F[A']|_B)>5\delta_1] \leq \delta_1.  
\end{equation}
We will use this observation to find a global function $f$ that agrees with $F$ on $Y(k)$. The proof is essentially the same as proving the soundness of the direct product test on HDX in the 99\% regime. We use \Cref{claim:induced-sval} to get that the relevant random walks on $Y$ have good expansion since they are derived from restricting $X$ to $\cA$ which has large measure.

\paragraph{Finding a global function on $Y(1)$:}
To find a global function that agrees with $F$ on $Y(k)$, let us first define a set of good indices in $I \subseteq Y(1)$ as follows:
\begin{enumerate}
\item For every $i\in Y(1)$ define the quantity,
${p_i := \Pr_{\substack{B \sim Y_i(\sqrt{k}-1) \\B \subset A,A' \sim Y(k)}}[F[A]|_i\neq F[A']|_i]}$. If $p_i > \sqrt{6\delta_1}$, 
do not include $i$ in $I$.
\item Consider $\mu_i(\cA)$, the measure of $\cA$ in the link of $i$ (with respect to $X$).
Do not include $i$ in $I$ if $\mu_i(\cA) < \mu(\cA)/2$.
\end{enumerate}
Let $\cA_i:=\{A\in \cA\mid i\in A\}$ which also equals $Y_i(k-1)$. Our global function $f$ is defined as: $f(i)=\maj(F[A]|_i \mid A \sim \cA_i)$ if the majority exists and arbitrary otherwise. 

For every $i\in I$, we will show that this is an overwhelming majority, i.e.\ with high probability over $A\sim \cA_i$, $F[A]|_i=f(i)$. 
To do so we will first bound the second singular value of the down-up random walk $\cR_i$ on $Y_i(k-1)$ defined as: $B \sim Y_i(\sqrt{k}-1), B \subset A,A'\sim Y_i(k-1)$. Let $\cR'_i$ be the random walk: $B \sim X_i(\sqrt{k}-1), B \subset A,A'\sim X_i(k-1)$. One can check that $\cR_i = \cR'_i\mid A,A'\in \cA_i$. Lemma~\ref{lem:spectral_gap_of_graphs_from_HDX} implies that the second singular value of $\cR'_i$  is bounded by $O(1/k)$. Since $\mu(\cA_i) \geq \mu(\cA)/2 \gg 1/k$, by Claim~\ref{claim:induced-sval} we get that the induced random walk $\cR_i$ has singular value $\sigma_2(\cR_i) \leq O(1/k\mu(\cA_i)) \leq 1/\sqrt{k}$. 

We are ready to prove that $f(i)$ agrees with $F[A]|_i$ for most $A \sim \cA_i$. For every $\sigma\in \Sigma$ let $\cA_{i,\sigma}$ be the set of $A \in \cA_i$ where $F[A]|_i=\sigma$ and let $\pi_{i,\sigma}$ denote its measure inside $Y_i(k-1)$. Using Cheeger's inequality we get that,
\[\frac{1}{2}\Pr_{(A,A') \sim \cR_i}[A \in \cA_{i,\sigma},A' \notin \cA_{i,\sigma}] \geq (1-\lambda_2(\cR_i))\pi_{i,\sigma}(1-\pi_{i,\sigma}).\]
Using that $\lambda_2(\cR_i)\leq 1/\sqrt{k}$ and summing up the above over $\sigma \in \Sigma$ we get,
\[\frac{1}{2}\sum_\sigma \Pr_{(A,A') \sim \cR_i}[A \in \cA_{i,\sigma},A' \notin \cA_{i,\sigma}] \geq (1-1/\sqrt{k})(1-\sum_{\sigma}\pi_{i,\sigma}^2).\]
The LHS above is equal to $\Pr_{(A,A')\sim \cR_i}[F[A]|_i \neq F[A']|_i]$ which equals $p_i$. By the assumption that $i$ is in $I$, $p_i$ is less than $\sqrt{6\delta_1}$ so rearranging the above equation we get $\max_{\sigma}\pi_{i,\sigma} \geq 1-O(\sqrt{\delta_1})$. Since $f(i)=\arg\max_{\sigma}\pi_{i,\sigma}$ it follows that
\begin{equation}\label{eq:majority-I}
\Pr_{A \sim Y_i(k-1)}[F[A]|_i = f(i)] \geq 1-O(\sqrt{\delta_1}).    
\end{equation}

\paragraph{Bounding the measure of $I\subseteq Y(1)$:}
We bound the measure of $I$, $\pi_1(I)$, by showing that each of the two conditions defining $I$ is violated with small probability. Let us start with condition (1). First we bound the expectation of $p_i$ as follows,
\begin{align*}
\E_{i\sim Y(1)}[p_i]=\Pr_{\substack{i\sim Y(1)\\ B\sim Y_i(\sqrt{k}-1)\\ A,A'\supset_k B}}[F[A]|_i\neq F[A']|_i]=\E_{\substack{B\sim Y(\sqrt{k})\\ A,A'\supset_k B}}[\Delta(F[A]|_B,F[A']|_B)] \leq 5\delta_1,
\end{align*}
where we used \eqref{eq:dp-test-on-y} in the last inequality.
Applying Markov's inequality we get that,
\[\Pr_{i\sim Y(1)}[p_i >\sqrt{6\delta_1}]\leq \sqrt{6\delta_1}.\]
Now we will bound the probability of violating condition (2). Using Lemma~\ref{lem:sampling} we have that,
\[\Pr_{i \sim X(1)}\left[\mu_i(\cA) \leq \frac{\mu(\cA)}{2}\right]\leq O\left(\frac{1}{k\eta_1}\right)\leq \frac{1}{\sqrt{k}}.\]
We can translate this bound to $i \sim Y(1)$ in a straightforward way:
\[\Pr_{i \sim Y(1)}\left[\mu_i(\cA) \leq \frac{\mu(\cA)}{2}\right]= \sum_{i\in Y(1)}\pi_1(i)\frac{\mu_i(\cA)}{\mu(\cA)}\Ind\left[\mu_i(\cA) \leq \frac{\mu(\cA)}{2}\right]\leq \frac{1}{2}\Pr_{i\sim X(1)}\left[\mu_i(\cA) \leq \frac{\mu(\cA)}{2}\right] \leq \frac{1}{2\sqrt{k}}.\]
By a union bound we conclude that
\begin{equation}\label{eq:measure-I}
\Pr_{i\sim Y(1)}[i \notin I] \leq O\left(\frac{1}{\sqrt{k}}\right)+\sqrt{6\delta_1}\lll \sqrt{\delta_1}. 
\end{equation}

\paragraph{Direct-Product Test Soundness on $Y$:}
We are ready to conclude that $\Delta(F[A],f|_A) \lll \delta_1^{1/4}$ for a large fraction of $A \sim Y(k)$. We do so by calculating the expectation of $\Delta(F[A],f|_A)$:
\begin{align*}
\E_{A \sim Y(k)}[\Delta(F[A],f|_A)]&=\E_{A \sim Y(k)}\E_{i\sim A}[\Ind[F[A]|_i \neq f(i)]]\\
&=\E_{i\sim Y(1)}[\Pr_{A \sim Y_i(k-1)}[F[A]|_i \neq f(i)]]\\
&\leq \Pr_{i \sim Y(1)}[i \notin I]+\E_{i \sim Y(1)}[\Pr_{A \sim Y_i(k-1)}[F[A]|_i \neq f(i)] \mid i \in I]\\
&\lll \sqrt{\delta_1},
\end{align*}
where we used \eqref{eq:measure-I} and \eqref{eq:majority-I} in the last inequality. Applying Markov's inequality we get
\[\Pr_{A \sim Y(k)}[\Delta(F[A],f|_A) \leq \delta_1^{1/4}] \geq 1-O(\delta_1^{1/4}).\]
Moving from the complex $Y$ to the complex $X$
we conclude that
\[\Pr_{A \sim X(k)}[\Delta(F[A],f|_A) \leq \delta_1^{1/4}] \geq \mu(\cA)(1-O(\delta_1^{1/4})) \geq \frac{\eta_1}{2},\]
which gives us the desired conclusion if we set  $\delta_1=\Theta(\delta^4)$.
\end{proof}

\section{Construction of Decodable PCPs}\label{sec:app-dpcp}
In this section we prove Lemmas~\ref{lem:had-dpcp} and~\ref{lem:rm-dpcp}. Both constructions are based on low-degree testing, and in Section~\ref{sec:prelim-low-deg} we begin by covering the necessary background about it. In Section~\ref{sec:had-dpcp} we construct a decodable PCP that has  exponential size but constant alphabet size, establishing Lemma~\ref{lem:had-dpcp}. This construction is based on the Hadamard Code. In Section~\ref{sec:rm-dpcp} we construct a decodable PCP that has polynomial size and quasi-polynomial alphabet size, establishing Lemma~\ref{lem:rm-dpcp}. This construction is based on the Reed-Muller code.

\subsection{Preliminaries of Low Degree Testing}\label{sec:prelim-low-deg}
Let $\F$ be a finite field. A linear plane $P \subseteq \F^m$ is associated with two points $x,y \in \F^m$ and is equal to the set $P=\{t_1 x+t_2 y: t_1,t_2 \in \F\}$. Suppose $f: \F^m \rightarrow \F$ is a purported linear function. Let $\cA$ be an oracle that assigns every plane $P$ in $\F^m$ a linear function $\cA(P)$ that is supposedly the restriction of $f$ onto $P$. Then one can perform the following ``plane-vs-point'' test to verify if $f$ is indeed linear.
\begin{definition}\label{def:plane-vs-point} 
Given a function $f:\F^m \rightarrow \F$ and a planes oracle $\cA$ the plane-vs-point tester proceeds as follows:
\begin{enumerate}
    \item Sample a uniformly random linear plane $P \subset \F^m$ and a random point $x \in P$.
    \item Accept if $\cA(P)(x) = f(x)$, reject otherwise.
\end{enumerate}    
\end{definition}
In general we can perform a subspaces-vs-point test given an arbitrary distribution over subspaces that is well-behaved in the following sense: 
\begin{definition}\label{def:eta-good}
Let $\pi$ be a distribution over tuples of vectors $(x_1,\ldots,x_t)\in \F_q^t$. Abusing notation, we use $\pi$ to also denote the induced distribution of $\spn(x_1,\ldots,x_t)$ where $(x_1,\ldots,x_t)$ are sampled according to $\pi$. Let $\cD_1$ be the joint distribution over $(\spn(x_1,\ldots,x_t),P)$ where $(x_1,\ldots,x_t) \sim \pi$ and $P=\spn(\sum c_i x_i, \sum c'_i x_i)$ for $c_i,c'_i \in \F_q$ chosen independently and uniformly. Let $\cD_2$ be the distribution over $(\Omega, P)$ where a plane $P$ is drawn uniformly from $\F_q^n$ and $\Omega$ is then drawn from $\pi$ conditioned on containing $P$. We say that $\pi$ is $\eta$-good if the total-variation distance of $\cD_1$ and $\cD_2$ is at most $\eta$.    
\end{definition}

The following linearity-testing theorem, first proved in~\cite{HastadW01}, provides a list-decoding guarantee for the plane-vs-point linearity test. One can get soundness for the generalized subspaces-vs-point test using a simple reduction, so long as the associated 
distribution over subspaces is good. We use a version of the linearity-testing as stated in~\cite{MoshkovitzR08}.
\begin{lemma}\label{lem:linearity-testing}
There exists $c>0$ such that the following holds. For all $m \in \Z_{+}$ and primes $q$ that are large enough the following holds. Let $\mathbb{F}$ be a field of size $q$ and let $\delta\in (0,1)$ be such that 
$\delta \geq \frac{1}{q^c}$. For any function $f : \mathbb{F}^m \to \mathbb{F}$, there exists a list of linear functions 
$L_1, L_2, \ldots, L_t$ for $
t = O\left(\frac{1}{\delta^3}\right)$ such that the following holds for any planes oracle $\cA$ (even for a randomized one):
\[ \Pr_{\cA, P,x \in P}\left[\cA(P)(x) \neq f(x) \vee \exists i \in [t], L_i|_P \equiv \cA(P)\right] \geq 1 - \delta.\]
Furthermore the same holds for subspaces $\Omega$ sampled from an $\eta$-good distribution $\pi$,
\[ \Pr_{\cA, \Omega \sim \pi,x \in \Omega}\left[\cA(\Omega)(x) \neq f(x) \vee \exists i \in [t], L_i|_P \equiv \cA(P)\right] \geq 1 - \delta-\eta. \]
\end{lemma}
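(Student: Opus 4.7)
The proof naturally splits into two parts, and I would handle them sequentially.

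For the plane-vs-point conclusion I would directly invoke the quantitative form of the H{\aa}stad--Wigderson linearity test~\cite{HastadW01}, as stated in Theorem~5.1 of~\cite{MoshkovitzR08}. This result already provides exactly the list of at most $t = O(1/\delta^3)$ linear functions $L_1,\ldots,L_t\colon \F^m\to\F$ satisfying the claimed $1-\delta$ soundness guarantee of the plane-vs-point experiment, even against a randomized oracle $\cA$. There is nothing further to prove for this half.

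For the extension to subspaces drawn from an $\eta$-good distribution $\pi$, the plan is to reduce to the plane-vs-point test. Given a subspaces oracle $\cA$, I would define a randomized planes oracle $\cA'$ as follows: on input a plane $P$, sample $\Omega$ from $\pi$ conditioned on $\Omega\supseteq P$ and output $\cA'(P) = \cA(\Omega)|_P$, which is a well-defined linear function on $P$ since $\cA(\Omega)$ is linear on $\Omega\supseteq P$. Applying the first part to $\cA'$ yields linear functions $L_1,\ldots,L_t$ such that with probability at least $1-\delta$ over the randomness of picking $P$ uniformly, then $\Omega\sim \pi\mid \Omega\supseteq P$, then $x\in P$ uniform, either $\cA(\Omega)(x)\neq f(x)$ or $L_i|_P\equiv \cA(\Omega)|_P$ for some $i$. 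The joint distribution on $(\Omega,P)$ here is precisely the distribution $\cD_2$ in Definition~\ref{def:eta-good}, which by $\eta$-goodness is within total variation distance $\eta$ of $\cD_1$: sample $\Omega\sim\pi$ first, then a random plane $P\subseteq \Omega$ as $\spn(\sum c_i x_i,\sum c_i' x_i)$. Augmenting both sides by a uniform $x\in P$ preserves the TV bound, and in the second ordering the marginal on $(\Omega,x)$ is exactly $\Omega\sim\pi$, $x\in\Omega$ uniform (since a uniform point of a uniform plane through $\Omega$ is a uniform point of $\Omega$). This transfers the $1-\delta$ soundness to the subspaces-vs-point experiment at an additive cost of $\eta$.

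The main obstacle, and the one subtle point, is the final conversion from ``$L_i|_P\equiv \cA(\Omega)|_P$ on a random plane $P\subseteq \Omega$'' to ``$L_i|_\Omega\equiv \cA(\Omega)$ on the full subspace.'' I would handle this by noting that $\cA(\Omega)-L_i|_\Omega$ is a linear function on $\Omega$, and a nonzero linear function vanishes identically on a uniformly random plane in $\Omega$ with probability at most $1/q^2$. A union bound over the $t=O(1/\delta^3)$ candidate $L_i$'s contributes an additional error of $O(1/(\delta^3 q^2))$, which is absorbed into $\delta$ since $\delta\geq 1/q^c$ for a suitable choice of the absolute constant $c$. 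Combining the three ingredients---the H{\aa}stad--Wigderson bound for planes, the $\eta$-good TV swap between $\cD_1$ and $\cD_2$, and the plane-to-subspace promotion---gives the claimed soundness of $1-\delta-\eta$ and concludes the proof.
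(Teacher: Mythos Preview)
Your proposal is correct and follows essentially the same approach as the paper: cite \cite{HastadW01,MoshkovitzR08} for the plane-vs-point statement, then reduce the subspace version by defining the randomized planes oracle $\cA'(P)=\cA(\Omega)|_P$ with $\Omega\sim\pi\mid\Omega\supseteq P$, swap $\cD_2$ for $\cD_1$ via $\eta$-goodness, and finally promote agreement on a random plane $P\subseteq\Omega$ to agreement on all of $\Omega$ with a Schwartz--Zippel style argument and a union bound over the $t$ candidates. The only cosmetic difference is that the paper bounds the promotion error by $1/q$ (using that $P$ contains a uniform point of $\Omega$) while you use the slightly sharper $1/q^2$; both are absorbed into $\delta$ under the hypothesis $\delta\ge 1/q^c$.
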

\begin{proof}
The proof of the first statement can be found in~\cite{HastadW01,MoshkovitzR08}, and we provide a proof of the second statement by reducing it to the first. 

Let $\pi$ be an $\eta$-good distribution over subspaces and let $\cA$ be any subspace oracle. Consider a randomized planes oracle $\cB$ defined as follows: given a plane \(P\), $\cB(P)$ is defined as $\cA(\Omega)|_P$ for $\Omega \sim \pi~|~\Omega \supset P$. By the soundness of linearity testing, there is a list of \(t = \poly(1/\delta)\) linear functions \(L_1, \ldots, L_t\) such that
\[
\Pr_{\cB, P, x\in P}\left[\cB(P)(x) = f(x) \vee \exists i, L_i|_P \equiv \cB(P)\right] \geq 1 - \frac{\delta}{2}.
\]
Let $\cD_1$ and $\cD_2$ be the distribution over $(\Omega,P)$ as specified in \Cref{def:eta-good}. Rewriting the above we get, 
\[
\Pr_{\cA,(\Omega,P)\sim \cD_2,x}\left[A(\Omega)(x) = f(x) \vee \exists i : L_i|_P \equiv \cA(\Omega)|_P\right] \geq 1 - \frac{\delta}{2},
\]
and since $\pi$ is $\eta$-good we conclude that
\[
\Pr_{\cA,(\Omega,P)\sim \cD_1,x}\left[\cA(\Omega)(x) = f(x) \vee \exists i : L_i|_P = \cA(\Omega)|_P\right] \geq 1 - \frac{\delta}{2} - \eta.
\]
Above we have the agreement of \(L_i\) with \(\cA(\Omega)\) on a random \(P\) chosen from \(\Omega\) instead of over all of $\Omega$. However, by a standard Schwartz-Zippel argument, for any \(i\) and \(\Omega\), since $P$ contains a random point in $\Omega$ we get
\[
\Pr_{\cA,P} \left[L_i \neq \cA(\Omega) \wedge L_i|_P = \cA(\Omega)|_P \right] \leq \frac{1}{q}.
\]
Hence, by a union bound over \(i\), we have:
\[
\Pr_{\cA,\Omega\sim\pi,x} \left[\cA(\Omega)(x) = f(x) \vee \exists i \colon L_i|_{\Omega} = \cA(\Omega) \right] \geq 1 - \frac{\delta}{2} - \eta - \frac{t}{q} \geq 1-\delta-\eta.\qedhere
\]
\end{proof}

Suppose that now we want to verify whether a function $f: \F^m \rightarrow \F$ has degree at most $d$. Let $\cA$ be an oracle that assigns every affine plane $P\subset \F^m$ (namely, a set of the form $P=\{x+t_1y+t_2z~|~t_1,t_2\in \F\}$ for some $x,y,z\in\F^m$) a polynomial of degree at most $d$, denoted by $\cA(P)$. 
The polynomial $\cA(P)$ is supposedly the restriction of $f$ onto $P$. Then one can perform the same plane-vs-point test as above. The theorem below, first proved in~\cite{RazS97}, provides a list-decoding guarantee for the plane-vs-point test, and by the same reduction as above we also get soundness for the subspaces-vs-point low-degree test. The analogous statement for $\eta$-good distributions for subspaces follows by a reduction to the plane-vs-point test along the lines of the proof of \Cref{lem:linearity-testing}, hence we omit it.
\begin{lemma}\label{lem:low-deg-testing}
There exists $c>0$ such that the following holds. Let $\F$ be a field of size $q$. 
Let $m, d \in \mathbb{Z}^{\geq 0}$ and 
$\delta \in (0, 1)$ be such that 
$\delta > \left(\frac{md}{q}\right)^c$. For any function $f : \mathbb{F}^m \to \mathbb{F}$, there exists a list of polynomials 
$Q_1, Q_2, \ldots, Q_t$ of degree at most $d$ where $t = O\left(\frac{1}{\delta}\right)$, 
such that the following holds for any planes table $\cA$ (even a randomized one):
\[ \Pr_{\cA, P,x \in P}\left[\cA(P)(x) \neq f(x) \vee \exists i \in [t], Q_i \mid_P \equiv \cA(P)\right] \geq 1 - \delta.\]
Furthermore the same holds for subspaces sampled from an $\eta$-good distribution $\pi$,
\[ \Pr_{\cA, \Omega \sim \pi,x \in \Omega}\left[\cA(\Omega)(x) \neq f(x) \vee \exists i \in [t], Q_i \mid_P \equiv \cA(P)\right] \geq 1 - \delta-\eta. \]
\end{lemma}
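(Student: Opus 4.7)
The plan is to mimic the proof of Lemma~\ref{lem:linearity-testing}, reducing the subspaces-vs-point low-degree test under the $\eta$-good distribution $\pi$ to the plane-vs-point low-degree test, which is the first (previously known) part of the lemma. Starting from a subspace oracle $\cA$, define a randomized planes oracle $\cB$ by, for each plane $P \subset \F^m$, sampling $\Omega \sim \pi$ conditioned on $\Omega \supseteq P$ and setting $\cB(P) := \cA(\Omega)|_P$. Since $\cA(\Omega)$ has degree at most $d$, so does $\cB(P)$, so the planes-vs-point low-degree test applies to $\cB$ with parameter $\delta/2$, producing a list of $t = O(1/\delta)$ polynomials $Q_1, \ldots, Q_t$ of degree at most $d$ satisfying
\[
\Pr_{\cB, P, x \in P}\big[\cB(P)(x) = f(x) \vee \exists i, Q_i|_P \equiv \cB(P)\big] \geq 1 - \delta/2.
\]

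Unpacking the definition of $\cB$, the event above equals
$\Pr_{\cA, (\Omega, P) \sim \cD_2, x \in P}[\cA(\Omega)(x) = f(x) \vee \exists i, Q_i|_P \equiv \cA(\Omega)|_P]$
for $\cD_2$ the distribution from Definition~\ref{def:eta-good}. Using that $\pi$ is $\eta$-good (so $\cD_1$ and $\cD_2$ are $\eta$-close in total variation), the same bound holds under $\cD_1$ at the cost of an additive $\eta$. It then remains to upgrade the agreement of $Q_i$ with $\cA(\Omega)$ on a random plane $P \subseteq \Omega$ into agreement on all of $\Omega$: for fixed $i$ and $\Omega$, if $R := Q_i|_\Omega - \cA(\Omega)$ is nonzero, then $R$ is a nonzero polynomial of degree $\leq d$ on $\Omega$, and a straightforward Schwartz-Zippel calculation (using that a point of $P$ is, up to negligible error, distributed as a uniformly random point of $\Omega$) gives $\Pr_{P}[R|_P \equiv 0 \wedge R \not\equiv 0] \leq d/q$. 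Union bounding over the $t$ polynomials contributes an extra loss of $td/q = O(d/(q\delta))$, which is absorbed into $\delta$ in the regime $\delta > (md/q)^c$ for $c$ chosen small enough, yielding the claimed bound of $1 - \delta - \eta$.

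The only subtlety is in the final Schwartz-Zippel step: the plane $P$ sampled from $\cD_1$ is generated from linear combinations of basis vectors drawn from $\pi$, so its distribution inside $\Omega$ need not be exactly uniform. However, writing a generic point $y \in P$ as $s_1 \sum c_i x_i + s_2 \sum c'_i x_i$ and specializing $s_1 = 1, s_2 = 0$ gives $y = \sum c_i x_i$ with uniformly random $c_i$, which is a uniformly random element of $\Omega$ whenever $(x_1, \ldots, x_t)$ forms a basis (an event of high probability under $\pi$). This suffices for the $d/q$ bound, since $R|_P \equiv 0$ in particular forces $R(y) = 0$ for any such $y \in P$.
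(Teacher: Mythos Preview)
Your proposal is correct and follows exactly the approach the paper indicates: the paper omits the proof of this lemma, stating only that ``the analogous statement for $\eta$-good distributions for subspaces follows by a reduction to the plane-vs-point test along the lines of the proof of \Cref{lem:linearity-testing}, hence we omit it.'' Your argument is precisely that reduction, with the expected change from $1/q$ to $d/q$ in the Schwartz-Zippel step, and your added remark on why a point of $P$ under $\cD_1$ is (essentially) uniform in $\Omega$ fills in a detail the paper's linearity proof leaves implicit.
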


\subsection{Proof of Lemma~\ref{lem:had-dpcp}: Hadamard-based dPCP}\label{sec:had-dpcp}
In this section we build a dPCP for Circuit-SAT that encodes a satisfying assignment of $n$ bits using $2^{O(n^2)}$ symbols from an alphabet of size $O(1)$. Our reduction goes through the Quadratic Equations problem, defined as follows:
\begin{definition}
An instance $(X,E)$ of Quadratic Equations over a field $\F$, abbreviated as $\qe_{m,n}(\F)$, is a system of $m$ quadratic equations $E$ in the variables $X = (x_1,\ldots,x_n)$ and the goal is to decide whether the system is satisfiable or not. The value of an instance $Q$ of $\qe$ is the maximum fraction of equations satisfied by any $x\in \F$ and is denoted by $\val(Q)$.
\end{definition}

\begin{lemma}[\Cref{lem:had-dpcp} restated]\label{lem:had-dpcp-restated}
For all $\delta >0$, for $q = 1/\delta^{O(1)}$ and for all alphabets $\Sigma$, the language $\csat_{\Sigma}(N,S)$ has a regular decodable PCP with the following parameters:
\begin{enumerate}
    \item Robust soundness error $\delta$.
    \item Proof alphabet size $q$. 
    \item Proof length $q^{O(S^2)}$.
    \item Randomness complexity $O(S^2\log(q))$.
    \item Query complexity $q^{O(\log|\Sigma|)}$.
    \item Decision complexity $q^{O(\log|\Sigma|)}$.
    \item List size $1/\delta^{O(1)}$.
\end{enumerate}
\end{lemma}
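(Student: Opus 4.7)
The plan is to construct the decodable PCP by reducing $\csat_{\Sigma}(N,S)$ to a Quadratic Equations (QE) instance and then giving a Hadamard-based decodable PCP for QE, in analogy with the classical Hadamard PCP of~\cite{ALMSS}. The first step is standard: expand each $\Sigma$-input wire of the circuit into $\log|\Sigma|$ Boolean wires, introduce one $\F_q$-valued variable $x_i$ per internal wire (so $n=O(S)$ in total), and encode each Boolean gate plus the output constraint as a quadratic equation over $\F_q$. This produces a QE instance on $n=O(S)$ variables with $m=O(S)$ degree-two equations whose satisfying assignments correspond to accepting wire configurations of the circuit; moreover, given an input witness $y\in\Sigma^N$ one can deterministically extend it to a full wire assignment $x\in\F_q^n$, and conversely a satisfying $x$ projects back to a witness $y$, so a decoding from $x$ of the $j$-th symbol of $y$ amounts to reading $O(\log|\Sigma|)$ specific coordinates of $x$.

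The decodable PCP for the resulting QE instance is the usual Hadamard proof, enriched with higher-dimensional subspace tables to enable robust local testing. Concretely, the intended proof is the pair of functions $A(a)=\langle a,x\rangle$ and $B(M)=\langle M, x\otimes x\rangle$, together with a ``subspaces table'' $\cA$ that for every affine subspace $\Omega\subseteq \F_q^{n+n^2}$ of a prescribed dimension $t$ stores the supposed restriction of $(A,B)$ to $\Omega$; I will take $t=\Theta(\log|\Sigma|)$ so that each subspace simultaneously exposes enough of $A$ to extract an entire output symbol. The verifier picks a random $\Omega$ from a suitable $\eta$-good distribution (Definition~\ref{def:eta-good}) and a random point $z\in\Omega$; from $\cA(\Omega)$ it (i)~runs the subspace-vs-point linearity test of Lemma~\ref{lem:linearity-testing} against both $A$ and $B$, (ii)~performs the standard bilinearity/consistency check that $B$ is the tensor square of $A$ by evaluating $B$ on random rank-one matrices read off $\cA(\Omega)$, (iii)~checks the QE constraints as a single random linear combination of the equations, which is a linear functional in $A,B$ and hence a single coordinate of $\cA(\Omega)$, and finally (iv)~performs Hadamard decoding of the $j$-th output symbol using coordinates of $A$ exposed by $\Omega$. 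The decoding step, once $\Omega$ is chosen to contain the $\log|\Sigma|$ shift vectors needed to peel off one $\Sigma$-symbol from the Hadamard code, outputs the decoded symbol in one shot.

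For the parameters: the proof length is $q^{O(n^2)}=q^{O(S^2)}$ coming from $B$'s domain, the randomness is $O(S^2\log q)$ to specify $\Omega$, and querying $\cA(\Omega)$ reads $q^t=q^{O(\log|\Sigma|)}$ field symbols, matching the claimed query and decision complexities. Taking $q=1/\delta^{O(1)}$ makes the error terms in Lemma~\ref{lem:linearity-testing} smaller than $\delta$. Regularity is obtained by letting the distribution on $\Omega$ be invariant under the affine group, so that every coordinate of $\cA$ is queried with identical marginal probability; this is the standard trick and it preserves $\eta$-goodness.

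The robust soundness analysis is where most of the work goes, and I expect it to be the main obstacle. Applying Lemma~\ref{lem:linearity-testing} to $A$ and $B$ yields lists $\{L_i\}$ and $\{Q_j\}$ of linear decodings, producing a list of $\poly(1/\delta)$ candidate $x$'s from the $L_i$'s. The bilinearity/consistency check pins each surviving $x$ to a unique $Q_j\equiv L_i\otimes L_i$, and the random-combination QE check then forces each surviving $x$ to satisfy all QE constraints except with probability $O(m/q)\le\delta$, yielding a list of at most $\ell=\poly(1/\delta)$ true witnesses $y^1,\dots,y^\ell$. The delicate point is turning the ``with probability $1-\delta$, the local view is consistent with one of the $L_i$'s'' statement into the \emph{robust} statement that the local view is close in Hamming distance to an accepting view; this uses that the linearity/consistency/QE tests themselves are linear-algebraic checks on $\cA(\Omega)$, so any $\cA(\Omega)$ within relative distance $\ll\delta$ of its projection onto the list $\{L_i|_\Omega,Q_j|_\Omega\}$ is genuinely close to an accepting view, while any $\cA(\Omega)$ far from the list fails the subspace-vs-point test at a random point with probability proportional to that distance. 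Combining these via Markov's inequality gives the claimed robust soundness~$\delta$ with list size $1/\delta^{O(1)}$, completing the construction.
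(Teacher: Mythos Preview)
Your proposal is essentially the same construction as the paper's: reduce to quadratic equations, take the Hadamard encoding of $(x,x\otimes x)$ as a single linear function on $\F_q^{n+n^2}$, and run a subspace-versus-point linearity test where each subspace $\Omega$ is spanned by (a) the $\log|\Sigma|$ standard-basis vectors needed to decode one output symbol, (b) a random equation vector $E_j$, (c) a pair $(y,0),(0,y\otimes y)$ for the tensor check, and (d) a couple of fully random vectors to make the distribution $\eta$-good. The paper organizes this slightly differently from you: rather than having the verifier explicitly perform checks~(ii) and~(iii), it \emph{folds} them into the left alphabet $\Sigma_L(\Omega)$, so the only test is the single subspace-versus-point consistency check, and robust soundness then drops out of the generalized-label-cover/robust-PCP equivalence (Lemma~\ref{lem:lc-robust-equivalence}) without the Markov argument you sketch.

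Two small points to fix. First, for the Hadamard code you want \emph{linear} subspaces through the origin, not affine ones; Lemma~\ref{lem:linearity-testing} is stated for linear planes and its $\eta$-good extension is likewise for linear subspaces. Second, your regularization argument does not work as stated: the distribution $\pi_L$ on $\Omega$ is \emph{not} invariant under the affine (or linear) group of $\F_q^{n+n^2}$, precisely because $\Omega$ is forced to contain the structured vectors $S_i$, $E_j$, $(y,0)$, $(0,y\otimes y)$. The paper instead observes that $\pi_L$ is $O(1/q)$-close in total variation to a symmetric distribution, discards the small-measure set of degenerate subspaces and low/high-degree points, and pads with dummies; this costs an additive $O(1/\sqrt{q})$ in soundness, which is absorbed into $\delta$.
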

\begin{proof}
We start with an overview of this proof. We reduce the $\csat$ problem to Gap-Quadratic-Equations ($\qe$) over $\F_q$, i.e.\ the problem of deciding whether a system of quadratic equations over $\F_q$ is satisfiable or has value at most $O(1/q)$. Then we reduce this to Gap-Generalized Label Cover, with the right side of vertices being points in $\F_q^{n+n^2}$ and the left side of vertices is low-dimensional subspaces, where $n$ is the number of variables in the $\qe$ instance $Q$. The assignment  $X \in \F_q^{n}$ to $Q$ is thought of as coefficients of a linear function; in fact, to facilitate checking quadratic equations in $X$ we encode the vector $(X,X^{\otimes 2})\in \F_q^{n+n^2}$ using 
the Hadamard code. The prover is supposed to provide the evaluation of this function on the left and right side of vertices. Additionally the left side is constructed so that it contains a random equation and vectors corresponding to the locations of $X$ that the verifier wants to decode at. Using the soundness of linearity testing, by querying a random left vertex and a random right vertex inside it, the verifier can reject if the prover assignment is not an evaluation of $(X,X^{\otimes 2})$ or if $X$ does not satisfy $Q$. If it does not reject, then with high probability, it is able to decode the required values of $X$.

We now proceed to the formal proof.
\vspace{-2ex}
\paragraph{Reduction to Quadratic Equations:} Let $\Sigma=[r]$ and $C$ be an instance of $\csat_\Sigma(N,S)$, i.e.\ $C$ is a Boolean function, $C:\Sigma^N\rightarrow \{0,1\}$ also represented by an $S$-sized circuit $C$ that computes the equivalent function, $C:\{0,1\}^{N\log r}\rightarrow \{0,1\}$. Let the $\Sigma$-valued variables be denoted by $Y=(y_1,\ldots,y_N)$, and each $y_i$ is associated to a block of $\log r$ Boolean variables, denoted by 
$(z_{(i,1)},\ldots,z_{(i,\log r)})$. We will use this identification to move back and forth between the $Y$ and $Z$ variables.

Let us start by reducing the $\csat$ problem to $\qe(\F_2)$ while preserving satisfiability. 
This reduction is standard, following the proof of the Cook-Levin theorem. Formally, we get an instance $Q_1=(X,E_1)$ on $n=O(S)$ Boolean variables denoted by $X = (Z,B)$ (where $B$ is a set of auxiliary variables that the reduction produces) and $m_1=O(n)$ equations, and we have the property that $X$ is a satisfying assignment for $Q_1$ if and only if $Z$ is satisfies the circuit $C$.

\paragraph{Generating a gap for $\qe$:} Fix a prime number $q\in \N$ with $q=1/\delta^C$ for a large enough absolute constant $C>0$ to be chosen later.  Firstly, consider the instance $Q_2=(X,E_2)$ over $\F_q$ where $E_2 = E_1 \cup \{x_i^2-x_1=0\}_{i \in [n]}$ with $|E_2|=m_2$. It is easy to see that $X$ is a satisfying assignment for $E_1$ if and only if it satisfies $E_2$. Let $C$ be any linear code with the generating matrix $G \in \F_q^{m \times m_2}$ with $m=O(m_2)$ and distance $\geq 1-2/q$ (such a code can be constructed in polynomial time by concatenation of standard error correcting codes). Then consider the instance $Q=(X,E)$ with $|E|=m$, where the $j^{th}$-equation in $E$ is the linear combination of equations from $Q_1$ where the $k$th equation is multiplied by $G_{j,k}$. If $X$ satisfies $Q_1$ then it also satisfies $Q_2$, but if $Q_1$ is unsatisfiable then using the distance of $C$ we get that every assignment $X$ satisfies at most $2/q$-fraction of the equations in $E$, i.e.\ $\val(Q_2) \leq 2/q$.
\vspace{-2ex}
\paragraph{Construction of Label-Cover Instance:} We now construct a generalized label cover instance $\Psi$ using the Hadamard code.
The left vertex set $L$ of $\Psi$, is $(\log r+O(1))$-dimensional linear subspaces of $\F_q^{n+n^2}$ endowed with a distribution $\pi_L$ and the right side $R$ is points in $\F_q^{n+n^2}$. 

To describe $\pi_L$ we start with some notation.
Recall that the $i^{th}$ variable $y_i$ for $i \in [N]$ is associated to a block of variables $(z_{(i,1)},\ldots,z_{(i,\log r)})$ whose indices are a subset of $[n]$. Each variable $z_{(i,k)}$ corresponds to the vector $\Vec{e}_{(i,k)} \in \F_q^{n+n^2}$ that has a $1$ in the $(i,k)$-location (that occurs in the first block of $n$ indices) and $0$ everywhere else. Let $S_i = \{\Vec{e}_{(i,k)}:k\leq \log r\}$. Additionally, let the $j^{th}$-equation in $E$ be $\ip{(X,X\otimes X),E_j} = b_j$ for some $E_j \in \F_q^{n+n^2},b_j \in \F_q$.

To pick a random vertex from $\pi_L$, sample $i \sim [N], j \sim [m]$, $y \sim \F_q^n$, $z,z' \sim \F_q^{n+n^2}$ and then pick the subspace $\Omega_{i,j,y,z,z'} \subset \F_q^{n+n^2}$ defined as $\spn(S_i,E_j,(y,0),(0,y\otimes y),z,z')$. For notational convenience, we drop the subscript in $\Omega$ when clear from context. 

We now discuss the alphabets for $\Psi$, also viewed as a prover assignment. As an assignment to the right-side, the prover is supposed to provide us with a linear function $L=(A,A\otimes A)$ mapping a point $C \in \F_q^{n+n^2}$ to $\ip{L,C}=\sum_{i\in [n]} C_i L_i+\sum_{i,j\in [n]} C_{ij}L_{ij}$, where $A$ is a satisfying assignment for $Q$. On the left side the prover is supposed to provide the restriction of $L$ to each subspace. Formally,
\begin{enumerate}
\item \textbf{Right Alphabet:} For each point in $V=\F_q^{n+n^2}$ the prover provides a value in $\F_q$. That is, an assignment of the prover to the vertics on the right side is thought of as a points oracle $f:\F_q^{n+n^2}\rightarrow \F_q$.
\item \textbf{Left Alphabet:} For each subspace $\Omega_{i,j,y,z,z'}$ the prover provides a degree $1$ polynomial $\cA(\Omega)$ via its coefficients ($\dim(\Omega)$-many) on the subspace. For convenience of notation we represent $\cA(\Omega)$ as a vector in $\F_q^{n+n^2}$, although this choice is not unique. The evaluations of $\cA(\Omega)$ must satisfy,
\begin{itemize}
    \item $\ip{\cA(\Omega),E_j} = b_j$.
    \item $\ip{\cA(\Omega), (0,y\otimes y)}= \ip{\cA(\Omega),(y,0)}^2$.
\end{itemize}
\end{enumerate}
Note that the right alphabet size is $q$ and left alphabet size is at most $q^{O(\log r)}$. Given this we have the following PCP decoder-- at input $i \in [N]$,
\begin{enumerate}
\item Randomly sample $\Omega_{i,j,y,z,z'} \sim \pi_L|i$ and $x \sim \Omega_{i,j,y,z,z'}$.
\item If $\ip{\cA(\Omega),x} \neq L(x)$ output $\perp$, else output the symbol $F(\Omega,x) \in \Sigma$ corresponding to the tuple $(\ip{\cA(\Omega),\Vec{e}_{(i,1)}},\ldots,\ip{\cA(\Omega),\Vec{e}_{(i,\log r)}})$. 
\end{enumerate}

\vspace{-2ex}
\paragraph{Completeness:} 
Suppose the $\csat$ instance $C$ we started with is satisfiable, and let $A'$ be a satisfying assignment. In that case the $\qe$ instance $Q$ we generated is satisfiable, and we can pick an assignment $B$ to the auxiliary variables so that the assignment $A=(A',B)$ satisfies $Q$.  Assign the right-side of the label cover according to the linear function $L=(A,A\otimes A)$, i.e. every point $v \in V$ is assigned the value $\ip{L,v}$. For each subspace $\Omega \in U$ assign the linear function $\cA(\Omega)=L|_{\Omega}$.
It is easy to check that the left assignment satisfies all the conditions that the left alphabet is supposed to. Furthermore,
\[\Pr_{\substack{i \sim [N] \\ \Omega_{i,j,y,z,z'}\sim \pi_L|i \\x \sim \Omega_{i,j,y,z,z'}}}\left[F(\Omega,x) = (A(i,1),\ldots,A(i,\log r))\right] = 1.\]

\vspace{-2ex}
\paragraph{Soundness:} We will now verify the soundness condition, and assume that the initial $\csat$ instance $C$ is unsatisfiable. 
Fix an assignment $f$ to the right vertices of the label cover instance. We start by verifying that the distribution $\pi_L$ is good.
\begin{claim}\label{claim:object-good-had}
The distribution $\pi_L$ is $O(1/q)$-good.
\end{claim}
\begin{proof}
Consider the distribution $\cD_1$ that samples $\Omega_{i,j,y,z,z'} \sim \pi$ where \[
\Omega_{i,j,y,z,z'}=\spn(S_i,E_j,(y,0),(0,y\otimes y),z,z'),\] and then samples $P \subseteq \Omega$ with 
\[
P=\spn(c_1 \Vec{e}_{(i,1)}+\ldots+c_{r+4}z+c_{r+5}z',c'_1\Vec{e}_{(i,1)}+\ldots+c'_{r+4}z+c'_{r+5}z')
\]
for uniformly and independently chosen $c_i,c'_i \in \F_q$. If $c_{r+4},c'_{r+5}$ are both not equal to zero and both $z,z' \neq 0$, which happens with probability at least $1-O(1/q)$, then the marginal on $P \sim \cD_1$ is the same as a uniformly random plane. Therefore the total variation distance between the distributions $P \sim \cD_1, \Omega \sim \cD_1|P$ and $P \sim \cD_2, \Omega \sim \cD_2|P$ is at most $O(1/q)$, as required.   
\end{proof}

By Claim~\ref{claim:object-good-had} we may use \Cref{lem:linearity-testing} to get a list of linear functions 
$L_1,\ldots, L_t \in \F_q^{n+n^2}$ for 
$t = O\left(\frac{1}{\delta^3}\right)$ such that for all plane oracles $\cA$,
\begin{equation}\label{eq:linearity-test}
\Pr_{\Omega, x}\left[\ip{\cA(\Omega),x} \neq f(x) \vee \exists j \text{ such that } L_j|_{\Omega} \equiv \cA(\Omega)\right] \geq 1-\frac{\delta}{4}-O\left(\frac{1}{q}\right).   
\end{equation}
We will now prune the above list of linear functions so that we are only left with \(L_j\) such that:
\begin{enumerate}
    \item $L_j = (A_j,A_j \otimes A_j)$ for some $A_j \in \F_q^n$.
    \item $L_j$ satisfies the quadratic system $Q$, i.e.\ $\ip{L_j,E_k}=b_k$ for all $k \in [m]$. 
\end{enumerate}
Denote by $\good$ the set of indices $j \in [t]$ for which $L_j$ satisfies both of the conditions above. First note that if $L_j$ is good then $A_j$ is a satisfying assignment for $Q$. Therefore let us bound the probability that for some $j \notin \good$, $\cA(\Omega) \equiv L_j|_{\Omega}$. 

Fix such an index $j$. Suppose condition (1) is violated for $L_j = (A_j,B_j)$, i.e. $B_j \neq A_j\otimes A_j$. Then consider the degree $2$ polynomials $B_j(y) = \ip{B_j,y \otimes y}$ and $A'_j(y) = \ip{A_j \otimes A_j,y \otimes y}=\ip{A_j,y}^2$ for $y \in \F_q^{n}$. By the Schwartz-Zippel lemma 
$B_j(y)\neq A_j'(y)$ for at least $\left(1-\frac{2}{q}\right)$-fraction of $y$. Since 
$\Omega \sim \pi_L$ contains a random $y$ we get that $\ip{L_j,(0,y\otimes y)} \neq \ip{A_j,y}^2$, thus implying that $\ip{L_j,(0,y\otimes y)} \neq \ip{L_j,(y,0)}^2$ with probability at least $1-O(1/q)$ over $\pi_L$. However, since our assignment $\cA$ always satisfies $\ip{\cA(\Omega),(0,y\otimes y)}=\ip{\cA(\Omega),(y,0)}^2$, we see that the probability, for a random $\Omega$, 
that 
$L_j|_{\Omega} \equiv \cA(\Omega)$ 
is at most $O(1/q)$. 

Let us now suppose that $L_j$ violates (2). Then it can only satisfy $\frac{2}{q}$-fraction of the equations in $E$ since $\val(Q) \leq \frac{2}{q}$ (when it is unsatisfiable). Again since a random $\Omega$ contains a random equation $E_k \sim E$, and $\cA(\Omega)$ satisfies $E_k$, we get that $L_j|_{\Omega} \equiv \cA(\Omega)$ with probability at most $O(1/q)$.
Thus, we have shown that for any bad \(L_j\),
\[ \Pr_{\Omega}\left[ L_j\mid_{\Omega} \equiv \cA(\Omega)\right] \lll 1/q. \]
Hence, a simple union bound gives us that a modification of \eqref{eq:linearity-test} holds,
\begin{equation}\label{eq:dpcp-hadamard}
\Pr_{\Omega,x}\left[\ip{\cA(\Omega),x} \neq f(x) \vee \exists j \in \good: L_j|_{\Omega} \equiv \cA(\Omega)\right] \geq 1 - \frac{\delta}{4} - O\left(\frac{t}{q}\right) \geq 1-\frac{\delta}{2},   
\end{equation}
where the last inequality holds by choosing $q \geq \Omega(1/\delta^4)$. Reformulating \eqref{eq:dpcp-hadamard} we get that there is a list of satisfying assignments $(A_j)_{j \in \good}$ for $Q$ such that for all $\cA$,
\begin{equation}\label{eq:had-soundness}
\Pr_{\substack{i \sim [N] \\ \Omega_{i,j,y,z,z'}\sim \pi_L|i \\x \sim \Omega_{i,j,y,z,z'}}}\left[F(\Omega,x) \in \{\perp\} \cup \{(A_j(i,1),\ldots,A_j(i,\log r)): j\in \good\} \right] \geq 1-\frac{\delta}{2},    
\end{equation}
which completes the proof of soundness of $\Psi$.
\vspace{-2ex}
\paragraph{Modifying $\Psi$ to be regular:} The label cover instance $\Psi$ may not be regular, but this is easy to fix as we explain now. First for simplicity we put in a vertex on the left for every choice of randomness so that we now have a uniform distribution over the left-side of vertices (instead of $\pi_L$). Note that the degree of a subspace $\Omega$ on the left is equal to $q^{\dim(\Omega)}$, therefore we can make it regular by throwing away the subspaces that have small dimension, which are at most a $O(1/q)$-fraction of all the subspaces. 

To make the instance right-regular first note that the distribution on the right side of $\Psi$ is $\eps:=O(1/q)$-TV-close to uniform (the proof is the same as that of \Cref{claim:object-good-had}). Let $d_u$ be the degree of $u \in R$, let $d$ be the average right degree, and let $N$ be the number of vertices on the right. We discard the right vertices $u$ for which $|d_u-d| \geq d\sqrt{\eps}$, which is at most $\leq \sqrt{\eps}$-fraction of all points in $\F_q^{n+n^2}$. Next we add some dummy vertices on the left, and then to each vertex on the right we add at most $\sqrt{\eps} d$ edges to the dummy vertices so that the resulting right vertex set is regular. By discarding the high-degree vertices on the right we might have ruined the left regularity by a bit, so we add some dummy vertices on the right to bring the left degree back to the original. This whole operation costs us at most $1/\sqrt{q}$ in the soundness, which means that \eqref{eq:had-soundness} holds with probability $\geq 1-\delta$.

\vspace{-2ex}
\paragraph{Converting to a robust PCP:}
Using the equivalence between generalized Label Cover and robust PCPs in \Cref{lem:lc-robust-equivalence}, one gets that this is a regular robust decodable PCP, thus finishing the proof. 
\end{proof}

\subsection{Proof of Lemma~\ref{lem:rm-dpcp}: Reed-Muller based dPCP}\label{sec:rm-dpcp}
In this section we build a dPCP for $\csat_{\Sigma}$ that has polynomial size and uses an alphabet of quasi-polynomial size. We follow the exposition from the lecture notes~\cite{lec-notes} very closely.

\subsubsection{Zero on Subcube Test}
The Zero-on-Subcube Testing problem is the following: given a subset $H \subseteq \F$ and a function $f$ we want to test if $f \equiv 0$ on $H^m$ and $\deg(f)\leq d$. This section contains several helpful tools about
the zero-on-subcube testing problem, and as the proofs are straightforward we omit them. 
The claim below is a useful starting point in designing a local test for this problem.
\begin{claim}\label{claim:zero-subcube-structure}
A polynomial \(f\) of degree at most \(d\) is identically zero over \(H^m\) if and only if there exist polynomials \(P_1, P_2, \ldots, P_m\) of degree at most \(d\) such that
\[ f(x) = \sum_{i = 1}^{m} g_H(x_i) P_i(x_1, x_2, \ldots, x_m),\]
where $g_H(x)$ denotes the univariate polynomial $\prod_{h\in H}(x-h)$. 
\end{claim}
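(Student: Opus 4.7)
The plan is to prove both directions separately; the $(\Leftarrow)$ direction is an immediate evaluation argument, while the $(\Rightarrow)$ direction is established by induction on the number of variables $m$ using polynomial division by $g_H$.

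For the easy direction, if $f(x) = \sum_{i=1}^m g_H(x_i) P_i(x)$ and $x \in H^m$, then each $x_i \in H$, so $g_H(x_i) = \prod_{h \in H}(x_i - h) = 0$, making every term in the sum vanish and hence $f(x) = 0$. The degree of each $P_i$ is irrelevant for this direction.

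For the main direction, the plan is induction on $m$. The base case $m = 1$ is immediate: $f$ is a univariate polynomial of degree $\leq d$ vanishing on $H$, so $g_H$ (which has $H$ as its roots) divides $f$, giving $f = g_H \cdot P_1$ with $\deg(P_1) \leq d - |H| \leq d$. For the inductive step, I would view $f$ as a polynomial in $x_m$ with coefficients in $\mathbb{F}[x_1, \ldots, x_{m-1}]$ and perform polynomial division by the monic polynomial $g_H(x_m)$ to write
\[
f = g_H(x_m) \cdot P_m + r,
\]
where $\deg_{x_m}(r) < |H|$, $\deg(P_m) \leq d - |H|$, and $\deg(r) \leq d$.

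The key observation is that for every fixed tuple $(x_1,\ldots,x_{m-1}) \in H^{m-1}$, the univariate polynomial $r(x_1,\ldots,x_{m-1},\cdot)$ in $x_m$ has degree $< |H|$ but vanishes on all $|H|$ points of $H$ (since $f$ vanishes on $H^m$ and $g_H(x_m)P_m$ does as well for $x_m \in H$), hence it is identically zero. Writing $r = \sum_j r_j(x_1,\ldots,x_{m-1}) \cdot x_m^j$, this means each coefficient $r_j$ vanishes on $H^{m-1}$, and since $\deg(r_j) \leq d - j$, the inductive hypothesis gives $r_j = \sum_{i=1}^{m-1} g_H(x_i) Q_{i,j}$ with $\deg(Q_{i,j}) \leq d - j$. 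Setting $P_i := \sum_j x_m^j Q_{i,j}$ for $i < m$ gives $\deg(P_i) \leq \max_j (j + (d-j)) = d$, and yields the desired decomposition $f = \sum_{i=1}^m g_H(x_i) P_i$.

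I do not expect any serious obstacle: the only subtle point is verifying the degree bound, which works cleanly because the decrease in $x_m$-degree after division is compensated exactly by multiplication by $x_m^j$ later. The argument is essentially a multivariate generalization of the fact that $g_H$ generates the ideal of polynomials vanishing on $H$ in the univariate case, and is a standard Gröbner-basis-style division reducing membership in the ideal $(g_H(x_1),\ldots,g_H(x_m))$ to the vanishing condition.
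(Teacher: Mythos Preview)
Your proof is correct and is the standard argument for this folklore claim. The paper itself omits the proof entirely, stating that ``the proofs are straightforward,'' so there is nothing to compare against; your induction on $m$ via division by the monic polynomial $g_H(x_m)$, together with the observation that the remainder's coefficients in $x_m$ vanish on $H^{m-1}$, is exactly the expected argument and the degree bookkeeping is handled correctly.
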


Let us now state a local test for verifying if  $f\equiv 0$ on $H^m$ along with $\deg(f)\leq d$. Let $\cA$ be a planes oracle such that for each affine plane \(P\), \(\cA(P)\) is an \((m+1)\)-tuple \((P_0, P_1, P_2, \ldots, P_m)\) of polynomials of degree at most \(d\) such that
\[ P_0 = \sum_i g_H(x_i) P_i. \]
Similarly let $\overline{f}$ be a points oracle that is an \((m+1)\)-tuple of functions \(\overline{f} = (f, f_1, \ldots, f_m)\) from \(\mathbb{F}^m\) to \(\mathbb{F}\). \begin{definition} The zero-on-subcube test then proceeds as follows,
\begin{enumerate}
\item Sample an affine plane \(P\) uniformly at random and a random point \(x\) from it.
\item Query the planes oracle for \(\cA(P)\) and the points oracle for \(\overline{f}(x)\).
\item Accept iff \(\cA(P)(x) = \overline{f}(x)\).
\end{enumerate}    
\end{definition}

It is easy to prove the soundness of the Zero-on-Subcube test using the soundness of the plane-vs-point test, namely \Cref{claim:zero-subcube-structure}, and an application of the Schwartz-Zippel lemma. 
\begin{lemma}\label{lem:zero-on-subcube-test}
There exists $c>0$ such that the following holds. 
Let $\F$ be a field of size $q$, 
let $H \subseteq \F$, let 
$m, d \in \mathbb{Z}^{\geq 0}$ and 
let $\delta \in (0, 1)$ be such that 
$\delta \geq \left(\frac{(d+|H|)m}{q}\right)^c$. For any function 
$f : \mathbb{F}^m \to \mathbb{F}$, there exists a list of polynomial maps 
$\overline{Q^{(1)}}, \ldots, \overline{Q^{(t)}}$, with $\deg(\overline{Q^{(i)}}) \leq d$ and $\overline{Q^{(i)}} \equiv 0$ on $H^m$, where $t = O\left(\frac{1}{\delta}\right)$ such that the following holds for any planes oracle $\cA$ (even for a randomized one):
\[ \Pr_{\cA, P,x \sim P}\left[\cA(P)(x) \neq \overline{f}(x) \vee \exists i \in [t], \overline{Q^{(i)}} \mid_P \equiv \cA(P)\right] \geq 1 - \delta.\]
Furthermore the same holds for subspaces sampled from an $\eta$-good distribution $\pi$,
\[ \Pr_{\cA, \Omega \sim \pi, x \sim \Omega}\left[\cA(\Omega)(x) \neq \bar{f}(x) \vee \exists i \in [t], \bar{Q^{(i)}} \mid_P \equiv \cA(\Omega)\right] \geq 1 - \delta-\eta. \]
\end{lemma}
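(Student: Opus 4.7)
The plan is to mirror the proof structure of Lemma~\ref{lem:had-dpcp-restated}: apply the generic plane-vs-point low-degree test of Lemma~\ref{lem:low-deg-testing} and then prune the resulting list of candidate polynomials to keep only those that respect the structural decomposition guaranteed by Claim~\ref{claim:zero-subcube-structure}.

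First, I would apply Lemma~\ref{lem:low-deg-testing} coordinate-wise to the tuple oracle $(\overline{f},\cA)$. For each $i\in\{0,1,\ldots,m\}$, the planes oracle $\cA$ yields a scalar planes oracle $\cA_i(P):=(\cA(P))_i$ and $\overline{f}$ yields the scalar points oracle $f_i$ (with $f_0:=f$). Lemma~\ref{lem:low-deg-testing} then provides, for each $i$, a list $\{Q^{(j)}_i\}_j$ of degree-$\leq d$ polynomials depending only on $f_i$ with the property that for any $\cA_i$, with probability at least $1-\delta/(2(m+1))$ over a random plane-point pair $(P,x)$, either $\cA_i(P)(x)\neq f_i(x)$ or $\cA_i(P)\equiv Q^{(j)}_i|_P$ for some $j$. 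Taking a union bound over the $m+1$ coordinates, with probability at least $1-\delta/2$ over $(P,x)$ either the tuple test rejects or there is a tuple $\overline{L^{(\vec j)}}:=(Q^{(j_0)}_0,Q^{(j_1)}_1,\ldots,Q^{(j_m)}_m)$ in the Cartesian product of the coordinate lists such that $\overline{L^{(\vec j)}}|_P\equiv\cA(P)$.

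Next, I would prune this list to keep only the admissible tuples---those $\overline{L^{(\vec j)}}$ satisfying the global identity
\[
L^{(\vec j)}_0\;\equiv\;\sum_{i=1}^m g_H(x_i)\,L^{(\vec j)}_i \qquad\text{as polynomials on }\F^m,
\]
so that by Claim~\ref{claim:zero-subcube-structure}, $L^{(\vec j)}_0$ has degree at most $d$ and vanishes on $H^m$. The admissible tuples are indexed canonically by their first coordinate (picking a fixed decomposition witness for each $L^{(\vec j)}_0$), yielding a list of size $O(1/\delta)$ as required. To control the cost of pruning, I would argue that any discarded tuple $\overline{L}$ almost surely fails to equal $\cA(P)$ on a random plane: the polynomial $R:=L_0-\sum_i g_H(x_i)L_i$ is nonzero of degree at most $d+|H|$, and since any valid planes oracle satisfies $\cA(P)_0\equiv\sum_i g_H(x_i)\cA(P)_i$ on $P$, the event $\overline{L}|_P\equiv\cA(P)$ implies $R|_P\equiv 0$. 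A uniformly random affine plane contains a uniformly random point of $\F^m$, so Schwartz--Zippel applied to $R$ gives $\Pr_P[R|_P\equiv 0]\leq O((d+|H|)/q)$. Union-bounding over the discarded tuples and taking $c$ large enough in $\delta\geq((d+|H|)m/q)^c$, the total pruning cost is at most $\delta/2$; combining with the first step yields the claimed bound.

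Finally, the ``furthermore'' claim follows identically by replacing the plane-vs-point case of Lemma~\ref{lem:low-deg-testing} with its $\eta$-good subspace variant in the first step (which adds an $\eta$ term to the soundness), and by observing that the Schwartz--Zippel argument in the pruning step goes through for a random point drawn from a random subspace $\Omega\sim\pi$, since $\eta$-goodness of $\pi$ guarantees the induced point distribution on $\F^m$ is $O(\eta)$-close to uniform. The main anticipated obstacle is the bookkeeping around list sizes in the transition from coordinate-wise lists to admissible tuples (ensuring that pruning brings the list down to the claimed $O(1/\delta)$ rather than the Cartesian-product size $O(1/\delta^{m+1})$), along with verifying Schwartz--Zippel cleanly for random affine plane restrictions---both are standard but need to be carried out with care so the soundness error absorbs into $\delta$.
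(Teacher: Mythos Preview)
Your coordinate-wise application of Lemma~\ref{lem:low-deg-testing} followed by taking the Cartesian product of the per-coordinate lists produces a list of tuples of size $(O(m/\delta))^{m+1}$, and this is where the argument breaks. In the pruning step you union-bound the Schwartz--Zippel error $O((d+|H|)/q)$ over all discarded tuples, so the total cost is $(O(m/\delta))^{m+1}\cdot (d+|H|)/q$. The hypothesis $\delta\ge ((d+|H|)m/q)^{c}$ only gives $q\ge (d+|H|)m\,\delta^{-1/c}$, and no \emph{fixed} constant $c$ (the lemma asserts a universal $c$, independent of $m$) can make $(d+|H|)m\,\delta^{-1/c}\ge (O(m/\delta))^{m+1}(d+|H|)/\delta$; the right-hand side grows like $(m/\delta)^{m}$ while the left grows only linearly in $m$. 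So the pruning cost is not $\le \delta/2$.

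Your attempt to bring the final list size down to $O(1/\delta)$ by ``indexing admissible tuples by their first coordinate and picking a canonical decomposition'' is also invalid. The event you must preserve is $\overline{Q}|_{P}\equiv \cA(P)$ as \emph{full tuples}. If $(L_{0},L_{1},\ldots,L_{m})$ is an admissible tuple with $(L_{0},L_{1},\ldots,L_{m})|_{P}\equiv \cA(P)$, replacing it by a different decomposition $(L_{0},L'_{1},\ldots,L'_{m})$ of the same $L_{0}$ generally gives $L'_{i}|_{P}\not\equiv \cA(P)_{i}$ (the decomposition $P_{0}=\sum_{i}g_{H}(x_{i})P_{i}$ is far from unique), so the event is lost. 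Even granting this step, the first-coordinate list already has size $O(m/\delta)$, not $O(1/\delta)$.

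The intended route---which is what the paper's one-line sketch (``plane-vs-point test, Claim~\ref{claim:zero-subcube-structure}, Schwartz--Zippel'') points to---is to apply low-degree testing \emph{once} to the vector-valued function $\overline{f}:\F^{m}\to\F^{m+1}$ against the tuple oracle $\cA$. The statement and proof of Lemma~\ref{lem:low-deg-testing} go through verbatim for $\F^{k}$-valued functions with the same list size $t=O(1/\delta)$, because the only property of the range used is that two distinct degree-$\le d$ polynomial maps agree on at most a $d/q$ fraction of points, which holds coordinate-wise and hence for tuples. This yields directly a list of $O(1/\delta)$ polynomial tuples; now prune the non-admissible ones exactly as you describe (Schwartz--Zippel on $R=Q_{0}-\sum_{i}g_{H}(x_{i})Q_{i}$), but the union bound is over only $O(1/\delta)$ tuples, giving cost $O((d+|H|)/(\delta q))\le \delta/2$ for a universal $c$. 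The admissible tuples have $Q_{0}\equiv 0$ on $H^{m}$ by Claim~\ref{claim:zero-subcube-structure}, and the $\eta$-good subspace extension follows as you wrote.
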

\begin{proof}
The proof can be found in~\cite{lec-notes}.
\end{proof}

\subsubsection{Proof of Lemma~\ref{lem:rm-dpcp}}
In this section we prove Lemma~\ref{lem:rm-dpcp}, restated below.
\begin{lemma}[\Cref{lem:rm-dpcp} restated]\label{lem:rm-dpcp-restated}
For all $\delta >0$ and all alphabets $\Sigma$, $\csat_{\Sigma}(N,S)$ has a regular decodable PCP with the following parameters: 
\begin{enumerate}
    \item Robust soundness error $\delta$.
    \item Proof alphabet size and proof length 
    at most $S^{O(1)}$.
    \item Randomness complexity at most $O(\log S)$.
    \item Query and decision complexity at most
    $(\log(S))^{O(\log|\Sigma|)}$.
    \item List size at most $1/\delta^{O(1)}$.
\end{enumerate} 
\end{lemma}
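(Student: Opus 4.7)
The plan is to follow the standard Reed-Muller based (d)PCP construction, very closely analogous to the Hadamard-based dPCP of Lemma~\ref{lem:had-dpcp-restated}, but using low-degree polynomials and the zero-on-subcube test in place of linear polynomials and linearity testing. Given a $\csat_\Sigma$ instance $(C, N, S)$, I would first arithmetize it in the usual Cook–Levin style: choose a prime $q = \poly(\log S, 1/\delta)$, a subset $H \subseteq \mathbb{F}_q$, and a dimension $m$ with $|H| = \poly(\log S)$ and $|H|^m \ge \widetilde{\Omega}(S)$, so that satisfiability of $C$ reduces to finding a polynomial $f : \mathbb{F}_q^m \to \mathbb{F}_q$ of individual degree $< |H|$ whose ``circuit polynomial'' $P_C[f]$ (a polynomial of total degree $d = O(m|H|)$ built from $f$ in the standard way) is identically zero on $H^m$.

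The proof will consist of two oracles, packaged into a single label cover instance as in the Hadamard case. A points oracle assigns to each $x \in \mathbb{F}_q^m$ the $(m{+}1)$-tuple $\bar f(x) = (P_C[f](x), P_1(x), \ldots, P_m(x))$, where the $P_i$'s are the auxiliary polynomials provided by the decomposition of Claim~\ref{claim:zero-subcube-structure}. A subspaces oracle assigns to each subspace $\Omega \subseteq \mathbb{F}_q^m$ (in a specially designed distribution $\pi_L$) a tuple of polynomials $\cA(\Omega)$ of degree $\le d$ which is supposed to be the restriction of $\bar f$ to $\Omega$. The subspaces will have dimension $k = O(\log |\Sigma|)$, chosen so that $\Omega$ can contain (a) a random direction in $\mathbb{F}_q^m$ to enforce low-degreeness through the subspace-vs-point test, and (b) the $\log |\Sigma|$ vectors in $H^m$ corresponding to the Boolean locations of $y_j$ that must be decoded on input $j \in [N]$. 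The decoder samples $\Omega \sim \pi_L$ and $x \in \Omega$, checks that $\cA(\Omega)(x) = \bar f(x)$ and that the decomposition identity $P_C[f] \equiv \sum_i g_H(x_i) P_i$ holds on $\Omega$; if these checks pass it reads off the $\log|\Sigma|$ values of the first component of $\cA(\Omega)$ at the specified decoding vectors and concatenates them into a symbol of $\Sigma$, otherwise it outputs $\perp$.

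Completeness is immediate, since an honest prover providing $\bar f$ and its genuine restrictions passes all checks. For soundness I first verify, by the same ``random leading direction'' argument as in Claim~\ref{claim:object-good-had}, that $\pi_L$ is $O(1/q)$-good in the sense of Definition~\ref{def:eta-good}; this allows me to invoke the subspace version of Lemma~\ref{lem:zero-on-subcube-test} with error parameter $\delta/O(1)$ and to extract a list of $t = O(1/\delta)$ tuples $\bar Q^{(1)}, \ldots, \bar Q^{(t)}$ of degree-$d$ polynomials with $Q^{(i)}$ identically zero on $H^m$. Each such $Q^{(i)}$ encodes a candidate satisfying assignment $y^{(i)}$ of $C$, and pruning away tuples whose first component is not actually a valid $P_C[\cdot]$ (which only costs $O(t/q)$ by a Schwartz–Zippel type argument on the auxiliary conditions) shows that with probability at least $1-\delta$ the decoder outputs either $\perp$ or one of $y^{(1)}_j, \ldots, y^{(t)}_j$, giving soundness error $\delta$ and list size $O(1/\delta)$. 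Regularity is then patched up exactly as in the Hadamard case, and Lemma~\ref{lem:lc-robust-equivalence} converts the resulting generalized label cover into a regular robust decodable PCP.

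The main technical obstacle I anticipate is the simultaneous balancing of parameters: taking $q = \poly(\log S, 1/\delta)$ to make the low-degree/zero-on-subcube test work with error $\delta$, while keeping $q^m = S^{O(1)}$ (which forces $|H| = \poly(\log S)$ and $m = O(\log S / \log\log S)$), while keeping the subspace dimension $k = O(\log |\Sigma|)$ so that the number of coefficients of a restricted degree-$d$ polynomial is $\binom{k+d}{k} = (\log S)^{O(\log |\Sigma|)}$ and hence both query and decision complexity fall within the target bound. All of this plus the $O(1/\delta)$ list-size guarantee from Lemma~\ref{lem:zero-on-subcube-test} then gives the parameter settings stated in the lemma; the remaining details (checking goodness, pruning the list, and making the instance regular) are largely a transcription of the corresponding steps in Lemma~\ref{lem:had-dpcp-restated}.
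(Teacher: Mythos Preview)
Your high-level plan matches the paper's, but there is a concrete gap in the design of the proof oracle. Your points oracle is $\bar f(x) = (P_C[f](x), P_1(x), \ldots, P_m(x))$, and you decode by reading ``the first component of $\cA(\Omega)$ at the specified decoding vectors,'' which lie in $H^m$. But the first component is $P_C[f]$, which by construction is supposed to vanish on $H^m$; you would always decode the all-zero symbol, not the assignment. The tuple is missing a separate component carrying the low-degree extension of the assignment $f$ itself.

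Once you add such a component (the paper calls it $p_{-1}$), a second gap appears: you must locally certify that your $p_0$-component is genuinely $P_C[p_{-1}]$ rather than an unrelated polynomial that happens to vanish on $H^m$. This is \emph{not} a check that the zero-on-subcube test handles, and it cannot be done for a generic ``circuit polynomial'' $P_C[\cdot]$ inside an $O(\log|\Sigma|)$-dimensional subspace. The paper resolves both issues by first reducing to 3-SAT and working over $\mathbb{F}^{3m+3}$: the proof tuple is $(p_{-1}, p_0, p_1, \ldots, p_{3m+3})$, and the subspace $\Omega$ is engineered to contain a random $z$ together with its cyclic shifts $\rho(z), \rho^2(z)$ so that the single-point identity $p_0(z)=\tilde\varphi(z)(p_{-1}(z)-z_{3m+1})(p_{-1}(\rho(z))-z_{3m+2})(p_{-1}(\rho^2(z))-z_{3m+3})$ is checkable inside $\Omega$; an additional pair $z,z'$ agreeing on the first $m$ coordinates enforces that $p_{-1}$ depends only on those coordinates. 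These are exactly the ``auxiliary conditions'' whose Schwartz--Zippel pruning you allude to, but without the $p_{-1}$ component and the 3-SAT structure that makes the $p_0$-versus-$p_{-1}$ relation local, your soundness argument and your decoding step do not go through.
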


\begin{proof}
The proof is broken into several steps.
\vspace{-2ex}
\paragraph{Reduction to 3-SAT:} Let $\Sigma=[r]$ and $\cC$ be an instance of $\csat_\Sigma(N,S)$, i.e.\ $\cC$ is a Boolean function, $\cC:\Sigma^N\rightarrow \{0,1\}$ also represented by an $S$-sized circuit $\cC$ that computes the equivalent function, $\cC:\{0,1\}^{N\log r}\rightarrow \{0,1\}$. Let the $\Sigma$-valued variables be denoted by $Y=(y_1,\ldots,y_N)$, and each $y_i$ is associated to a block of $\log r$ Boolean variables, denoted by 
$(z_{(i,1)},\ldots,z_{(i,\log r)})$. We will use this identification to move back and forth between the $Y$ and $Z$ variables.

Let us start by reducing the $\csat$ problem to 3-SAT while preserving satisfiability, again using the Cook-Levin reduction. Formally, we get a 3-SAT instance $\varphi$ on $n=O(S)$ Boolean variables denoted by $X = (z,b)$ where $b$ is a set of auxiliary variables that the reduction produces, and we have the property that $X$ is a satisfying assignment for $\varphi$ if and only if $Z$ satisfies $\cC$.
\vspace{-2ex}
\paragraph{Arithmetization:} Next, we perform an ``arithmetization'' procedure on $\varphi$. Let \(\mathbb{F}\) be a field of size \(q = \log^C n,\) for some large absolute constant $C>0$ to be chosen later. Fix any subset \(H\) of \(\mathbb{F}\) such that \(H\) contains \(\{0, 1\}\), $|H|=\Theta(\log n)$ and there is an integer \(m=\Theta(\log n/\log\log n)\) such that \(|H|^m = n\); we will identify \([n]\) with \(H^m\) and use these interchangeably.
We will get a polynomial representation of the formula \(\varphi\). For each possible clause of 3 variables, the polynomial encodes whether the clause belongs to \(\varphi\) or not. We think of the formula \(\varphi\) as a function mapping \([n]^3 \times \{0, 1\}^3\) to \(\{0, 1\}\) as follows:
\[
\varphi(i, j, k, b_1, b_2, b_3) = 
\begin{cases}
1 & \text{if } x_i^{b_1} \vee x_j^{b_2} \vee x_k^{b_3} \text{ is a clause in }\varphi,\\
0 & \text{otherwise},
\end{cases}
\]
where \(x_i^0\) and \(x_i^1\) represent the negative and positive instances of \(x_i\), respectively. Since we have identified \(H^m\) with \([n]\) and \(H\) contains \(\{0, 1\}\), we can think of \(\varphi\) as a function from \(H^{3m+3}\) to \(\mathbb{F}\) (define \(\varphi\) to be 0 outside the points mentioned above). As in the case of the assignment, we can define a polynomial \(\wt{\varphi}\) over \(3m + 3\) variables of degree \(O(m|H|)\) that agrees with \(\varphi\) on \(H^{3m}\). Similarly, every Boolean assignment \(A : [n] \to \{0, 1\}\) can also be thought of as a function mapping \(H^m\) to \(\mathbb{F}\). Let $A(x)$ also denote the polynomial of degree \(O(m |H|)\) on $\F^m$ that agrees with \(A\) when evaluated on inputs from \(H^m\).
Given the polynomials \(\wt{\varphi}\) and \(A\) define \(p_{\varphi, A}\) on $\F^{3m+3}$ as follows,
\[ p_{\varphi, A}(i, j, k, b_1, b_2, b_3) = \wt{\varphi}(i, j, k, b_1, b_2, b_3)(A(i) - b_1)(A(j) - b_2)(A(k) - b_3).\]
Note that $\deg(p_{\varphi, A}) \leq O(m|H|)$. We have the following claim (we omit the straightforward proof).
\begin{claim}\label{claim:arithmetization}
Let \(A\) be any polynomial defined on \(m\) variables. Assume the polynomial \(p_{\varphi, A}\) is constructed from \(A\) as above. Then, \(p_{\varphi, A}\) is identically zero on \(H^{3m+3}\) if and only if \(A|_{H^m}\) is a satisfying assignment for the formula \(\varphi\).  
\end{claim}
\vspace{-2ex}
\paragraph{Construction of Label-Cover Instance:}
Given $\varphi$ we will construct the label cover instance $\Psi$. The left-side $L$ will be $O(\log r)$-dimensional linear subspaces of $\F^{3m+3}$ endowed with a distribution $\pi_L$ and the right side $R$ will be $\F^{3m+3}$. First define the linear map $\rho:\F^{3m+3} \rightarrow \F^{3m+3}$ as  
\[
\rho(i,j,k,b_1,b_2,b_3)=(k,i,j,b_1,b_2,b_3).
\]
Recall that the $i^{th}$ variable for $i \in [N]$ is associated to a block of variables $\{(i,1),\ldots,(i,\log r)\} \subseteq [n]$ which in turn corresponds to a $\log r$-sized set $S_i \subset H^m$. To pick a random subspace from $\pi_L$ first sample $i \sim [N]$, extend each element in $S_i$ randomly to $3m+3$ coordinates to get a $\log r$-sized set $\wt{S_i} \subset \F^{3m+3}$. Then sample $y,y',z \sim \F^m$, and $z' \sim \F^m$ such that the first \(m\) coordinates of \(z'\) are the same as the first \(m\) coordinates of \(z\) and the remaining coordinates are uniformly chosen from $\F^{2m+3}$. Then pick the subspace $\Omega_{\wt{S_i},y,y',z,z'}$ defined as $\spn(\wt{S_i},y,y',z,z',\rho(z),\rho^2(z))$. For notational convenience, we drop the subscript in $\Omega_{\wt{S_i},y,y',z,z'}$ when clear from context. The PCP we construct is based on the zero-on-subcube test for subspaces, where the prover is supposed to prove that $p_{\varphi,A}$ is zero over $H^m$ while also allowing us to decode coordinates of $A$.
\begin{enumerate}
\item \textbf{Right Alphabet:} For each point in $R = \F^{3m+3}$, the prover provides a $(3m+5)$-tuple of values in $\F$. This can also be thought of as a ``points oracle'' or a collection of functions \(\overline{f} : \mathbb{F}^{3m+3} \to \mathbb{F}^{3m+5}\), $\overline{f}=(f_{-1},f_0,\ldots,f_{3m+3})$. 
\item \textbf{Left Alphabet:} For each subspace \(\Omega_{\wt{S_i},y,y',z,z'} \in L\), the prover provides $\cA(\Omega)$ which is a \((3m+5)\)-tuple of polynomials \((p_{-1}, p_0, p_1, \ldots, p_{3m+3})\) of degree \(O(m|H|)\) defined on \(\Omega\) such that:
\begin{itemize}
\item \(p_0(x) = \sum_{1 \leq j \leq 3m+3} g_H(x_j)p_j(x)\) for each \(x \in L(\Omega)\).
\item \(p_{-1}(z) = p_{-1}(z')\). %
\item \(p_0(z) = \wt{\varphi}(z)(p_{-1}(z) - z_{3m+1})(p_{-1}(\rho(z)) - z_{3m+2})(p_{-1}(\rho^2(z)) - z_{3m+3})\). 
\end{itemize}
This can be thought of as a ``subspaces oracle''. Each polynomial $p_i$ is provided via its values on the subspace.
\end{enumerate}
Note that the right alphabet has size $q=\poly\log(S)$ and the left alphabet for each subspace is a subset of $[q]^{(3m+5)D}$, where $D$ denotes the number of points in any $O(\log r)$-dimensional subspace, which is at most $q^{O(\log r)}$. It is easy to check that given $\sigma \in [q]^{(3m+5)D}$, one can decide whether it belongs to $\Sigma_L(\Omega)$ (that is, it satisfies the three properties that the left-alphabet is supposed to) with a circuit of polynomial size, i.e.\ size equal to $(3m+5)q^{O(\log r)}\log q=(\log S)^{O(\log r)}$. 

Given this our PCP decoder is simple -- given $i \in [N]$:
\begin{enumerate}
    \item Randomly sample $\Omega_{\wt{S_i},y,y',z,z'} \sim \pi_L|i$ and $x \sim \Omega_{\wt{S_i},y,y',z,z'}$.
    \item If $\cA(\Omega)(x) \neq \overline{f}(x)$, output $\perp$, else output the symbol $F(\Omega,x) \in \Sigma$ that corresponds to the tuple $(p_{-1}(z))_{z \in \wt{S_i}}$.
\end{enumerate}
\vspace{-2ex}
\paragraph{Completeness:}
Suppose $\cC$ is satisfiable.  Then $\varphi$ is also satisfiable, and we let $A$ be some satisfying assignment for it (whose first $N\log r$ variables correspond to a satisfiable assignment of $\cC$). Additionally let  $\wt{A}:\F^{3m+3} \rightarrow \F$ be the polynomial $\wt{A}(i,j,k,b_1,b_2,b_3) = A(i)$. Let $f_{-1} = \wt{A}$ and $f_0 = p_{\varphi, A}$. We know that $p_{\varphi,A}$ is zero on $H^m$ therefore by \Cref{claim:zero-subcube-structure} we get the witness polynomials $f_1 = P_1, \ldots, f_{3m+3} = P_{3m+3}$, with $p_{\varphi,A} = \sum_{1\leq j\leq 3m+3}g_H(x_j)P_j(x)$. Then assign the right-side of the label cover to be $\bar{f} = (f_{-1},f_0,f_1,\ldots,f_{3m+3})$. To assign the left-side of $\Psi$, for each subspace $\Omega$ let $\cA(\Omega) = (p_{-1},\ldots,p_{3m+3})$ with $p_i$ being the restriction of $f_i$ to $\Omega$. It is easy to check that the $p_i$'s satisfy all the conditions they are supposed to. Furthermore,
\[\Pr_{\substack{i \sim [N] \\ \Omega_{\wt{S_i},y,y',z,z'}\sim \pi_L|i \\x \sim \Omega_{\wt{S_i},y,y',z,z'}}}\left[F(\Omega,x) = (A(i,1),\ldots,A(i,\log r))\right] = 1.\]
\vspace{-2ex}
\paragraph{Soundness:} 
Towards etablishing the soundness of the 
reduction, we first prove that $\pi_L$ is 
good.
\begin{claim}\label{claim:object-good-rm}
The distribution $\pi_L$ is \(O(1/q)\)-good.
\end{claim}
\begin{proof}
The proof is identical to the proof of \Cref{claim:object-good-had}.
\end{proof}
Fix an assignment $f$ to the label cover
instance. Using \Cref{lem:zero-on-subcube-test} we get that there exists a short list of polynomial maps $\bar{Q^{(1)}},\ldots, \bar{Q^{(t)}}$ for 
$t = O\left(\frac{1}{\delta}\right)$, such that for all $j$, $Q^{(j)}_0$ is zero on the subcube $H^m$ and $\deg(\bar{Q^{(j)}})\leq d$, and 
for all plane oracles $\cA$
\begin{equation}\label{eq:zero-on-subcube}
\Pr_{\Omega \sim \pi_L, x\sim \Omega}\left[\cA(\Omega)(x) \neq \bar{f}(x) \vee \exists j \text{ such that } \bar{Q^{(j)}}\mid_{\Omega} \equiv \cA(\Omega)\right] \geq 1 - \frac{\delta}{2}.    
\end{equation}

We will now prune the above list of polynomial maps so that we are only left with those tuples \(Q^{(j)}\) such that \(Q^{(j)}_0\) is \(p_{\varphi, A}\) for some satisfying assignment \(A\) of the formula \(\varphi\), and yet the above condition holds for this smaller list of polynomials.
Let $\good$ be the set of indices $j \in [t]$ for which $\bar{Q^{(j)}}$ satisfies:
\begin{enumerate}
\item For all $x \in \F^{3m+3}$,
\[ \bar{Q^{(j)}_0}(x) = \wt{\varphi}(x)(\bar{Q^{(j)}_{-1}}(x) - x_{3m+1})(\bar{Q^{(j)}_{-1}}(\rho(x)) - x_{3m+2})(\bar{Q^{(j)}_{-1}}(\rho^2(x)) - x_{3m+3}). \]
\item For all $z_1 \in \F^{m}$ and \(z_2,z_3 \in \F^{2m+3}\),
\[\bar{Q^{(j)}_{-1}}(z_1, z_2) = \bar{Q^{(j)}_{-1}}(z_1, z_3).\] 
\end{enumerate}
First note that if $j$ is good then $\bar{Q_0^{(j)}}$ can be associated with a satisfying assignment $A^{(j)}$ for $\varphi$. Therefore let us bound the probability that for some $j \notin \good$, $\cA(\Omega) \equiv \bar{Q^{(j)}}|_{\Omega}$. 

Fix such an index $j$. Suppose condition (1) is violated for $\bar{Q^{(j)}}$. By the Schwartz-Zippel lemma, this implies that with probability at least \(1 - \frac{d}{q}\) over the choice of a random \(\Omega\), the above inequality continues to hold when restricted to \(\Omega\), since $\Omega$ contains a random \(z\) such that \(z, \rho(z),\) and \(\rho^2(z)\) lie in \(\Omega\). However, since our assignment \(\cA\) always satisfies condition (1) with equality, we see that the probability, for a random \(\Omega\), that \(\bar{Q^{(j)}}\mid_{\Omega} \equiv \cA(\Omega)\) is at most \(\frac{d}{q}\). Similarly, if (2) above is violated, then it is violated with probability at least \(1-\frac{d}{q}\) over \(\Omega \sim \pi_L\), since \(\Omega\) contains a random \(z, z'\) that agree on the first \(m\) coordinates. Hence, the probability that \(\bar{Q^{(j)}}\mid_{\Omega} \equiv \cA(\Omega)\) is at most $d/q$. Thus, we have shown that for any bad $j$,
\[ \Pr_{\Omega}\left[\bar{Q^{(j)}}\mid_{\Omega} \equiv \cA(\Omega)\right] \lll \frac{d}{q}. \]
Hence, a simple union bound gives us that a modification of \eqref{eq:zero-on-subcube} holds:
\begin{equation}\label{eq:dpcp1}
\Pr_{\Omega,x}\left[\cA(\Omega)(x) \neq \bar{f}(x) \vee \exists j \in \good: \bar{Q^{(j)}}\mid_{\Omega} \equiv \cA(\Omega)\right] \geq 1 - \frac{\delta}{2} - O\left(\frac{td}{q}\right) \geq 1-\delta, 
\end{equation}
where in the last inequality we used that $\delta \geq \poly(m,|H|/q)$. Reformulating \eqref{eq:dpcp1} we get that there is a list of satisfying assignments $(A^{(j)})_{j \in \good}$ such that
\[\Pr_{\substack{i \sim [N] \\ (\wt{i},y,y',z,z')\sim \pi_L|i \\x \sim \Omega_{\wt{i},y,y',z,z'}}}\left[F(\Omega,x) \in \{\perp\} \cup \{(A^{(j)}(i,1),\ldots,A^{(j)}(i,\log r)): j\in \good\} \right] \geq 1-\delta,\]
which completes the proof of soundness.  
\vspace{-2ex}
\paragraph{Modifying $\Psi$ to be regular:} The proof of regularization of $\Psi$ is the same as that of \Cref{lem:had-dpcp-restated}, hence we omit it here.
\vspace{-2ex}
\paragraph{Converting to Robust dPCP:}
Using the equivalence between generalized Label Cover and robust PCPs in \Cref{lem:lc-robust-equivalence}, one gets that this is a regular robust decodable PCP.
\end{proof}

\section{Proof of\texorpdfstring{~\Cref{thm:expander-routing}}{Theorem 3.3}}
\label{app:path-expander}

\paragraph{Notation:} For a graph $G = (V, E)$ let $E(u, S)$ denote the fraction of edges incident of $u$ which are also incident on set $S \subseteq V$. For a set of vertices $\cV \subseteq V$, let $G(\cV)$ denote the induced graph with vertices $\cV$.
\begin{definition}
For a graph $G = (V, E)$ and a subset of vertices $T \subseteq V$ define $Q(T)$ by the following algorithm: Set $Q(T) = V\setminus T$. We now iteratively remove a vertex $u$ from $Q(T)$ if $E(u, V\setminus Q(T)) \geq 1/5$, until $Q(T)$ halts.
\end{definition}

The proof of~\Cref{thm:expander-routing} requires the following two lemmas, which are easy consequences of~\cite[Lemma 1]{Upfal} and~\cite[Lemma 2]{Upfal}.

\begin{lemma}\label{lem:clo-expander}
There exists an absolute constant $\alpha > 0$ 
such that the following holds.
Let $G = (V, E)$ be a regular expander graph with second largest singular value $\sigma_2(G) \leq \alpha$. Let $T \subset V$ be any set such that $|T| \leq \alpha n$. At convergence $|Q(T)| \geq |V|-\mu|T|$ for some universal constant $\mu$.
\end{lemma}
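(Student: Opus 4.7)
My plan is to prove this by a contradiction argument, showing that the set $S := V\setminus Q(T)$ at convergence satisfies $|S|\leq \mu|T|$ for some constant $\mu$ that can be read off from the inequalities. The key structural fact I would use is: since the iterative process removes vertices from $Q(T)$ one at a time, the size $|S|$ is non-decreasing and grows by exactly $1$ at each step. Consequently, if the final $S$ satisfies $|S|>\mu|T|$, then at some intermediate moment the set $S^\star$ reached cardinality exactly $\lfloor\mu|T|\rfloor$, and I may analyze this intermediate set.

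The main quantitative step would double count edges inside $S^\star$. By definition of the removal rule, for every $u\in S^\star\setminus T$, at the moment $u$ was removed from $Q$ it satisfied $E(u,S_{\text{cur}})\geq 1/5$ for the current removed set $S_{\text{cur}}$. Since $S_{\text{cur}}\subseteq S^\star$, the vertex $u$ has at least $d/5$ neighbors in $S^\star$ (where $d$ is the degree). Summing this bound over $u\in S^\star\setminus T$ gives $\sum_{u\in S^\star\setminus T}|N(u)\cap S^\star|\geq \frac{d}{5}|S^\star\setminus T|$, and since each edge contributes at most $2$ to the left-hand side, I obtain
\[
e(S^\star)\;\geq\;\frac{d}{10}\,|S^\star\setminus T|,
\]
where $e(S^\star)$ denotes the number of edges with both endpoints in $S^\star$. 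On the other hand, the expander mixing lemma applied to the $d$-regular graph $G$ with normalized second singular value $\sigma_2(G)\leq \alpha$ yields the upper bound
\[
e(S^\star)\;\leq\;\frac{d|S^\star|^2}{2n}+\frac{\alpha\, d\,|S^\star|}{2}.
\]

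Comparing the two estimates, substituting $|S^\star|=\mu|T|$ and $|S^\star\setminus T|=(\mu-1)|T|$, and dividing through by $d|T|$, the inequality reduces to
\[
\frac{\mu-1}{10}\;\leq\;\frac{\mu^{2}|T|}{2n}+\frac{\alpha\,\mu}{2}\;\leq\;\frac{\mu^{2}\alpha}{2}+\frac{\alpha\,\mu}{2},
\]
where in the second step I used the hypothesis $|T|\leq \alpha n$. Fixing (say) $\mu=3$, the left-hand side is $\tfrac{1}{5}$, while the right-hand side is $6\alpha$, which is $<\tfrac{1}{5}$ as soon as $\alpha<\tfrac{1}{30}$. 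Thus choosing $\alpha$ to be any sufficiently small absolute constant forces a contradiction, establishing $|S|\leq 3|T|$ and hence $|Q(T)|\geq |V|-3|T|$.

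The main obstacle, and really the only subtle point, is verifying that my use of the expander mixing lemma is legitimate during the intermediate stage: I need $|S^\star|\leq n$ and for the bound $\sigma_2 d|S^\star|(1-|S^\star|/n)/2$ to dominate, which is automatic since $|S^\star|=\mu|T|\leq \mu\alpha n\leq n$ for my choice of parameters. Everything else is routine double counting and arithmetic; no new ideas beyond a careful application of the mixing lemma at the first time $|S|$ reaches the threshold $\mu|T|$ are needed.
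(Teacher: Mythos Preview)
Your proof is correct and is essentially the standard argument; the paper does not prove this lemma itself but simply cites it as an easy consequence of \cite[Lemma~1]{Upfal}, and your double-counting of edges inside the intermediate set $S^\star$ combined with the expander mixing lemma is precisely how that lemma is established.
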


\begin{lemma}\label{lem:paths-expander}
There exists an absolute constant $\alpha>0$ 
such that the following holds.
Let $G = (V, E)$ be a regular expander graph with second largest singular value $\sigma_2(G) \leq \alpha$. Let $T \subset V$ be any set such that $|T| \leq \alpha n$. For all $v_1, v_2 \in Q(T)$ there exists a path of length $O(\log n)$ between $v_1$ and $v_2$ in $G(V\setminus T)$.
\end{lemma}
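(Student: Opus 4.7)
The plan is to prove this via a standard ball-growing (BFS) argument inside the subgraph induced on $Q(T)$, exploiting two facts: (i) by Lemma~\ref{lem:clo-expander}, $|V\setminus Q(T)|\leq \mu\alpha n$ is small, and (ii) by the very definition of $Q(T)$, every vertex $v\in Q(T)$ has at most $d/5$ edges to $V\setminus Q(T)$, hence at least $4d/5$ of its edges stay inside $Q(T)$. Once I show that balls in $G(Q(T))$ grow geometrically up to linear size, the balls of radius $O(\log n)$ around $v_1$ and $v_2$ must intersect, yielding the required path (which lies in $G(Q(T))\subseteq G(V\setminus T)$, since $Q(T)\subseteq V\setminus T$).

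For the ball-growth step, let $S\subseteq Q(T)$ with $|S|\leq |Q(T)|/2 \leq n/2$. Using the expander mixing lemma on $G$ for the pair $(S,V\setminus S)$, the number of edges from $S$ to $V\setminus S$ is at least $(1-\alpha)\,d\,|S|\,(1-|S|/n) \geq \tfrac{2}{3}(1-\alpha)\,d\,|S|$, provided $|S|\leq n/3$ (and one may assume this WLOG, since once $|S|$ exceeds $n/3$ we are already done). By the defining property of $Q(T)$, the number of edges from $S$ to $V\setminus Q(T)$ is at most $d|S|/5$. Subtracting, the number of edges from $S$ to $Q(T)\setminus S$ is at least
\[
\bigl(\tfrac{2}{3}(1-\alpha) - \tfrac{1}{5}\bigr)\,d\,|S| \;\geq\; \tfrac{1}{3}\,d\,|S|,
\]
for $\alpha$ sufficiently small. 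Since each vertex in $Q(T)\setminus S$ receives at most $d$ such edges, this yields at least $|S|/3$ new vertices in $Q(T)$ reachable from $S$ in one step of $G(Q(T))$.

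Applying this to the BFS balls $B_r(v_1)$ (taken inside $G(Q(T))$), as long as $|B_r(v_1)|\leq n/3$ we have $|B_{r+1}(v_1)|\geq \tfrac{4}{3}\,|B_r(v_1)|$. Hence after $r = O(\log n)$ rounds we obtain $|B_r(v_1)|>n/3$, and by the same argument $|B_r(v_2)|>n/3$. Since $|Q(T)|\geq (1-\mu\alpha)n > 2n/3$ (for $\alpha$ small), the two balls lie inside $Q(T)$ and their sizes sum to more than $|Q(T)|$, so they must intersect at some $w\in Q(T)$; concatenating the BFS paths from $v_1$ to $w$ and from $w$ to $v_2$ gives the desired path of length $O(\log n)$ in $G(V\setminus T)$.

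The only non-routine step is tuning constants: one must pick $\alpha$ small enough so that (a) Lemma~\ref{lem:clo-expander} applies and gives $|Q(T)|>2n/3$, (b) the expander mixing inequality produces a factor exceeding $3/5$ (so subtracting the $1/5$ loss to $V\setminus Q(T)$ still leaves a positive linear term), and (c) the $n/3$ threshold in the ball-growth inequality is consistent with $|Q(T)|>2n/3$. All three conditions are satisfied simultaneously for any $\alpha$ below an explicit absolute constant; I do not anticipate any serious obstacle beyond bookkeeping these constants.
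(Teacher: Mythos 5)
Your approach is genuinely different from the paper's, which disposes of this lemma in one line by citing \cite[Lemma 2]{Upfal} with $T_2=\emptyset$, $T_1=T$. You instead give a self-contained ball-growing argument, which is essentially the standard proof of the cited result: the key observations (every $v\in Q(T)$ keeps at least $4d/5$ of its edges inside $Q(T)$ by the stopping rule defining $Q(T)$, and $|V\setminus Q(T)|\leq\mu\alpha n$ by Lemma~\ref{lem:clo-expander}) are the right ones, and the expander-mixing estimate combined with the $d|S|/5$ leakage bound correctly yields geometric growth of BFS balls inside $G(Q(T))$.

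There is, however, a concrete numerical slip in the final intersection step. You grow each ball only until its size exceeds $n/3$ and then claim the two balls "sum to more than $|Q(T)|$." This does not follow: $|Q(T)|$ can be as large as $(1-\mu\alpha)n\approx n$, while two sets of size just over $n/3$ each sum to barely more than $2n/3$, so they need not intersect. The fix is routine but necessary: continue the growth argument up to the threshold $|S|\leq n/2$ (equivalently, until $|B_r|>|Q(T)|/2$). For $|S|\leq n/2$ the mixing lemma still gives $e(S,V\setminus S)\geq \tfrac{1}{2}(1-\alpha)d|S|$, and subtracting the at most $d|S|/5$ edges into $V\setminus Q(T)$ leaves at least $\bigl(\tfrac{1}{2}(1-\alpha)-\tfrac{1}{5}\bigr)d|S|\geq \tfrac{1}{4}d|S|$ edges into $Q(T)\setminus S$, hence at least $|S|/4$ new vertices per round. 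After $O(\log n)$ rounds each ball exceeds $|Q(T)|/2$, and two subsets of $Q(T)$ each larger than $|Q(T)|/2$ must intersect, completing the argument. With that adjustment the proof is correct.
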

\begin{proof}
The conclusion follows immediately from~\cite[Lemma 2]{Upfal} by setting $T_2 = \emptyset$ and $T_1 = T$ therein.
\end{proof}

\begin{proof}[Proof of~\Cref{thm:expander-routing}]
Let $|V| = n, |E| = m$, and $\delta = \frac{n}{m}$.
For a pair $u, \pi(u)$, we wish to construct a path between them, which we denote by $P(u)$. 
Fix $\ell = \Theta(\log n)$ and $t = \Theta\left(\log^{c+1}(n)\right)$. 
To construct the paths $P(u)$, we present an algorithm that simply picks the shortest paths between a pair of vertices iteratively and deletes any vertex that has been used at least $t$ times. Note that since paths are of length $\ell$, in an ideal scenario where every edge occurred equally then each edge would belong to  $O\left(\delta \cdot \log n\right)$ paths. Therefore, by taking $t$ to be larger, we allow some slack, which still gives us the uniformity over edges that is sufficient for the later arguments.

We now proceed to the formal argument. Our algorithm proceeds as follows:
\begin{mdframed}
\begin{enumerate}[label=(\alph*)]
\item Instantiate $\forall u, P(u)=\perp$, $\cV=V$.
\item For every $u \in V$ do the following:
\begin{enumerate}[label=(\roman*)]
\item If $u \notin \cV$ set $P(u)=\perp$.
\item Otherwise, find the shortest path $p$ between $u$ and $\pi(u)$ in the graph $G(\cV)$. If the length of $p$ is at most $\ell$ then set $P(u) = p$, else set $P(u)=\perp$.
\item If any vertex $v$ in $\cV$ has been used $\geq t$ times in the paths $\{P(u)\}_u$, then remove it from $\cV$.
\end{enumerate}
\end{enumerate}
\end{mdframed}

It is easy to see that the above algorithm runs in polynomial time in $|E|$ and that every vertex is used at most $t$ times over all paths in $\{P(u)\}_u$. It remains to argue that the algorithm finds paths of length at most $\ell$ between all but $O\left(\frac{1}{\log^c(n)}\right)$ fraction of the $(u, \pi(u))$ pairs. Let $\cV_f$ denote the set $\cV$ when the algorithm terminates, and let $\cV_0=V$ denote the set at the start. It is easy to check that the number of vertices that the algorithm removes from $\cV_0$ to get $\cV_f$ is at most $\frac{|V|\cdot\ell}{t} = \Theta\left(\frac{n}{\log^{c}(n)}\right)$ implying that for $T = V\setminus\cV_f$, we have that $|T| = |V\setminus\cV_f| \leq O\left(\frac{n}{\log^{c} (n)}\right)$. Using~\Cref{lem:clo-expander} we get that $|V\setminus Q(T)| \leq \mu \cdot O\left(\frac{n}{\log^{c}(n)}\right) = O\left(\frac{n}{\log^{c}(n)}\right)$. Finally, using~\Cref{lem:paths-expander} we get that for all $v_1, v_2 \in Q(T)$ there exists a path of length $O(\log n)$ between $v_1$ and $v_2$ in $G(\cV_f)$. Hence, our algorithm would also have found a path of length $\ell$ between them. Since the set $V\setminus Q(T)$ can touch at most $2 \cdot O\left(\frac{1}{\log^{c}(n)}\right)$ of the $(u, \pi(u))$ pairs, we get that we can find a path of length $\ell$ for all but $O\left(\frac{1}{\log^{c}(n)}\right)$ fraction of the $(u, \pi(u))$ pairs.
\end{proof}

\section{Proof of\texorpdfstring{~\Cref{thm:edge-routing}}{Theorem 1.5}}\label{sec:cor-proof}
In this section we prove~\Cref{thm:edge-routing}. 
For that purpose, we need an analogue of~\Cref{lem:link-routing} for \emph{regular graphs}, and towards this end we use complexes of~\cite{LSV2} (instead of variants of the Chapman-Lubotzky complexes from~\Cref{thm:cl}).\footnote{We remark that any sparse complex that satisfying the conditions 1 to 6 of \Cref{thm:cl}, in which all  vertex links are isomorphic to each other, would work for us.} We begin by stating the result of~\cite{LSV2}, and throughout we denote by $\binom{d}{k}_q$ the number of subspaces of dimension $k$ of $\mathbb{F}_q^d$, and by $B_d(F)$ the affine building of type A over a field $F$.
\begin{theorem}[Theorem 1.1 in~\cite{LSV2}]\label{thm:lsv}
Let $q$ be a prime power, $d \geq 2, e \geq 1$. Then, the group $G = \mathrm{PGL}_d(\mathbb{F}_{q^e})$ has an explicit set $S$ of 
\[
\binom{d}{1}_q + \binom{d}{2}_q + \cdots + \binom{d}{d-1}_q
\]
generators, such that the Cayley complex of $G$ with respect to $S$ is a Ramanujan complex, covered by $B_d(F)$, when $F = \mathbb{F}_q((y))$.    
\end{theorem}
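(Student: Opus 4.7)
The plan is to follow the Margulis–Lubotzky–Samuels–Vishne arithmetic construction: build an explicit cocompact arithmetic lattice $\Gamma \subset \mathrm{PGL}_d(F)$ acting simply transitively on the vertices of $B_d(F)$, and identify the quotient $\Gamma \backslash B_d(F)$ with a Cayley complex of $\mathrm{PGL}_d(\mathbb{F}_{q^e})$ via a congruence quotient map. The starting point is the global function field $k = \mathbb{F}_q(y)$ together with a division algebra $D/k$ of degree $d$ whose local invariants are chosen so that:
(a) $D$ splits at the place $v_0$ corresponding to $y^{-1}$ (so that $D \otimes_k F \cong M_d(F)$ for $F = \mathbb{F}_q((y))$, giving an action of $D^\times$ on $B_d(F)$);
(b) $D$ ramifies at a single auxiliary place $v_1$ with residue field $\mathbb{F}_{q^e}$ (so that the local structure at $v_1$ feeds in the field $\mathbb{F}_{q^e}$);
(c) $D$ is split at all other places. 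Fix a maximal $\mathcal{O}_k[y]$-order $\mathcal{O}_D \subset D$, and let $\Lambda$ be the group of units of norm $1$ in $\mathcal{O}_D[1/y]$ modulo the center.

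First I would establish that $\Lambda$ embeds as a cocompact arithmetic lattice in $\mathrm{PGL}_d(F)$, and use \emph{strong approximation} for the simply connected form $\mathrm{SL}_1(D)$, together with the $\tilde{A}_{d-1}$-building theory, to show that $\Lambda$ acts \emph{simply transitively on vertices} of a suitable color class in $B_d(F)$. This is the step that forces the invariants of $D$ to be balanced (each local invariant $= 1/d$ at a carefully chosen set of places), and it is what makes the vertex set of the quotient into a group. Next, I would use the splitting isomorphism $D \otimes_k k_{v_1} \cong M_d(\mathbb{F}_{q^e}((\pi)))$ at the ramified place together with a congruence condition modulo $\pi$ to obtain a surjection $\Lambda \twoheadrightarrow \mathrm{PGL}_d(\mathbb{F}_{q^e})$ whose kernel $\Lambda(I)$ is a principal congruence subgroup. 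The quotient $\Lambda(I) \backslash B_d(F)$ is then the universal cover, and $\mathrm{PGL}_d(\mathbb{F}_{q^e}) = \Lambda/\Lambda(I)$ acts simply transitively on its vertices, so the further quotient $\Lambda \backslash B_d(F)$ is (the flag complex of) a Cayley graph of $\mathrm{PGL}_d(\mathbb{F}_{q^e})$.

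The explicit set $S$ of generators is then \emph{read off the link of the identity vertex}: in $B_d(F)$, the neighbors of a vertex decompose according to the $d-1$ non-trivial types, and the number of neighbors of type $k$ is exactly the Gaussian binomial $\binom{d}{k}_q$ (this is the standard count of vertices in the spherical building of $\mathrm{PGL}_d(\mathbb{F}_q)$, i.e.\ the number of $k$-dimensional subspaces of $\mathbb{F}_q^d$). Lifting these neighbors through the simply transitive $\Lambda$-action produces $\sum_{k=1}^{d-1}\binom{d}{k}_q$ elements of $\Lambda$, and projecting to $\mathrm{PGL}_d(\mathbb{F}_{q^e})$ yields $S$; the edges of $B_d(F)$ become the Cayley edges with respect to $S$, and because the $\Lambda$-action preserves all higher-dimensional cells, the entire simplicial structure of the Cayley complex matches $B_d(F)$.

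The hard part, and the main obstacle, is the Ramanujan property. For this I would analyze $L^2(\Lambda \backslash \mathrm{PGL}_d(F))$ as a representation of $\mathrm{PGL}_d(F)$: by the Matsushima formula / adelic reformulation, its non-trivial constituents correspond to automorphic representations of $D^\times(\mathbb{A}_k)$ whose local component at $v_0$ is spherical, and the spectrum of the local Hecke algebra on the Cayley complex is controlled by the Satake parameters of these local components. The goal is to show that every such local component is tempered. Here one invokes the Jacquet–Langlands correspondence to transfer these automorphic representations to cuspidal automorphic representations of $\mathrm{GL}_d(\mathbb{A}_k)$ (using that $D^\times$ is anisotropic modulo center at the ramified places), and then applies Lafforgue's theorem establishing the Ramanujan–Petersson conjecture for $\mathrm{GL}_d$ over global function fields. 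Temperedness at $v_0$ then translates, via the identification of the spherical unitary dual with a quotient of a maximal torus and the Macdonald formula, into the sharp bound on the non-trivial eigenvalues of the adjacency operators of $\Lambda \backslash B_d(F)$ that defines the Ramanujan property for $\tilde{A}_{d-1}$-complexes. Verifying that the reduction to $\mathrm{GL}_d$ is valid at every place (in particular at ramified ones) and that temperedness at $v_0$ survives the transfer is where most of the technical work lives.
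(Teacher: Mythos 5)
First, a point of order: the paper does not prove this statement at all --- it is quoted verbatim as Theorem 1.1 of \cite{LSV2} and used as a black box in the proof of Theorem~\ref{thm:edge-routing}, so there is no ``paper proof'' to compare against. Your proposal is a reconstruction of the original Lubotzky--Samuels--Vishne argument, and at the level of strategy it is the right one: an explicit arithmetic lattice acting simply transitively on the vertices of $B_d(F)$, a congruence quotient identifying $\Gamma/\Gamma(I)$ with $\mathrm{PGL}_d(\mathbb{F}_{q^e})$, the generator set $S$ read off the link of a base vertex, and the Ramanujan property obtained by transferring to $\mathrm{GL}_d$ via Jacquet--Langlands and invoking Lafforgue's theorem, then translating temperedness of the spherical local component into the Hecke-eigenvalue bound via Satake/Macdonald. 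That matches how \cite{LSV1,LSV2} actually argue.

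The arithmetic setup as you describe it, however, cannot work as stated. A division algebra over a global field cannot ramify at a single place: the local invariants must sum to zero in $\mathbb{Q}/\mathbb{Z}$, so $D$ necessarily ramifies at two or more places (in the Cartwright--Steger construction underlying \cite{LSV2}, $D$ is a cyclic algebra ramified at exactly two places, both different from the place $y=0$ whose completion is $F=\mathbb{F}_q((y))$, so that $D\otimes_k F\cong M_d(F)$ and $D^\times$ acts on $B_d(F)$). More importantly, the field $\mathbb{F}_{q^e}$ does not enter through a ramified place: the congruence quotient is taken at a \emph{split} place of degree $e$, where $\mathcal{O}_D/I\cong M_d(\mathbb{F}_{q^e})$ and strong approximation yields the surjection $\Gamma\twoheadrightarrow \mathrm{PGL}_d(\mathbb{F}_{q^e})$; at a ramified place the reduction modulo the maximal ideal lands in the residue algebra of a local division algebra, which is not $M_d$ of a field, so you would not obtain $\mathrm{PGL}_d(\mathbb{F}_{q^e})$ as the quotient group. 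Finally, taking norm-one units and asking only for simple transitivity on ``a suitable color class'' is not enough: the advertised generating set of size $\binom{d}{1}_q+\cdots+\binom{d}{d-1}_q$ is the full set of neighbors of a base vertex across all $d-1$ nonzero relative types, which forces the lattice (the full unit group modulo center, whose reduced norms at $y=0$ have valuations generating $\mathbb{Z}/d\mathbb{Z}$) to act simply transitively on \emph{all} vertices, necessarily permuting the types; the norm-one subgroup preserves types and cannot do this. With these corrections your outline coincides with the actual proof of \cite{LSV1,LSV2}.
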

Using the complexes from~\ref{thm:lsv} one gets an analogues result to~\Cref{thm:cl} where the complex satisfies properties 1 to 6 of \Cref{thm:cl} as well as the additional property that all the vertex-links are the isomorphic -- spherical buildings of type A. 

First, to see that one can choose an LSV complex with $\{n,\ldots, O(n)\}$ vertices with $q=\pl n$, note that for a given prime $q$ and $e\geq 1$, the number of vertices in the Cayley graph from~\Cref{thm:lsv} is $q^{\Theta(ed^2)}$. Since for sufficiently large $x$ there is always a prime between $[x,x+x^{0.6}]$ (e.g.~due to~\cite{baker2001difference}), it easy to check that for every $d$, one can choose a prime $q=\log^{\Theta(1)} n$ and $e$ to get a $d$-dimensional LSV-complex with number of vertices between $n$ and $2n$. The polynomial time constructability of the complex is immediate given that there is an explicit description of the generators of the Cayley graph (see~\cite[Algorithm 9.2]{LSV2}). The other properties from 1 to 6 follow along similar lines to the argument in~\Cref{thm:cl}, using the fact that the the vertex links of the above complex are spherical buildings of type A over $\F_q^d$. We therefore get the following result:
\begin{theorem}\label{thm:lsv_for_routing}
For large enough $d\in \N$, for large enough $n \in \N$ and some prime $q =\Theta(\log^C n)$, one can construct in time ${\sf poly}(n)$ a $d$-dimensional complex $X$ for which $n \leq |X(1)|\leq 2n$ such that for  the following holds:
\begin{enumerate}
\item The complex $X$ is $d$-partite.
\item Every vertex participates in at most $q^{O(d^2)}$ $d$-cliques, and all vertices have the same degree.
\item All vertex links are isomorphic to type $A$ spherical buildings over $\mathbb{F}_q^d$. In particular, for every vertex link $L$ of $X$ there is a group $\sym(L)$ that acts transitively on the $d$-faces $L(d)$.
\item For links $L\neq L_{\emptyset}$ 
and any $i\neq j$, 
the bipartite graph $(L_i(1),L_j(1))$ is uniformly weighted and has diameter $O\left(\frac{d}{|i-j|}\right)$.
\item For all links $L$ of $X$, every bipartite graph $(L_i(1),L_j(1))$ for $i\neq j \in [d] \setminus \text{col}(L)$ has second largest eigenvalue at most 
$\frac{2}{\sqrt{q}}$. 
\item The second largest singular value of $G=(X(1),X(2))$ is at most $\frac{1}{d-1}+\frac{2}{\sqrt{q}}\leq \frac{2}{d}$.
\end{enumerate}
\end{theorem}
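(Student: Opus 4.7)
The plan is to take the explicit LSV Cayley complex from \Cref{thm:lsv} and verify that, with a careful choice of the parameters $(q,e)$, it possesses all six listed properties; the verification itself parallels the one already carried out for the Chapman--Lubotzky complexes in \Cref{thm:cl}, with the additional bonus of transitivity of the group action on \emph{all} vertices (not just on top faces of a vertex link), which forces all vertex links to be isomorphic.

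First I would fix the parameters. Given $n$, the Cayley complex in \Cref{thm:lsv} has $|\mathrm{PGL}_d(\mathbb{F}_{q^e})| = q^{\Theta(ed^2)}$ vertices, so the vertex count grows geometrically in $e$ for each fixed $q$. Using the fact that for large enough $x$ there is always a prime in $[x, x+x^{0.6}]$ (e.g.~\cite{baker2001difference}), I can choose a prime $q=\Theta(\log^C n)$ and then pick the unique $e$ such that $q^{\Theta(ed^2)} \in [n,2n]$; the required freedom in the constant in the $\Theta$ notation for $q$ lets one hit such a window. The polynomial-time construction follows immediately from the explicit generator description in~\cite[Algorithm 9.2]{LSV2}, since the Cayley graph on $\mathrm{PGL}_d(\mathbb{F}_{q^e})$ can be written out in time $\mathrm{poly}(n)$.

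Next I would verify the structural items (1)--(4). Item (1), $d$-partiteness, is inherited from the fact that the affine building $B_d(\mathbb{F}_q((y)))$ is $d$-partite via its natural type function, and the LSV quotient is type-preserving by construction. Item (3) is the classical local description of $B_d$: the link of every vertex of $B_d(\mathbb{F}_q((y)))$ is isomorphic to the spherical building of type $A_{d-1}$ over the residue field $\mathbb{F}_q$, namely the flag complex of $\mathbb{F}_q^d$; moreover $\mathrm{GL}_d(\mathbb{F}_q)$ acts transitively on its $(d-1)$-faces. Since $\mathrm{PGL}_d(\mathbb{F}_{q^e})$ acts transitively on \emph{vertices} of a single type in $B_d$ (and one can cover all $d$ types by conjugating with a type-shifting element), all vertex links of $X$ are isomorphic as simplicial complexes equipped with their symmetry group; this gives item (3) and also the regularity half of item (2). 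The $q^{O(d^2)}$ bound on the number of top faces through a vertex is then just the cardinality of full flags in $\mathbb{F}_q^d$. Item (4) follows from the explicit identification of the bipartite color graphs with incidence between dimension-$i$ and dimension-$j$ subspaces of $\mathbb{F}_q^d$, which is uniformly weighted by transitivity and has diameter $O(d/|i-j|)$ by the standard chain argument.

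The remaining spectral items (5) and (6) are where the main work lies, but they go through essentially identically to the corresponding paragraphs in the proof of \Cref{thm:cl}. For (5), the bipartite graphs $(L_i(1),L_j(1))$ in any link $L\neq L_\emptyset$ are Grassmann-type incidence graphs in a spherical building of type $A$ over $\mathbb{F}_q$, whose second singular values are $O(1/\sqrt{q})$ by the representation-theoretic eigenvalue computations available in the partite building literature (e.g.~\cite[Lemmas 2.15, 2.20]{BLM24} or~\cite[Claim 3.2, Lemma 3.10]{DDL24}); with $q\geq 16$ the bound $2/\sqrt q$ is obtained. Item (6) is then deduced by the exact same partite trickling-down argument given in the proof of \Cref{thm:cl}: apply \Cref{lem:trickling-partite} to conclude that every bipartite graph $(X_i(1),X_j(1))$ has second singular value $\leq 2/\sqrt q$, then carry out the same $a_i = \mathbb{E}[f\mid V_i]$ decomposition of $\langle f, Mf\rangle$ that yields the additive split into an ``off-diagonal'' piece bounded by $2/\sqrt q$ and a ``class-means'' piece bounded by $1/(d-1)$. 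The main obstacle to watch out for is really just bookkeeping: one must ensure that the eigenvalue bounds used in the $L\neq L_\emptyset$ case are uniform over links (which is free once item (3) is in hand, since all links are literally isomorphic) and that the trickle-down inequality is applied in the partite setting exactly as in \Cref{lem:trickling-partite}.
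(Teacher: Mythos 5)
Your proposal is correct and follows essentially the same route as the paper: choose a prime $q=\Theta(\log^C n)$ via the prime-gap result of~\cite{baker2001difference} together with the power $e$ so that $|\mathrm{PGL}_d(\mathbb{F}_{q^e})|$ lands in $[n,2n]$, invoke the explicit generators of~\cite[Algorithm 9.2]{LSV2} for polynomial-time constructability, and then verify items (1)--(6) exactly as in the proof of \Cref{thm:cl}, using that every vertex link is the type-$A$ spherical building (flag complex) over the residue field $\mathbb{F}_q$ so that the eigenvalue computations, \Cref{lem:trickling-partite}, and the class-means decomposition for item (6) apply verbatim. The paper itself only sketches this verification, so your write-up is if anything more detailed; no gaps.
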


With ~\Cref{thm:lsv_for_routing} in hand, we can repeat the argument of Theorem~\Cref{lem:link-routing} to get~\Cref{thm:edge-routing}.
There are a couple of minor changes required to make the proof go through, since in~\Cref{lem:link-routing} we think of the underlying graph as a weighted graph (instead of a graph with multi-edges) and the initial function is on the links instead of the vertices. We elaborate on these changes below.

\begin{lemma}\label{thm:routing-polylog}
There is a universal constant $\eps_0>0$, such that for all $n \in \mathbb{N}$ there exists a regular graph  $G = (V, E)$ (with multiedges) on $\Theta(n)$ vertices with degree $\pl n$ such that for all permutations $\pi$ on $V(G)$, there is a set of routing protocols $\cR=\{R(u,\pi(u))\}_{u\in G}$ that is $(\eps,O(\eps))$-edge-tolerant for all $\eps \leq \eps_0$, with round complexity $O(\log n)$ and work complexity $\pl n$. Furthermore, the total work done by all vertices to implement each protocol $R(u,\pi(u))$ is $\pl n$ and both the graph and the routing protocols can be constructed in time $\poly(n)$. 
\end{lemma}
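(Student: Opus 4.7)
The plan is to essentially reuse the link-to-link protocol of~\Cref{lem:link-routing} but now built on the LSV complex from~\Cref{thm:lsv_for_routing} rather than on a Chapman-Lubotzky complex. The main payoff of switching to LSV is property~2 of~\Cref{thm:lsv_for_routing}: since every vertex link is isomorphic to the same spherical building, every vertex of $G = (X(1),X(2))$ has the same degree, so the underlying graph is already regular. This lets us skip the zig-zag construction of~\Cref{sec:zz} (which was introduced solely to repair the irregularity of the CL-based $G$) and work with $G$ directly, which has $|X(1)|=\Theta(n)$ vertices of degree $\pl n$ and second singular value at most $2/d < \alpha$ by property~6, so~\Cref{thm:expander-routing} applies to $G$ itself.

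To build the individual protocols $R(u,\pi(u))$, I would first run~\Cref{thm:expander-routing} on $G$ with the permutation $\pi$ to produce a collection $\cR$ of paths of length $O(\log n)$, each from $u$ to $\pi(u)$, such that every vertex and every edge is used in at most $\pl n$ paths. Inside each vertex link, set up the internal short paths of~\Cref{lem:inside-link} (property~3 of~\Cref{thm:lsv_for_routing} gives the transitive group action on links needed for that construction). The protocol $R(u,\pi(u))$ is then the obvious single-path analogue of the simultaneous protocol in~\Cref{lem:link-routing}: (a) $u$ first broadcasts its symbol $f(u)$ to its link $L_u$ via the internal paths of $L_u$; (b) along the path $u \to u_2 \to \cdots \to u_{T}=\pi(u)$ in $\cR$, perform link-to-link transfers $L_{u_t} \to L_{u_t,u_{t+1}} \to L_{u_{t+1}}$ exactly as in~\Cref{sec:link-protocol}, using majority-votes at each vertex and the internal-link short paths to simulate "clique-like" transmission; (c) every vertex in $L_{\pi(u)}$ forwards its held symbol to $\pi(u)$ along an internal path, and $\pi(u)$ outputs the majority.

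For the analysis, when all protocols $\{R(u,\pi(u))\}_u$ are executed in parallel the entire transcript coincides with the link-to-link protocol of~\Cref{lem:link-routing} applied to the function $A_0(v,w) := f(v)$ for $v\in V(G)$ and $w\in L_v$, except that the initial condition $\maj_{0.99}(A_0(v,\cdot)) = f(v)$ is enforced by the broadcast step~(a). Claims~\ref{claim:bad-1-link}--\ref{claim:bad-2-link} used spectral properties identical to those granted by~\Cref{thm:lsv_for_routing}, so repeating the argument verbatim shows that under an $\eps$-edge-corruption with $\eps\le\eps_0$, all but a $\frac{1}{\pl n}$-fraction of the link-to-link transfers succeed. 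Combined with the fact that the broadcast steps in $L_u$ and $L_{\pi(u)}$ each fail on at most an $\eps^{1/8}$-fraction of pairs by the sampling property of links (Claim~\ref{claim:bad-1-link}), this yields the $(\eps, O(\eps))$-edge-tolerance bound (in fact much stronger, but $O(\eps)$ suffices here). The round complexity is $O(\log n)$ from the length of paths in $\cR$ combined with $O(1)$-length internal paths. For the per-protocol work bound, each $R(u,\pi(u))$ touches only $O(\log n)$ vertex-links, each of size $\pl n$, so the total computation performed across all nodes to execute a single $R(u,\pi(u))$ is $\pl n$.

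The main obstacle I expect is ensuring that the single-protocol work bound $\pl n$ is actually attained. In the parallel execution analyzed in~\Cref{lem:link-routing}, every vertex must do $\pl n$ work to handle the many paths passing through it; it is only by restricting attention to a single path that the total work collapses to $\pl n$. Verifying this requires checking that the broadcast step within $L_u$ and $L_{\pi(u)}$ can also be restricted so that it only burdens vertices on the relevant internal paths, and that the majority computation at each intermediate link is localized to the vertices actually receiving a symbol on this particular path. Once this localization is confirmed, polynomial-time constructibility of $G$ and of all the paths follows from the explicit construction of the LSV complex in~\cite{LSV2} together with the polynomial-time guarantees of~\Cref{thm:expander-routing} and~\Cref{lem:inside-link}.
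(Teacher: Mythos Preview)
Your approach is essentially the same as the paper's: instantiate on LSV complexes to obtain a regular graph, broadcast the initial message from each vertex to its link, run the link-to-link transfer, then collect via majority at $\pi(u)$. The paper, however, does not bypass the zig-zag: it simply reuses \Cref{lem:link-routing} as a black box on the LSV complex (so the zig-zag product $Z$ is still present), lifts $\pi$ to the permutation $\pi'(u,c)=(\pi(u),c)$ on $V(Z)$, and wraps this with broadcast/collect steps. Your idea of dropping the zig-zag because the LSV graph is already regular is a legitimate simplification, but recall that in \Cref{lem:link-routing} the zig-zag was also used to make the union bound over bad $2$-links commensurate with the number of paths; if you route directly on $G$ you must re-do that union bound (it does go through once $q$ is a sufficiently large power of $\log n$, but this has to be argued).

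Two concrete points need repair. First, the paper carefully distinguishes the multi-edge graph $G'$ (on which the adversary corrupts individual parallel edges, as in the lemma statement) from the weighted graph $G$ (on which the link-to-link analysis and Claims~\ref{claim:bad-1-link}--\ref{claim:bad-2-link} live); it bridges them by simulating each $G$-edge transfer via a majority over the parallel multi-edges in $G'$, turning an $\eps$-fraction multi-edge corruption into at most a $2\eps$-fraction weighted-edge corruption. You skip this reduction, and without it the set $\cE\subseteq X(2)$ that the claims refer to is not well defined under multi-edge corruption. Second, your broadcast step is mis-specified: $u\notin L_u$, so the internal paths of $L_u$ cannot carry $f(u)$ into $L_u$; the correct broadcast (and what the paper does) is that $u$ sends $f(u)$ directly along each incident edge, and the fraction of $u$ for which this fails to produce a $0.99$-majority in $L_u$ is $O(\eps)$ by a plain Markov argument---not $\eps^{1/8}$, and not via Claim~\ref{claim:bad-1-link}, which is about edges \emph{inside} $L_u$ rather than edges incident on $u$.
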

Lemma~\Cref{thm:routing-polylog} gives~\Cref{thm:edge-routing} immediately, since one can write the edges of a complete graph as a union of $n$ disjoint matchings when $n$ is even. In the rest of the section we give a proof sketch of~\Cref{thm:routing-polylog}.

\begin{proof}[Proof Sketch of \Cref{thm:routing-polylog}]
Let $X$ be the complex from~\Cref{thm:lsv_for_routing} with the number of vertices lying between $n$ and $O(n)$. First consider the weighted graph $G=(X(1),X(2))$ where the weight on each edge is proportional to the probability of sampling it in $X(2)$, which is, $p_{uv} = |X_{uv}(d-2)|/|X(d)|$. Also consider the graph $G'$ with $|X_{uv}(d-2)|$ multiedges in the place of every edge, which is a regular graph with degree $q^{O(d^2)}=\pl n$, since the link of every vertex in $G$ is isomorphic (and equal to the spherical building of type A). It is easy to check that if $G$ is $(\eps/2,\nu)$-tolerant to edge corruptions then $G'$ is $(\eps,\nu)$-tolerant to edge corruptions. This is because a single message transfer along an edge $e = (u,v)$ of $G$ can be simulated by transferring the same message along all the multiedges in $G'$ that correspond to $e$ and taking a majority vote at $v$. This fails only if $>1/2$-fraction of these multiedges are bad, giving us an error tolerance of $\eps$. Therefore it suffices to prove the lemma with the weighted graph $G$ where the probability of edges that are corrupted is $\eps/2$.

Let us now show that~\Cref{lem:link-routing} (for LSV complexes instead of the complexes from \Cref{thm:cl}) implies the lemma statement for $G$. First note that the main difference between~\Cref{lem:link-routing} and~\Cref{thm:edge-routing} for the graph $G$ above is that in~\Cref{lem:link-routing} the initial message is on links $X_{v_1}$, where $v$ is a vertex in the zig-zag product $Z=G\zz H$, and in~\Cref{thm:edge-routing} the initial message is held by vertices in $G$. To resolve this, the first step in the protocol is to have each vertex \(u\) send its message to all the vertices in its link \(L_u\), and this is duplicated on each link $\deg(u)$-many times, so that we can think of this as an initial message on every link $X_{v_1}(1)$ for $v\in Z$. Then consider the natural lifted permutation $\pi'$ on $V(Z)$, where a vertex $(u,c)$ maps to $(\pi(u),c)$, and apply the link to link protocol from~\Cref{lem:link-routing} to $G$ with the permutation $\pi'$. To transfer the messages back to the vertices of $G$, every link \(X_{(v,c)_1}(1)\) sends a message to $v$ and $v$ takes a majority over all the messages received. The work complexity of the protocol is $\pl n$, and it is easy to check that in each protocol $R(u,\pi(u))$ the total work done in the message transfer is also $\pl n$.

To analyze the protocol, first note that by Markov's inequality, if \(\leq \eps\) fraction of the edges are corrupted, at most \(O(\eps)\) links \(X_{v_1}(1)\) for $v\in V(Z)$, will have greater than \(0.01\) fraction of vertices that do not hold the message sent by $v_1$. Therefore, after implementing the link to link protocol from~\Cref{lem:link-routing} only $O(\eps)$-fraction of the links $X_{v_1}$ have the incorrect majority value. For the last step of the message transfer from the links to the vertices, again by Markov's inequality, only $O(\eps)$-fraction of the vertices $u\in G$ can have more than $1/8$-fraction of links corresponding to $u$ where the majority value is incorrect, or where more than $1/8$-fraction of the edges on $u$ are corrupted.  Therefore all but $O(\eps)$-fraction of vertices $u$ compute the correct majority value and hence receive the message sent by $\pi^{-1}(u)$. 
\end{proof}

\section{Construction of Variants of the Chapman-Lubotzky Complexes}\label{app:cl}
\centerline{\large{\textsc{By Zhiwei Yun}}}

\swapnumbers
\numberwithin{equation}{section}

\setlength{\textwidth}{460pt}
\setlength{\oddsidemargin}{0pt}
\setlength{\evensidemargin}{0pt}
\setlength{\topmargin}{0pt}
\setlength{\textheight}{620pt}

\def\AA{\mathbb{A}}
\def\BB{\mathbb{B}}
\def\CC{\mathbb{C}}
\def\DD{\mathbb{D}}
\def\EE{\mathbb{E}}
\def\FF{\mathbb{F}}
\def\GG{\mathbb{G}}
\def\HH{\mathbb{H}}
\def\II{\mathbb{I}}
\def\JJ{\mathbb{J}}
\def\KK{\mathbb{K}}
\def\LL{\mathbb{L}}
\def\MM{\mathbb{M}}
\def\NN{\mathbb{N}}
\def\OO{\mathbb{O}}
\def\PP{\mathbb{P}}
\def\QQ{\mathbb{Q}}
\def\RR{\mathbb{R}}
\def\SS{\mathbb{S}}
\def\TT{\mathbb{T}}
\def\UU{\mathbb{U}}
\def\VV{\mathbb{V}}
\def\WW{\mathbb{W}}
\def\XX{\mathbb{X}}
\def\YY{\mathbb{Y}}
\def\ZZ{\mathbb{Z}}

\def\calA{\mathcal{A}}
\def\calB{\mathcal{B}}
\def\calC{\mathcal{C}}
\def\calD{\mathcal{D}}
\def\calE{\mathcal{E}}
\def\calF{\mathcal{F}}
\def\calG{\mathcal{G}}
\def\calH{\mathcal{H}}
\def\calI{\mathcal{I}}
\def\calJ{\mathcal{J}}
\def\calK{\mathcal{K}}
\def\calL{\mathcal{L}}
\def\calM{\mathcal{M}}
\def\calN{\mathcal{N}}
\def\calO{\mathcal{O}}
\def\calP{\mathcal{P}}
\def\calQ{\mathcal{Q}}
\def\calR{\mathcal{R}}
\def\calS{\mathcal{S}}
\def\calT{\mathcal{T}}
\def\calU{\mathcal{U}}
\def\calV{\mathcal{V}}
\def\calW{\mathcal{W}}
\def\calX{\mathcal{X}}
\def\calY{\mathcal{Y}}
\def\calZ{\mathcal{Z}}

\newcommand\cI{\mathcal{I}}
\newcommand\cJ{\mathcal{J}}
\newcommand\cN{\mathcal{N}}
\newcommand\cO{\mathcal{O}}
\newcommand\cQ{\mathcal{Q}}
\newcommand\cW{\mathcal{W}}
\newcommand\cX{\mathcal{X}}
\newcommand\cY{\mathcal{Y}}
\newcommand\cZ{\mathcal{Z}}

\def\bA{\mathbf{A}}
\def\bB{\mathbf{B}}
\def\bC{\mathbf{C}}
\def\bD{\mathbf{D}}
\def\bE{\mathbf{E}}
\def\bF{\mathbf{F}}
\def\bG{\mathbf{G}}
\def\bH{\mathbf{H}}
\def\bI{\mathbf{I}}
\def\bJ{\mathbf{J}}
\def\bK{\mathbf{K}}
\def\bL{\mathbf{L}}
\def\bM{\mathbf{M}}
\def\bN{\mathbf{N}}
\def\bO{\mathbf{O}}
\def\bP{\mathbf{P}}
\def\bQ{\mathbf{Q}}
\def\bR{\mathbf{R}}
\def\bS{\mathbf{S}}
\def\bT{\mathbf{T}}
\def\bU{\mathbf{U}}
\def\bV{\mathbf{V}}
\def\bW{\mathbf{W}}
\def\bX{\mathbf{X}}
\def\bY{\mathbf{Y}}
\def\bZ{\mathbf{Z}}

\newcommand\frA{\mathfrak{A}}
\newcommand\frB{\mathfrak{B}}
\newcommand\frC{\mathfrak{C}}
\newcommand\frD{\mathfrak{D}}
\newcommand\frE{\mathfrak{E}}
\newcommand\frF{\mathfrak{F}}
\newcommand\frG{\mathfrak{G}}
\newcommand\frH{\mathfrak{H}}
\newcommand\frI{\mathfrak{I}}
\newcommand\frJ{\mathfrak{J}}
\newcommand\frK{\mathfrak{K}}
\newcommand\frL{\mathfrak{L}}
\newcommand\frM{\mathfrak{M}}
\newcommand\frN{\mathfrak{N}}
\newcommand\frO{\mathfrak{O}}
\newcommand\frP{\mathfrak{P}}
\newcommand\frQ{\mathfrak{Q}}
\newcommand\frR{\mathfrak{R}}
\newcommand\frS{\mathfrak{S}}
\newcommand\frT{\mathfrak{T}}
\newcommand\frU{\mathfrak{U}}
\newcommand\frV{\mathfrak{V}}
\newcommand\frW{\mathfrak{W}}
\newcommand\frX{\mathfrak{X}}
\newcommand\frY{\mathfrak{Y}}
\newcommand\frZ{\mathfrak{Z}}

\newcommand\fra{\mathfrak{a}}
\newcommand\frb{\mathfrak{b}}
\newcommand\frc{\mathfrak{c}}
\newcommand\frd{\mathfrak{d}}
\newcommand\fre{\mathfrak{e}}
\newcommand\frf{\mathfrak{f}}
\newcommand\frg{\mathfrak{g}}
\newcommand\frh{\mathfrak{h}}
\newcommand\fri{\mathfrak{i}}
\newcommand\frj{\mathfrak{j}}
\newcommand\frk{\mathfrak{k}}
\newcommand\frl{\mathfrak{l}}
\newcommand\fm{\mathfrak{m}}
\newcommand\frn{\mathfrak{n}}
\newcommand\frp{\mathfrak{p}}
\newcommand\frq{\mathfrak{q}}
\newcommand\frr{\mathfrak{r}}
\newcommand\frs{\mathfrak{s}}
\newcommand\frt{\mathfrak{t}}
\newcommand\fru{\mathfrak{u}}
\newcommand\frv{\mathfrak{v}}
\newcommand\frw{\mathfrak{w}}
\newcommand\frx{\mathfrak{x}}
\newcommand\fry{\mathfrak{y}}
\newcommand\frz{\mathfrak{z}}

\newcommand\tilA{\widetilde{A}}
\newcommand\tilB{\widetilde{B}}
\newcommand\tilC{\widetilde{C}}
\newcommand\tilD{\widetilde{D}}
\newcommand\tilE{\widetilde{E}}
\newcommand\tilF{\widetilde{F}}
\newcommand\tilG{\widetilde{G}}
\newcommand\tilH{\widetilde{H}}
\newcommand\tilI{\widetilde{I}}
\newcommand\tilJ{\widetilde{J}}
\newcommand\tilK{\widetilde{K}}
\newcommand\tilL{\widetilde{L}}
\newcommand\tilM{\widetilde{M}}
\newcommand\tilN{\widetilde{N}}
\newcommand\tilO{\widetilde{O}}
\newcommand\tilP{\widetilde{P}}
\newcommand\tilQ{\widetilde{Q}}
\newcommand\tilR{\widetilde{R}}
\newcommand\tilS{\widetilde{S}}
\newcommand\tilT{\widetilde{T}}
\newcommand\tilU{\widetilde{U}}
\newcommand\tilV{\widetilde{V}}
\newcommand\tilW{\widetilde{W}}
\newcommand\tilX{\widetilde{X}}
\newcommand\tilY{\widetilde{Y}}
\newcommand\tilZ{\widetilde{Z}}

\newcommand\tila{\widetilde{a}}
\newcommand\tilb{\widetilde{b}}
\newcommand\tilc{\widetilde{c}}
\newcommand\tild{\widetilde{d}}
\newcommand\tile{\widetilde{e}}
\newcommand\tilf{\widetilde{f}}
\newcommand\tilg{\widetilde{g}}
\newcommand\tilh{\widetilde{h}}
\newcommand\tili{\widetilde{i}}
\newcommand\tilj{\widetilde{j}}
\newcommand\tilk{\widetilde{k}}
\newcommand\tilm{\widetilde{m}}
\newcommand\tiln{\widetilde{n}}
\newcommand\tilo{\widetilde{o}}
\newcommand\tilp{\widetilde{p}}
\newcommand\tilq{\widetilde{q}}
\newcommand\tilr{\widetilde{r}}
\newcommand\tils{\widetilde{s}}
\newcommand\tilt{\widetilde{t}}
\newcommand\tilu{\widetilde{u}}
\newcommand\tilv{\widetilde{v}}
\newcommand\tilw{\widetilde{w}}
\newcommand\tilx{\widetilde{x}}
\newcommand\tily{\widetilde{y}}
\newcommand\tilz{\widetilde{z}}

\def\hatG{\widehat{G}}

\newcommand\dB{B^{\vee}}
\newcommand\dC{C^{\vee}}
\newcommand\dD{D^{\vee}}
\newcommand\dG{G^{\vee}}
\newcommand\dH{H^{\vee}}
\newcommand\dK{K^{\vee}}
\newcommand\dL{L^{\vee}}
\newcommand\dM{M^{\vee}}
\newcommand\dN{N^{\vee}}
\newcommand\dP{P^{\vee}}
\newcommand\dQ{Q^{\vee}}
\newcommand\dS{S^{\vee}}
\newcommand\dT{T^{\vee}}
\newcommand\dU{U^{\vee}}

\newcommand\aff{\textup{aff}}
\newcommand\Aff{\textup{Aff}}
\newcommand\alg{\textup{alg}}
\newcommand\Alg{\textup{Alg}}
\newcommand\AS{\textup{AS}}
\newcommand\Av{\textup{Av}}
\newcommand\av{\textup{av}}
\newcommand\BM{\textup{BM}}
\newcommand{\Bun}{\textup{Bun}}
\newcommand{\can}{\textup{can}}
\newcommand{\ch}{\textup{char}}
\newcommand{\codim}{\textup{codim}}
\newcommand{\Coh}{\textup{Coh}}
\newcommand{\coker}{\textup{coker}}
\newcommand\Cone{\mathrm{Cone}}
\newcommand\cont{\mathrm{cont}}
\newcommand{\Coor}{\textup{Coor}}
\newcommand{\Corr}{\textup{Corr}}
\newcommand{\CS}{\textup{CS}}
\newcommand\CT{\mathrm{CT}}
\newcommand\Div{\textup{Div}}
\newcommand\Dyn{\textup{Dyn}}
\newcommand\Eis{\mathrm{Eis}}
\newcommand\ev{\textup{ev}}
\newcommand{\even}{\textup{even}}
\newcommand{\Fil}{\textup{Fil}}
\newcommand{\Fl}{\textup{Fl}}
\newcommand{\fl}{f\ell}
\newcommand\Forg{\textup{Forg}}
\newcommand\Four{\mathrm{Four}}
\newcommand\Fr{\textup{Fr}}
\newcommand\Frac{\textup{Frac}}
\newcommand\Frob{\textup{Frob}}
\newcommand\FT{\mathrm{FT}}
\newcommand\Fun{\mathrm{Fun}}
\newcommand\Gal{\textup{Gal}}
\newcommand\geom{\textup{geom}}
\newcommand\glob{\textup{glob}}
\newcommand{\gr}{\textup{gr}}
\newcommand\Hk{\mathrm{Hk}}
\newcommand{\Hecke}{\textup{Hecke}}
\newcommand{\Hilb}{\textup{Hilb}}
\newcommand{\Hod}{\textup{Hod}}
\newcommand\IC{\textup{IC}}
\newcommand\id{\textup{id}}
\renewcommand{\Im}{\textup{Im}}
\newcommand{\ind}{\textup{ind}}
\newcommand\inv{\textup{inv}}
\newcommand\Irr{\textup{Irr}}
\newcommand\Jac{\textup{Jac}}
\newcommand\lmod{\mathrm{-mod}}
\newcommand\Lie{\textup{Lie}\ }
\newcommand\Loc{\textup{Loc}}
\newcommand\loc{\textup{loc}}
\newcommand\Mod{\textup{Mod}}
\newcommand\mon{\mathrm{mon}}
\newcommand{\mult}{\textup{mult}}
\newcommand\nil{\textup{nil}}
\newcommand{\Nm}{\textup{Nm}}
\newcommand{\odd}{\textup{odd}}
\newcommand{\opp}{\textup{opp}}
\newcommand\Out{\textup{Out}}
\newcommand{\Perf}{\textup{Perf}}
\newcommand\Perv{\textup{Perv}}
\newcommand{\Pic}{\textup{Pic}}
\newcommand\pr{\textup{pr}}
\newcommand\pro{\textup{pro}}
\newcommand\Proj{\textup{Proj}}
\newcommand\Prym{\textup{Prym}}
\newcommand\pt{\textup{pt}}
\newcommand{\Quot}{\textup{Quot}}
\newcommand\rank{\textup{rank}\ }
\newcommand{\red}{\textup{red}}
\newcommand{\reg}{\textup{reg}}
\newcommand{\Reg}{\textup{Reg}}
\newcommand\Rep{\textup{Rep}}
\newcommand{\Res}{\textup{Res}}
\newcommand\res{\textup{res}}
\newcommand\rk{\textup{rk}}
\newcommand\rs{\textup{rs}}
\newcommand\Sat{\mathrm{Sat}}
\newcommand\sgn{\textup{sgn}}
\newcommand\Sht{\textup{Sht}}
\newcommand\Span{\textup{Span}}
\newcommand\Spec{\textup{Spec}\ }
\newcommand\Spf{\textup{Spf}\ }
\newcommand\sph{\mathrm{sph}}
\newcommand\St{\textup{St}}
\newcommand\Stab{\textup{Stab}}
\newcommand\Supp{\textup{Supp}}
\newcommand\Sym{\textup{Sym}}
\newcommand{\tors}{\textup{tors}}
\newcommand\Tot{\textup{Tot}}
\newcommand{\Tr}{\textup{Tr}}
\newcommand{\tr}{\textup{tr}}
\newcommand\triv{\textup{triv}}
\newcommand\unip{\textup{unip}}
\newcommand{\univ}{\textup{univ}}
\newcommand{\Vect}{\textup{Vect}}
\newcommand{\vol}{\textup{vol}}

\newcommand\Aut{\textup{Aut}}
\newcommand\Hom{\textup{Hom}}
\newcommand\End{\textup{End}}
\newcommand\Map{\textup{Map}}
\newcommand\Mor{\textup{Mor}}
\newcommand\uAut{\underline{\Aut}}
\newcommand\uHom{\underline{\Hom}}
\newcommand\uEnd{\underline{\End}}
\newcommand\uMor{\underline{\Mor}}
\newcommand{\RHom}{\bR\Hom}
\newcommand\RuHom{\bR\uHom}
\newcommand{\Ext}{\textup{Ext}}
\newcommand{\uExt}{\underline{\Ext}}

\newcommand\gl{\mathfrak{gl}}
\newcommand\PGL{\textup{PGL}}
\newcommand\Ug{\textup{U}}
\newcommand\ug{\mathfrak{u}}
\newcommand\SL{\textup{SL}}
\renewcommand\sl{\mathfrak{sl}}
\newcommand\SU{\textup{SU}}
\newcommand\su{\mathfrak{su}}
\newcommand\GU{\textup{GU}}
\newcommand\SO{\textup{SO}}
\newcommand\GO{\textup{GO}}
\newcommand\Og{\textup{O}}
\newcommand\Sp{\textup{Sp}}
\renewcommand\sp{\mathfrak{sp}}
\newcommand\GSp{\textup{GSp}}

\newcommand{\Gm}{\GG_m}
\def\Ga{\GG_a}

\newcommand{\ad}{\textup{ad}}
\newcommand{\Ad}{\textup{Ad}}
\renewcommand\sc{\textup{sc}}
\newcommand{\der}{\textup{der}}

\newcommand\xch{\mathbb{X}^*}
\newcommand\xcoch{\mathbb{X}_*}

\newcommand\upH{\textup{H}}
\newcommand{\leftexp}[2]{{\vphantom{#2}}^{#1}{#2}}
\newcommand{\pH}{\leftexp{p}{\textup{H}}}
\newcommand{\trpose}[1]{\leftexp{t}{#1}}
\newcommand{\Ql}{\QQ_{\ell}}
\newcommand{\Qlbar}{\overline{\QQ}_\ell}
\newcommand{\const}[1]{\overline{\QQ}_{\ell,#1}}
\newcommand{\twtimes}[1]{\stackrel{#1}{\times}}
\newcommand{\hotimes}{\widehat{\otimes}}
\newcommand{\htimes}{\widehat{\times}}
\renewcommand{\j}[1]{\langle{#1}\rangle}
\newcommand{\wh}[1]{\widehat{#1}}
\newcommand\quash[1]{}
\newcommand\mat[4]{\left(\begin{array}{cc} #1 & #2 \\ #3 & #4 \end{array}\right)}  %
\newcommand\un{\underline}
\newcommand{\bu}{\bullet}
\newcommand{\ov}{\overline}
\newcommand{\bs}{\backslash}
\newcommand{\tl}[1]{[\![#1]\!]}
\newcommand{\lr}[1]{(\!(#1)\!)}
\newcommand\sss{\subsubsection}
\newcommand\Qp{\mathbb{Q}_{p}}
\newcommand\Zp{\mathbb{Z}_{p}}
\newcommand\Zl{\mathbb{Z}_{\ell}}

\newcommand\op{\oplus}
\newcommand\ot{\otimes}
\newcommand\bt{\boxtimes}
\newcommand{\sslash}{\mathbin{/\mkern-6mu/}}
\renewcommand\c\circ
\newcommand\vn{\varnothing}
\renewcommand\div{\mathrm{div}}

\newcommand{\homo}[2]{\mathbf{H}_{#1}({#2})}   %
\newcommand{\homog}[2]{\textup{H}_{#1}({#2})}  %
\newcommand{\coho}[2]{\mathbf{H}^{#1}({#2})}    %
\newcommand{\cohog}[2]{\textup{H}^{#1}({#2})}     %
\newcommand{\cohoc}[2]{\textup{H}_{c}^{#1}({#2})}     %
\newcommand{\hBM}[2]{\textup{H}^{\textup{BM}}_{#1}({#2})}  %

\newcommand\mt{\mapsto}
\newcommand{\incl}{\hookrightarrow}
\newcommand{\isom}{\stackrel{\sim}{\to}}
\newcommand{\bij}{\leftrightarrow}
\newcommand{\surj}{\twoheadrightarrow}
\newcommand{\oll}{\overleftarrow}
\newcommand{\orr}{\overrightarrow}
\newcommand{\olr}{\overleftrightarrow}
\newcommand\xr{\xrightarrow}
\newcommand\w{\wedge}
\newcommand{\ari}{\ar@{^{(}->}} %

\renewcommand\a\alpha
\renewcommand\b\beta
\newcommand\g\gamma
\newcommand\G\Gamma
\renewcommand\d\delta
\newcommand\D\Delta
\newcommand{\ph}{\varphi}
\renewcommand\r\rho
\renewcommand{\t}{\tau}
\newcommand{\x}{\chi}
\newcommand{\y}{\eta}
\newcommand{\z}{\zeta}
\newcommand{\ep}{\epsilon}
\newcommand{\vp}{\varpi}
\renewcommand{\l}{\lambda}
\renewcommand{\L}{\Lambda}
\newcommand{\om}{\omega}
\newcommand{\Om}{\Omega}
\newcommand{\Sig}{\Sigma}

\newcommand\dm{\diamondsuit}
\newcommand\hs{\heartsuit}
\newcommand\na{\natural}
\newcommand\sh{\sharp}
\newcommand\da{\dagger}

\renewcommand\v{\vee}

\newcommand{\kbar}{\overline{k}}
\newcommand{\Qbar}{\overline{\QQ}}

\newcommand{\Gk}{\Gal(\kbar/k)}
\newcommand{\GQ}{\Gal(\overline{\QQ}/\QQ)}
\newcommand{\GQp}{\Gal(\overline{\QQ}_{p}/\QQ_{p})}

\newcommand\Yun[1]{{\color{red} #1}}
\newcommand\add{\textup{add}}
\newcommand\adm{\textup{adm}}
\newcommand\Lag{\textup{Lag}}
\newcommand\PFT{\textup{PFT}}
\newcommand\inj{\textup{inj}}
\newcommand\cen{\textup{cen}}
\newcommand\rot{\textup{rot}}
\newcommand\iso{\textup{iso}}
\newcommand\tw{\textup{tw}}
\newcommand\Wa{W_{\textup{aff}}}
\newcommand\WWa{\WW_{\textup{aff}}}
\newcommand\colim{\textup{colim}}
\newcommand\NP{\textup{NP}}
\newcommand\Crit{\textup{Crit}}
\newcommand\disj{\textup{disj}}
\newcommand\Ch{\textup{Ch}}
\newcommand\Isom{\textup{Isom}}
\newcommand\RD{\textup{RD}}
\newcommand\RLG{\textup{RLG}}
\newcommand\gen{\textup{gen}}

\newcommand\cSW{\mathcal{SW}}

\newcommand\WhdP{U_{\dP}(U^{-}_{\dL}, \psi_{\dL})}

\newcommand\HX{V_{X}}
\newcommand\uni{\textup{uni}}
\newcommand\sst{\textup{sst}}
\newcommand\et{\textup{\'et}}
\newcommand\diam{{\sf diam}}

\vspace{10pt}

\subsection{Setup}
Fix an odd prime $\ell\equiv -1\mod 4$.  Let $D$ be the quaternion algebra over $\QQ$ ramified at $\ell$ and $\infty$. To be concrete, take $D$ to be the $\QQ$-vector space with basis $1$ (which is the unit for the ring structure), $i,j$ and $k$ satisfying  
\begin{equation*}
i^{2}=-1, \quad j^{2}=-\ell, \quad k=ij=-ji.
\end{equation*}
For $x=a+bi+cj+dk\in D$, let $\ov x=a-bi-cj-dk$, and
\begin{equation*}
N(x)=x\ov x=a^{2}+b^{2}+\ell c^{2}+\ell d^{2}.
\end{equation*}
Let $\cO_{D}\subset D$ be the subring
\begin{equation*}
\cO_{D}=\ZZ\op \ZZ i\op\ZZ j\op\ZZ k\subset D.
\end{equation*}
Then for all primes $p$ (including $p=\ell$), the $p$-adic completion $\cO_{D,p}$ is a maximal order in $D_{p}=D\ot_{\QQ}\QQ_{p}$. When $p=\ell$, $\cO_{D,\ell}\subset D_{\ell}$ is the unique maximal order.

Let $V=D^{g}$ be a (right) $D$-vector space of dimension $g$, equipped with a standard Hermitian form $h(x_{1},\cdots, x_{g})=\sum_{i=1}^{g}N(x_{i})$.  Let $G=\Ug(V,h)$ be the unitary group for $(V,h)$, which is a reductive group over $\QQ$. Concretely, 
\begin{equation*}
    G(\QQ)=\{A\in M_g(D)|\ov{A^t}A=I_g\}.
\end{equation*}
For any $\QQ$-algebra $R$, it makes sense to consider $G(R)$, which is defined in the same way as $G(\QQ)$, with $D$ replaced by the $R$-algebra $R\ot_\QQ D$.

The algebraic group $G$ is a form of the symplectic group $\Sp_{2g}$ over $\QQ$.  We have a conjugation-invariant algebraic function (over $\QQ$)
\begin{equation*}
\Tr: G\to \AA^{1}
\end{equation*}
that sends $A: V\to V$, written as a $g\times g$ matrix $(a_{st})$ with entries $a_{st}\in D$ under a basis of $V$, to the reduced trace of $\sum_{s=1}^{g}a_{ss}$.

For each prime $p$,  we have a compact open subgroup $K_{p}\subset G(\Qp)$ consisting of matrices where entries are in $\cO_{D,p}$. %
When $p\ne2$ and $p\ne \ell$, upon choosing an isomorphism of $\Zp$-algebras $\cO_{D,p}\cong M_{2}(\ZZ_{p})$ (which induces $D_{p}\cong M_{2}(\Qp)$), we get an isomorphism $G_{\Qp}\cong\Sp_{2g,\Qp}$, under which $K_{p}$ corresponds to the maximal compact subgroup that is the stabilizer of a self-dual $\Zp$-lattice under the symplectic form. The trace function base changed to $\Qp$ is the usual trace for $\Sp_{2g,\Qp}$.

The Lie group $G(\RR)$ is a compact form of $\Sp_{2g}(\CC)$. By writing $\HH=\CC\op \CC j$, we may identify $G(\RR)$ with a subgroup of the compact unitary group $\Ug_{2g}$. The trace function on $G(\RR)$ becomes the usual trace of a $2g\times 2g$ unitary matrix. In particular, if $A\in G(\RR)$ has $\Tr(A)=2g$, then all eigenvalues of $A$ are equal to $1$, hence $A=I$ is the identity element.

\subsection{Building}
Now fix a prime $p\ne2,p\ne\ell$. Fix an isomorphism $\cO_{D,p}\cong M_2(\ZZ_p)$, hence an isomorphism $G_{\Qp}\cong \Sp_{2g,\Qp}$ such that $K_p$ is the stabilizer of the standard lattice $\Zp^{2g}\subset \Qp^{2g}$.

Let $\cB$ be the building of $G_{\Qp}\cong \Sp_{2g,\Qp}$. This is a $g$-dimensional simplicial complex with a simplicial action of $G(\Qp)$. The set of $i$-simplicies of $\cB$ is denoted by $\cB(i)$. In particular, $\cB(0)$ denotes the set of vertices of $\cB$. Let us recall the linear-algebraic meaning of the vertices $\cB(0)$.  A vertex of $\cB$ corresponds uniquely to a $\Zp$-lattice $L\subset \Qp^{2g}$ such that 
\begin{eqnarray*}
   pL^\vee\subset  L\subset L^\vee\subset p^{-1}L.
\end{eqnarray*}
Here, for a $\Zp$-lattice $L\subset \Qp^{2g}$, we write $L^\vee=\{x\in \Qp^{2g}|\langle x, y\rangle\in \Zp,\forall y\in L\}$ (where $\langle -, -\rangle$ is the symplectic form on $\Qp^{2g}$). Such a lattice $L$ is {\em of type $i$}, where $i=0,1,\cdots, g$, if $\dim_{\FF_{p}}(L^{\vee}/L)=2i$. For each $i=0,1,\cdots, g$, let $\cB[i]$ be the set of vertices of $\cB$ of type $i$. 

Two vertices $v$ and $v'$ are adjacent if and only if their corresponding lattices $L$ and $L'$ satisfy 
\begin{equation*}
  \mbox{either }  L\subset L'\subset L^\vee \mbox{ or }L'\subset L\subset (L')^\vee.
\end{equation*}
In either case, we have
\begin{equation}\label{adj latt bound}
    pL \subset L'\subset p^{-1}L.
\end{equation}

A collection of vertices $\{v_{0},v_{1},\cdots, v_{g}\}$ are the vertices of a $g$-simplex if and only if the following are satisfied
\begin{itemize}
\item Up to renaming the vertices, we have $v_{i}\in \cB[i]$ for $i=0,1,\cdots, g$. 
\item Let $L_{i}\subset \Qp^{2g}$ be the type $i$ lattice corresponding to $v_{i}$. Then
\begin{equation*}
pL_{0}\subset L_{g}\subset L_{g-1}\subset \cdots \subset L_{1}\subset L_{0}.
\end{equation*}
\end{itemize}

Let $\G'\subset G(\QQ)$ be the subset of those $g\times g$ matrices with entries in $\cO_{D}[1/p]$. For a subgroup $H\subset K_{\ell}$ of finite index, let
\begin{equation*}
\G'_{H}=\{\g\in \G'|\mbox{the image of $\g$ in $K_{\ell}$ lies in $H$}\}.
\end{equation*}
In particular, $\G'_{H}$ acts on $\cB$ via the embedding
\begin{equation}\label{emb Gamma}
    \iota: \G'_{H}\subset G(\QQ)\subset G(\Qp)\cong \Sp_{2g}(\Qp).
\end{equation}

\begin{prop}\label{p:d} Suppose the image of $H$ under the trace map is contained in $2g+\ell^{b}\Zl$ with $\ell^{b}>4gp^3$, then for any $1\ne \g\in \G'_{H}$ and any vertex $v\in \cB(0)$, $d(v,\g v)\ge 4$.
\end{prop}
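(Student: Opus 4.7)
The plan is to assume $d(v,\gamma v)\le 3$ for some vertex $v$ and some $1\ne \gamma\in\G'_H$, and then to derive a contradiction by pinning down the rational number $\Tr(\gamma)$ using three places simultaneously: the prime $p$ (distance in the building), the real place (compactness of $G(\RR)$), and the prime $\ell$ (the trace congruence coming from $H$). The three constraints will be tight enough to force $\Tr(\gamma)=2g$, which by the excerpt forces $\gamma=1$.

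First I would carry out the $p$-adic step. Let $L\subset \QQ_p^{2g}$ be the lattice for $v$ under $G_{\QQ_p}\cong \Sp_{2g,\QQ_p}$. Iterating (\ref{adj latt bound}) along a path of length $\le 3$ between $v$ and $\gamma v$ gives $p^3 L\subset \gamma L\subset p^{-3}L$. Choosing any $\ZZ_p$-basis of $L$, this forces $\gamma$, viewed as a $\QQ_p$-linear endomorphism of $\QQ_p^{2g}$, to have matrix in $p^{-3}M_{2g}(\ZZ_p)$; its ordinary trace---which agrees with the reduced trace $\Tr(\gamma)$ under the identification $G_{\QQ_p}\cong \Sp_{2g,\QQ_p}$---therefore lies in $p^{-3}\ZZ_p$. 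Combining with $\Tr(\gamma)\in \ZZ[1/p]$ (coming from $\gamma\in M_g(\cO_D[1/p])$), we get $\Tr(\gamma)=m/p^3$ for some integer $m$.

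I would then feed in the other two places. The archimedean bound $|\Tr(\gamma)|\le 2g$ is provided directly by the excerpt via $G(\RR)\subset \Ug_{2g}$, giving $|m|\le 2gp^3$. The hypothesis on $H$, applied to the image of $\gamma$ in $K_\ell\subset G(\QQ_\ell)$, gives $\Tr(\gamma)\in 2g+\ell^b\ZZ_\ell$, and since $p$ is a unit in $\ZZ_\ell$ this rearranges to the integer congruence $m\equiv 2gp^3\pmod{\ell^b}$. Writing $m=2gp^3+k\ell^b$ and using $\ell^b>4gp^3$: if $k\ne 0$ then $|m|\ge \ell^b-2gp^3>2gp^3$, contradicting $|m|\le 2gp^3$. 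Hence $k=0$ and $\Tr(\gamma)=2g$, which (by the sentence in the excerpt immediately following the description of $G(\RR)\subset \Ug_{2g}$) forces $\gamma=1$, the sought contradiction.

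The main obstacle, as I see it, is the linear-algebra identification used in the $p$-adic step: one must confirm that the reduced trace used to define $\Tr$ on $G$ really agrees with the ordinary $2g\times 2g$ trace coming from $G_{\QQ_p}\cong \Sp_{2g,\QQ_p}$, and that the inclusion $p^3 L\subset \gamma L\subset p^{-3}L$ then yields $\Tr(\gamma)\in p^{-3}\ZZ_p$ \emph{regardless of the type of the vertex $v$}. Since trace is conjugation invariant, this needs to be checked only once via the fixed Morita equivalence $\cO_{D,p}\cong M_2(\ZZ_p)$; once it is in hand, the three-place pigeonhole above gives the claim uniformly for all $v\in \cB(0)$.
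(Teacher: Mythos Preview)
Your proposal is correct and follows essentially the same three-place argument as the paper: the paper also bounds the $p$-adic valuation of $\Tr(\gamma)$ via the lattice inclusions (packaged there as Lemma~\ref{l:denom p}), invokes the $\ell$-adic congruence from $H$, and uses $G(\RR)\subset \Ug_{2g}$ for the archimedean bound, then concludes $\Tr(\gamma)=2g$ and hence $\gamma=1$. The only cosmetic difference is that you obtain integrality away from $p$ directly from $\gamma\in M_g(\cO_D[1/p])$, whereas the paper checks each auxiliary prime $q\ne p,\ell$ via $\gamma\in K_q$; your worry about the identification of the reduced trace with the usual $\Sp_{2g}$ trace at $p$ is already addressed in the setup of the paper.
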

\begin{proof}
Suppose   $\g\in \G'_{H}$ and $v\in \cB(0)$ are such that $d(v,\g v)< 4$. Consider the rational number $\Tr(\g)$. By Lemma \ref{l:denom p} below, the $p$-adic valuation of $\Tr(\g)$ is at least $-3$. For any prime $q\ne \ell$ and $q\ne p$,   since $\g\in \Sp_{2g}(\ZZ_{q})$ under an isomorphism $G(\QQ_{q})\cong \Sp_{2g}(\QQ_{q})$, we have that the $q$-adic valuation of $\Tr(\g)$ is $\ge0$. By assumption, the $\ell$-adic valuation of $\Tr(\g)-2g$ is at least $b$. Finally, by considering trace on $G(\RR)\subset \Ug_{2g}$, we have $|\Tr(\g)|\le 2g$. Combining the above information we see that $2g-\Tr(\g)$ takes the form $\frac{a}{p^3}$ for some $\ell^{b}|a\in \ZZ$ and $|a/p|\le 4g$, or $|a|\le 4gp^3$. So if $\ell
^{b}>4gp^3$, we must have $a=0$, which means $\Tr(\g)=2g$. Viewing $\g$ as an element in $\Ug_{2g}$, $\Tr(\g)$ is the sum of $2g$ eigenvalues of $\g$ all of which have complex norm $1$. The fact $\Tr(\g)=2g$ then forces all eigenvalues of $\g$ to be equal to $1$, hence $\g=I\in G(\QQ)\subset \Ug_{2g}$.
\end{proof}

\begin{lemma}\label{l:denom p}
If $\g\in G(\Qp)$ and $v\in \cB(0)$ is a vertex such that $d(v,\g v)\le k$, then $\Tr(\g)\in p^{-k}\Zp$.  
\end{lemma}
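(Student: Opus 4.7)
The plan is to argue purely lattice-theoretically. Fix an isomorphism $G(\Qp)\cong \Sp_{2g}(\Qp)$ so that vertices of $\cB(0)$ correspond to lattices $L\subset \Qp^{2g}$ satisfying $pL^{\vee}\subset L\subset L^{\vee}\subset p^{-1}L$, and so that the function $\Tr$ on $G(\Qp)$ agrees with the ordinary matrix trace on $\Sp_{2g}(\Qp)\subset \GL_{2g}(\Qp)$. Let $L$ be the lattice corresponding to the vertex $v$.

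The main step is to show, by induction on $k$, that if $v'\in \cB(0)$ has $d(v,v')\le k$ and $L'$ is the lattice corresponding to $v'$, then $p^{k}L\subset L'\subset p^{-k}L$. The base case $k=0$ is immediate. For the inductive step, suppose $v''$ is adjacent to $v'$, with corresponding lattice $L''$. The adjacency relation, as recalled just before~\eqref{adj latt bound}, gives $pL'\subset L''\subset p^{-1}L'$; combining this with the inductive hypothesis $p^{k}L\subset L'\subset p^{-k}L$ yields $p^{k+1}L\subset L''\subset p^{-(k+1)}L$, as required.

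Applying this to $v'=\g v$ with $L'=\g L$, we obtain $\g L\subset p^{-k}L$, equivalently $p^{k}\g(L)\subset L$. Choosing a $\Zp$-basis of $L$ identifies $\End_{\Zp}(L)$ with $M_{2g}(\Zp)$, and the containment above tells us precisely that $p^{k}\g$ lies in $M_{2g}(\Zp)$. Therefore $\Tr(p^{k}\g)=p^{k}\Tr(\g)\in \Zp$, which gives $\Tr(\g)\in p^{-k}\Zp$, as desired.

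I do not anticipate a real obstacle here: the argument is a standard application of the lattice description of the Bruhat--Tits building. The only thing that must be verified with care is that the trace function $\Tr$ appearing in the lemma, which was defined globally over $\QQ$ using the reduced trace in $D$, really does coincide with the matrix trace after the identification $G_{\Qp}\cong \Sp_{2g,\Qp}$ used to describe $\cB$; this compatibility was already noted in the setup (the paragraph discussing $p\ne 2,\ell$), so no additional work is needed.
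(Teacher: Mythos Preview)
Your proof is correct and follows essentially the same approach as the paper: both arguments use the lattice description of vertices, iterate the adjacency bound $L_i\subset p^{-1}L_{i-1}$ along a path of length $\le k$ to obtain $\g L\subset p^{-k}L$, and then read off the trace bound from the matrix entries in a $\Zp$-basis of $L$. Your version is slightly more detailed (you prove both containments $p^{k}L\subset L'\subset p^{-k}L$ and explicitly address the compatibility of $\Tr$), but the substance is the same.
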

\begin{proof}
Fix an isomorphism $G_{\Qp}\cong \Sp_{2g,\Qp}$. Let $v=v_0, v_1,\cdots, v_k=\g v$ be a sequence of vertices in $\cB$ such that $v_i$ is adjacent to $v_{i-1}$ for $i=1,\cdots, k$. 

Let $L_i$ be the lattice corresponding to $v_i$. Since $v_i$ is adjacent to $v_{i-1}$, we have $L_i\subset p^{-1}L_{i-1}$. Therefore $\g L=L_k\subset p^{-k}L_0=p^{-k}L$. Under a $\Zp$-basis of $\L$, $\g$ is then a matrix with $p^{-k}\Zp$-entries, hence $\Tr(\g)\in p^{-1}\Zp$. 
\end{proof}

\subsection{Construction of \texorpdfstring{$H$}{H} at \texorpdfstring{$\ell$}{l}}\label{ss:H}
Recall $\cO_{D,\ell}\subset D_{\ell}=D\ot_{\QQ}\Ql$ is the maximal order. Let $\vp\in \cO_{D,\ell}$ be an element such that $N(\vp)$ has $\ell$-adic valuation $1$. Under our convention of $D$ we can simply take $\vp$ to be $j$.

Now $\vp^{i}\cO_{D,\ell}$ is the two-sided ideal of $\cO_{D,\ell}$ consisting of elements whose reduced norm is in $\ell^{i}\Zl$. We have $\vp^{i}\cO_{D,\ell}/\vp^{i+1}\cO_{D,\ell}\cong \FF_{\ell^{2}}$. The reduced traces of elements in $\vp^{i}\cO_{D,\ell}$ lie in $\ell^{\lceil i/2\rceil}\Zl$.

Identify $G(\Ql)$ with a subgroup of $g\times g$ matrices with entries in $D_{\ell}$. Let $H(0)=K_{\ell}$. For $i>0$, let $H(i)\subset G(\Ql)$ be the subgroup consisting of elements $A\in M_{g}(\cO_{D,\ell})$ such that $A\equiv 1$ mod $\vp^{i}\cO_{D,\ell}$. Then $\Tr(H(i))\subset 2g+\ell^{\lceil i/2\rceil}\Zl$.
 
We have $H(0)/H(1)\cong \Ug_{g}(\FF_{\ell})$ (unitary group for a Hermitian space of dimension $g$ over $\FF_{\ell^{2}}$), whose cardinality is $\ell^{O(g^{2})}$. Direct calculation shows that $H(i)/H(i+1)\cong \Sym^{2}(\FF^{g}_{\ell^{2}})$ if $i$ is odd, which has cardinality $\ell^{g^{2}+g}$, and $H(i)/H(i+1)$ can be identified with $g\times g$ skew-Hermitian matrices with entries in $\FF_{\ell^2}$, if $i>0$ is even, which has cardinality $\ell^{g^{2}}$. We conclude that
\begin{equation}\label{Hi}
[H(0):H(i)]= \ell^{i/2\cdot O(g^{2})}.
\end{equation}

The above discussion also shows that $H(1)/H(i)$ is an $\ell$-group for $i\ge1$. Since $H(1)=\varprojlim_i H(1)/H(i)$, $H(1)$ is a pro-$\ell$-group. Therefore, for any $i\ge1$, $H(i)$ is also a pro-$\ell$-group.

\subsection{Upper bound for \texorpdfstring{$|\G'_{H}\bs \cB(g)|$}{|G'{H} cB(g)|}}

\begin{lemma}\label{l:type 0 vert}
There is a constant $C_{0}$ depending only on $g$ and $\ell$ (and not on $p$), such that $ |\G'\bs \cB[0]|\le C_{0}$.
\end{lemma}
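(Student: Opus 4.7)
The plan is to identify $\Gamma' \backslash \cB[0]$ with a classical adelic class set of the global group $G$, whose cardinality depends only on $g$ and $\ell$ by construction. The key input is Kneser's strong approximation theorem, which applies because $G$ is simply connected and isotropic at $p$.

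First I will use that $G_{\QQ_p} \cong \Sp_{2g, \QQ_p}$ acts transitively on $\cB[0]$ (type $0$ vertices correspond to self-dual $\Zp$-lattices under the symplectic form, and $\Sp_{2g}(\QQ_p)$ is well known to act transitively on them) with stabilizer $K_p$. Hence $\cB[0] = G(\QQ_p)/K_p$, and so
\[
\Gamma' \backslash \cB[0] \;=\; \Gamma' \backslash G(\QQ_p) / K_p.
\]
Next I will observe that, viewing $G(\QQ) \subset G(\AA)$ diagonally, the definition of $\Gamma' = G(\QQ) \cap M_g(\cO_D[1/p])$ is equivalent to
\[
\Gamma' \;=\; G(\QQ) \,\cap\, \Bigl(G(\QQ_p) \times \prod_{q\neq p} K_q \times G(\RR)\Bigr),
\]
since an entry lies in $\cO_D[1/p]$ iff it lies in $\cO_{D,q}$ for every prime $q\neq p$ (and $G(\RR)$ contains $G(\QQ)$ automatically). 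Denote $K^p = \prod_{q\neq p} K_q$ and $K = K_p \times K^p$.

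The heart of the argument is strong approximation. The group $G$ is simply connected and absolutely almost simple over $\QQ$ (it is a $\QQ$-form of $\Sp_{2g}$), and $G(\QQ_p)$ is non-compact for $p\neq \ell$. By Kneser's theorem, $G(\QQ)$ is dense in $G(\AA^p)$, the adèles away from $p$. A standard double-coset manipulation (write any adèle as a product of a $\QQ$-point and something in $G(\QQ_p)\times K^p \times G(\RR)$, and use that $G(\QQ)\cap(K^p\times G(\RR)) = $ the projection of $\Gamma'$ to the away-from-$p$ factors when the $p$-component is trivial) then yields a bijection
\[
\Gamma' \backslash G(\QQ_p) / K_p \;\xrightarrow{\;\sim\;}\; G(\QQ) \backslash G(\AA_f) / K,
\]
where I use that $G(\RR)$ is compact so it can be absorbed. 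The right-hand side is the classical class set of $G$ at the maximal level $K$, and it is a finite set by Borel's finiteness theorem for class numbers of reductive groups over number fields.

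The final observation, and the entire point of the argument, is that both $G$ and the global compact open $K = \prod_q K_q$ are defined intrinsically from $D$ and the maximal order $\cO_D \subset D$, neither of which involves $p$. Thus $|G(\QQ) \backslash G(\AA_f) / K|$ is a finite cardinal depending only on $g$ and $\ell$, and we may take this value as $C_0$. The main obstacle is to verify the hypotheses of strong approximation and carry out the double-coset identification cleanly; once those are in place, the $p$-independence of the answer is automatic because the adelic class set has no memory of which prime was singled out.
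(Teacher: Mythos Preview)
Your proof is correct and follows essentially the same adelic approach as the paper: identify $\cB[0]=G(\QQ_p)/K_p$ and relate $\Gamma'\backslash\cB[0]$ to the global class set $G(\QQ)\backslash G(\AA_f)/\prod_q K_q$, which is finite and visibly independent of $p$. The only difference is that the paper contents itself with the elementary \emph{injection} $\Gamma'\backslash G(\QQ_p)/K_p \hookrightarrow G(\QQ)\backslash G(\AA)/\prod_q K_q$ (a direct double-coset check using $\Gamma'=G(\QQ)\cap\prod_{q\ne p}K_q$), whereas you invoke strong approximation to upgrade this to a bijection; since only an upper bound is needed, the paper's argument avoids that extra input.
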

\begin{proof} Note that  $\cB[0]=\Sp_{2g}(\Qp)/\Sp_{2g}(\Zp)=G(\Qp)/K_{p}$. By construction we have an injection
\begin{equation*}
\G'\bs \cB[0]=\G'\bs G(\Qp)/K_{p}\incl G(\QQ)\bs G(\AA)/\prod_{q}K_{q}.
\end{equation*}
The right side is a product of all primes $q$; it is independent of $p$ and depends only on $\ell$ and $g$. We take $C_{0}$ to be the cardinality of the right side. 
\end{proof}
  
Recall that $\cB(g)$ is the set of $g$-dimensional (maximal) simplices of $\cB$.  

\begin{lemma}\label{l:quot H0} 
We have $|\G'\bs \cB(g)|\le C_{0}p^{O(g^{2})}$ for the constant $C_{0}$ in Lemma \ref{l:type 0 vert}.
\end{lemma}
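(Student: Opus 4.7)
The plan is to project the set of maximal simplices onto the set of type-$0$ vertices and bound the fiber uniformly.

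First, recall from the description of $g$-simplices recalled earlier that a maximal simplex of $\cB$ contains exactly one vertex of each type $0,1,\ldots,g$. Consequently, sending a $g$-simplex to its unique type-$0$ vertex defines a $G(\Qp)$-equivariant (in particular $\G'$-equivariant) map $\pi\colon \cB(g)\to \cB[0]$, which descends to $\bar\pi\colon \G'\bs\cB(g)\to \G'\bs\cB[0]$. By Lemma~\ref{l:type 0 vert} the target has at most $C_0$ elements, so it suffices to bound the size of each fiber of $\bar\pi$, and the fiber over $[v]$ is at most the number of $g$-simplices of $\cB$ containing a fixed representative $v$.

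Next, fix a type-$0$ vertex $v$ corresponding to a self-dual lattice $L\subset \Qp^{2g}$. A $g$-simplex containing $v$ is determined by a flag of lattices
\[
pL\subset L_g\subset L_{g-1}\subset\cdots\subset L_1\subset L,
\]
where each $L_i$ is of type $i$ (i.e.\ satisfies $pL_i^\vee\subset L_i\subset L_i^\vee$ with $\dim_{\FF_p}(L_i^\vee/L_i)=2i$). Reducing modulo $pL$, such flags are in bijection with flags of $\FF_p$-subspaces $\bar L_g\subset\cdots\subset\bar L_1\subset L/pL$ with prescribed dimensions and prescribed isotropy conditions with respect to the symplectic form that $L/pL$ inherits from the duality on $L$. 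The set of such flags is a subset of the $\FF_p$-points of a partial flag variety for $\Sp_{2g}$, and in particular its cardinality is at most $p^{O(g^2)}$ (for instance by a crude term-by-term bound using the standard formulas for the sizes of symplectic/Lagrangian Grassmannians over $\FF_p$).

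Combining the two bounds yields
\[
|\G'\bs\cB(g)|\le |\G'\bs\cB[0]|\cdot \max_{v\in\cB[0]}\#\{g\text{-simplices containing }v\}\le C_0\cdot p^{O(g^2)},
\]
as desired. The only nontrivial step is the uniform bound on the fiber; verifying the bijection between completions of $v$ to a $g$-simplex and isotropic flags in $L/pL$ is a routine application of the lattice descriptions of the types recalled above, and the cardinality bound on the flag variety is standard, so I do not foresee a real obstacle.
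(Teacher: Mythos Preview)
Your proposal is correct and follows essentially the same approach as the paper: project $\cB(g)$ to $\cB[0]$ via the unique type-$0$ vertex, bound the fiber by the $\FF_p$-points of the flag variety of $\Sp_{2g}$ (giving $p^{O(g^2)}$), and use Lemma~\ref{l:type 0 vert} to bound the base by $C_0$. The paper states the fiber bijection with the flag variety more tersely, while you spell out the reduction modulo $pL$ and the isotropy conditions, but the argument is the same.
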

\begin{proof}

Consider the map $v_{0}: \cB(g)\to \cB[0]$ sending each $g$-dimensional simplex to its  unique type 0 vertex. The fibers of this map are in bijection with $X(\FF_{p})$ where $X$ is the flag variety of $\Sp_{2g}$. Therefore fibers of $v_{0}$ have cardinality $p^{O(g^{2})}$. Passing to the quotient, the fibers of the map $\G'\bs \cB(g)\to \G'\bs \cB[0]$ then also have cardinality $\le p^{O(g^{2})}$.
\end{proof}

\begin{cor}\label{cor:Hi} 
There exists a compact open subgroup $H\subset G(\Ql)$ such that:
\begin{enumerate}
\item  For any $1\ne \g\in \G'_{H}$ and any vertex $v\in \cB(0)$, $d(v,\g v)\ge 4$.
\item $|\G'_{H}\bs \cB(g)|\le C_{\ell, g}p^{O(g^{2})}$ for some constant $C_{\ell, g}$ depending only on $\ell$ and $g$.
\end{enumerate}
\end{cor}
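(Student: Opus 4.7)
The plan is to choose $H = H(i)$ from the filtration of $K_\ell$ constructed in Section~\ref{ss:H}, for a suitable $i$ depending on $\ell, g, p$, and verify the two conditions separately using Proposition~\ref{p:d} and Lemma~\ref{l:quot H0}. Specifically, I would take $i$ to be the smallest positive integer with $\ell^{\lceil i/2 \rceil} > 4gp^3$, so that $i = O(\log_\ell(gp^3))$.

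For condition (1), the computation of $\Tr(H(i))$ in Section~\ref{ss:H} gives $\Tr(H(i)) \subset 2g + \ell^{\lceil i/2\rceil}\Zl$. By our choice of $i$, the hypothesis of Proposition~\ref{p:d} is met with $b = \lceil i/2 \rceil$, and condition (1) follows immediately. For condition (2), the key estimate is
\[
|\G'_H \bs \cB(g)| \leq [\G' : \G'_H]\cdot |\G'\bs \cB(g)|.
\]
Lemma~\ref{l:quot H0} bounds the second factor by $C_0 p^{O(g^2)}$, so it remains to control $[\G' : \G'_H]$. Since $\G'_H$ is defined as the preimage of $H$ under the inclusion $\G' \hookrightarrow K_\ell$, the map $\gamma \mapsto \gamma H$ is an injection of coset spaces $\G'/\G'_H \hookrightarrow K_\ell/H$, so $[\G' : \G'_H] \leq [K_\ell : H(i)]$. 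By \eqref{Hi} this is $\ell^{i/2\cdot O(g^2)}$; substituting $i = O(\log_\ell(gp^3))$ yields $\ell^{O(g^2\log_\ell p)}\cdot \ell^{O(g^2\log_\ell g)} \leq p^{O(g^2)}\cdot g^{O(g^2)}$, and the factor $g^{O(g^2)}$ (which depends only on $g$) can be absorbed into the constant $C_{\ell,g}$ together with $C_0$ from Lemma~\ref{l:type 0 vert}. Combining the two factors gives condition (2).

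No step should pose a serious obstacle: the filtration $\{H(i)\}$ has already been engineered so that both the trace-congruence condition (governing item (1)) and the index $[H(0):H(i)]$ (governing item (2)) are controlled in terms of $i$. The only point worth stating carefully is the elementary coset-count inequality $[\G':\G'_H]\leq [K_\ell:H]$, which follows without any strong-approximation input just from the definition of $\G'_H$ as a preimage. Once that is observed, the argument is a direct combination of Proposition~\ref{p:d}, Lemma~\ref{l:quot H0}, and \eqref{Hi}.
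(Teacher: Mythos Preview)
Your proposal is correct and follows essentially the same route as the paper: choose $H=H(i)$ with $i$ minimal so that $\ell^{\lceil i/2\rceil}>4gp^3$, invoke Proposition~\ref{p:d} for item~(1), and for item~(2) bound $|\G'_H\bs\cB(g)|$ by $[K_\ell:H(i)]\cdot|\G'\bs\cB(g)|$ using Lemma~\ref{l:quot H0} and~\eqref{Hi}. The only cosmetic difference is that the paper writes the key inequality directly as $|\G'_{H}\bs\cB(g)|\le [H(0):H]\,|\G'_{H(0)}\bs\cB(g)|$ (noting $\G'=\G'_{H(0)}$), whereas you spell out the coset injection $\G'/\G'_H\hookrightarrow K_\ell/H$ explicitly.
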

\begin{proof}
Take a positive integer $i$ such that $4gp^3<\ell^{\lceil i/2\rceil}\le 4gp^3\ell$. Let $H=H(i)$ as constructed in \S\ref{ss:H}. Since $\Tr(H(i))\subset \ell^{\lceil i/2\rceil}\Zl$ and $\ell^{\lceil i/2\rceil}>4gp^3$,  $H=H(i)$ satisfies the hypothesis of Proposition \ref{p:d}, therefore (1) is satisfied.

For (2), we have
\begin{equation}\label{compare quot}
|\G'_{H}\bs \cB(g)|\le [H(0):H]|\G'_{H(0)}\bs \cB(g)|.
\end{equation}
By \eqref{Hi}, we have
\begin{equation}\label{H0H}
[H(0):H]=\ell^{i/2\cdot O(g^{2})}\le (4gp^3\ell)^{O(g^{2})}.
\end{equation}
By Lemma \ref{l:quot H0} we have
$|\G'_{H(0)}\bs \cB(g)|\le C_{0}p^{O(g^{2})}$ where $C_{0}$ depends only on $\ell$ and $g$. Using \eqref{compare quot} and \eqref{H0H} we get
\begin{equation*}
|\G'_{H}\bs \cB(g)|\le [H(0):H]|\G'_{H(0)}\bs \cB(g)|\le (4gp^3\ell)^{O(g^{2})}\cdot C_{0}p^{O(g^{2})}=C_{\ell, g}p^{O(g^{2})}.
\qedhere
\end{equation*}
\end{proof}

Now we give an estimate of the complexity for constructing the graph underlying $\G'_{H(i)}\bs \cB$ constructed in Corollary \ref{cor:Hi}.

Let $v_{0}\in \cB[0]$ be the standard self-dual lattice $L_{0}=\ZZ_{p}^{2g}$, whose stabilizer is $K_{p}$. 

\begin{lemma}\label{l:vert0 diam} 
Any vertex of  $\G'_{H(i)}\bs \cB$ can be represented by a lattice $L$ satisfying
\begin{equation}\label{bound L}
p^{\diam(\G'_{H(i)}\bs \cB)}L_{0}\subset L\subset p^{-\diam(\G'_{H(i)}\bs \cB)}L_{0}.
\end{equation}
\end{lemma}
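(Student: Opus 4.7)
The plan is to use the definition of the diameter of the quotient graph directly, together with the containment bound~\eqref{adj latt bound} for lattices corresponding to adjacent vertices of $\cB$. Let $D=\diam(\G'_{H(i)}\bs \cB)$. Given a vertex $[v]\in \G'_{H(i)}\bs \cB$ represented by some vertex $v\in \cB(0)$, there is by definition a path in the quotient graph from $[v_0]$ to $[v]$ of length at most $D$.

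The first step is to lift this quotient path to a path in the building $\cB$ itself. Starting from the distinguished lift $v_0\in \cB[0]$, one inductively lifts each edge of the quotient path: if a vertex $w_{t-1}\in \cB(0)$ lifts the $(t-1)$-st vertex of the quotient path, then since the action of $\G'_{H(i)}$ on $\cB$ is simplicial, any neighbour of $[w_{t-1}]$ in the quotient graph is the image of some neighbour of $w_{t-1}$ in $\cB$ (pick such a lift as $w_t$). This yields a sequence $v_0=w_0,w_1,\ldots,w_D$ of vertices in $\cB$ with $w_{t-1}$ adjacent to $w_t$ and $[w_D]=[v]$, so $w_D=\g v$ for some $\g\in \G'_{H(i)}$.

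The second step is to iterate the adjacency bound. Let $L_t\subset \QQ_p^{2g}$ be the lattice corresponding to $w_t$. By~\eqref{adj latt bound} we have $pL_{t-1}\subset L_t\subset p^{-1}L_{t-1}$ for each $t=1,\ldots,D$. Chaining these inclusions telescopically gives
\[
p^D L_0\subset p^{D-1} L_1\subset\cdots\subset p L_{D-1}\subset L_D \subset p^{-1}L_{D-1}\subset\cdots\subset p^{-D}L_0.
\]
Thus the lattice $L:=L_D$, which represents $\g v$ and hence the class $[v]$, satisfies $p^D L_0\subset L\subset p^{-D} L_0$, as required.

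I do not expect any real obstacle: the argument is essentially the observation that a single simplicial edge in $\cB$ moves a lattice by at most a factor of $p$ in either direction, so a path of length $D$ is confined to a ``ball of radius $D$'' around $L_0$ in the lattice sense. The only point that warrants attention is the path-lifting step, which uses nothing more than the fact that $\G'_{H(i)}$ acts by simplicial automorphisms on $\cB$ so edges in the quotient lift to edges in $\cB$.
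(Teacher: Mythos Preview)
Your proof is correct and follows essentially the same approach as the paper: both arguments iterate the adjacency bound~\eqref{adj latt bound} along a path of length at most the diameter, the only difference being that you spell out the path-lifting step from the quotient to $\cB$ explicitly, whereas the paper simply asserts that every vertex of the quotient is represented by a lattice at distance at most $\diam(\G'_{H(i)}\bs\cB)$ from $L_0$.
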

\begin{proof} If $L$ is a lattice, viewed as a vertex of $\cB$, and $d(L,L_0)\le n$. By \eqref{adj latt bound} and induction on $n$ we get
\begin{equation*}
    p^nL_0\subset L\subset p^{-n}L_0.
\end{equation*}
Since each vertex in the quotient $\G'_{H}\bs \cB$ is represented by a lattice with distance at most the diameter of $\G'_{H}\bs \cB$, the lemma follows.

\end{proof}

\begin{lemma}\label{lem:time_bd_construct}
    Fix $\ell,g,p\in\mathbb{N}$ and the groups $H(i)$ as above, and let $i$ such that $\ell^{\lceil i/2\rceil}> 4gp^3$. Then the graph $\Gamma'_{H(i)}\backslash\mathcal{B}(g)$ 
     can be constructed in time 
     $p^{O_{\ell,g}\left({\sf diam}(\Gamma'_{H(i)}\backslash\mathcal{B})\right)}.
     $%
\end{lemma}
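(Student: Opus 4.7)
The plan is to build the quotient complex by a breadth-first search starting from the image of the base vertex $v_0\in\mathcal{B}(0)$ corresponding to the standard lattice $L_0=\Zp^{2g}$, leveraging Lemma \ref{l:vert0 diam} to keep all vertex representatives inside a bounded region. Write $D:=\diam(\Gamma'_{H(i)}\backslash\mathcal{B})$ throughout.

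First, by Lemma \ref{l:vert0 diam}, every vertex of the quotient has a representative lattice $L$ sandwiched as $p^D L_0\subset L\subset p^{-D} L_0$. Such an $L$ is a subgroup of the finite abelian group $p^{-D}L_0/p^D L_0$ and can be stored and compared in $\poly(gD\log p)$ bits. From any such $L$, the neighbors in $\mathcal{B}$ and the $g$-simplices containing $L$ correspond to flags in the symplectic flag variety over $\mathbb{F}_p$, which can be enumerated in $p^{O(g^2)}$ time by standard linear algebra over $\mathbb{F}_p$. The BFS itself maintains a table $\mathcal{V}$ of canonical representatives discovered so far (initialized with $L_0$), and at each step expands an undiscovered vertex $L$: enumerate neighbors $L'$ in $\mathcal{B}$, canonically translate each into $[p^D L_0, p^{-D} L_0]$, and test whether the result is $\Gamma'_{H(i)}$-equivalent to some entry already in $\mathcal{V}$. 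Since the quotient has diameter $D$, at most $D$ BFS layers suffice to reach every vertex.

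The hard part will be the orbit-equivalence test: given $L,L'\in [p^D L_0, p^{-D} L_0]$, decide whether some $\gamma\in\Gamma'_{H(i)}$ satisfies $\gamma L=L'$. My plan is to show that any such $\gamma$ lies in an explicitly enumerable finite set of size $p^{O_g(D)}$. Indeed, $\gamma L=L'$ forces $\gamma L_0\subset p^{-2D}L_0$ and, symmetrically, $\gamma^{-1}L_0\subset p^{-2D}L_0$. Combined with $\gamma\in\Gamma'$ being integral away from $p$ and with the archimedean boundedness coming from the compactness of $G(\mathbb{R})$, each entry of $\gamma$ (viewed as a $g\times g$ matrix over $\cO_D[1/p]$) lies in $p^{-2D}\cO_D$ with archimedean norm $O_g(1)$. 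Since $\cO_D$ has rank $4$ over $\mathbb{Z}$, the number of such entries is $p^{O_g(D)}$, and the number of candidate matrices is $p^{O_g(D)}$. For each candidate, checking membership in $\Gamma'_{H(i)}$ reduces to a congruence modulo $\varpi^i \cO_{D,\ell}$, whose cost depends only on $\ell,g,i$ and not on $p$; and since the defining inequality $\ell^{\lceil i/2\rceil}>4gp^3$ gives $i=O_{\ell,g}(\log p)$, this cost is absorbed into $p^{O_{\ell,g}(D)}$.

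Finally, counting: the BFS visits at most $p^{O_g(D)}$ vertices (the ball of radius $D$ around $v_0$ in $\mathcal{B}$ has this cardinality, as each vertex of $\mathcal{B}$ has degree $p^{O(g^2)}$); at each vertex we perform $p^{O(g^2)}$ neighbor enumerations; and each equivalence test takes $p^{O_{\ell,g}(D)}$ time, e.g.\ by computing a canonical $\Gamma'_{H(i)}$-orbit representative via minimization over candidate $\gamma$'s and hashing into $\mathcal{V}$. The overall runtime is therefore $p^{O_{\ell,g}(D)}$, as claimed.
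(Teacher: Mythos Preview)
Your approach is essentially the same as the paper's: both use Lemma~\ref{l:vert0 diam} to confine all vertex representatives to lattices $L$ with $p^D L_0\subset L\subset p^{-D}L_0$, enumerate neighbors via the flag variety over $\FF_p$, and decide $\Gamma'_{H(i)}$-equivalence by bounding the entries of any relating $\gamma$ (integral in $\cO_D$ away from $p$, denominator at most $p^{2D}$ at $p$, and archimedean norm at most $1$ from the unitary condition $\overline{\gamma}^t\gamma=I$), giving at most $p^{O_g(D)}$ candidate matrices to test. The paper enumerates all lattices in the bounded box first and then quotients, while you organize the same computation as a BFS from $v_0$; this is a cosmetic difference.

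One slip: you write that the hypothesis $\ell^{\lceil i/2\rceil}>4gp^3$ gives $i=O_{\ell,g}(\log p)$. That inequality is a \emph{lower} bound on $i$, not an upper bound, so this step is wrong as stated. The paper does not address the cost of the $H(i)$-membership check at all. If you want to account for it, the right argument is via the diameter: the quotient has at least $\Omega_{\ell,g}(1)\cdot[H(0):H(i)]=\ell^{\Omega_g(i)}$ vertices (the action is free by Proposition~\ref{p:d}) and degree $p^{O(g^2)}$, so $D\ge \Omega_{\ell,g}(i/\log p)$, i.e.\ $i\le O_{\ell,g}(D\log p)$. The $\poly(i\log\ell)$ cost of checking $\gamma\equiv I\bmod\varpi^i$ is then absorbed into $p^{O_{\ell,g}(D)}$.
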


\begin{proof}  
Let $d=\diam(\Gamma'_{H(i)}\backslash\mathcal{B})$. For $j=0,1,\cdots, g$, 
let $\L[j]$ be the set of type $i$ lattices $L\subset \Qp^{2g}$ satisfying  \eqref{bound L} that correspond to vertices of $\cB$. Let $\L=\cup_{j=0}^g\L[j]$. Thus the vertices of $\G'_{H(i)}\bs \cB$ are images of lattices in $\L$.

Each $L\in \L[0]$ is of the form $hL_0$ for some $h\in \Sp_{2g}(\Qp)$ with entries in $p^{-d}\Zp$. Moreover, if two such $h,h'$ satisfy $h\equiv h'\mod p^{d}\Zp$, then $hL_{0}=h'L_{0}$ for $(h-h')L_{0}\subset p^{d}L_{0}$ which is already contained in both $hL_{0}$ and $h'L_{0}$. Therefore, to list all  lattices in $\L[0]$, we can do it in time $C_{g}p^{2d}$, where $C_g$ depends only on $g$. Similar estimate applies to listing $\L[j]$ for all $i$.

For each $L\in \L$, its neighbors in $\L$ can be listed in time $p^{O(g^{2})}$ again (we only need to look for those $L'$ such that $pL\subset L'\subset p^{-1}L$). Therefore it remains to list all pairs of vertices $(L,L')$ in $\L$ that are in the same $\G'_{H(i)}$-orbit efficiently.

Suppose $\g\in \G'_{H(i)}$ is such that $\g L=L'$ for two vertices $L,L'\in \L$, we claim that the entries $\g_{st}\in \cO_D[1/p]$ of $\g$ lie in the following finite set of cardinality $\le 8p^{2d}$
\begin{equation*}
    \g_{st}\in p^{-2d}\cO_D, \quad N(\g_{st})\le 1.
\end{equation*}
Indeed, since $L,L'$ both satisfy \eqref{bound L}, we see that $\iota(\g)$ has entries in $p^{-2d}\Zp$, which implies the entries of $\g$ are in $p^{-2d}\cO_D$. The condition $N(\g_{st})\le 1$ is always satisfied for unitary matrices. Thus we only need to search within a finite subset of $\G'_{H(i)}$ of cardinality $\le (8p^{2d})^{g^2}=8^{g^2}p^{2g^2d}$ to decide whether a given pair of lattices $L$ and $L'$ are in the same $\G'_{H(i)}$-orbit.  This shows that the equivalence relation on $\L$ given by the $\G'_{H(i)}$-action can be determined within time  $p^{O_{g,\ell}(d)}$.
\end{proof}

\end{document}